\pgfplotsset{compat=newest}
\DeclareMathOperator{\Tr}{Tr}
\newcommand{\od}[1]{^{(#1)}}
\newcommand{\coleq}{\mathrel{\mathop:}\nobreak\mkern-1.2mu=}
\newcommand{\eqcol}{\mkern-1.2mu=\mathrel{\mathop:}\nobreak}
\newcommand{\pt}{\mathrm{pt}}
\newcommand{\supp}{\mathrm{supp}}
\newcommand{\mf}{\mathfrak}
\newcommand{\mc}{\mathcal}
\newcommand{\mr}{\mathrm}
\newcommand{\msf}{\mathsf}
\newcommand{\mbb}{\mathbb}
\newcommand{\T}{{\sf T}}
\newcommand{\Pn}{{\sf P}^n}
\newcommand{\Cn}{{\sf C}^n}
\newcommand{\id}{\mathrm{id}}
\newcommand{\expval}[1]{{\langle #1 \rangle}}
\newcommand{\ketbra}[2]{{\vert #1 \rangle \langle #2 \vert}}
\newcommand{\lket}[1]{\vert #1 \rangle\!\rangle}
\newcommand{\lbra}[1]{\langle\!\langle #1 \vert}
\newcommand{\lbraket}[2]{\langle\!\langle #1 \vert #2 \rangle\!\rangle}
\newcommand{\lketbra}[2]{\vert #1 \rangle\!\rangle\langle\!\langle #2 \vert}
\newcommand{\imq}{{\mr{Im}{\mc Q}}}
\newtheorem{theorem}{Theorem}
\newtheorem{lemma}{Lemma}%
\numberwithin{lemma}{section}
\newtheorem{proposition}[lemma]{Proposition}%
\newtheorem{problem}{Problem}
\newtheorem{corollary}[lemma]{Corollary}%
\newtheorem{definition}[lemma]{Definition}
\renewcommand{\thelemma}{\arabic{section}.\arabic{lemma}}
\definecolor{applegreen}{rgb}{0.55, 0.71, 0.0}
\newcommand{\comments}[1]{}
\newcommand{\algorithmfootnote}[2][\footnotesize]{%
  \let\old@algocf@finish\@algocf@finish%
  \def\@algocf@finish{\old@algocf@finish%
    \leavevmode\rlap{\begin{minipage}{\linewidth}
    #1#2
    \end{minipage}}%
  }%
}
\NewDocumentCommand{\LeftComment}{s m}{%
  \Statex \IfBooleanF{#1}{\hspace*{\ALG@thistlm}}\(\triangleright\) #2}
\algnewcommand{\LineComment}[1]{\Statex // #1}
\newcounter{algcounter}
\newcommand\algnum{\stepcounter{algcounter}\thealgcounter~}
\begin{document}

\title{Efficient self-consistent learning of gate set Pauli noise}
\author{Senrui Chen}
\affiliation{Pritzker School of Molecular Engineering, The University of Chicago, Chicago, Illinois 60637, USA}
\author{Zhihan Zhang}
\affiliation{Institute for Interdisciplinary Information Sciences, Tsinghua University, Beijing 100084, China}
\author{Liang Jiang} 
\affiliation{Pritzker School of Molecular Engineering, The University of Chicago, Chicago, Illinois 60637, USA}
\author{Steven T. Flammia}
\affiliation{Department of Computer Science, Virginia Tech, Alexandria, USA}
\affiliation{Phasecraft Inc., Washington DC, USA}
\date{\today}

\begin{abstract}

Understanding quantum noise is an essential step towards building practical quantum information processing systems.
Pauli noise is a useful model that has been widely applied in quantum benchmarking, error mitigation, and error correction.
Despite intensive study into Pauli noise learning, most existing works focus on learning Pauli channels associated with some specific gates rather than treating the gate set as a whole.
A learning algorithm that is self-consistent, complete, and efficient at the same time is yet to be established. 
In this work, we study the task of \emph{gate set Pauli noise learning}, where a set of quantum gates, state preparation, and measurements all suffer from unknown Pauli noise channels with a customized noise ansatz.
Using tools from algebraic graph theory, we analytically characterize the self-consistently learnable degrees of freedom for Pauli noise models with arbitrary linear ansatz, and design experiments to efficiently learn all the learnable information. %
Specifically, we show that all learnable information about the gate noise can be learned to relative precision, under mild assumptions on the noise ansatz.
We then demonstrate the flexibility of our theory by applying it to concrete physically motivated ansatzs (such as spatially local or quasi-local noise) and experimentally relevant gate sets (such as parallel CZ gates).
These results not only enhance the theoretical understanding of quantum noise learning, but also provide a feasible recipe for characterizing existing and near-future quantum computing devices.

\end{abstract}

\maketitle

\tableofcontents

\section{Introduction}

Quantum technology promises to bring significant advantages into various information processing tasks, including computation~\cite{nielsen2010quantum}, communication~\cite{gisin2007quantum,kimble2008quantum}, metrology~\cite{giovannetti2006quantum,giovannetti2011advances}, and more. 
Despite rapid experimental progress, noise remains as a major challenge in building a practical quantum information processing platform. 
In order to suppress, mitigate, and correct noise affecting a quantum device, it is essential to first gain knowledge about the noise. 
However, learning quantum noise is not easy. 
Noise is complicated, containing a large number of parameters that grows exponentially with the system size. 
Meanwhile, the available quantum control for noise learning is limited and can itself suffer from imperfection. 
Developing efficient, reliable, and informative quantum noise characterization protocols is thus an important problem and an active field of research~\cite{eisert2020quantum,proctor2024benchmarking,hashim2024practical}.

Among different noise characterization approaches, Pauli noise learning is a simple but powerful formalism that has gained increasing attention recently~\cite{Flammia2020,Erhard2019,Flammia2021,Harper2020,flammia2021averaged,wagner2023learning,van2023probabilistic,chen2023learnability,chen2022quantum,carignan2023error,van2023techniques,hockings2024scalable}. 
Here, the noise processes affecting different components (i.e., state preparation, measurements, gates) of a quantum device are modeled by Pauli channels, defined as a stochastic mixture of multi-qubit Pauli operations (see Eq.~\eqref{eq:Pauli_channel}). Pauli channels can naturally describe a wide spectrum of incoherent noises, including depolarizing and dephasing noises. 
It is also a widely used model in the study of quantum error correction and fault-tolerant quantum computing~\cite{chen2022calibrated}. 
Besides, there exist techniques known as randomized compiling~\cite{Knill2005,Wallman2016,hashim2020randomized} that, given mild assumptions about the noise and gate set, can tailor an arbitrary noise channel into a Pauli channel, justifying the validity of this noise model.

Despite intensive recent study, some core aspects of Pauli noise learning have not been well-understood. In particular, a learning algorithm that is simultaneously \emph{self-consistent}, \emph{comprehensive}, and \emph{efficient} is yet to be established. Here, self-consistency means using only operations that are noisy to learn about themselves, comprehensiveness means to learn all the information that is self-consistently learnable, and efficiency means the number of measurements and classical processing time scales favorable with the number of qubits. 
In this work, we build up a framework called \emph{gate set Pauli noise learning}, give analytic characterization of the self-consistently learnable degrees of freedom, and develop an algorithm with the above three desirable features. 

\begin{figure}[t]
    \centering
    \includegraphics[width=0.88\linewidth]{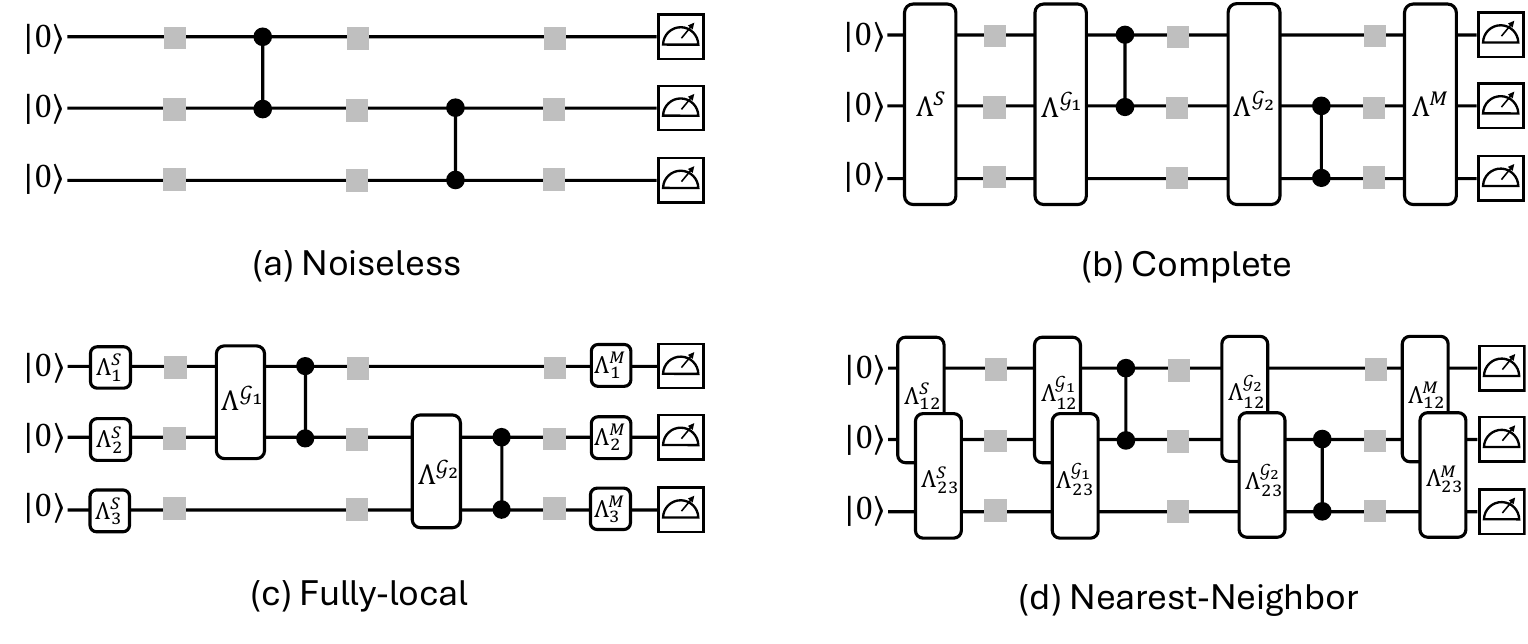}
    \caption{Examples of different noise ansatz on a 3-qubit system. The available set of multi-qubit gates are two CZ gates. The small square box represents single-qubit gates whose noise is ignored (or absorbed to multi-qubit gates and SPAM).
    (a) A noiseless circuit. (b) Circuit with complete Pauli noise, where the noise for SPAM and both gates are all generic $3$-qubit Pauli channels. (c) Circuit with fully local Pauli noise, where the SPAM noise channels are qubit-wise independent, while the gate noise channels has the same support as the gate. (d) Circuit with nearest-neighbor Pauli noise (a special case of quasi-local model), where each Pauli noise channel is nearest-neighbor $2$-local. See Sec.~\ref{sec:quasi} for a rigorous definition.}
    \label{fig:NoiseModel}
\end{figure}

\medskip
\noindent\textbf{Problem Setup.}
Consider an $n$-qubit quantum system. Suppose, by design, the following operations can be implemented on the platform: (1) Initialize the system to the ground state; (2) Apply a layer of $n$ single-qubit unitary gates; (3) Apply some multi-qubit Clifford gates gate from a finite set denoted by $ \mf G$; (4) Measure all qubits on the computational basis. We refer to the collection of all the above components as a \emph{gate set}, following the language of gate set tomography (GST)~\cite{nielsen2021gate}, although we are also including noisy SPAM (state preparation and measurement).
In practice, all the operations of a gate set can be noisy. Our goal is to characterize the noisy gate set \emph{self-consistently}. That is, using only operations from the gate set to learn about themselves.

Instead of dealing with the most general form of noise, we focus on what is known as the \emph{Pauli noise model}: We ignore the noise of any single-qubit gates, and assume the noise channels acting on the initial state, measurements, and any multi-qubit Clifford gate to be a (component-dependent) Pauli channel, which takes the following form,
\begin{equation}\label{eq:pauli_channel}
    \Lambda(\rho) = \sum_{a\in{\sf P}^n}p_aP_a\rho P_a = \sum_{b\in{\sf P}^n}\lambda_b P_b\Tr(P_b\rho)/2^n.
\end{equation}
Here, $\msf P^n \coleq \{I,X,Y,Z\}^{\otimes n}$ is the $n$-qubit Pauli group (modulo phase). $\{p_a\}$ are the \emph{Pauli error rates} while $\{\lambda_b\}$ are the \emph{Pauli fidelities} or \emph{Pauli eigenvalues}. We assume $\lambda_b>0$ throughout this work. Such a model can be ensured via randomized compiling under mild assumptions~\cite{Wallman2016,hashim2020randomized}.
The action of noise channel on the initial state $\rho_0$, POVM element $E_j$, and gate $\mc G\in\mf G$ is given by 
\begin{equation}
    \tilde{\rho}_0 = \Lambda^S(\rho_0),\quad \tilde E_j = \Lambda^M(E_j),\quad\tilde{\mc G}=\Lambda^{\mc G}\circ\mc G,
\end{equation}
respectively. A realization of the gate set Pauli noise model is specified by the involved Pauli channels $\{\Lambda^S,
\Lambda^M,\Lambda^{\mc G}:\mc G\in\mf G\}$. 
Since a generic Pauli channel contains exponentially many parameters, we often need to assume an efficient parametrization of the Pauli channels, which we call a \emph{noise ansatz}.
Examples of noise ansatzes include spatially local or quasi-local Pauli channels, as shown in  Fig.~\ref{fig:NoiseModel}(c) and (d), respectively.
We refer to a Pauli noise model with a noise ansatz as a \emph{reduced model}, and the one without a noise ansatz as a \emph{complete model}.
Throughout this work, we assume the true noise model is \emph{realizable}. 
That is, the true noise model can be represented exactly as a point in the model space. 

To characterize the noisy gate set means to learn all the Pauli noise channels with ansatz. However, it is generally impossible to fully determine those Pauli noise channels self-consistently~\cite{huang2022foundations,chen2023learnability}.
Indeed, consider the following \emph{gauge transformation}~\cite{nielsen2021gate} on the noisy operations, 
\begin{equation}\label{eq:gauge_transformation_intro}
    \tilde\rho_0\mapsto \mc D(\tilde\rho_0),\quad
    \tilde E_j\mapsto \mc D^{-1}(\tilde E_j),\quad
    \tilde{\mc G}\mapsto \mc D\tilde{\mc G}\mc D^{-1},\quad
    \bigotimes_{i=1}^n \mc U_i \mapsto \mc D\left(\bigotimes_{i=1}^n \mc U_i\right)\mc D^{-1},
\end{equation}
where $\mc D$ is some invertible linear map.
It is not hard to see that any experimental outcomes statistics remain unchanged under such transformation. Thus, two realizations of a noisy gate set related by a gauge transformation are \emph{indistinguishable}.
Meanwhile, one can carefully choose $\mc D$ (e.g., as a depolarizing channel on a subset of qubits) such that the transformed noisy gate set will remain a valid Pauli noise model within ansatz. That is, the single-qubit gates $\bigotimes_{i}\mc U_i$ remain noiseless, while $\Lambda^S,\Lambda^M,\Lambda^{\mc G}$ remain Pauli channels within the ansatz (possibly with different parameters). 

Now, consider a function of the gate set noise parameters. If the function value changes under certain valid gauge transformations, then the function is not self-consistently learnable. 
This raises the first question of gate set Pauli noise learning as follows,
\begin{problem}\label{prob:1}
How to classify the learnable and unlearnable functions of a gate set Pauli noise model? 
\end{problem}
\noindent This question is partially answered in~\cite{chen2023learnability} for the case that there is no constraint for the Pauli noise channels. When there are efficient ansatzes for the noise channels, the problem has remained open. After understanding what are the learnable degrees of freedom, the next natural problem is to find an efficient learning protocol, which can be summarized as,
\begin{problem}\label{prob:2}
    How to efficiently learn all the learnable information of a get set Pauli noise model?
\end{problem}
\noindent In this work, we will establish a generic theoretic framework to address the above two problems. To demonstrate the power and flexibility of our framework, we will apply it to Pauli noise models with concrete gate set and noise ansatzes that are considered in the literature (e.g.~\cite{flammia2021averaged,van2023probabilistic}) and resolve open problems therein. A detailed summary of our contributions is provided in the next section.

\subsection{Summary of Results}

\noindent\textbf{Learnability of gate set Pauli noise.}
Our approach to characterize the learnability (i.e. classification of learnable and unlearnable information) is to first encode all the noise parameters into a linear space. Let us start with the complete Pauli noise model (i.e. the one without a noise ansatz). Similar results have been obtained in~\cite{chen2023learnability} for this case, but here we will use a more formal way that can be extended to the reduced model later.

Define the logarithmic Pauli eigenvalues as $x_b\coleq -\log\lambda_b$. We collect all logarithmic Pauli eigenvalues (excluding the trivial ones corresponding to identity operator) from all Pauli noise channels into a ordered list denoted by $\bm x$, which can be viewed as a vector of a real linear space according to a standard basis. We denote the space $\bm x$ lives on as $X$, called the \emph{parameter space}. An \emph{experiment} is defined as a mapping $X\mapsto\mbb R^{2^n}$ specified by a sequence of gates $\mc C$. It maps $\bm x$ to a measurement outcome probability distribution given by the Born rule, $\Pr(j)=\Tr[\tilde E_j\tilde{\mc C}(\tilde{\rho}_0)]$, where the noisy realization of state, gates, and measurement is determined by $\bm x$. Now, we define a linear function $\bm f\in \mc L(X\mapsto\mbb R)\eqcol X'$ to be \emph{learnable} if its value can be completely specified by a set of experiments\footnote{One may wonder why we only consider \emph{linear} functions of the logarithmic Pauli eigenvalues. This is because any experiments can be determined by some learnable linear functions, so they suffices to characterize the learnability. See Theorem~\ref{th:linearOb} for more details.}; 
We also define a vector $\mf d\in X$ to be a \emph{gauge} vector if the two lists of noise parameters $\bm x$ and $\bm x+\mf d$ cannot be distinguished by any experiment, for all $\bm x\in X$. The linear subspace of all learnable linear function and all gauge vectors are denoted by $L$ and $T$, respectively. The problem then becomes characterizing $X$, $L$, and $T$.

For this purpose, we introduce a directed graph called the Pauli pattern transfer graph (PTG). This graph is first defined in~\cite{chen2023learnability} with some twist needed to fit into the current framework. The graph consists of $2^n$ nodes. One of which is a root node (also called a SPAM node), the others correspond to the $2^n-1$ non-zero $n$-bit strings, each denotes a Pauli pattern (the pattern of a $n$-qubit Pauli operator can be understood as its support); 
For each logarithmic Pauli eigenvalues, we assign a unique edge to the PTG: For gate noise parameters, each edge describe how a Clifford gate transform certain Pauli operators (where we only track the pattern); For each state preparation (or, measurement) noise parameter, we assign an edge from the root node to certain pattern node (or, from a pattern node to root). An example of PTG is given in Fig.~\ref{fig:cz_ptg}.

\begin{figure}[!thp]
    \centering
    \includegraphics[width=0.8\linewidth]{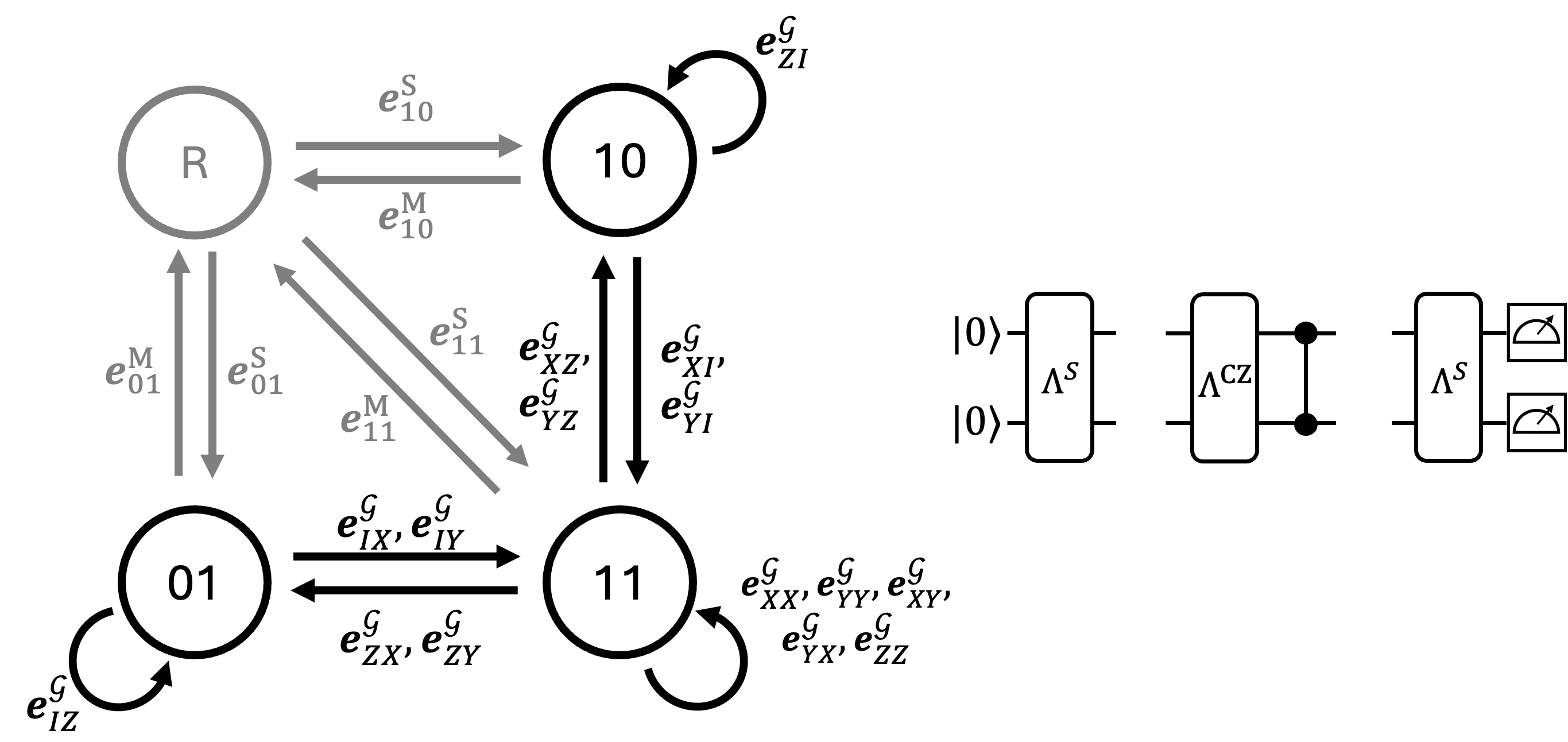}
    \caption{The pattern transfer graph (PTG) for a two-qubit gate set $\mf G=\{\mr{CZ}\}$. The gate set is depicted at the right of the graph. Here we define $\mc G=\mr{CZ}$ for notational simplicity. The black edges corresponds to gate noise parameters, while the gray edges corresponds to SPAM noise parameters. Multiple edges are shown as one edge with multiple labels.}
    \label{fig:cz_ptg}
\end{figure}

PTG contains all information about a complete Pauli noise model.
Intuitively, a path starting from and ending at the root node gives a combination of logarithmic Pauli eigenvalues that can be learned from an experiment.
This intuition can be made rigorous using tools from algebraic graph theory~\cite{bollobas1998modern,gleiss2003circuit}. Informally, let the edge space $E$ to be the real linear space spanned by all edges of PTG. For each cycle (or, cut) on PTG, we can assign it with a vector of $E$ in a proper way (see Sec.~\ref{sec:learnability_complete}). The subspace spanned by all cycle (or, cut) vectors is defined as the \emph{cycle space} $Z$ (or, the \emph{cut space} $U$). It is known that $Z$ and $U$ are orthogonal complements to each other within $E$.
Our main theorem is that, up to the natural identification between the parameter space $X$ and the edge space $E$, the learnable space $L$ is identical to the cycle space $Z$, while the gauge space $T$ is identical to the cut space $U$. This implies that $L$ and $T$ are also orthogonal complements to each other within $X$.\footnote{Rigorously speaking, $L$ is a subspace of the dual space $X'$. Here we identify $X$ with $X'$ using the standard basis for defining $X$.}

Next, we turn to study the learnability of a reduced model. A reduced model is specified by a tuple $(X_R,\mc Q)$. Here $X_R$ is the \emph{reduced parameter space}, a linear space whose components are viewed as lists of the reduced noise parameters (denoted by $\bm r$). $\mc Q:X_R\mapsto X$ is the \emph{embedding map} describing how the reduced parameters specifies the Pauli channels. Specifically, the logarithmic Pauli eigenvalues are given by $\bm x = \mc Q(\bm r)$.
We will assume $\mc Q$ is linear and injective. The notion of experiments on reduced model can be naturally induced from the definition for complete model. We can also similarly define the reduced learnable space $L_R$ (learnable functions of $X_R$) and the reduced gauge space $T_R$ (indistinguishable transformations within $X_R$). The problem is then to characterize $X_R$, $L_R$, $T_R$. The challenge here is that, it is not easy to construct a graph describing the reduced noise parameters as in the complete model. Our approach is to connect the reduced spaces back to the complete spaces, the use the characterization that we have already obtained.
We proved that $L_R = \mc Q^{\T}(L)$,~$T_R = \mc Q^{-1}(T)$. In words, the reduced learnable space is the complete learnable space projected by $\mc Q^\T$, while the reduced gauge space is the preimage of the complete gauge space via $\mc Q$.

The relation between all the linear spaces introduced above can be summarized as follows.

\begin{theorem}[Learnability, informal]\label{th:learnability_all}
For any reduced model $(X_R,\mc Q)$ such that $\mc Q$ is linear and injective,
    \vspace{1mm}

    \begin{center}
    \begin{tabular}{c c c c c c}
        \makecell{ Reduced model:\\{}}&\makecell{$\mathop{X_R}$\\{\scriptsize(reduced parameter)}} & \makecell{$=$\\{}} & \makecell{$\mathop{L_R}$\\{\scriptsize(reduced learnable)}} & \makecell{$\oplus^{\perp}$\\{}} & \makecell{$\mathop{T_R}$\\{\scriptsize(reduced gauge)}}\\
        
        &\makecell{$~~\downarrow_\mc Q$}&&\makecell{$\quad\uparrow_{\mc Q^\T}$}&&\makecell{$\quad~~\uparrow_{\mc Q^{-1}}$}\\
        
        \makecell{ Complete model:\\{}}&\makecell{$\mathop{X}$\\{\scriptsize(parameter)}} & \makecell{$=$\\{}} & \makecell{$\mathop{L}$\\{\scriptsize(learnable)}} & \makecell{$\oplus^{\perp}$\\{}} & \makecell{$\mathop{T}$\\{\scriptsize(gauge)}}\\

        &$\parallel$ && $\parallel$ && $\parallel$\\

        \makecell{ Pattern transfer graph:\\{}}&\makecell{$\mathop{E}$\\{\scriptsize(edge)}} & \makecell{$=$\\{}} & \makecell{$\mathop{Z}$\\{\scriptsize(cycle)}} & \makecell{$\oplus^{\perp}$\\{}} & \makecell{$\mathop{U}$\\{\scriptsize(cut)}}\\
    \end{tabular}.
    \end{center}
    The equality is up to the natural identification between $X$, $X'$, and $E$. $\bigoplus^\perp$ means orthogonal complements.
\end{theorem}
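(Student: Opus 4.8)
The plan is to prove the three rows from the bottom up, using the graph-theoretic layer as the anchor and then transporting everything upward through the natural identifications and through the embedding map $\mc Q$. The backbone is the classical fact from algebraic graph theory that, inside the edge space $E$ of the PTG, the cycle space $Z$ and the cut space $U$ are orthogonal complements, $E = Z \oplus^\perp U$; I would take this as given, since it is precisely the statement that the boundary and coboundary maps have orthogonal images. The only structural input I then need at the interface is the identification $X \cong E$ sending each logarithmic Pauli eigenvalue $x_b$ to its assigned edge, which is an isometry for the standard bases and therefore lets me transport orthogonality freely between the two layers.

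First I would nail down the complete-model middle row, $X = L \oplus^\perp T$ with $L = Z$ and $T = U$, and the cleanest route is to establish $T = U$ and then obtain $L = Z$ for free. For $U \subseteq T$, I would exhibit, for each node potential on the PTG, an explicit gauge map $\mc D$ (a Pauli-diagonal, depolarizing-type channel built from that potential) whose multiplicative action on the Pauli fidelities becomes, in logarithmic coordinates, exactly the coboundary of the potential; this shows every cut vector is a gauge direction. For $T \subseteq U$, I would argue that any invertible $\mc D$ preserving the Pauli-noise structure of the entire gate set must act as a per-pattern rescaling, so that its effect on $\bm x$ is a difference of endpoint potentials along each edge, i.e.\ a cut vector; the conjugation constraint $\tilde{\mc G} \mapsto \mc D \tilde{\mc G} \mc D^{-1}$ together with $\lambda_b > 0$ is what pins $\mc D$ down to this form. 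Having $T = U$, the equivalence ``learnable $=$ gauge-invariant'' supplies $L = T^\perp$: a linear $\bm f$ obeys $\bm f(\bm x + \mf d) = \bm f(\bm x) + \langle \bm f, \mf d\rangle$, so it is invariant under all gauge shifts iff $\bm f \perp T$, and by the informational completeness of Pauli experiments (Theorem~\ref{th:linearOb}) every such $\bm f$ is indeed reconstructible. Hence $L = T^\perp = U^\perp = Z$.

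Next I would transport this to the reduced model, where the key object is the indistinguishability relation. The complete-model analysis actually yields more than $T = U$: it shows that $\bm x$ and $\bm x'$ are indistinguishable iff $\bm x - \bm x' \in T$, a \emph{translation-invariant} relation. I would use this to prove $T_R = \mc Q^{-1}(T)$: since reduced experiments are complete experiments precomposed with $\mc Q$, a vector $\mf d \in X_R$ is a reduced gauge direction iff $\mc Q(\bm r)$ and $\mc Q(\bm r) + \mc Q(\mf d)$ are indistinguishable for all $\bm r$, which by translation invariance collapses to the single condition $\mc Q(\mf d) \in T$; injectivity of $\mc Q$ enters here to guarantee the reduced experiments introduce no spurious extra equivalences. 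Defining $L_R$ once more as the gauge-invariant linear functions on $X_R$ then gives $L_R = T_R^\perp$.

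Finally the top row and the remaining identification fall out by pure linear algebra. For any orthogonal splitting $X = L \oplus^\perp T$ and any linear $\mc Q : X_R \to X$, one checks directly, using only $\langle \mc Q^{\T} \bm f, \mf d\rangle = \langle \bm f, \mc Q\,\mf d\rangle$, that $\bigl(\mc Q^{\T}(L)\bigr)^{\perp} = \mc Q^{-1}(L^{\perp}) = \mc Q^{-1}(T)$. Combined with $L_R = T_R^\perp$ and $T_R = \mc Q^{-1}(T)$, this delivers simultaneously $L_R = \mc Q^{\T}(L)$ and the orthogonal decomposition $X_R = L_R \oplus^\perp T_R$, closing the diagram. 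I expect the genuine obstacle to be the complete-model inclusion $T \subseteq U$: proving that the gate set admits \emph{no} gauge freedom beyond the coboundary vectors requires showing that every structure-preserving $\mc D$ is exactly a per-pattern rescaling, which is where the specific Clifford action and the positivity assumption $\lambda_b > 0$ do the real work. Once the indistinguishability relation is known to be translation-invariant, the reduced-model and linear-algebra steps are comparatively formal.
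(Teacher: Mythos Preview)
Your argument for $T \subseteq U$ has a genuine gap. You propose to classify all structure-preserving invertible maps $\mc D$ and show each is a per-pattern rescaling, hence a coboundary. But this does not address the paper's definition of $T$ (Definition~\ref{de:gauge}): $\bm y \in T$ means $F(\bm x) = F(\bm x + \eta\bm y)$ for \emph{all experiments} $F$, with no $\mc D$ in sight. Even if every similarity map $\mc D$ were a per-pattern rescaling, that would not rule out ``accidental'' indistinguishabilities---directions $\bm y$ along which every experiment is constant without any underlying $\mc D$ realizing the shift. So classifying $\mc D$'s bounds the wrong set.

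The paper closes this gap with Theorem~\ref{th:linearOb}(b): for every rooted cycle $\bm f$ there is a concrete experiment whose expectation equals $\exp(-\bm f(\bm x))$. Since rooted cycles span $Z$ (Lemma~\ref{le:rooted_cycle_basis}), any $\bm y \in T$ must be annihilated by all of $Z$, hence $\bm y \in Z^\perp = U$. You already invoke Theorem~\ref{th:linearOb} for the step $T^\perp \subseteq L$, so the tool is in hand; the fix is simply to deploy part (b) for $T \subseteq U$ as well and drop the $\mc D$-classification. Your explicit-$\mc D$ construction for $U \subseteq T$ is fine as an alternative (the paper does it via Theorem~\ref{th:linearOb}(a), showing every experiment factors through rooted cycles, but your route also works and is essentially the SDG construction of Sec.~\ref{sec:fully_learnability}). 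Your reduced-model step---$T_R = \mc Q^{-1}(T)$, $L_R = \mc Q^{\T}(L)$, and $X_R = L_R \oplus^\perp T_R$ from the adjoint identity $(\mc Q^{\T}(L))^\perp = \mc Q^{-1}(L^\perp)$---is correct and matches Theorem~\ref{th:reduced} and Corollary~\ref{cor:main}.
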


\noindent Theorem~\ref{th:learnability_all} resolved Problem~\ref{prob:1}, giving a complete characterization about the learnable degrees of freedom for a Pauli noise model with general gate sets and noise ansatzes.
They also serve as the basis for developing efficient self-consistent learning algorithms. Rigorous statements and detailed proofs for Theorem~\ref{th:learnability_all} are presented in Sec.~\ref{sec:framework}.

\medskip
\noindent\textbf{Efficient learning algorithms for gate set Pauli noise.}
    We now consider Problem~\ref{prob:2}, how to efficiently and self-consistently learn a Pauli noise model. 
    Thanks to Theorem~\ref{th:learnability_all}, to learn a reduced Pauli noise model means to learn every functions of the reduced learnable space $L_R$. It suffices to find a basis for $L_R$ and learn the value of each basis function, as all the other functions can be decided by linearity. This can be done by first finding certain cycle basis for $E=L$
    with each basis function learnable from some experiment, and then use the fact $L_R = \mc Q^\T(L)$ to pick up a subset of functions to form a basis for $L_R$.
    We give such a construction with the additional feature that each experiments uses at most one noisy gate, summarized as follows.
    \begin{theorem}[Simple experiment construction, informal]\label{th:exp_simple}
        There exists a set of $M=\dim(L_R)$ experiments that can determine the value of any functions of $L_R$. Moreover, any experiment from this set either apply one gate from $\mf G$ exactly once, or apply no gates from $\mf G$ at all.
    \end{theorem}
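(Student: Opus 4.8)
The plan is to reduce the entire construction to a purely graph-theoretic problem on the pattern transfer graph, exploiting the identifications $L = Z$ and $L_R = \mc Q^\T(L)$ supplied by Theorem~\ref{th:learnability_all}. Since $L_R = \mc Q^\T(Z)$, it suffices to produce a finite family of cycles that spans the cycle space $Z$, each of which is \emph{realizable} by an experiment that applies at most one gate from $\mf G$; pushing this family forward by $\mc Q^\T$ gives a spanning set of $L_R$, from which I extract a basis of exactly $M = \dim(L_R)$ elements. Each surviving cycle $c$ is realized by a single experiment, and because the value of $c$ at the true parameters $\bm x = \mc Q(\bm r)$ equals $(\mc Q^\T c)(\bm r)$, that experiment measures precisely the value of the associated element of $L_R$; linearity then determines every function in $L_R$.

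First I would identify which cycles correspond to simple experiments. Two families suffice: the \emph{SPAM cycles} $\mathrm{root}\to p \to \mathrm{root}$, realized by preparing and measuring a Pauli eigenstate of pattern $p$ with no gate applied; and the \emph{single-gate cycles} $\mathrm{root}\to \pt(a) \to \pt(\phi_{\mc G}(a)) \to \mathrm{root}$, realized by preparing a Pauli eigenstate of pattern $\pt(a)$, applying $\mc G$ exactly once, and measuring the Pauli $\phi_{\mc G}(a)$ to which it is conjugated, with noiseless single-qubit gates freely inserted to route the patterns. In edge-space coordinates these read $s_p + m_p$ and $s_{\pt(a)} + e_a^{\mc G} + m_{\pt(\phi_{\mc G}(a))}$, where $s$, $m$, $e$ denote state-preparation, measurement, and gate edges. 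By construction each such cycle uses at most one gate edge, matching the claimed experiment structure.

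The crux is to show that these realizable cycles span all of $Z$. Here I would fix a spanning tree contained in the state-preparation edges $\{s_p\}$, one per pattern node, which is legitimate because state preparation composed with single-qubit rotations reaches every pattern. The fundamental cycle of each non-tree edge then lies in the span of realizable cycles: a non-tree measurement edge $m_p$ produces the SPAM cycle $s_p + m_p$ outright, while a non-tree gate edge $e_a^{\mc G}$ produces $s_{\pt(a)} + e_a^{\mc G} - s_{\pt(\phi_{\mc G}(a))}$, which I rewrite as $\bigl(s_{\pt(a)} + e_a^{\mc G} + m_{\pt(\phi_{\mc G}(a))}\bigr) - \bigl(s_{\pt(\phi_{\mc G}(a))} + m_{\pt(\phi_{\mc G}(a))}\bigr)$, a difference of one single-gate cycle and one SPAM cycle. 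As the fundamental cycles form a basis of $Z$ and each decomposes this way, the realizable cycles span $Z$. I expect this decomposition to be the main obstacle: one must verify that every fundamental cycle, including those whose tree path passes through the root, can be re-expressed using only root-anchored cycles with at most one gate edge, and one must confirm that these abstract edge combinations genuinely equal the logarithmic Pauli fidelities estimated by the stated experiments.

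Finally, given a spanning family of realizable cycles $\{c_i\}$, I would apply $\mc Q^\T$ and select a maximal linearly independent subset of $\{\mc Q^\T c_i\}$, which necessarily has size $\dim(\mc Q^\T(Z)) = \dim(L_R) = M$. The $M$ experiments attached to the selected cycles each use at most one gate from $\mf G$ and jointly determine a basis of $L_R$, hence every learnable function, completing the construction.
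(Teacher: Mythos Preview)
Your proposal is correct and follows essentially the same route as the paper: the paper explicitly exhibits the same family $\mathsf B=\{\bm e^S_u+\bm e^M_u\}\cup\{\bm e^S_{\pt(a)}+\bm e^{\mc G}_a+\bm e^M_{\pt(\mc G(a))}\}$ of length-$2$ and length-$3$ rooted cycles, shows it is a basis of $Z$, and then selects a subset whose $\mc Q^\T$-images form a basis of $L_R$. The only difference is in the spanning argument for $Z$: the paper proves $\mathsf B$ is a basis by a dimension count ($|\mathsf B|=|E|-(|V|-1)=\dim Z$) plus linear independence, whereas you argue via a spanning tree of state-preparation edges and decompose each fundamental cycle into realizable ones---both are standard and equivalent here.
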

    \noindent 
    To show this, we first show that any cycle containing the root node exactly once (called a \emph{rooted cycle}) can by learned by one experiment. Then notice that, thanks to the connectivity of PTG, there exists a rooted cycle basis where every cycle is of length 2 or 3, corresponding to experiment with no gate or exactly one gate from $\mf G$.
    For an efficient noise ansatz, the number of reduced parameters should be at most polynomial in $n$, which means the number of experiments $M$ needed in Theorem~\ref{th:exp_simple} is also polynomial. The protocol can thus be efficiently executed. Details about this protocol are presented in Sec.~\ref{sec:learning_simple}.

    \medskip
    
    In the literature of quantum noise learning, an important consideration is whether the noise parameter can be learned to relative precision. That is, given a sufficiently small infidelity parameter $r$ of a noisy gate $\tilde{\mc G}$, can we obtain an estimator $\hat r$ such that $|\hat r - r|\le O(\varepsilon r)$ with high probability, by making a number of measurement that scales as $\varepsilon^{-2}\mr{polylog}(r^{-1})$? It is known that, if one can concatenate $m$ copies of $\tilde{\mc G}$ to amplify the infidelity to order $r^m$ for any $m$, this scaling can be achieved~\cite[Theorem 1]{harper2019statistical}.
    We should that such noise amplification is indeed possible for gate set Pauli noise model. We have that,
    \begin{theorem}[Amplifying gate noise parameters, informal]\label{th:amplify}
        For any function that is a cycle vector and only supported on gate noise parameters, let $\lambda \coleq \exp(-\bm f(\bm x))$.
        Then, there exists a family of experiments $\{F_m\}_{m\in\mbb N}$ and a Pauli observable, such that the expectation value of the $m$th experiment is given by $A\lambda^m$, where $A$ is an $m$-independent parameter.
    \end{theorem}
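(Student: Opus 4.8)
The plan is to amplify $\lambda=\exp(-\bm f(\bm x))$ by physically repeating the gate cycle encoded by $\bm f$ a total of $m$ times, in direct analogy with cycle benchmarking. Since $\bm f$ is a cycle vector supported only on gate edges, it corresponds to a closed walk in the pattern transfer graph living entirely among the non-root pattern nodes; I write it as a directed cycle $v_0\xrightarrow{\mc G_{j_1}}v_1\xrightarrow{\mc G_{j_2}}\cdots\xrightarrow{\mc G_{j_k}}v_k=v_0$, where edge $i$ carries a gate fidelity $\lambda_i$ with $\prod_{i}\lambda_i=\lambda$. To turn this into an experiment I first attach the cycle to the root by a fixed connecting path (a state-preparation edge into $v_0$ and a measurement edge out of $v_0$), producing a rooted closed walk of the form $\mr{root}\to v_0\to(\text{cycle})^{m}\to v_0\to\mr{root}$; by the rooted-cycle-to-experiment correspondence established for Theorem~\ref{th:exp_simple}, such a walk is realized by a single experiment whose estimator equals the product of all Pauli fidelities traversed along the walk.

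The core construction of $F_m$ is then as follows. Let $P_{a_0}$ be a Pauli with pattern $v_0$; I prepare an approximation of its eigenstate by initializing in the computational basis and applying a noiseless single-qubit layer, apply the gate sequence realizing one traversal of the cycle, repeat this sequence $m$ times, and finally measure $P_{a_0}$. The chosen Pauli observable is $P_{a_0}$, independent of $m$. The key point is that between consecutive multi-qubit gates, and in particular at the seam $v_k=v_0$ where one traversal hands off to the next, I insert noiseless single-qubit gates to rotate the current Pauli onto the specific representative required by the next edge, and to undo any sign accrued around the loop. Because a single-qubit Clifford layer can send any Pauli to any other Pauli of the same pattern (with a sign it can also set), each complete traversal implements the Heisenberg-picture map $P_{a_0}\mapsto \lambda\,P_{a_0}$ exactly, so $m$ traversals yield $\lambda^{m}P_{a_0}$.

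Collecting the remaining factors gives the claim: the state-preparation fidelity $\lambda^S_{a_0}$, the measurement fidelity $\lambda^M_{a_0}$, and any fidelities picked up on the fixed connecting path enter only once and are therefore independent of $m$, so they combine into a single constant $A$ and the $m$th experiment has expectation $A\lambda^{m}$. A general gate-supported cycle vector that does not reduce to a single directed cycle is handled by decomposing it, via flow conservation at each node, into directed cycles and chaining them: each cycle is repeated $m$ times while the finitely many connecting segments between cycles are traversed a fixed number of times, contributing only to $A$, and the repeated parts multiply to $\prod_C\lambda_C^{m}=\lambda^{m}$.

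The main obstacle is the lift from the pattern level, where the cycle closes by construction, to the Pauli level, where I need the full gate sequence to return $P_{a_0}$ to itself with an $m$-independent sign rather than merely to some Pauli of pattern $v_0$. This is exactly where the noiselessness and full single-qubit-Clifford freedom of the single-qubit layers is essential: they let me realign Paulis and cancel signs between multi-qubit gates without ever touching the gate fidelities $\lambda_i$, which is also what guarantees that $A$ does not depend on $m$. A secondary subtlety is that realizing a closed walk may require traversing a PTG edge against its orientation; this is immediate for self-inverse gates such as $\mr{CZ}$, and in general must be arranged either by choosing a directed-cycle representative of $\bm f$ or by assuming the relevant inverse gate actions are available in $\mf G$.
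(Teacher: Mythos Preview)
Your proof is correct and follows essentially the same route as the paper: pick a vertex $u$ on the gate-only cycle $\bm g$, attach the SPAM edges $\bm e^S_u+\bm e^M_u$ to obtain a rooted cycle $\bm\alpha$, observe that $\bm\alpha+m\bm g$ is again a rooted cycle for every $m$, and invoke the rooted-cycle-to-experiment correspondence (Theorem~\ref{th:linearOb}(b)) to realize it by a circuit of the form $\mc C^m$; your explicit discussion of the single-qubit Clifford realignment at the seam is precisely what the paper means when it says ``it is clear from the proof of Theorem~\ref{th:linearOb}(b) that each $\mc C_m$ can be chosen as $\mc C^m$.'' Your final two caveats are unnecessary: in the paper's conventions a ``cycle vector'' is already the vector of a single \emph{directed} closed walk, so no flow decomposition into multiple cycles and no reverse-edge traversals are ever needed.
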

    \noindent Theorem~\ref{th:amplify} ensures one can achieve relative precision for a basis of learnable functions supported on the gate noise parameters, confirming that gate set Pauli noise learning has the same desirable feature as other quantum noise learning schemes such as randomized benchmarking~\cite{Gambetta2012}. Details about Theorem~\ref{th:amplify} are presented in Sec.~\ref{sec:learning_relative}.

    \medskip

    Theorem~\ref{th:exp_simple} and~\ref{th:amplify} address Problem~\ref{prob:2} from two different aspects, providing efficient experiment construction for gate set Pauli noise learning. 
    The key is to find an appropriate basis for the reduced learnable space. One caveat is that, since the numbers of edges and nodes in PTG grow exponentially with the number of qubits $n$, an algorithm that first finds a cycle basis in PTG and then select a subset to construct a basis for $L_R$ will inevitably suffer from exponential run time. 
    However, when the noise ansatzes have some structure, we can find such a basis analytically, thus circumventing the run time issue.
    In what follows, we demonstrate this point by applying our theory to concrete physically relevant noise ansatzes.

\medskip
\noindent\textbf{Applications to spatially-local noise ansatzes.}
We now apply the general theory of learnability and learning algorithms to concrete noise ansatzes. We will focus on spatially local or quasi-local noise model as an example. Such noise models have been widely studied in the literature of Pauli noise learning~\cite{Flammia2020,flammia2021averaged,hockings2024scalable,wagner2023learning,van2023probabilistic,carignan2023error,pelaez2024average}, but our results for the first time give a clear classification of the learnable and gauge degrees of freedom within such models.

    We first consider the fully-local model,
    where the SPAM noise channels are assumed to be qubit-wise product channels, while the gate noise channels act non-trivially only within the gate's support. 
    This model is also sometimes known a crosstalk-free model and have been considered in, e.g., ~\cite{flammia2021averaged,hockings2024scalable,pelaez2024average}. 
    To understand the learnable and gauge spaces for this model, we introduce a family of gauge vectors called the \emph{subsystem depolarizing gauges} (SDG), denoted by $\{\mf d_\nu\}_{\nu\in 2^{[n]}}$, where $\mf d_\nu$ represents a gauge transformation as in Eq.~\eqref{eq:gauge_transformation_intro} with $\mc D$ chosen as a partially depolarizing channel acting on a subsystem of qubits $\nu$.
    One can show the SDGs forms a basis for the complete cut space $T$. 
    Interestingly, we show that the subset of SDG's acting on one qubits fully characterize the reduced gauge space $T_R$ for a fully-local model.
    \begin{theorem}\label{th:fully_simplified}
    For any fully-local noise model $(X_R,\mc Q)$, its reduced gauge space satisfies,
    \begin{equation}
        \mc Q(T_R) = \mr{span}\{\mf d_s:|s|=1\}.
    \end{equation}
    \end{theorem}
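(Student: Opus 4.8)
The plan is to reduce the claim to a statement purely about the complete cut space and then prove two inclusions. Since $T_R=\mc Q^{-1}(T)$ and the identity $\mc Q(\mc Q^{-1}(T))=T\cap\mr{Im}(\mc Q)$ holds for any linear map, the theorem is equivalent to
\begin{equation*}
    T\cap\mr{Im}(\mc Q)=\mr{span}\{\mf d_s:|s|=1\}.
\end{equation*}
In both directions the workhorse is the explicit form of the subsystem depolarizing gauges, so I would first record that $\mf d_\nu$, as a cut vector, is the coboundary of the node potential $\psi_\nu(p)=\mb 1[p\cap\nu\neq\emptyset]$ on the PTG (with $\psi_\nu(\mr{root})=0$): on a state edge $\mr{root}\to p$ it assigns $\psi_\nu(p)$, on a measurement edge $p\to\mr{root}$ it assigns $-\psi_\nu(p)$, and on a gate edge $\pt(a)\to\pt(\sigma_{\mc G}(a))$ it assigns $\psi_\nu(\pt(\sigma_{\mc G}(a)))-\psi_\nu(\pt(a))$, where $\sigma_{\mc G}$ is the Clifford permutation of $\mc G$. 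This is just the statement that jointly depolarizing the block $\nu$ damps exactly those Pauli eigenvalues whose support meets $\nu$ (note this indicator form, not $|\supp(b)\cap\nu|$, is what makes the $\mf d_\nu$ independent and hence a basis). I would also take $\mf d_\emptyset=0$, so the basis of $T$ is $\{\mf d_\nu:\nu\neq\emptyset\}$.

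For the inclusion $\supseteq$, I would show each single-qubit gauge $\mf d_{\{i\}}$ lies in $\mr{Im}(\mc Q)$, i.e. it is a valid tangent direction of the fully-local ansatz; since each $\mf d_\nu$ is already in $T$ this suffices. The state and measurement components change $x^S_b,x^M_b$ by $\pm\mb 1[b_i\neq I]$, which touches only the single-qubit factor on qubit $i$ and hence preserves the qubit-wise product structure. For the gate component I would argue at the level of channels that $\mc D_i\,\Lambda^{\mc G}\mc G\,\mc D_i^{-1}=\mc D_i\,\Lambda^{\mc G}\,(\mc G\mc D_i^{-1}\mc G^{-1})\,\mc G$, and that $\mc G\mc D_i^{-1}\mc G^{-1}$ stays supported inside $\supp(\mc G)$ whenever $i\in\supp(\mc G)$, because a Clifford supported on $\supp(\mc G)$ maps Paulis supported in $\supp(\mc G)$ to Paulis supported in $\supp(\mc G)$; when $i\notin\supp(\mc G)$ the factors commute and the gate noise is unchanged. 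Either way the transformed gate noise remains within $\supp(\mc G)$, so the ansatz is preserved and $\mf d_{\{i\}}\in\mr{Im}(\mc Q)$.

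For the converse $\subseteq$, I would take $\mf d\in T\cap\mr{Im}(\mc Q)$, expand it as $\mf d=\sum_{\nu\neq\emptyset}c_\nu\mf d_\nu$, and show $c_\nu=0$ whenever $|\nu|\ge 2$. The key point is that the state-SPAM component of $\mf d$, namely $\Delta x^S_b=\sum_\nu c_\nu\,\mb 1[\supp(b)\cap\nu\neq\emptyset]$, must be additive over qubits because membership in $\mr{Im}(\mc Q)$ forces qubit-wise product SPAM; combined with its dependence only on the pattern of $b$, this makes it affine in the indicators $y_j=\mb 1[b_j\neq I]$ and (after extending by $0$ at the identity, where $x^S_I=0$) a degree-$\le 1$ multilinear polynomial vanishing at the origin. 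Using $\mb 1[\supp(b)\cap\nu\neq\emptyset]=1-\prod_{j\in\nu}(1-y_j)$ and expanding into the monomial basis $\{\prod_{j\in S}y_j\}_{S\neq\emptyset}$, the coefficient of $\prod_{j\in S}y_j$ is $(-1)^{|S|+1}\sum_{\nu\supseteq S}c_\nu$. The degree bound forces $\sum_{\nu\supseteq S}c_\nu=0$ for all $|S|\ge 2$, and a downward induction on $|\nu|$ (taking $S=\nu$ at each step) then yields $c_\nu=0$ for every $|\nu|\ge 2$, so $\mf d\in\mr{span}\{\mf d_s:|s|=1\}$.

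The main obstacle I anticipate is the $\subseteq$ direction, and within it two points of care. First, one must translate "membership in $\mr{Im}(\mc Q)$" into concrete linear constraints on the log-eigenvalue vector --- additivity over qubits for SPAM, and support-confinement (the gate eigenvalue $x^{\mc G}_b$ depending only on $b_{\supp(\mc G)}$) for gates --- and check the SDG components obey the coboundary form above; notably only the SPAM constraint is needed to kill the multi-qubit coefficients, while the gate constraint is what powers the $\supseteq$ direction. Second, because depolarizing gauges act only at the level of Pauli patterns, the monomial coefficients must be read off on the correct domain: extending $\Delta x^S$ by $0$ at the identity turns it into a genuine function on $\{0,1\}^n$ vanishing at the origin, so the monomials $\{\prod_{j\in S}y_j\}_{S\neq\emptyset}$ form an honest basis and the coefficient-matching is unambiguous.
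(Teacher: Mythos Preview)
Your proof is correct. The $\supseteq$ direction matches the paper's argument (Theorem~\ref{th:fully_learnability}(b), cases $s\subset\supp(\mc G)$ and $s\cap\supp(\mc G)=\varnothing$), just phrased at the channel level rather than coordinate-by-coordinate. The $\subseteq$ direction takes a genuinely different route: the paper observes that the projection $\pi^S:T\to X^S$ is an isomorphism (it sends the canonical single-vertex cut basis to $\{-\bm e_u^S\}$) and finishes by dimension counting, $\dim\{\bm z\in T:\pi^S(\bm z)\in\imq^S\}=\dim(\imq^S)\le n$, so the already established $\supseteq$ forces equality; you instead expand an arbitrary gauge in the SDG basis, write its SPAM component as a multilinear polynomial in the indicators $y_j$ via inclusion--exclusion, and kill the coefficients $c_\nu$ with $|\nu|\ge 2$ by downward induction. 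The paper's route is shorter and exposes the structural reason (the bijection between cuts and state-preparation parameters), while yours is more elementary and self-contained, and is in fact closer in spirit to how the paper later handles the quasi-local case (Eq.~\eqref{eq:quasi_linear_indep} in the proof of Theorem~\ref{th:quasi_learnability}(a)), where an analogous M\"obius-type identity relates $\{\mc Q^S(\bm\vartheta_\nu^S)\}$ and $\{\pi^S(\mf d_\nu)\}$. Both approaches correctly use only the SPAM constraint for the $\subseteq$ direction, as you note.
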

    \noindent 
    Since $\mc Q$ is injective, this allows one to uniquely determine $T_R$.
    Intuitively, any multi-qubit SDG transformation will violate the locality assumption of qubit-wise independent SPAM noise (and sometimes also local gate noise), and is thus an invalid gauge transformation. 
    We also studied the more involved cases where either SPAM or gates has a fully-local noise assumption while the other does not. Even in that case, we can show that certain subset of SDG's still spanned $\mc Q(T_R)$. Detailed are presented in Sec.~\ref{sec:fully_learnability}.

    Regarding learning protocols, the reduced learnable space $L_R$ and a set of $\dim(L_R)$ experiments can be analytically written down. We further study the task of relative precision learning, which boils down to finding a cycle basis for the reduced gate noise parameters. Though we believe this is a hard problem in general, we give a computationally efficient procedure for the case where all Clifford gate from $\mf G$ acts non-trivially on two qubits, a practical setting for most existing quantum computing architectures (e.g., \cite{ibmquantum}). These issues are addressed in Sec.~\ref{sec:local_learning} and~\ref{sec:local_relative}.

    \medskip

    Going beyond the fully local model, we apply our theory to a quasi-local noise model that describes certain degrees of spatial correlation while remains parameter-efficient. 
    The quasi-local Pauli noise channels we consider are those whose Pauli eigenvalues factorize according to a factor graph~\cite{mao2005factor}.
    We first introduce some notations that are needed for defining the quasi-local noise model,
    defined as follows
    \begin{definition}\label{de:quasi_intro}
    An $n$-qubit Pauli channel $\Lambda$ is $\Omega$-local if its Pauli eigenvalues satisfies
    \begin{equation}
        \lambda_a = \prod_{b\triangleleft a,b\sim \Omega}\exp(-r_b),\quad \forall a\neq I_n,
    \end{equation}
    for some $\{r_b\in\mbb R:b\sim\Omega, b\neq I_n\}$ and a factor set $\Omega\subseteq 2^{[n]}$.
    \end{definition}
    \noindent Here, $\Omega\subseteq 2^{[n]}$ is a \emph{factor set} if it satisfies $\forall \nu\in\Omega$, $\forall \varnothing\subsetneq \mu\subseteq \nu$, $\mu\in\Omega$.
    We write $b\triangleleft a$ if $b_i\neq I\Rightarrow a_i = b_i$ for all $i\in[n]$. We write $b\sim\Omega$ if the support of $b$ belongs to $\Omega$.
    We show that, under certain ``covariance'' assumptions of the quasi-local model, the reduced gauge space is again characterized by a subset of SDGs that are consistent with the channel factorization,
    \begin{theorem}(Informal)\label{th:quasi_simplified}
    For any quasi-local noise model $(X_R,\mc Q)$, given that all Pauli channels are $\Omega$-local and is covariant for all gate in $\mc G$, then its reduced gauge space satisfies,
    \begin{equation}
        \mc Q(T_R) = \mr{span}\{\mf d_\nu:\nu\in\Omega\}.
    \end{equation}
    \end{theorem}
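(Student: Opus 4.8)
The plan is to convert the statement into an intersection of the complete gauge space with the image of the embedding, and then resolve that intersection combinatorially. Since $\mc Q$ is injective and $T_R=\mc Q^{-1}(T)$ (established in Theorem~\ref{th:learnability_all}), I would first note that $\mc Q(T_R)=\mc Q(\mc Q^{-1}(T))=T\cap\imq$. As the subsystem depolarizing gauges $\{\mf d_\nu\}_{\varnothing\neq\nu\subseteq[n]}$ already form a basis of $T$, the whole theorem reduces to the single identity $T\cap\imq=\mr{span}\{\mf d_\nu:\nu\in\Omega\}$, which I would attack by proving the two inclusions separately.

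For the inclusion ``$\supseteq$'', the key preliminary step is to record how $\mf d_\nu$ acts on the log-eigenvalues: the subsystem depolarizing channel $\mc D_\nu$ shifts each state parameter $x^S_a$ by $\mathds{1}[\supp(a)\cap\nu\neq\varnothing]$ and each measurement parameter by its negative, while each gate channel $\Lambda^{\mc G}$ is conjugated into $\mc D_\nu\,\Lambda^{\mc G}\,(\mc G\mc D_\nu\mc G^{-1})^{-1}$. I would then prove the pivotal lemma that $\mc D_\nu$ is itself $\Omega$-local exactly when $\nu\in\Omega$: a short Möbius inversion shows that the factor expansion of $\mathds{1}[\supp(a)\cap\nu\neq\varnothing]$ is supported on $\{b:\supp(b)\subseteq\nu\}$, and downward-closure of $\Omega$ places all these factors inside $\Omega$ once $\nu\in\Omega$. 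The state and measurement shifts therefore stay within the ansatz, and --- invoking the covariance hypothesis to guarantee that $\mc G\mc D_\nu\mc G^{-1}$ is again $\Omega$-local --- the transformed gate noise is a composition of $\Omega$-local channels, hence $\Omega$-local. This places every $\mf d_\nu$ with $\nu\in\Omega$ in $\imq$, and since $\mf d_\nu\in T$ always, it gives the inclusion.

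The reverse inclusion ``$\subseteq$'' is where I expect the real work, and it suffices to use the state channel alone. Writing a general gauge as $\mf d=\sum_\nu c_\nu\mf d_\nu$, the action above makes the factor coefficient of its state component at a support $s\neq\varnothing$ equal to $(-1)^{|s|+1}\sum_{\nu\supseteq s}c_\nu$, depending only on $s$. Membership in $\imq$ forces these to vanish for every $s\notin\Omega$, i.e.\ $\sum_{\nu\supseteq s}c_\nu=0$ whenever $s\notin\Omega$. The hard part is ruling out cancellations among the forbidden SDGs; I would do this by inverting the up-set zeta transform on the subset lattice, $c_\nu=\sum_{s\supseteq\nu}(-1)^{|s|-|\nu|}\big(\sum_{\nu'\supseteq s}c_{\nu'}\big)$, and observing that for $\nu\notin\Omega$ every $s\supseteq\nu$ is also outside $\Omega$ by downward-closure, so each inner sum vanishes and $c_\nu=0$. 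The engine behind both directions is the same unitriangular relationship between the SDG basis and the factor basis of the state parameters, combined with the downward-closedness of $\Omega$; making the covariance condition precise enough that the gate term of each admissible $\mf d_\nu$ cannot leak outside the ansatz after Clifford conjugation is the one remaining subtlety, and it is exactly what Sec.~\ref{sec:quasi} will supply.
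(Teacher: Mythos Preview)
Your proposal is correct and follows essentially the same strategy as the paper: reduce to $T\cap\imq$, handle the ``$\supseteq$'' direction layer by layer (SPAM via showing $\mc D_\nu$ is $\Omega$-local through inclusion--exclusion, gates via covariance), and obtain ``$\subseteq$'' from the state-preparation constraint alone through a M\"obius inversion exploiting the downward-closure of $\Omega$. The paper packages the SPAM step slightly differently---it uses the isomorphism $\pi^S:T\to X^S$ to reduce to the span identity $\mr{span}\{\mc Q^S(\bm\vartheta_\nu^S)\}=\mr{span}\{\pi^S(\mf d_\nu)\}$ and then verifies a single explicit linear relation---but your direct computation of the factor coefficients $r^S_s=(-1)^{|s|+1}\sum_{\nu\supseteq s}c_\nu$ followed by up-set M\"obius inversion is the same argument seen from the dual side.
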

    \noindent Rigorous statements and detailed proofs are presented in Sec.~\ref{sec:quasi}, where we also review useful properties of such noise model and discuss its connection and distinction with existing models studied in the literature~\cite{Flammia2020,van2023probabilistic,wagner2023learning}.
    A cycle basis for the reduced learnable space $L_R$ is also analytically constructed for this case.

    \medskip

    Finally, in Sec.~\ref{sec:CZ}, we analyze a concrete gate set where the Control-Z (CZ) gates are the only multi-qubit entangling gates. 
    We characterize the learnability of this noise model with more than one noise ansatzes (which are directly related to those that have been studied in the literature~\cite{van2023probabilistic}) and give efficient learning algorithms in each case.
    These examples serve as an explicit recipe of gate set Pauli noise learning that is ready to be applied in current quantum devices.

\subsection{Prior works}\label{sec:discussion}

Learning the Pauli noise channel $\Lambda$ affecting a Clifford gate $\mc G$ (i.e., $\tilde{\mathcal G}=\Lambda\circ\mathcal G$) in the presence of unknown state-preparation-and-measurement (SPAM) noise is a standard task studied in multiple references~\cite{Flammia2020,Flammia2021,Harper2020,flammia2021averaged,van2023probabilistic,chen2023learnability,carignan2023error,van2023techniques,hockings2024scalable}.
When the ideal Clifford gate $\mc G$ equals to the identity, it has been shown that $\Lambda$ can be fully determined independent of the SPAM noise~\cite{Flammia2020};
When $\mc G$ is a non-trivial Clifford gate, it is known that certain properties of $\Lambda$ (e.g., the average fidelity) can be learned independent of SPAM~\cite{Erhard2019}, but no existing protocols can completely and SPAM-robustly learn $\Lambda$ in this case.
In fact, it was pointed out~\cite{chen2023learnability,huang2022foundations} that this is a fundamental limitation related to the model non-identifiability, also known as gauge ambiguity in the literature on gate set tomography~\cite{nielsen2021gate}. 
Specifically, \cite{chen2023learnability} have developed a theory that completely characterize the degrees of freedom of a Pauli noise channel associated to any set of Clifford gates and provides a “gauge-consistent” algorithm to learn all the learnable degrees of freedom. 
The main limitation in \cite{chen2023learnability} is that it considers the case that each $n$-qubit gate has a generic $n$-qubit Pauli noise channel, which has exponentially many parameters, and consequently the learning algorithm would take exponential time. 
In contrast, efficient parameterization of Pauli noise channels have been studied in the literature and applied in experiments~\cite{flammia2021averaged,van2023probabilistic,carignan2023error,wagner2023learning,hockings2024scalable,pelaez2024average}, but a good understanding about the gauge and learnable degrees of freedom in such models is yet to be obtained, which hinders the application of these protocols (specifically, how to learning such models self-consistently). 
Our work combines these two lines of research to achieve a self-consistent and efficient Pauli noise learning algorithm.

Our framework resembles ACES (i.e., averaged circuit eigenvalue sampling), a Pauli noise learning protocol proposed in \cite{flammia2021averaged} and further analyzed and implemented in \cite{cisneros2024average} and \cite{hockings2024scalable}. 
Similar to our setting, ACES models every different Clifford gate as having a gate-dependent Pauli noise channel (with an efficient noise ansatz such as fully local noise), and aims at learning every Pauli eigenvalue of every Pauli noise channel. 
ACES constructs linear equations of the form $\bm b = A \bm x$, where $\bm x$ are the logarithmic Pauli eigenvalues, $\bm b$ are the logarithmic circuit eigenvalues which can be experimentally measured, and $A$ is the design matrix determined by a set of Clifford circuits and Pauli measurements. 
The original ACES effectively fixes a gauge by ignoring state preparation noise, and is thus able to construct a full column-rank $A$ using random Clifford gates.
Within ACES's framework, our work basically says that if SPAM noise parameters are included in $\bm x$, there is a maximum possible rank of $A$ equal to the number of learnable degrees of freedom, and we provide an explicit recipe for constructing such an $A$. 
Moreover, the explicit construction can learn gate parameters to relative precision; see Theorem~\ref{th:relative}.
It is interesting to extend the statistical analysis for ACES~\cite{hockings2024scalable} to our setting. 
It is also interesting to study optimal learning and experimental design within our framework~\cite{van2023techniques,ostrove2023near}. 
Conversely, it is interesting to study the ultimate sample complexity limit for learning gate set Pauli noise models. 
Such bounds for learning Pauli channels in a black-box setting have been previously investigated~\cite{chen2022quantum,chen2024tight,fawzi2023lower,chen2023efficient,trinh2024adaptivity}.

\subsection{Future directions}

\noindent\textbf{Quantum error mitigation.} One promising practical application of our theory is in quantum error mitigation (QEM)~\cite{cai2023quantum}.
There have been QEM protocols based on the Pauli noise models~\cite{van2023probabilistic,ferracin2022efficiently,kim2023evidence,chen2020robust} which work by first learning the Pauli noise channels and then either canceling the noise via quasi-probability sampling, or amplifying the noise and then extrapolating to the noiseless limit. 
Due to the existence of gauge, these protocols generally make assumptions that are not physically justified (e.g., ``symmetry assumption''~\cite{van2023probabilistic,van2023techniques} or perfect state preparation~\cite{chen2020robust}) in order to determine the noisy Pauli channels.
However, it is known that the issue of gauge ambiguity should not hinder self-consistent QEM if assisted by gate set tomography~\cite{endo2018practical,suzuki2022quantum}, although the latter method is not easily scalable.
Using our framework, by efficiently learning all the learnable degrees of freedom and anchoring at an arbitrary point of the gauge, one will be able to mitigate the Pauli noise without any additional assumptions. 
Such a self-consistent and efficient QEM protocol is actively being developed by some of the current authors~\cite{chen2025disambiguating}.

\medskip
\noindent\textbf{Beyond Pauli noise model.} We believe that our framework and graph-theoretic analysis can be extended beyond the Pauli noise model.
We expect the statement that the cycle space equals the learnable space to be held more generally. 
Note that a Pauli noise model has a linear parameterization (i.e., the log of the Pauli eigenvalues), and one can study the learnable and gauge degrees of freedom as linear subspaces of the parameter spaces. 
For more general noise models, one might need to study the tangent space of the parameters, which has been considered in the theory of the first-order gauge-invariant (FOGI) model~\cite{nielsen2022first,blume2022taxonomy}. 
It is interesting to see if our method can be extended there to have a more quantitative understanding of the learnable and gauge degrees of freedom. 

Another important potential extension is to include mid-circuit measurement (MCM, also known as quantum non-demolition measurement), which is a crucial building block for fault-tolerant quantum computation and is being experimentally implemented~\cite{ryan2021realization,singh2023mid,google2023suppressing,bluvstein2024logical}. 
It has been shown that MCM can be incorporated into Pauli noise models via a generalized randomized compiling technique~\cite{beale2023randomized}. 
More recently, results on the learnability of MCM within the Pauli noise model have been reported~\cite{zhang2024generalized,hines2024pauli}. 
It is practically interesting to incorporate those results into the current framework, e.g., to have a self-consistent and efficient protocol for characterizing MCM with a scalable noise ansatz. 
More generally, it is interesting to study how our framework can be used for noise characterization of early fault-tolerant quantum computing devices~\cite{combes2017logical,wagner2023learning}.

\medskip
\noindent\textbf{Beyond realizable setting.} The learning algorithm we proposed relies on the fact that the noise model of interest is realizable, i.e., there is no out-of-model error. 
It would be interesting to analyze the effect of such errors. 
Our current analysis and algorithms require all the noise to be Pauli channels with certain constraints, specifically, the locality constraints as discussed in Sec.~\ref{sec:fully},~\ref{sec:quasi}, and~\ref{sec:CZ}. 
In practice, one can fit the same data into models with different locality assumption, and there will be a tradeoff between model expressivity and trainability.  
One can also utilize techniques from structure learning to find the optimal (quasi-)local Pauli noise model (see e.g. \cite{rouze2023efficient}). 
It would be interesting to explore learning algorithms with those features within our framework.

\section{Notations and Preliminaries}
    This section provides some basic definitions and preliminaries. For more symbols and notations defined later in the paper, see
    Table~\ref{tab:glossary}.
    \setlength{\tabcolsep}{4pt} %
    \begin{table}[t]
        \centering
        \begin{tabular}{l l l}
            \toprule
             Symbol & Definition or meaning & Reference \\
             \midrule
             $X$ & Parameter space & Def.~\ref{de:param_space}\\
             $\bm e_i$ & Standard basis for parameter space $X$ &  Def.~\ref{de:param_space}\\
             $X^{S/M/G}$ & Subspace of state prep./meas./gate parameters & Def.~\ref{de:param_space}\\
             $L$ & Learnable space & Lemma~\ref{le:learnable_space}\\
             $T$ & Gauge space & Lemma~\ref{le:gauge_space}\\
             $E$ & Edge space of the pattern transfer graph & Def.~\ref{de:edge_space}\\
             $Z$ & Cycle space & Def.~\ref{de:cycle_space}\\
             $U$ & Cut space & Def.~\ref{de:cut_space}\\
             $X_R$ & Reduced parameter space & Def.~\ref{de:reduced_param}\\
             $\bm\vartheta_i$ & Standard basis for reduced parameter space $X_R$ & Def.~\ref{de:reduced_param}\\
             $\mc Q$ & Embedding map from reduced to complete model& Def.~\ref{de:reduced_param}\\
             $L_R$ & Reduced learnable space & Def.~\ref{de:reduced_learnable}\\
             $T_R$ & Reduced gauge space & Def.~\ref{de:reduced_gauge}\\
             $L_R^G$ & Reduced learnable subspace for gate parameters & Def.~\ref{de:L_R^G}\\
             $\mf d_\nu$ & Subsystem depolarizing gauge (SDG) & Def.~\ref{de:sdg}\\
             $\pi^{S/M/G}$ & Projection onto subspace of state prep./meas./gate parameters & Proof of Theorem~\ref{th:fully_learnability}\\
             $\Omega$ & Factor set of a Markov Random Field & Sec.~\ref{sec:quasi_notation}\\
             $\Omega_*$ & Subset of all maximal factors from $\Omega$ &Sec.~\ref{sec:quasi_notation}\\
             $\Xi_{\mc G}$ & Extended support map & Lemma~\ref{le:covariant_sufficient}\\
             \bottomrule
        \end{tabular}
        \caption{Table of important symbols.}
        \label{tab:glossary}
    \end{table}

    \medskip
    \noindent\textbf{Bit string and index set.} For a positive integer $n$, define $[n]\coleq\{i\in\mbb Z:1\le i\le n\}$. We will often use the set of $n$-bit strings $\mbb Z_2^n$ and the collection of index sets $2^{[n]}\coleq\{\nu:\nu\subseteq [n]\}$.
    Define the following bijection between $\mbb Z_2^n$ and $2^{[n]}$: $s\in\mbb Z_2^n\leftrightarrow \{j\in[n]:s_j=1\}\in 2^{[n]}$. Throughout the paper, we often do not distinguish a bit string with an index set, as they can always be identified with each other via the aforementioned bijection. 
    We use $|s|$ to denote the Hamming weight of a bit string $s$ (or its cardinality if understood as an index set).
    We use $0_n$ to denote the all-zero bit string (which is $\varnothing$ if understood as an index set). We use $1_j$ to denote the bit string with its $j$th entry being $1$ and other entries being $0$ (which is $\{j\}$ if understood as an index set).
    
    We will often use an index set (or, equivalently, a bit string) as the subscript for a string of $n$ objects (e.g., another $n$-bit string or an $n$-qubit Pauli operator) to denote a subsequence from the selected indexes. For example, $XYZIZ_{\{1,3\}} = XYZIZ_{10100} = XZ$.

\medskip
\noindent\textbf{Pauli group.} We follow the notations of  \cite{Flammia2020,chen2022quantum}.
Let $\Pn=\{ I, X, Y, Z\}^{\otimes n}$ be the Pauli group modulo a global phase.
$\Pn$ is an Abelian group isomorphic to $\mbb Z^{2n}_2$.
Specifically, given a $2n$-bit string $a = (a_x,a_z) = a_{x,1}\cdots a_{x,n}a_{z,1}\cdots a_{z,n}$, the corresponding Pauli operator is given by
\begin{equation}    
	P_a = \bigotimes_{j=1}^n i^{a_{x,j}a_{z,j}}  X^{a_{x,j}} Z^{a_{z,j}},
\end{equation}
where the phase is chosen to ensure Hermiticity. We further define $P_0 \coleq I^{\otimes n} \equiv I_n$ as a shorthand for the $n$-qubit identity operator.

Besides the standard addition and inner product defined over $\mbb Z^{2n}_{2}$, we define the sympletic inner product over $\Pn$ as
\begin{equation}
	\expval{a,b} = \sum_{j=1}^n (a_{x,j}b_{z,j}+a_{z,j}b_{x,j}) \mod{2}.
\end{equation}
One can verify that $\expval{a,b}=0$ if $P_a,P_b$ commute and $\expval{a,b}=1$ otherwise.
We sometimes use the label $a$ and the Pauli operator $P_a$ interchangeably when there is no confusion.

Define $\pt:\Pn\mapsto\mbb Z^n_2$ that maps the $i$th single-qubit Pauli operator to $0$ if it is $I$, or $1$ if it is from $\{X,Y,Z\}$. We call $\pt(a)$ the \emph{pattern} of $P_a$. Alternatively, $\pt(a)$ represents the support of $P_a$ if understood as an index set. The Hamming weight of $\pt(a)$ is called the weight of $P_a$ and is denoted by $w(a)$ or $|a|$.

A $k$-qubit Pauli $P$ acting on a subset of qubits specified by an index set $\nu$ is denoted by $P_\nu$. Let $W\in\{I,X,Y,Z\}$ be a single-qubit Pauli operator and $\nu\in2^{[n]}$, we define 
$W^\nu=\bigotimes_{j\in\nu}W_j\bigotimes_{j\notin\nu} I_j$.

\paragraph{Pauli channel.} An $n$-qubit Pauli channel $\Lambda$ has the following two equivalent forms
\begin{equation}\label{eq:Pauli_channel}
    \Lambda(\rho) = \sum_{a\in{\sf P}^n}p_a P_a\rho P_a = \sum_{b\in{\sf P}^n}\frac{1}{2^n}\lambda_bP_b\Tr[P_b\rho].
\end{equation}
where $\{p_a\}_a$ is called the \emph{Pauli error rates}, while $\{\lambda_b\}_b$ is called the \emph{Pauli eigenvalues} or \emph{Pauli fidelities} (though the latter is potentially misleading since they can in general be negative).
These two sets of parameters are related by the Walsh-Hadamard transform~\cite{Flammia2020},
\begin{equation}
	\begin{aligned}
		\lambda_b = \sum_{a\in\Pn}p_a(-1)^{\expval{a,b}},\quad
		p_a = \frac{1}{4^n}\sum_{b\in\Pn}\lambda_b(-1)^{\expval{a,b}}.
	\end{aligned}
\end{equation}
Note that for $\Lambda$ to be a quantum channel, the trace-preserving condition is equivalent to $\sum_a p_a \equiv \lambda_0 = 1$, while the complete positivity is equivalent to $p_a\ge0,\forall a$. For $\Lambda$ that has the form of Eq.~\eqref{eq:Pauli_channel} but does not satisfy these two conditions, we refer to it as a \emph{Pauli diagonal map}.

We will also use the Pauli-Liouville representation.
A Hermitian operator $O$ acting on a $2^n$-dimensional Hilbert space can be vectorized as a $4^n$-dimensional real vector. We denote this vectorization of $O$ as $\lket{O}$ and the corresponding Hermitian conjugate as $\lbra{O}$, and define the inner product as $\lbraket{A}{B}\coleq \Tr(A^\dagger B)$, which is the Hilbert-Schmidt inner product in the original space. We also introduce the \emph{normalized Pauli operators} $\{\sigma_a\coleq P_a/\sqrt{2^{n}},~a\in\mbb Z_2^{2n}\}$, whose vectorization $\{\lket{\sigma_a}\}_a$ forms an orthonormal basis for the vectorized space.
The action of a quantum channel $\mc E$ becomes matrix multiplication in this representation, i.e, $\lket{\mc E(\rho)} = \mc E^{\mr{PL}}\lket{\rho}$, where $\mc E^{\mr{PL}}$ is the corresponding Pauli-Liouville operator and is usually simply denoted by $\mc E$ when there is no confusion.
In the Pauli-Liouville representation, a Pauli channel is given by
$$
\Lambda = \sum_{a\in\mbb Z_2^{2n}} \lambda_a\lketbra{\sigma_a}{\sigma_a}.
$$
which justifies the name of Pauli eigenvalues for $\{\lambda_a\}$.

\medskip
\noindent\textbf{Clifford group.} The $n$-qubit Clifford group ${\sf C}^n$ is the group of $n$-qubit unitary operators that preserve the Pauli group upon conjugation. Mathematically, $\forall C\in\Cn$, $\forall P_a\in\Pn$,  $CP_aC^\dagger \in \Pn\times\{\pm\}$.
We often denote the unitary channel of a Clifford gate using the same calligraphic letter, e.g., $\mc C(\rho) \coleq C\rho C^\dagger$.
We also define, with slight abuse of notation, the mapping $\mc C:\mbb Z_2^{2n}\mapsto \mbb Z_2^{2n}$ via $\mc C(P_a) \equiv CP_aC^\dagger= \pm P_{\mc C(a)}$. 
The Pauli-Liouville operator of a Clifford $\mc C$ has the following form,
\begin{equation}
    \mc C = \sum_{a} (\pm )\lketbra{\sigma_{\mc C(a)}}{\sigma_a}.
\end{equation}

\medskip
\noindent\textbf{Miscellaneous.}
For a real linear space $X$, the dual space of it is denoted by $X'$. For a linear map $\mc Q$, its dual map is denoted by $\mc Q^{\T}$. The image and kernel of a map $\mc Q$ are denoted by $\mr{Im}\mc Q$ and $\mr{Ker}\mc Q$, respectively. $\mathds 1[X]$ is the indicator function that takes $1$ if the statement $X$ is true and $0$ elsewise. All $\log$ represents natural logarithm.

\section{Learnability of gate set Pauli noise}\label{sec:framework}

\subsection{Basic definitions}

In this work, we focus on gate sets consisting of arbitrary single-qubit gates and a set of multi-qubit Clifford gates. We assume time-stationary and Markovian noise that has been twirled into Pauli channels. More formally, we have the following,
\begin{definition}[Complete Pauli noise model]
    Given a set of $n$-qubit Clifford layers $\mf G$, consider the following gate set,
\begin{enumerate}
    \item Initialization: $\rho_0=\ketbra{0}{0}^{\otimes n} \mapsto\tilde{\rho}_0=\Lambda^S(\rho_0)$, where $\Lambda^S = \sum_{a\in{\sf P}^n} \lambda_{\pt(a)}^{S}\lketbra{\sigma_a}{\sigma_a}$.
    \item Measurement: $\{E_j=\ketbra{j}{j}\}\mapsto \{\widetilde E_j = \Lambda^M(\ketbra{j}{j})\}$, where $\Lambda^M = \sum_{a\in{\sf P}^n} \lambda_{\pt(a)}^{M}\lketbra{\sigma_a}{\sigma_a}$.
    \item Single-qubit layer: $\bigotimes_{i=1}^n U_i\mapsto\bigotimes_{i=1}^n U_i,~\forall U_i\in \mr{SU}(2)$.
    \item Multi-qubit Clifford layer: $\mc G\mapsto\tilde{\mc G}=\mc G\circ\Lambda^{\mc G}$,~$\forall\mc G\in\mf G$, where $\Lambda^{\mc G} = \sum_{a\in{\sf P}^n} \lambda_a^{\mc G}\lketbra{\sigma_a}{\sigma_a}$.
\end{enumerate}
Furthermore, we assume all $\lambda_{\pt(a)}^S,\lambda_{\pt(a)}^M,\lambda_a^{\mc G}$ are strictly positive.
\end{definition}

In reality, such a model can always be imposed by applying appropriate Pauli twirling sequence, a technique known as randomized compiling~\cite{Wallman2016,hashim2020randomized}.
The assumption that single-qubit layer being noiseless can be relaxed to that they have gate-independent noise, in which case such noise can be thought as absorbed into the multi-qubit layer.
Note that assuming SPAM noise channel to be a generalized depolarizing channels (i.e., $\lambda_a$ only depends on $\pt(a)$) is without loss of generality as they only act on $Z$ basis eigenstates.

\begin{definition}[Parameter space]\label{de:param_space}
    Let $x_a^{S/M/\mc G}\coleq -\log\lambda_a^{S/M/\mc G}$ for all $a\neq \bm0$ for SPAM parameters and all $a\neq I_n$ for all gates $\mc G$'s parameters. Let $\bm x$ be a formal vector defined as
    \begin{equation}
        \bm x \coleq \sum_{u\neq\bm0}x_u^S\bm e_u^S + \sum_{u\neq\bm0}x_u^M\bm e_u^M +  \sum_{\mc G\in\mf G}\sum_{a\neq I_n}x_a^{\mc G}\bm e_a^{\mc G}
    \end{equation}
    where $\{\bm e_u^S,\bm e_u^M,\bm e_a^{\mc G}:u\neq 0_n,a\neq I_n,\mc G\in\mf G \}$ forms an orthonormal basis for a real linear space $X$, called the parameter space.    
    Furthermore, define $X^{S/M}\coleq \mr{span}\{\bm e_u^{S/M}:u\neq0_n\}$, $X^{G}\coleq \mr{span}\{\bm e_a^{\mc G}:a\neq I_n, \mc G\in\mf G\}$. 
\end{definition}
Note that, we do not include the Pauli fidelity corresponding to identity, as they are always $1$ for the channels to be trace-preserving. Clearly, any realization of the noise parameters uniquely corresponds to a vector $\bm x$ in $X$, but not all $\bm x\in X$ corresponds to a physical noise model as the channels might not be completely positive.

\begin{definition}[Experiment] \label{de:experiment}
    An experiment is a map $F:X\mapsto\mbb R^{2^n}$ specified by a quantum circuit $\mc C$ (which is a sequence of finitely many single-qubit layers and multi-qubit Clifford layers within $\mf G$),
    defined as $F(\bm x)_j = \Tr[\widetilde E_j\widetilde{\mc C}(\widetilde{\rho}_0
    )]$.
    Here the noisy initial state $\widetilde\rho_0$, noisy circuit $\widetilde{\mc C}$, and noisy measurement, $\widetilde{E_j}$ depends on $\bm x$.
\end{definition}

When $\bm x$ represents a physical noise model, any experiment $F(\bm x)$ yields a Born's rule probability distribution. However, for mathematical convenience, we naturally extend the domain of $F$ to all over $X$.

\begin{definition}[Learnable function]\label{de:learnable}
    A real-valued function $f$ on $X$ is \textbf{learnable} if there exists a set of experiments $\{F_j\}_{j=1}^m$ such that $f$ is a function of $\{F_j\}_{j=1}^m$. That is, there exists some $\hat f$ such that $f(\bm x) = \hat f(F_1(\bm x),\cdots,F_m(\bm x))$ for all $\bm x\in X$.
\end{definition}

We comment that the learnable function we define here is also known as identifiable function in the literature of identifiability analysis~\cite{hsiao1983identification}. Here, for consistency, we will stick to the term ``learnable'' which has been used in \cite{chen2023learnability,huang2022foundations}. 

\medskip

We will be specifically interested in studying the learnability of linear functions (and we will later explain why that suffices). Note that, there exists an identification between $X$ and its dual space $X'\equiv\mc L(X,\mbb R)$ using the standard inner product, given by $\bm x \leftrightarrow f_{\bm x}(\bm y)\coleq\bm x\cdot\bm y$.
One can thus think of a linear function on $X$ as a co-vector and denote it with a boldface letter.

\begin{lemma}\label{le:learnable_space}
    The learnable linear functions form a subspace of $X'$, called the \textbf{learnable space}, denoted by $L$.
\end{lemma}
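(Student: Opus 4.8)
The plan is to verify directly that the collection $L$ of learnable linear functionals is closed under the two vector-space operations of $X'$ and contains the zero functional, which is all that is needed for it to be a linear subspace. Since any $\mbb R$-linear combination of linear functionals on $X$ is again linear, linearity is automatically inherited; the only genuine content is therefore that the property of being \emph{learnable} (Definition~\ref{de:learnable}) is preserved under addition and scalar multiplication.

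First I would restate learnability in a more operational form. For a finite family of experiments $\mc A=\{F_j\}_{j=1}^m$ I form the joint map $\Phi_{\mc A}:X\to(\mbb R^{2^n})^m$, $\Phi_{\mc A}(\bm x)\coleq(F_1(\bm x),\dots,F_m(\bm x))$. Then $f$ is learnable via $\mc A$ exactly when $f$ \emph{factors through} $\Phi_{\mc A}$, i.e. $f=\hat f\circ\Phi_{\mc A}$ for some map $\hat f$; equivalently, $f$ is constant on every fiber of $\Phi_{\mc A}$, in the sense that $\Phi_{\mc A}(\bm x)=\Phi_{\mc A}(\bm x')$ forces $f(\bm x)=f(\bm x')$. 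This reformulation uses nothing about the internal structure of an experiment beyond the fact that each $F_j$ is a fixed map $X\to\mbb R^{2^n}$, and it makes the closure arguments transparent.

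With this in hand the three checks are short. The zero functional (indeed any constant) factors through every $\Phi_{\mc A}$, so it is learnable. For scalar multiplication, if $\bm f=\hat f\circ\Phi_{\mc A}$ then for any $c\in\mbb R$ we have $c\bm f=(c\hat f)\circ\Phi_{\mc A}$, which is learnable via the \emph{same} family. For addition, suppose $\bm f=\hat f\circ\Phi_{\mc A}$ and $\bm g=\hat g\circ\Phi_{\mc B}$ are learnable via finite families $\mc A=\{F_j\}_{j=1}^m$ and $\mc B=\{G_k\}_{k=1}^{m'}$. I take the union $\mc H\coleq\mc A\cup\mc B$, which is still finite; its joint map $\Phi_{\mc H}$ refines both $\Phi_{\mc A}$ and $\Phi_{\mc B}$, each being recovered from $\Phi_{\mc H}$ by a coordinate projection. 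Hence both $\bm f$ and $\bm g$ factor through $\Phi_{\mc H}$, and therefore so does $\bm f+\bm g=(\widehat{f+g})\circ\Phi_{\mc H}$ with $\widehat{f+g}$ the evident combination of $\hat f$ and $\hat g$ composed with the projections; thus $\bm f+\bm g$ is learnable.

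Because the argument is essentially bookkeeping about the relation ``$f$ is a function of $\mc A$,'' I do not expect a real obstacle. The one point deserving care is the merging step for addition: one must use that Definition~\ref{de:learnable} permits an \emph{arbitrary} post-processing map $\hat f$ (not required to be linear or even continuous), so that enlarging the experiment family can only enlarge the class of functions that factor through its joint map. Finally, restricting attention to \emph{linear} $f$ is what places $L$ inside $X'$ and lets us call it a subspace; linearity plays no role in the closure proof itself and is simply carried along by the vector-space operations.
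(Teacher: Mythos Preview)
Your argument is correct. The paper does not actually supply a proof for this lemma; it is stated without proof, presumably because the authors regard it as immediate from Definition~\ref{de:learnable}. Your direct verification of the subspace axioms---zero functional, closure under scalars via the same experiment family, closure under addition via the union of the two finite families---is exactly the routine check one would expect, and it fills in the omitted details cleanly. There is nothing to compare against in the paper beyond the implicit ``obvious from the definition.''
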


\begin{definition}[Gauge vector]\label{de:gauge}
    A vector $\bm y\in X$ is a \textbf{gauge} vector if for any $\bm x\in X$, $F(\bm x) = F(\bm x + \eta\bm y)$ for any experiment $F$ and $\eta\in\mbb R$.
\end{definition}

\begin{lemma}\label{le:gauge_space}
    The gauge vectors form a linear subspace of $X$ called the \textbf{gauge space}, denoted by $T$.
\end{lemma}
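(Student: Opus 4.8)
The plan is to verify directly that the set $T$ of gauge vectors is closed under the two vector-space operations, working straight from Definition~\ref{de:gauge}; nonemptiness is immediate since $\bm 0$ satisfies the condition trivially because $\bm x + \eta\bm 0 = \bm x$ for every $\bm x$ and $\eta$. The entire argument rests on exploiting the fact that the gauge condition $F(\bm x) = F(\bm x + \eta\bm y)$ is quantified over \emph{all} $\bm x\in X$, all $\eta\in\mbb R$, and all experiments $F$ (Definition~\ref{de:experiment}), not just at a single base point.

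For closure under scalar multiplication, I would suppose $\bm y\in T$ and fix $c\in\mbb R$. Given any experiment $F$, any $\bm x\in X$, and any $\eta\in\mbb R$, the idea is simply to reparametrize the scalar: write $F(\bm x + \eta(c\bm y)) = F(\bm x + (\eta c)\bm y)$ and apply the gauge property of $\bm y$ with the scalar $\eta c\in\mbb R$, which gives $F(\bm x + (\eta c)\bm y) = F(\bm x)$. Hence $c\bm y\in T$, with the case $c = 0$ recovering the observation that $\bm 0\in T$.

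For closure under addition, suppose $\bm y_1,\bm y_2\in T$. Given any $F$, $\bm x$, and $\eta$, the idea is to split the translation into two steps and apply the two gauge properties in sequence:
\begin{equation}
    F\big(\bm x + \eta(\bm y_1 + \bm y_2)\big) = F\big((\bm x + \eta\bm y_1) + \eta\bm y_2\big) = F(\bm x + \eta\bm y_1) = F(\bm x),
\end{equation}
where the second equality uses that $\bm y_2$ is a gauge vector evaluated at the shifted base point $\bm x + \eta\bm y_1$, and the third uses that $\bm y_1$ is a gauge vector evaluated at $\bm x$. Since this holds for every $F$, $\bm x$, and $\eta$, we conclude $\bm y_1 + \bm y_2\in T$.

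The main (and essentially only) subtlety to flag is the universal quantification over the base point $\bm x$: the second equality above requires the gauge condition for $\bm y_2$ to hold at $\bm x + \eta\bm y_1$ rather than merely at $\bm x$. If the definition imposed the condition only at a single fixed base point, this chaining step would fail and closure under addition would not follow. Having established nonemptiness, scalar closure, and additive closure, we conclude that $T$ is a linear subspace of $X$.
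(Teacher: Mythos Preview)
Your proof is correct and follows essentially the same approach as the paper's: both establish scalar closure by reparametrizing $\eta(c\bm y)$ as $(\eta c)\bm y$ and additive closure by chaining $F(\bm x)=F(\bm x+\eta\bm y_1)=F(\bm x+\eta\bm y_1+\eta\bm y_2)$, exploiting the universal quantification over the base point. Your write-up is more explicit about nonemptiness and the role of the universal quantifier, but the argument is the same.
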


\begin{proof}
    By definition, $\bm y\in X$ being a gauge vector implies $\eta\bm y$ being a gauge vector for all $\eta\in\mbb R$.
    Let $\bm y_1,\bm y_2\in X$ be any gauge vectors. By definition, for any $\eta\in\mbb R$ and any observation $F$,
    \begin{equation}
        F(\bm x) = F(\bm x + \eta\bm y_1) = F(\bm x + \eta\bm y_1 + \eta\bm y_2).
    \end{equation}
    Therefore, $\bm y_1 + \bm y_2$ is also a gauge vector. This proves the gauge vectors forms a linear subspace.
\end{proof}

\begin{lemma}\label{le:L_T_perp}
    $L\perp T$. That is, for any $\bm f\in L$ and $\bm y\in T$ we have $\bm f(\bm y) = \bm f\cdot \bm y =0$.
\end{lemma}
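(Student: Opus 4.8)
The plan is to combine the two defining properties directly: a learnable linear function is, by Definition~\ref{de:learnable}, some function of a finite collection of experiments, while a gauge vector is, by Definition~\ref{de:gauge}, a direction along which every experiment is constant. Composing these two facts should immediately force the pairing $\bm f(\bm y)$ to vanish once we exploit the linearity of $\bm f$.

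Concretely, I would first fix an arbitrary $\bm f\in L$ and $\bm y\in T$. By learnability there exist experiments $F_1,\dots,F_m$ and a reconstruction function $\hat f$ with $\bm f(\bm x)=\hat f(F_1(\bm x),\dots,F_m(\bm x))$ for all $\bm x\in X$. By the gauge property, $F_j(\bm x+\eta\bm y)=F_j(\bm x)$ for every $j$, every $\bm x\in X$, and every $\eta\in\mbb R$. Substituting the shifted argument $\bm x+\eta\bm y$ into the learnability identity and applying the gauge invariance coordinate-by-coordinate gives
\begin{equation}
    \bm f(\bm x+\eta\bm y)=\hat f\bigl(F_1(\bm x+\eta\bm y),\dots,F_m(\bm x+\eta\bm y)\bigr)=\hat f\bigl(F_1(\bm x),\dots,F_m(\bm x)\bigr)=\bm f(\bm x).
\end{equation}
Finally I would invoke the linearity of $\bm f$ to expand the left-hand side as $\bm f(\bm x)+\eta\,\bm f(\bm y)$, so the identity collapses to $\eta\,\bm f(\bm y)=0$ for all $\eta\in\mbb R$; taking any $\eta\neq0$ yields $\bm f(\bm y)=\bm f\cdot\bm y=0$, which is the claim.

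There is no serious obstacle here; the only point requiring a little care is that $\hat f$ is permitted to be an arbitrary, possibly nonlinear, reconstruction map, so the cancellation must be carried out at the level of the individual experiment outputs $F_j$ before $\hat f$ is applied, rather than by assuming any linear structure on $\hat f$ itself. It is precisely the linearity of $\bm f$—and not of $\hat f$—that converts the shift invariance $\bm f(\bm x+\eta\bm y)=\bm f(\bm x)$ into the desired orthogonality statement.
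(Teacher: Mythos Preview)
Your proof is correct and follows essentially the same approach as the paper: both use the learnability identity $\bm f(\bm x)=\hat f(F_1(\bm x),\dots,F_m(\bm x))$, apply gauge invariance of each $F_j$ to conclude $\bm f(\bm x+\bm y)=\bm f(\bm x)$, and then invoke linearity of $\bm f$ to extract $\bm f(\bm y)=0$. The only cosmetic difference is that you carry the parameter $\eta$ explicitly throughout, whereas the paper simply takes $\eta=1$.
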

\begin{proof}
    $\bm y\in T$ implies $F(\bm x)=F(\bm x + \bm y)$ for any {experiment} $F$ and any $\bm x\in X$. $\bm f\in L$ implies there is a set of experiments $\{F_1,\cdots,F_m\}$ and a function $\hat f$ such that $\bm f(\bm x) = \hat f(F_1(\bm x),\cdots,F_m(\bm x))$. Therefore,
    \begin{equation}
        \bm f(\bm x + \bm y) = \hat f(\{F_j(\bm x+\bm y)\}_{j=1}^m) = \hat f(\{F_j(\bm x)\}_{j=1}^m) = \bm f(\bm x) 
    \end{equation}
    for any $\bm x$. Thus $\bm f(\bm y) = \bm f(\bm x + \bm y) - \bm f(\bm x)= 0$ by linearity. This completes the proof.
\end{proof}

\subsection{Learnability of complete noise model}\label{sec:learnability_complete}

In this section, we study the learnability of the complete Pauli noise model. The results presented here are an extension to those in \cite{chen2023learnability}. However, we will use a quite different proof technique that allows us to generalize the results to noise model with arbitrary constraints in the next section.
The following graph is originally defined in~\cite{chen2023learnability}, but we twist the definition a bit to include a ``SPAM node''.

\begin{definition}[pattern transfer graph with SPAM node (PTG)]\label{de:ptg}
  The pattern transfer graph associated with an $n$-qubit Pauli noise model is a directed graph ${\mathsf G}=({\sf V},{\sf E})$ constructed as follows:
\begin{itemize}
    \item $\sf V$: There are $2^n-1$ nodes $v_s$ for all $s\in\{0,1\}^n\backslash \{0^n\}$ and $1$ additional root node $v_{R}$.
    \item $\sf E$: %
    \begin{enumerate}
        \item For each $a\in{\sf P}^n\backslash\{I\}$ and $\mc G\in\mf G$, construct an edge $v_{\pt(P_a)}\to v_{\pt(\mc G(P_a))}$ labeled by $\bm e_{a}^{\mc G}$.
        \item For each $u\in\{0,1\}^n\backslash\{0^n\}$, construct an edge $v_R\to v_{u}$ labeled by $\bm e_{u}^{S}$.
        \item For each $u\in\{0,1\}^n\backslash\{0^n\}$, construct an edge $v_u\to v_{R}$ labeled by $\bm e_{u}^{M}$.
    \end{enumerate}
\end{itemize}
\end{definition}

The following definition of algebraic graph theoretic terminology follows \cite{chen2023learnability}.
\begin{definition}\label{de:edge_space}
    The \emph{edge space} $E$ is a real linear space spanned by a standard basis $\{\bm e\}$ where each $\bm e$ is a formal vector corresponding to an edge $e\in\mathsf E$.
\end{definition}
Let the edge space be the real linear space that takes all the edges as a basis, denoted by $E$. 
Clearly $E\cong X \cong X'$ up to the natural isomorphism. There are two subspaces of $E$ of special interest, which are defined as follows:

A \emph{path} is an alternating sequence of vertices and edges $ z=(v_{i_0},e_{i_1},v_{i_1},e_{i_2},v_{i_2},...,v_{i_{q-1}},e_{i_q},v_{i_q})$ such that each edge satisfies $e_{i_k}=(v_{i_{k-1}},v_{i_k})$.
A \emph{cycle} is a closed path, which means $v_{i_0} = v_{i_q}$.\footnote{
In the literature of algebraic graph theory, people sometimes allow reverse edges (i.e., $e_{i_k}=(v_{i_{k}},v_{i_{k-1}})$) in the definition of a cycle and refer to our definition as an \emph{oriented cycle} or a \emph{circuit}. Since all graphs considered in this work are either strongly-connected or a union of strongly-connected components, and it has been shown that oriented cycles can linearly represent any cycle for such graphs~\cite{gleiss2003circuit}, working with our current definition would not cause any inconsistency.
}
A cycle is \emph{simple} if it passes through each unique edge at most once.
For a simple cycle $z$ in the PTG, we assign a vector $\bm x_z\in E$ as follows
\begin{equation}
    \bm x_z[e] = \left\{
    \begin{aligned}
      +1,\quad& e\in z.\\
      0,\quad& e\notin z.
    \end{aligned}
    \right.
\end{equation}
\begin{definition}\label{de:cycle_space}
    The \textbf{cycle space} $Z$ is a linear subspace of $E$ spanned by all simple cycles in $\sf G$.
\end{definition}

\noindent A graph is \emph{strongly-connected} if for any two vertices, there exists a cycle containing both of them. The PTG we considered is strongly-connected by construction, as $v_R$ is directed connected to any other nodes.

\medskip

Given a partition of vertices $\mathsf V=\mathsf V_0\cup {\mathsf V_0^c}$, a \emph{cut} is the set of all edges $e = (u,v)$ such that one of $u,v$ belongs to ${{\mathsf{V}_0}}$ and the other belongs to ${{\mathsf{V}_0^c}}$. For each cut $p$ we assign a vector $\bm y_p\in E$ as follows
\begin{equation}\label{eq:cut}
    \bm y_p(e) = \left\{
    \begin{aligned}
      +1,\quad& e\in p,~\text{$e$ goes from ${{\mathsf{V}_0}}$ to ${{\mathsf{V}_0^c}}$.}\\
      -1,\quad& e\in p,~\text{$e$ goes from ${{\mathsf{V}_0^c}}$ to ${{\mathsf{V}_0}}$.}\\
      0,\quad& e\notin p.
    \end{aligned}
    \right.
\end{equation}
\begin{definition}\label{de:cut_space}
    The \textbf{cut space} $U$ is a linear subspace of $E$ spanned by all cuts in $\sf G$. 
\end{definition}

\medskip
\noindent Below is an important fact about the cycle space $Z$ and cut space $U$.
\begin{lemma}[\cite{gleiss2003circuit}]
    $E = Z\oplus^{\perp} U$. Here $\oplus^\perp$ means an orthogonal direct sum.
\end{lemma}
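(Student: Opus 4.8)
The plan is to realize both $Z$ and $U$ through the oriented incidence matrix $B \in \mathbb{R}^{|\mathsf V| \times |\mathsf E|}$ of the PTG, defined by $B_{v,e} = +1$ if edge $e$ points into $v$, $B_{v,e} = -1$ if $e$ points out of $v$, and $0$ otherwise. I would first show $Z = \mathrm{Ker}\, B$ and $U = \mathrm{Im}\, B^{\T}$ (the row space of $B$), after which the lemma follows immediately from the fundamental theorem of linear algebra: since $\mathrm{Ker}\, B = (\mathrm{Im}\, B^{\T})^{\perp}$, the domain decomposes as the orthogonal direct sum $E = \mathrm{Ker}\, B \oplus^{\perp} \mathrm{Im}\, B^{\T}$, which is exactly $E = Z \oplus^{\perp} U$.

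For the cut space, I would observe that for the single-vertex partition $\mathsf V_0 = \{v\}$ the cut vector of Eq.~\eqref{eq:cut} is, up to an overall sign, the $v$th row of $B$, i.e.\ $B^{\T}\bm e_v$ with $\bm e_v$ the indicator of $v$. More generally, $B^{\T}\mathds 1_{\mathsf V_0}$ reproduces $\bm y_p$ (up to sign) for the partition $(\mathsf V_0,\mathsf V_0^c)$: evaluating entrywise, an edge contributes only when exactly one endpoint lies in $\mathsf V_0$, with sign fixed by the crossing direction, matching the definition. Since the indicators $\{\mathds 1_{\mathsf V_0}\}$ span $\mathbb{R}^{|\mathsf V|}$ (already the single-vertex ones do), and each such indicator both produces a genuine cut vector and spans the row space, I get $U = \mathrm{Im}\, B^{\T}$.

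For the cycle space, each simple cycle vector $\bm x_z$ lies in $\mathrm{Ker}\, B$ because at every vertex the cycle traverses it has exactly one incoming and one outgoing edge, whose incidence entries $+1$ and $-1$ cancel, so $B\bm x_z = 0$ and $Z \subseteq \mathrm{Ker}\, B$. The substantive step is the reverse inclusion $\mathrm{Ker}\, B \subseteq Z$: any $\bm w \in \mathrm{Ker}\, B$ is a conserved flow (a circulation), and I would decompose it into cycles by induction on the size of its support, repeatedly extracting a cycle from the support of the current residual and subtracting a suitable multiple. The one subtlety is that this decomposition naturally produces \emph{oriented} cycles (which may traverse edges against their direction), whereas Definition~\ref{de:cycle_space} uses directed simple cycles; this is precisely the point handled by the footnote and by~\cite{gleiss2003circuit}, where it is shown that for strongly-connected graphs (as our PTG is by construction) oriented cycles are linear combinations of directed simple cycles, so the two notions span the same subspace.

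The main obstacle I anticipate is this converse inclusion together with the oriented-versus-directed cycle bookkeeping; the orthogonality $Z \perp U$ itself is easy and serves as an independent sanity check, since $\langle \bm x_z, \bm y_p\rangle = 0$ because a closed walk crosses any vertex cut equally often in each direction. As an alternative to proving the spanning inclusion by explicit circulation decomposition, I could instead combine this orthogonality with the classical rank identities $\dim U = |\mathsf V| - c$ and $\dim Z = |\mathsf E| - |\mathsf V| + c$ (with $c$ the number of strongly-connected components, here $c=1$), whose sum is $\dim E = |\mathsf E|$; orthogonality plus matching dimensions then forces the orthogonal direct sum.
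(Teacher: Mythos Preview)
Your proposal is correct. Note, however, that the paper does not actually prove this lemma: it is stated as a known algebraic-graph-theoretic fact, attributed to~\cite{gleiss2003circuit}, with no argument given. So there is no ``paper's proof'' to compare against; your sketch via the oriented incidence matrix $B$ is the standard textbook route and works as you describe.

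Two small remarks. First, a terminology mismatch: you write ``oriented cycles (which may traverse edges against their direction)'', but the paper's footnote uses \emph{oriented cycle} (or \emph{circuit}) for precisely the opposite notion---the paper's own forward-only cycles. The substance of your observation is right (circulation decomposition naturally yields cycles in the broader sense, and~\cite{gleiss2003circuit} bridges the gap for strongly-connected graphs), but align the vocabulary if you write this up. Second, in your alternative dimension-counting route, the identity $\dim Z = |\mathsf E| - |\mathsf V| + c$ for the \emph{directed}-cycle span $Z$ is itself essentially the nontrivial content of~\cite{gleiss2003circuit}, so invoking it as ``classical'' is not an independent argument; also, $c$ in the rank formula for $B$ counts weakly (not strongly) connected components, though for the PTG both equal $1$.
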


\bigskip

The following theorem is a crucial ingredient to understand the learnability of the complete Pauli noise model. 
\begin{theorem}
    [Linearity of experiments]\label{th:linearOb}
    Define a cycle to be \textbf{rooted} if it contains $v_R$ exactly once.
    \begin{enumerate}[label=(\alph*)]
        \item For any experiment $F$, there exists a finite set of linear functions $\{\bm f_j\}_{j=1}^M$ where each $\bm f_j$ corresponds to a rooted cycle, such that $F$ is a function of $\{\bm f_j\}$. That is, there exists a function $\widehat F$ such that $F(\bm x) = \widehat F(\bm f_1(\bm x),\cdots,\bm f_M(\bm x))$ for all $\bm x\in X$.
        \item For any linear function $\bm f_j$ corresponding to a rooted cycle, there exists an experiment $F$ such that $\bm f_j$ is a function of $F$. That is, there exists a function $\widehat f_j$ such that $\bm f_j(\bm x) = \widehat f_j(F(\bm x))$ for all $\bm x\in X$.
    \end{enumerate}
\end{theorem}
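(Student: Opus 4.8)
The plan is to expand an arbitrary experiment $F(\bm x)_j = \lbra{\widetilde E_j}\widetilde{\mc C}\lket{\widetilde\rho_0}$ in the normalized Pauli basis and read its $\bm x$-dependence directly off the resulting sum over ``Pauli trajectories.'' Four structural facts drive the computation. First, every noise channel ($\Lambda^S,\Lambda^M,\Lambda^{\mc G}$) is Pauli-diagonal, so it contributes only a scalar $\lambda=\exp(-x)$ to whichever Pauli passes through it. Second, the noiseless single-qubit layers $L$ are pattern-preserving, i.e.\ $\lbra{\sigma_{a'}}L\lket{\sigma_a}=0$ unless $\pt(a')=\pt(a)$, so they only mix Paulis of equal support and never touch the root. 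Third, each ideal Clifford layer $\mc G\in\mf G$ acts as a signed permutation $\lket{\sigma_a}\mapsto\pm\lket{\sigma_{\mc G(a)}}$, carrying a Pauli from pattern $\pt(a)$ to $\pt(\mc G(a))$, which is exactly the gate edge of the PTG. Fourth, both $\rho_0$ and every $E_j$ are supported only on the $Z$-type Paulis $\{I,Z\}^{\otimes n}$, on which the SPAM channels act by the scalars $\lambda^{S/M}_{\pt(\cdot)}$ attached to the root-to-pattern and pattern-to-root edges.

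For part (a), I would insert a resolution of identity $\sum_a\lketbra{\sigma_a}{\sigma_a}$ between every consecutive pair of operations in $\widetilde{\mc C}$, turning $F(\bm x)_j$ into a finite sum $\sum_\tau A_\tau\exp(-\bm f_\tau(\bm x))$ over sequences $\tau$ of intermediate Paulis. Each coefficient $A_\tau$ (a product of single-qubit matrix elements, Clifford signs, normalization factors, and $E_j$-overlap signs) is $\bm x$-independent, while all of the $\bm x$-dependence collects into $\bm f_\tau(\bm x)$, the sum of the logarithmic eigenvalues $x^S,x^{\mc G_i},x^M$ picked up along $\tau$. I then argue each $\tau$ is a rooted cycle: it leaves $v_R$ once through a state-prep edge, travels among pattern nodes via gate edges — pattern-preservation of the single-qubit layers guarantees the outgoing pattern of one gate edge equals the incoming pattern of the next, so the walk is well-formed — and returns to $v_R$ once through a measurement edge; since gate edges are incident only to pattern nodes, $v_R$ is met exactly once. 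Hence $\bm f_\tau$ is the functional of a rooted cycle, and $F=\widehat F(\{\bm f_\tau\}_\tau)$ with $\widehat F(\{y_\tau\})=\sum_\tau A_\tau e^{-y_\tau}$, which proves (a).

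For part (b), the idea is to realize the given rooted cycle as an explicit circuit in which \emph{all} single-qubit layers are Cliffords, so that the ideal circuit $\mc C$ is itself Clifford. Given a rooted cycle with state-prep pattern $u_0$, ordered gate edges $(\mc G_i,a_i)$, and measurement pattern $u_k$, I would choose single-qubit Clifford layers routing $Z^{u_0}\mapsto a_1$, then $\mc G_i(a_i)\mapsto a_{i+1}$ for each $i$, and finally $\mc G_k(a_k)\mapsto Z^{u_k}$; this is possible because the cycle being a walk forces $\pt(\mc G_i(a_i))=\pt(a_{i+1})$, and single-qubit Cliffords can realize any pattern-preserving relabeling of Paulis. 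Now every ideal layer is a signed permutation and every noise channel is diagonal, so $\widetilde{\mc C}\lket{\sigma_c}=\pm\bigl(\prod_i\lambda^{\mc G_i}_{a_i}\bigr)\lket{\sigma_{\mc C(c)}}$ is a single Pauli trajectory. Forming the observable $O=Z^{u_k}$ as a signed combination $\sum_j c_j E_j$ (so that $\sum_j c_j\widetilde E_j=\lambda^M_{u_k}Z^{u_k}$) and using that the $Z^{u_0}$-component of $\widetilde\rho_0$ carries $\lambda^S_{u_0}$, the bijectivity of $\mc C$ collapses the sum over initial $Z$-type Paulis to the single term $c=Z^{u_0}$ (the unique preimage of $Z^{u_k}$). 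This yields $\sum_j c_j F(\bm x)_j = A\exp(-\bm f_z(\bm x))$ with a known nonzero constant $A$, and since all $\lambda>0$ I recover $\bm f_z(\bm x)=-\log\!\bigl(\sum_j c_j F(\bm x)_j/A\bigr)$, the required function of $F$.

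The main obstacle lies in part (b): the prepared state $\widetilde\rho_0$ is a fixed superposition over \emph{all} $Z$-type patterns rather than a single Pauli, so naively a whole family of trajectories contributes and one cannot isolate $\bm f_z$. The decisive device is the restriction to single-qubit Clifford layers, which makes the ideal circuit a bijection on Pauli labels and thereby forces exactly one initial Pauli to overlap the measured Pauli, isolating a single rooted cycle. A secondary point to treat carefully is bookkeeping the $\pm$ signs and normalization constants comprising $A$ (Clifford conjugation phases, the factors $1/\sqrt{2^n}$, and the $E_j$ decomposition) and verifying $A\neq0$, which holds since each factor is $\pm1$ or a fixed power of $2$. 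For part (a) the only subtlety is confirming that a trajectory visits $v_R$ exactly once and that reused edges (should the same gate and Pauli recur in a deep circuit) still give a bona fide rooted closed walk; both follow immediately from $v_R$ being incident only to SPAM edges.
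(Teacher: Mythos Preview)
Your proposal is correct and follows essentially the same approach as the paper's proof: for (a), expanding the experiment as a sum over Pauli trajectories using pattern-preservation of single-qubit layers and reading off rooted cycles; for (b), building a Clifford circuit from single-qubit Clifford layers that route the desired Pauli sequence, then exploiting bijectivity to isolate the unique contributing trajectory and taking a logarithm. The only cosmetic point you glossed over is the all-identity trajectory in part (a), which contributes an $\bm x$-independent constant rather than a rooted cycle and is simply absorbed into $\widehat F$; the paper handles this explicitly, but it does not affect your argument.
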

\noindent This is detailed in \cite{chen2023learnability}. {For completeness, we give a proof in SM Sec.~\ref{sec:proof_linearOb}.}  
Intuitively, Theorem~\ref{th:linearOb}(b) says that any rooted cycle can be learned from at least one experiments. In fact, the experiment involves preparing certain noisy Pauli eigenstate, applying a sequence of perfect single-qubit and noisy multi-qubit Clifford gates, and measuring certain noisy Pauli observable. The expectation value will have the form of $\lambda_{b_0}^S\lambda_{P_1}^{\mc G_1}\cdots\lambda_{P_m}^{\mc G_m}\lambda^M_{b_m}$, taking a log of which yields the desired rooted cycle; Theorem~\ref{th:linearOb}(a) says that rooted cycles are complete, in the sense that any experiment (possibly involving non-stabilizer states, measurements, and non-Clifford single-qubit gates) can be predicted using linear combinations of rooted cycles. Therefore, it is sufficient to characterize the learnability of linear functions and not to worry about the non-linear ones.

\medskip

To understand the full picture of learnable and gauge degrees of freedom, we introduce the following definition
\begin{definition}[Observation map]
    Fix an arbitrary basis $\{\bm f_i\}_{i=1}^{|Z|}$ for $Z$ such that each $\bm f_i$ represents a rooted cycle. Define the \textbf{observation} map $\mc P:X\mapsto\mbb R^{|Z|}$ via
    \begin{equation}
        \mc P(\bm x) = [\bm f_1(\bm x),\bm f_2(\bm x),\cdots,\bm f_{|Z|}(\bm x)]^{\mr T}.
    \end{equation}
\end{definition}
Note that, thanks to the strong connectivity of the PTG, 
a cycle basis consisting of only rooted cycles always exist.
One construction is given below. 
\begin{lemma}[A rooted cycle basis]\label{le:rooted_cycle_basis}
    The following set of vectors forms a basis for $Z$,
    \begin{equation}\label{eq:rooted_cycle_basis}
        {\mathsf B} \coleq \{\bm e^S_u+\bm e^M_u:u\neq {0_n}\} \cup \{ \bm e^S_{\pt(a)}+\bm e^{\mc G}_a+\bm e^M_{\pt(\mc G(a))}:\mc G\in\mf G,a\neq I_n\}.
    \end{equation}
\end{lemma}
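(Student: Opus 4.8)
The plan is to show that $\mathsf B$ is a linearly independent subset of the cycle space $Z$ whose cardinality equals $\dim Z$; since any linearly independent subset of $Z$ of size $\dim Z$ is automatically a basis, this suffices. First I would verify that every vector in $\mathsf B$ genuinely lies in $Z$ by exhibiting it as the vector of a rooted simple cycle. The vector $\bm e^S_u+\bm e^M_u$ is the cycle vector of the length-$2$ closed walk that leaves $v_R$ along the state-preparation edge $\bm e^S_u$ to $v_u$ and returns along the measurement edge $\bm e^M_u$; this is well defined for every $u\neq 0_n$ and visits $v_R$ exactly once. The vector $\bm e^S_{\pt(a)}+\bm e^{\mc G}_a+\bm e^M_{\pt(\mc G(a))}$ is the cycle vector of the length-$3$ walk $v_R\to v_{\pt(a)}\to v_{\pt(\mc G(a))}\to v_R$ built from a state-preparation edge, the gate edge labeled $\bm e^{\mc G}_a$, and a measurement edge. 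This walk is valid because $a\neq I_n$ forces $\pt(a)\neq 0_n$, and because a Clifford conjugation maps non-identity Paulis to non-identity Paulis we also have $\pt(\mc G(a))\neq 0_n$, so both pattern nodes and all three edges exist (in the degenerate case $\pt(a)=\pt(\mc G(a))$ the gate edge is a self-loop, which is still a legitimate simple cycle since its three edges are distinct). Hence $\mathsf B\subseteq Z$ and each element is a rooted cycle.

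Next I would establish linear independence, which is where the structure of $\mathsf B$ does the real work. The key observation is that each gate edge $\bm e^{\mc G}_a$ occurs in exactly one element of $\mathsf B$ — the second-type vector indexed by the same pair $(\mc G,a)$ — and in none of the first-type vectors, which are supported purely on SPAM edges. Consequently, in any vanishing relation
\[
\sum_{u} c_u\bigl(\bm e^S_u+\bm e^M_u\bigr) + \sum_{\mc G,a} d_{\mc G,a}\bigl(\bm e^S_{\pt(a)}+\bm e^{\mc G}_a+\bm e^M_{\pt(\mc G(a))}\bigr)=0,
\]
reading off the coefficient of each $\bm e^{\mc G}_a$ (a standard basis vector of $E$) immediately yields $d_{\mc G,a}=0$ for all $\mc G,a$. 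The relation then collapses to $\sum_u c_u(\bm e^S_u+\bm e^M_u)=0$, and since the $\bm e^S_u$ are distinct standard basis vectors this forces $c_u=0$ for all $u$. Thus $\mathsf B$ is linearly independent.

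Finally I would match cardinalities via a dimension count. Using the standard identity $\dim Z=|\mathsf E|-|\mathsf V|+1$ for a strongly connected graph — which follows from the Lemma $E=Z\oplus^{\perp}U$ together with the fact that the cut space of a connected graph has dimension $|\mathsf V|-1$ — I count $|\mathsf V|=2^n$ and $|\mathsf E|=2(2^n-1)+|\mf G|(4^n-1)$, giving $\dim Z=(2^n-1)+|\mf G|(4^n-1)$. This is exactly $|\mathsf B|$, since the first family contributes $2^n-1$ vectors and the second contributes $|\mf G|(4^n-1)$. A linearly independent subset of $Z$ of this size spans $Z$, completing the argument.

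I expect the only delicate points to be bookkeeping rather than conceptual. The step most prone to slips is the edge enumeration: there are $4^n-1$ gate edges per $\mc G$ because they are indexed by full Pauli labels, whereas there are only $2^n-1$ SPAM edges of each type because those are indexed by patterns, and this asymmetry must be tracked to make the dimension count come out equal to $|\mathsf B|$. The membership step also requires a moment's care in the degenerate self-loop case $\pt(a)=\pt(\mc G(a))$. The linear-independence argument, by contrast, is essentially immediate once one notices that gate edges are ``private'' to single basis vectors.
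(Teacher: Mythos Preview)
Your proposal is correct and follows essentially the same approach as the paper: verify that each vector of $\mathsf B$ is a rooted cycle, match $|\mathsf B|$ to $\dim Z$ via the standard count $|\mathsf E|-|\mathsf V|+1$ (the paper phrases this equivalently as $|E|-\dim T$ with $\dim T=2^n-1$), and prove linear independence by noting that each gate edge $\bm e^{\mc G}_a$ appears in exactly one element of $\mathsf B$, forcing the second-family coefficients to vanish first and then the first-family ones. Your write-up is slightly more careful about the membership step (e.g., the self-loop case), but the argument is the same.
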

\begin{proof}
    To begin with, every vector in ${\mathsf B}$ is obviously a rooted cycle.
    Since the PTG is strongly connected by construction, the dimension of $T$ is $2^n-1$~\cite{gleiss2003circuit}, thus
    \begin{equation}
        |Z| = |E| - |T| = 2^n-1 + |\mf G|(4^n-1),
    \end{equation}
    which is the same as the cardinality of ${\mathsf B}$. Therefore, it suffices to show the linear independence for ${\mathsf B}$. Consider the equation $\sum_{i:\bm e_i\in {\mathsf B}}\alpha_i \bm e_i = \bm 0$. Note that each $\bm e_a^{\mc G}$ appears in exactly one vector from the second part of ${\mathsf B}$, for which the corresponding $\alpha_i$ must be all zero, but the first part is also obviously linearly independent, which means all $\alpha_i$ must be zero, proving the linear independence.
\end{proof}

\begin{theorem}\label{th:complete} %
Up to identification between $X$ and $X'$ using the standard basis,
\begin{equation}
\begin{aligned}
    &\mr{Im}\mc P^{\T} ~~ \mr{Ker}\mc P\\
    &\quad~\|\quad\quad\| \\
    X &= L \oplus^\perp T\\
    \|&\quad~\|\quad\quad\| \\
    E &= Z \oplus^\perp U
\end{aligned}~.
\end{equation}
In words, the cycle space $Z$ is equal to the learnable space $L$, while the cut space $U$ is equal to the gauge space $T$. Besides, the two spaces are orthogonal complement to each other. Finally, the learnable space $L$ is also equal to the image for the dual of the observation map $\mc P$, while the gauge space $T$ is the kernel of $\mc P$.
\end{theorem}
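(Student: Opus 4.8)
The plan is to establish the two spine identities $L = Z$ and $T = U$ under the identification $X \cong X' \cong E$ furnished by the standard basis, and then to read off the remaining claims from the decomposition $E = Z \oplus^\perp U$ together with the observation map $\mc P$. I would treat $L = Z$ first, since the $T = U$ statement can be bootstrapped from it. For the inclusion $Z \subseteq L$ I would start from Lemma~\ref{le:rooted_cycle_basis}, which exhibits a basis of $Z$ consisting entirely of rooted cycles: Theorem~\ref{th:linearOb}(b) makes each such cycle learnable, Lemma~\ref{le:learnable_space} makes the learnable functions a subspace, and hence the span of the basis, namely $Z$, lies in $L$.

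The reverse inclusion $L \subseteq Z$ is where the real work lies. Take $\bm f \in L$; by Definition~\ref{de:learnable}, $\bm f(\bm x) = \hat f(F_1(\bm x),\dots,F_m(\bm x))$ for experiments $F_j$ and some a priori nonlinear $\hat f$. Applying Theorem~\ref{th:linearOb}(a) to each $F_j$ rewrites $\bm f$ as a function of finitely many rooted cycles $\bm g_1,\dots,\bm g_M \in Z$. The crux is then a purely linear-algebraic lemma: a linear functional $\bm f$ whose value depends only on the tuple $(\bm g_1(\bm x),\dots,\bm g_M(\bm x))$ must lie in $\mr{span}\{\bm g_k\}$. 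I would prove this by showing $\bm f$ vanishes on $V \coleq \bigcap_k \mr{Ker}\,\bm g_k$: for $\bm x \in V$ and any $\bm x_0$ one has $\bm g_k(\bm x_0 + \bm x) = \bm g_k(\bm x_0)$, so $\bm f(\bm x_0 + \bm x) = \bm f(\bm x_0)$, and linearity forces $\bm f|_V = 0$; a functional annihilating $\bigcap_k \mr{Ker}\,\bm g_k$ is a linear combination of the $\bm g_k$. This gives $\bm f \in \mr{span}\{\bm g_k\} \subseteq Z$, completing $L = Z$.

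Next I would compute the two sides of the observation map directly. Since $\mc P$ stacks a rooted cycle basis $\{\bm f_i\}_{i=1}^{|Z|}$ of $Z$, its dual $\mc P^\T$ carries the standard basis covectors to the $\bm f_i$, so $\mr{Im}\,\mc P^\T = \mr{span}\{\bm f_i\} = Z = L$; and $\bm x \in \mr{Ker}\,\mc P$ iff every $\bm f_i$, hence every element of $Z$, annihilates $\bm x$, i.e. $\mr{Ker}\,\mc P = Z^\perp = U$, the last equality being the decomposition $E = Z \oplus^\perp U$. It remains to identify this kernel with the gauge space. For $\mr{Ker}\,\mc P \subseteq T$: if $\bm x \in U$ then $\bm g(\bm x) = 0$ for every rooted cycle $\bm g \in Z$, and Theorem~\ref{th:linearOb}(a) shows any experiment $F$ obeys $F(\bm x_0 + \eta\bm x) = \widehat F(\{\bm g_k(\bm x_0 + \eta\bm x)\}) = \widehat F(\{\bm g_k(\bm x_0)\}) = F(\bm x_0)$, so $\bm x$ is a gauge vector. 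For $T \subseteq \mr{Ker}\,\mc P$: if $\bm x \in T$, then realizing each basis cycle $\bm f_i$ via a single experiment using Theorem~\ref{th:linearOb}(b), gauge invariance and linearity give $\bm f_i(\bm x) = 0$ for all $i$, i.e. $\bm x \in \mr{Ker}\,\mc P$. Hence $T = U$.

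Finally, the remaining assertions are immediate: $X = L \oplus^\perp T$ is the transport of $E = Z \oplus^\perp U$ through $L = Z$, $T = U$ and the identification, while $L = \mr{Im}\,\mc P^\T$ and $T = \mr{Ker}\,\mc P$ were obtained en route. I expect the main obstacle to be the linear-algebraic bridge in the $L \subseteq Z$ step, upgrading ``$\bm f$ is an arbitrary function of the rooted cycles'' to ``$\bm f$ is a linear combination of them,'' since this is precisely where the nonlinearity permitted by Definition~\ref{de:learnable} must be controlled; everything else is either cited (Theorem~\ref{th:linearOb}, the cut/cycle decomposition) or a direct computation with $\mc P$.
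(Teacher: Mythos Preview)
Your proof is correct and uses the same ingredients as the paper (Theorem~\ref{th:linearOb}(a),(b) and the graph-theoretic decomposition $E = Z \oplus^\perp U$), but the organization differs. The paper proves $T = \mr{Ker}\,\mc P$ first (both inclusions, just as you do), and then obtains $L \subseteq \mr{Im}\,\mc P^\T$ in one line from the already-established Lemma~\ref{le:L_T_perp} ($L \perp T$) via $L \subseteq T^\perp = (\mr{Ker}\,\mc P)^\perp = \mr{Im}\,\mc P^\T$. You instead attack $L = Z$ first and supply a direct linear-algebra lemma (a linear functional that factors through $(\bm g_1,\dots,\bm g_M)$ must lie in their span) to handle the inclusion $L \subseteq Z$. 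Your lemma is essentially a local reproof of Lemma~\ref{le:L_T_perp} specialized to the cycles at hand; the paper's route is slightly more economical because it recycles that lemma, while yours is more self-contained and makes explicit the mechanism by which the nonlinearity in Definition~\ref{de:learnable} disappears.
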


\noindent \textit{Remark:} while $E=Z\oplus^\perp U$ is a graph-theoretic fact, we don't know a priori whether $X=L \oplus T$ holds under our definition. 

\begin{proof}
    We first show that $T=\mr{Ker}(\mc P)$, \textit{i.e.,} $\bm y\in T \Leftrightarrow \mc P(\bm y)=\bm 0$. 
    Indeed, Lemma~\ref{th:linearOb}(a) implies that $F(\bm x) = \widehat F(\mc P(\bm x))$ for some function $\widehat F$. Thus, for any $\bm y\in\mr{Ker}(\mc P)$ and $\eta\in\mbb R$, 
\begin{equation}
\begin{aligned}
    F(\bm x + \eta\bm y)&= \widehat{F}(\mc P(\bm x+\eta\bm y)) = \widehat{F}(\mc P(\bm x))=F(\bm x).
\end{aligned}
\end{equation}
The second equality uses the linearity of $\mc P$. Therefore, $\mr{Ker}(\mc P)\subseteq T$.
On the other hand, for any $\bm y\notin \mr{Ker}(\mc P)$, there is at least one $\bm f_i$ such that $\bm f_i(\bm y)\ne 0$. Now, Lemma~\ref{th:linearOb}(b) implies that there exists an experiment $F_i$ and a function $\hat f_i$ such that $\bm f_i(\bm x)=\hat f_i(F_i(\bm x))$ for all $\bm x$. Thus,
\begin{equation}
\begin{aligned}
    \bm f_i(\bm y)\neq 0 &\Rightarrow \bm f_i(\bm x)
    \neq \bm f_i(\bm x+\bm y)\\ &\Rightarrow
    \hat f_i(F(\bm x))\neq \hat f_i(F(\bm x+\bm y))\\
    &\Rightarrow
    F(\bm x)\neq F(\bm x+\bm y)
    \\&\Rightarrow \bm y\notin T.
\end{aligned}
\end{equation}
Therefore, $T\subseteq \mr{Ker}{\mc P}$. This proves the claim that $T = \mr{Ker}{P}$.

We then show that $L=\mr{Im}(\mc P^{\T})$. For any $\bm f\in\mr{Im}(\mc P^{\T})$, there is some $\bm h\in \mc L(\mbb R^{|Z|},\mbb R)$ such that $\bm f(\bm x) = \bm h(\mc P(\bm x)),~\forall\bm x\in X$. Lemma~\ref{th:linearOb}(b) implies that each entry of $\mc P(\bm x)$ is learnable, thus $\bm f$ is learnable. This means $\mr{Im}(\mc P^{\T})\subseteq L$; On the other hand, $L\perp T=\mr{Ker}(\mc P)\Rightarrow L\subseteq \mr{Im}(\mc P^{\T})$ as the kernel is the orthogonal complement of the adjoint image. This proves the claim that $L=\mr{Im}(\mc P^{\sf T})$.

Finally, it is not hard to see that $\mr{Ker}(\mc P) = U$ thanks to the duality between cycle space and cut space~\cite{gleiss2003circuit}, and hence $\mr{Im}(\mc P^{\T})= Z$. This completes the proof of Theorem~\ref{th:complete}.
\end{proof}

\subsection{Learnability of reduced noise model}~\label{sec:learnability_reduced}

Since the complete noise model contains exponentially many parameters, we often need to make additional assumptions in practice to reduce its complexity.
Here, we want to understand the learnability of such reduced noise model. 

\begin{definition}%
    \label{de:reduced_param}
    A {reduced Pauli noise model} is defined by a tuple $(X_R,\mc Q)$, where
    \begin{itemize}
        \item $X_R$ is a real linear space called the \textbf{reduced parameter space}. A realization of the reduced noise parameters $\{r_i\}_{i=1}^{|X_R|}$ corresponds to a vector $\bm r\in X_R$ via $\bm r = \sum_{i}r_i\bm\vartheta_i$ where $\{\bm\vartheta_i\}$ is a standard basis for $X_R$.
        \item $\mc Q:X_R\mapsto X$ is an injective linear map called the \textbf{embedding map} such that for any $\bm r$, the corresponding realization of the complete noise parameters is given by $\bm x\coleq\mc Q(\bm r)$.
    \end{itemize}
\end{definition}

Note that we require $\mc Q$ to be linear and injective. We will see in the following sections that many physically-motivated reduced noise models (e.g., spatially local noise) can indeed be described by a linear embedding map.
The requirement of injectivity, equivalent to $\mr{Ker}(\mc Q) = \{\bm 0\}$, is for mathematical convenience. In practice, if the embedding map is not injective, one can always construct another reduced model with fewer parameters and admits an injective embedding map.

\medskip

Any experiment $F$ as in Definition~\ref{de:experiment} induces a function $F\circ\mc Q$ acting on $X_R$.
The definitions and properties of learnable functions and gauge vectors then carry over to the reduced model, by replacing $X$ by $X_R$ in Definitions~\ref{de:learnable}, \ref{de:gauge}. Formally, we have
\begin{definition}[Reduced learnable space]\label{de:reduced_learnable}
    A real-valued function $f$ on $X_R$ is \textbf{learnable} if there exists a finite set of experiments $\{F_j\}_{j=1}^m$ such that $f$ is a function of $\{F_j\circ\mc Q\}_{j=1}^m$. That is, there exists some $\hat f$ such that $f(\bm r) = \hat f(F_1(\mc Q(\bm r)),\cdots,F_m(\mc Q(\bm r)))$ for all $\bm r\in X_R$.
    The linear subspace of $X_R'$ spanned by all linear learnable functions is called the \textbf{reduced learnable space}, denoted by $L_R$.
\end{definition}

\begin{definition}[Reduced gauge space]\label{de:reduced_gauge}
    A vector $\bm y\in X_R$ is a \textbf{gauge} vector if for any $\bm r\in X_R$, $F(\mc Q(\bm r)) = F(\mc Q(\bm r + \eta\bm y))$ for any experiment $F$ and $\eta\in\mbb R$. The linear subspace of $X_R$ spanned by all the gauge vectors is called the \textbf{reduced gauge space}, denoted by $T_R$.
\end{definition}

The characterization of learnability naturally extends to the reduced model, as stated in the next theorem. 

\begin{theorem}\label{th:reduced}
    $L_R = \mr{Im}(\mc Q^\T\circ\mc P^{\T})$, $T_R = \mr{Ker}(\mc P\circ\mc Q)$.
\end{theorem}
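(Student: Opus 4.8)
The plan is to reduce everything to the complete-model characterization already obtained in Theorem~\ref{th:complete} by introducing the composite observation map $\mc P_R \coleq \mc P\circ\mc Q : X_R\mapsto\mbb R^{|Z|}$ and showing that $\mc P_R$ plays for the reduced model exactly the role that $\mc P$ plays for the complete model. The crucial enabling observation is that every reduced experiment factors through $\mc P_R$: Theorem~\ref{th:linearOb}(a) gives a function $\widehat F$ with $F(\bm x)=\widehat F(\mc P(\bm x))$, so $F(\mc Q(\bm r))=\widehat F(\mc P_R(\bm r))$ for all $\bm r\in X_R$. Conversely, each coordinate of $\mc P_R$ is a reduced-learnable function, since the $i$th coordinate equals $\bm f_i\circ\mc Q$ and Theorem~\ref{th:linearOb}(b) supplies an experiment $F_i$ and a function $\hat f_i$ with $\bm f_i=\hat f_i\circ F_i$, whence $\bm f_i\circ\mc Q=\hat f_i\circ(F_i\circ\mc Q)$ is a function of the reduced experiment $F_i\circ\mc Q$.

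First I would prove $T_R=\mr{Ker}(\mc P_R)$ by the same two-inclusion argument as in Theorem~\ref{th:complete}. For $\mr{Ker}(\mc P_R)\subseteq T_R$: if $\mc P_R(\bm y)=\bm 0$, then linearity gives $\mc P_R(\bm r+\eta\bm y)=\mc P_R(\bm r)$, so $F(\mc Q(\bm r+\eta\bm y))=\widehat F(\mc P_R(\bm r+\eta\bm y))=\widehat F(\mc P_R(\bm r))=F(\mc Q(\bm r))$ for every experiment, i.e.\ $\bm y\in T_R$. For $T_R\subseteq\mr{Ker}(\mc P_R)$: if $\bm y\notin\mr{Ker}(\mc P_R)$, then some coordinate satisfies $\bm f_i(\mc Q(\bm y))\neq 0$, and the reduced experiment $F_i\circ\mc Q$ then separates $\bm r$ from $\bm r+\bm y$, because $\hat f_i(F_i(\mc Q(\bm r+\bm y)))=\bm f_i(\mc Q(\bm r))+\bm f_i(\mc Q(\bm y))\neq\bm f_i(\mc Q(\bm r))=\hat f_i(F_i(\mc Q(\bm r)))$, forcing $F_i(\mc Q(\bm r+\bm y))\neq F_i(\mc Q(\bm r))$ and hence $\bm y\notin T_R$.

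Next I would prove $L_R=\mr{Im}(\mc P_R^{\T})$. One inclusion is immediate: any $\bm f\in\mr{Im}(\mc P_R^{\T})$ can be written $\bm f=\bm h\circ\mc P_R$, a linear combination of the coordinates of $\mc P_R$, each of which is reduced-learnable, so $\bm f\in L_R$. For the reverse inclusion I would first establish the reduced analog of Lemma~\ref{le:L_T_perp}, namely $L_R\perp T_R$: if $\bm f\in L_R$ is a function of reduced experiments $\{F_j\circ\mc Q\}$ and $\bm y\in T_R$, those experiments are invariant under $\bm r\mapsto\bm r+\bm y$, so $\bm f(\bm r+\bm y)=\bm f(\bm r)$ and $\bm f(\bm y)=0$ by linearity. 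Combining with the first part yields $L_R\subseteq T_R^{\perp}=(\mr{Ker}\,\mc P_R)^{\perp}=\mr{Im}(\mc P_R^{\T})$, where the last equality is the standard finite-dimensional identity relating the image of an adjoint to the orthogonal complement of the kernel. Finally, using $(\mc P\circ\mc Q)^{\T}=\mc Q^{\T}\circ\mc P^{\T}$ gives $L_R=\mr{Im}(\mc Q^{\T}\circ\mc P^{\T})$ and $T_R=\mr{Ker}(\mc P\circ\mc Q)$, as claimed.

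I do not expect a serious obstacle here: the entire argument is a faithful transcription of the complete-model proof with $\mc P$ replaced by $\mc P_R=\mc P\circ\mc Q$, and the only point requiring a little care is to keep track of the fact that the reduced experiments are the complete experiments precomposed with $\mc Q$, so that the factorization and learnability statements of Theorem~\ref{th:linearOb} transfer verbatim. It is worth noting that injectivity of $\mc Q$ plays no role in these two identities themselves—it is needed elsewhere (e.g.\ to recover $T_R$ from $\mc Q(T_R)$)—so the argument in fact goes through for an arbitrary linear embedding $\mc Q$.
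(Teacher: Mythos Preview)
Your proposal is correct and follows essentially the same route as the paper's own proof: introduce the composite map $\mc P\circ\mc Q$, establish $T_R=\mr{Ker}(\mc P\circ\mc Q)$ via the two inclusions using Theorem~\ref{th:linearOb}(a) and (b), then obtain $L_R=\mr{Im}(\mc Q^{\T}\circ\mc P^{\T})$ from one direct inclusion plus $L_R\perp T_R$ combined with the kernel--adjoint-image duality. Your closing remark that injectivity of $\mc Q$ is not used in this particular argument is also correct and worth noting.
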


\noindent We provide a proof for Theorem~\ref{th:reduced} in Sec.~\ref{sec:proof_reduced}. Basically, all steps in the proof of Theorem~\ref{th:complete} naturally carry over.
The following corollary characterizes the relation of the learnable/gauge spaces between the complete and the reduced noise model.
\begin{corollary}\label{cor:main}
    $L_R = \mc Q^{\T}(L)$,~
    $T_R = \mc Q^{-1}(T)$. 
\end{corollary}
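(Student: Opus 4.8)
The plan is to derive Corollary~\ref{cor:main} directly from Theorem~\ref{th:reduced} together with Theorem~\ref{th:complete}, treating it as an exercise in rewriting images and kernels of composed linear maps. By Theorem~\ref{th:reduced} we already know $L_R = \mr{Im}(\mc Q^\T\circ\mc P^\T)$ and $T_R = \mr{Ker}(\mc P\circ\mc Q)$, and by Theorem~\ref{th:complete} we have the complete-model identifications $L = \mr{Im}(\mc P^\T)$ and $T = \mr{Ker}(\mc P)$. The whole task is to push $\mc Q^\T$ and $\mc Q^{-1}$ through these expressions and check the two claimed equalities hold as equalities of subspaces (not merely inclusions).

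For the learnable space, I would argue $L_R = \mc Q^\T(L)$ using the elementary identity $\mr{Im}(\mc Q^\T\circ\mc P^\T) = \mc Q^\T(\mr{Im}\,\mc P^\T)$, which holds for any composition of linear maps: the image of a composite is the image under the outer map of the image of the inner map. Substituting $\mr{Im}(\mc P^\T) = L$ from Theorem~\ref{th:complete} gives $L_R = \mc Q^\T(L)$ immediately. This direction needs only that $\mc Q^\T$ is a well-defined linear map on $X'$, which it is as the dual of the (linear) embedding $\mc Q$; no injectivity is required here.

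For the gauge space, I would argue $T_R = \mc Q^{-1}(T)$ using the preimage identity $\mr{Ker}(\mc P\circ\mc Q) = \mc Q^{-1}(\mr{Ker}\,\mc P)$, again a standard fact: $\bm y$ lies in the kernel of the composite iff $\mc Q(\bm y)$ lies in the kernel of $\mc P$, which is exactly the statement that $\bm y \in \mc Q^{-1}(\mr{Ker}\,\mc P)$. Substituting $\mr{Ker}(\mc P) = T$ from Theorem~\ref{th:complete} yields $T_R = \mc Q^{-1}(T)$. Here I should be careful that $\mc Q^{-1}$ denotes the set-theoretic preimage (the full-preimage notation $\mc Q^{-1}(T) = \{\bm y \in X_R : \mc Q(\bm y)\in T\}$), consistent with the Miscellaneous notation conventions, rather than an inverse map—$\mc Q$ need not be surjective, so no genuine inverse exists, but injectivity guarantees the preimage of a subspace is again a subspace of the expected dimension.

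The main obstacle, such as it is, is purely one of bookkeeping rather than mathematical depth: I must keep clear which space each map acts on (the primal maps $\mc P\colon X\to\mbb R^{|Z|}$ and $\mc Q\colon X_R\to X$ versus their duals $\mc P^\T, \mc Q^\T$ acting on the dual spaces), and I must verify that the two textbook identities—image of a composition and kernel of a composition—are applied to the correct outer/inner pairing. Since Theorem~\ref{th:reduced} is assumed proved, the corollary reduces to these two one-line substitutions, so I would keep the write-up short and simply invoke the standard linear-algebra facts with the explicit substitutions rather than re-proving them.
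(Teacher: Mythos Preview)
Your proposal is correct and follows essentially the same approach as the paper: both derive the corollary by substituting the characterizations $L=\mr{Im}(\mc P^{\T})$ and $T=\mr{Ker}(\mc P)$ from Theorem~\ref{th:complete} into the expressions $L_R=\mr{Im}(\mc Q^{\T}\circ\mc P^{\T})$ and $T_R=\mr{Ker}(\mc P\circ\mc Q)$ from Theorem~\ref{th:reduced}. Your argument for $T_R$ via the one-line identity $\mr{Ker}(\mc P\circ\mc Q)=\mc Q^{-1}(\mr{Ker}\,\mc P)$ is in fact slightly more streamlined than the paper's, which instead shows $\mc Q(T_R)=T\cap\mr{Im}(\mc Q)$ by two inclusions before inverting.
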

\noindent Here $\mc Q^{-1}(T)\coleq\{\bm x\in X_R~|~\mc Q(x)\in T \}$ is the preimage of $T$ under $\mc Q$.
Alternatively, $T_R = \mc Q^{-1}(T\cap \mr{Im}(\mc Q))$ where $\mc Q^{-1}$ is viewed as the inverse of $Q$ restricted to its image (recall that $\mc Q$ is assumed to be injective). 
    
\begin{proof}
    For any $\bm f\in X_R'$, $\bm f\in L_R\Leftrightarrow \exists \bm g\in \mc L(\mbb R^{|Z|},\mbb R)~s.t.~\bm f = \mc Q^{\T}(\mc P^{\T}(g))\Leftrightarrow  \bm f\in\mc Q^{\T}(L)$. Thus, $L_R=\mc Q^{\T}(L)$. 

    Let us consider $\mc Q(T_R)$. Theorems~\ref{th:complete},~\ref{th:reduced} imply that $\mc Q(T_R)\subseteq T$. Also, $\mc Q(T_R)\subseteq\mr{Im}(Q)$ trivially holds. On the other hand, for any $\bm y\in T\cap\mr{Im}(\mc Q)$, $\mc P(\bm y)=0$, and there exists some $\bm x\in X_R$ such that $\bm y=\mc Q(\bm x)$, but this implies $\mc P(\mc Q(\bm x))=0$ and thus $\bm x\in T_R$, $\bm y\in\mc Q(T_R)$. We conclude that $\mc Q(T_R) = T\cap\mr{Im}(\mc Q)$.    Finally, note that $\mc Q$ is injective by assumption and is thus invertible when restricted to its image. We conclude that $T_R = \mc Q^{-1}(T\cap\mr{Im}(\mc Q))$.
\end{proof}

\section{Algorithms for learning gate set Pauli noise}\label{sec:learning}

We have established a theory for characterizing the learnable/gauge degrees of freedom for any complete or reduced Pauli noise models. The next question to be asked is how to design experiments to learn the learnable degrees of freedom, and how efficiently one can learn them.
In this section, we first introduce a general algorithm for learning Pauli noise models, which leads to efficient experimental designs. 
We then discuss the issue of learning noise parameters to additive/multiplicative precision.
The algorithms discussed here will be applied in the next section to address many practically interesting examples of Pauli noise models.

From this section, we focus on Clifford circuits as they are sufficient for learning Pauli noise. Also, instead of describing an experiment with its measurement probability distribution, we will talk about expectation values of measuring certain Pauli observable. Since we assume noiseless single-qubit layers, measuring Pauli $P_a$ effectively picks up $\lambda^M_{\pt(a)}$ from the measurement noise channel. We thus define the effective noisy observable as $\tilde{P}_a\coleq\lambda^M_{\pt(a)}P_a$. Similarly, we use $\rho_a$ to denote the $+1$ eigenstate of $P_a$ obtained by applying an appropriate single-qubit Clifford layer to $\rho_{0}\equiv \ketbra{0}{0}^{\otimes n}$.\footnote{
Although $\rho_{a}$ is not uniquely defined when $P_a$ contains identity, choosing any $\rho_a$ that satisfies this condition will make no observable difference.} The noisy realization of $\rho_a$ is denoted as $\tilde{\rho}_a$ and satisfies $\Tr[P_a\tilde\rho_a]=\lambda^S_{\pt(a)}$. Finally, we use the tuple $({\rho_a, \mc C, P_b})$ to denote an experiment that (ideally) starts with $\rho_a$, applies the sequence of gates $\mc C$, and measures the observable $P_b$.

\subsection{Basic definitions and vanilla algorithm}~\label{sec:learning_simple}

Let us first define the task of learning a Pauli noise model. 
Given a reduced noise model described by an embedding map $\mc Q$. Recall that $L_R$ describes the linear space for all learnable functions about the noise model (and that Theorem~\ref{th:linearOb} says \textit{linear} learnable functions suffice). Thus, we just need to do the following: 
\begin{enumerate}
    \item  Specify a basis for $L_R$, denoted as $\{\bm f_j\}_{j=1}^{|L_R|}$,
    \item Specify the experiments to learn each $\bm f_j$.
\end{enumerate}
By conducting those experiments, we will have effectively learned every function in $L_R$, which are all we can learn about the noise model.

However, for an arbitrary basis vector $\bm f_j\in L_R$, it is not straightforward to find the experiments that learn it, despite their existence. 
We would like to find a reduced basis that is more closely related to experiments. Recall that for the complete model, the learnable space $L$ corresponds to the cycle space $Z$ of PTG, and each rooted cycle can be extracted from one experiment (Theorem~\ref{th:linearOb}(b)).
This motivates the following definition.
\begin{definition}[Reduced cycle]\label{de:reduced_cycle}
    For any $\bm f\in L_R$, if there exists a $\bm g\in L$ that is a cycle and that $\bm f = \mc Q^\T(\bm g)$, we call $\bm f$ a \textbf{reduced cycle} corresponding to $\bm g$. If $\bm g$ is also a rooted cycle, we call $\bm f$ a \textbf{rooted reduced cycle}.
\end{definition}
\noindent We remind the reader that not all $\bm g\in L$ represents a cycle, as some might only be linear combinations of cycles.
Since Corollary~\ref{cor:main} states that $L_R = \mc Q^\T(L)$, given a rooted cycle basis $\{\bm g_i\}_{i=1}^{|L|}$ for $L$, we know that $\{\mc Q^\T(\bm g_i)\}_{i=1}^{|L|}$ spans $L_R$, thus we can pick a subset $\{\bm f_j\}_{j=1}^{|L_R|}\subseteq\{\mc Q^\T(\bm g_i)\}_i$ as a basis for $L_R$ consisting solely of rooted reduced cycles. Theorem~\ref{th:linearOb}(b) thus immediately gives an experiment to learn $\bm f_j$.

As a concrete example, consider the rooted cycle basis introduced in Lemma~\ref{le:rooted_cycle_basis},
\begin{equation}
    {\mathsf B} \coleq \{\bm e^S_u+\bm e^M_u:u\neq {\bm0}\} \cup \{ \bm e^S_{\pt(a)}+\bm e^{\mc G}_a+\bm e^M_{\pt(\mc G(a))}:\mc G\in\mf G,a\neq I\}.
\end{equation}
We have the following procedure for noise learning.

\medskip

\noindent\textbf{Algorithm \algnum} [A simple algorithm for learning gate set Pauli noise].
\begin{enumerate}
    \item Select a subset ${\mathsf B}_R\subseteq {\mathsf B}$ such that $\mc Q^\T({\mathsf B}_R)$ forms a basis for $L_R$.
    \item Learn all $\bm e^S_u+\bm e^M_u\in {\mathsf B}_R$ by preparing $\tilde{\rho}_0$ and measuring $\tilde{Z}^u$.
    \item Learn all $\bm e^S_{\pt(a)}+\bm e^{\mc G}_a+\bm e^M_{\pt(\mc G(a))}\in {\mathsf B}_R$ by preparing $\tilde \rho_a$, applying $\tilde{\mc G}$, and measuring $\tilde P_{\mc G(a)}$.
\end{enumerate}

As an obvious challenge to this approach, the cardinality of ${\mathsf B}$ is $2^n-1 + |\mf G|(4^n-1)$, scaling exponentially with $n$. Even if the dimension of $L_R$ is polynomial in $n$, Step 1 of picking a basis for $L_R$ still requires exponential computational complexity in general.
However, for physically-motivated noise ansatzs such as spatially local noise, it is often easy to efficiently construct such a basis. This will be discussed in details when we turn to concrete noise models. Conditioned on that ${\mathsf B}_R$ can be efficiently constructed, step 2 and 3 involves at most $|L_R|$ experiments to run, which is polynomial in $n$ for any feasible reduced noise model.
One might notice that this algorithm resembles standard quantum process tomography~\cite{mohseni2008quantum}. Specifically, each gate from the gate set is learned one by one. No concatenation or composition of different gates is used.

\subsection{On learning gate noise to relative precision}\label{sec:learning_relative}

In practice, there are often considerations beyond merely constructing a complete set of experiments. 
One important factor to consider is the precision-efficiency trade-off in estimating the noise parameters.
Many existing RB-type noise characterization protocols can learn gate noise parameter to relative precision (with respect to the infidelity) using a small number of initialization and measurements~\cite{Flammia2020,harper2019statistical,Erhard2019}. More precisely, suppose we are interested in learning certain gate fidelity parameter $\lambda$, RB-type protocol can generally construct a family of experiments $\{(\rho_0,\mc C^d,P)\}_d$ such that the circuit fidelities satisfy
\begin{equation}\label{eq:exp_fit}
    {f}_d\coleq\Tr[\tilde P\tilde{\mc C}^d(\tilde\rho_0)]=\alpha\lambda^d,    
\end{equation}
where $\alpha$ represents some SPAM fidelity, $d$ is the number of concatenation of the relevant gates. We assume both $\alpha$ and $\lambda$ are sufficiently close to $1$.
Now choose $d\in\{0,m\}$, estimate each $\widehat{{f}}_d\coleq {f}_d+\hat\varepsilon_m$, and construct the following ratio estimator,
\begin{equation}
\begin{aligned}
    \hat\lambda\coleq \frac{\widehat{{f}}_m}{\widehat{{f}}_0} &= \left(\frac{\alpha\lambda^m+\hat\varepsilon_m}{\alpha+\hat\varepsilon_0}\right)^{\frac1m}
    \approx \lambda\left(1+\frac{\hat\varepsilon_m\lambda^{-m}-\hat\varepsilon_0}{m\alpha}\right).
\end{aligned}
\end{equation}
The last approximation holds given that $|\hat\varepsilon_0/\alpha|,|{(\hat\varepsilon_m\lambda^{-m}-\hat\varepsilon_0)}/{m\alpha}|\ll1$. If we choose $1/m\approx \log\lambda^{-1} = 1 - \lambda + o(1-\lambda)$, we would obtain that
\begin{equation}
    \hat\lambda\approx \lambda\left(1 + (1-\lambda)\frac{\hat\varepsilon_m e - \hat\varepsilon_0}{\alpha}\right).
\end{equation}
This means as long as we estimate ${f}_0$, ${f}_m$ to $\pm\varepsilon$ additive precision, we would obtain $O(\varepsilon)$ relative precision estimation for $\lambda$ with respect to the infidelity $1-\lambda$. As a side note, one can efficiently find the required $m$ using a search subroutine described in~\cite[Sec. V]{harper2019statistical}. The basic idea is to estimate ${f}_d$ for $d$'s chosen from an exponentially increasing list, and output $m=d$ when $\widehat{{f}}_d/\widehat{{f}}_0$ is below certain threshold.

\medskip

The above analysis shows that, as long as one can construct a family of experiments as in Eq.~\eqref{eq:exp_fit}, one can amplify $\lambda$ by concatenation and learn it to relative precision. We now explain how to fit this into the framework of gate set Pauli noise learning. 
\begin{definition}\label{de:L_R^G}
    For any reduced Pauli noise model $(X_R,\mc Q)$, the reduced learnable space for gate parameters is defined as $L^G_{R} \coleq \mc Q^\T(L\cap X^G)$. 
\end{definition}

\begin{theorem}[Noise parameter amplification]\label{th:relative}
    Given a reduced model $(X_R,\mc Q)$, for any reduced cycle $\bm f\in L_R^G$, there exists a family of experiments $\{(P,\mc C^m,\rho_P)\}_m$ and a rooted cycle $\bm\alpha\in L$ such that
    \begin{equation}
        \Tr[\tilde P\tilde{\mc C}^m(\tilde\rho_P)]=\exp(-\bm\alpha(\mc Q(\bm r)))\exp(-m\bm f(\bm r)),\quad\forall m\in\mbb N.
    \end{equation}
\end{theorem}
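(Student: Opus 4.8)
The plan is to turn the abstract cycle underlying $\bm f$ into a concrete, concatenatable Clifford gadget whose repeated application amplifies exactly the product of gate eigenvalues encoded by $\bm f$. First I would unpack what membership in $L_R^G$ gives us. Since $L_R^G = \mc Q^\T(L\cap X^G)$ and $\bm f$ is a reduced cycle (Definition~\ref{de:reduced_cycle}), there is a \emph{simple} cycle $z$ in the PTG, supported entirely on gate edges and hence never passing through the root node $v_R$, whose incidence vector $\bm g = \bm x_z\in Z\cap X^G$ satisfies $\bm f = \mc Q^\T(\bm g)$; consequently $\bm f(\bm r) = \bm g(\mc Q(\bm r)) = \bm g(\bm x)$ for all $\bm r$. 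Writing $z$ as a closed walk on pattern nodes $s_0\to s_1\to\cdots\to s_{\ell-1}\to s_0$, each step $s_j\to s_{j+1}$ is a gate edge $\bm e_{a_j}^{\mc G_j}$ with $\pt(a_j)=s_j$ and $\pt(\mc G_j(a_j))=s_{j+1}$, so that $\bm g(\bm x)=\sum_{j=0}^{\ell-1}x_{a_j}^{\mc G_j}$.

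The core of the construction is the gadget $\mc C$. After applying $\mc G_j$ the Pauli label becomes $\mc G_j(a_j)$, which has the correct pattern $s_{j+1}$ but need not equal the next cycle label $a_{j+1}$. The key enabling fact is that the single-qubit layers are noiseless and, acting qubit-wise, can realize \emph{any} relabeling of Paulis that preserves the support pattern, since the single-qubit Clifford group permutes $\{X,Y,Z\}$ arbitrarily on each qubit. I would therefore insert a single-qubit Clifford layer $\mc V_{j+1}$ mapping $\mc G_j(a_j)\mapsto a_{j+1}$ after each gate, with a final layer $\mc V_0$ closing the loop $\mc G_{\ell-1}(a_{\ell-1})\mapsto a_0$. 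The resulting circuit $\mc C = \mc V_0\,\mc G_{\ell-1}\,\mc V_{\ell-1}\cdots\mc G_1\,\mc V_1\,\mc G_0$ is a Clifford fixing $P_{a_0}$ up to a computable sign $s\in\{\pm1\}$, i.e. $\mc C(a_0)=a_0$. Because the inserted layers carry no noise, propagating the noisy gadget $\tilde{\mc C}$ through $\lket{\sigma_{a_0}}$ makes each $\Lambda^{\mc G_j}$ act on $\lket{\sigma_{a_j}}$ at the right step, giving $\tilde{\mc C}\lket{\sigma_{a_0}} = s\big(\prod_{j}\lambda_{a_j}^{\mc G_j}\big)\lket{\sigma_{a_0}} = s\,\lambda\lket{\sigma_{a_0}}$ with $\lambda \coleq \exp(-\bm f(\bm r))$.

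With the gadget in hand, amplification is immediate: since every factor of $\tilde{\mc C}$ acts diagonally on Pauli labels up to relabeling, $\tilde{\mc C}^m\lket{\sigma_{a_0}} = s^m\lambda^m\lket{\sigma_{a_0}}$, and because $\mc C$ is a bijection on Paulis fixing $a_0$, no other Pauli component of $\tilde\rho_{a_0}$ is ever mapped onto $\sigma_{a_0}$. I would then take the experiment $(\,s^mP_{a_0},\,\mc C^m,\,\rho_{a_0})$ and compute, using $\lbra{\tilde P_{a_0}}=\lambda^M_{s_0}\lbra{P_{a_0}}$ and $\lbraket{\sigma_{a_0}}{\tilde\rho_{a_0}}=\lambda^S_{s_0}\lbraket{\sigma_{a_0}}{\rho_{a_0}}$, that the outcome equals $\lambda^S_{s_0}\lambda^M_{s_0}\,\lambda^m = \exp(-\bm\alpha(\mc Q(\bm r)))\exp(-m\bm f(\bm r))$, where $\bm\alpha \coleq \bm e^S_{s_0}+\bm e^M_{s_0}$ is exactly the rooted SPAM cycle from the basis of Lemma~\ref{le:rooted_cycle_basis}. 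The sign $s^m$ is absorbed by measuring $\pm P_{a_0}$ according to the parity of $m$, which remains a legitimate Pauli observable.

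The step I expect to be the main obstacle is the gadget construction of the second paragraph: checking that the relabeling layers exist at every step (they do, as same-pattern Paulis are always related by a qubit-wise Clifford, with identity positions automatically matched) and carefully tracking the accumulated $\pm1$ signs through both the gate edges and the relabeling layers so that the net scalar is precisely the product of gate eigenvalues. By contrast, the reduction from a general reduced cycle in $L_R^G$ to a single simple cycle supported on gate edges, via Corollary~\ref{cor:main}, and the Pauli-Liouville bookkeeping of the last paragraph using Theorem~\ref{th:linearOb}, I expect to be routine.
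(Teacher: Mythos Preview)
Your proposal is correct and follows essentially the same route as the paper: take a cycle $\bm g\in L\cap X^G$ with $\bm f=\mc Q^\T(\bm g)$, pick a starting vertex $s_0$, set $\bm\alpha=\bm e^S_{s_0}+\bm e^M_{s_0}$, and build a Clifford gadget $\mc C$ by interleaving the gate edges with pattern-preserving single-qubit Clifford layers so that concatenation yields $\exp(-\bm\alpha(\bm x))\exp(-m\bm g(\bm x))$. The paper's write-up is simply more terse, delegating the explicit gadget construction and sign bookkeeping to the proof of Theorem~\ref{th:linearOb}(b); one small note is that the qualifier \emph{simple} on the cycle is unnecessary for your argument (the gadget works for any closed walk on gate edges), and the sign $s$ can in fact be forced to $+1$ by choosing the single-qubit layers appropriately, so that $P$ need not depend on $m$.
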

\noindent In other words, any functions on the reduced model that corresponds to learnable functions consisting solely of gate noise parameters in the complete model can be amplified via concatenation, and can thus be learned to relative precision. Here, the gate sequence $\mc C$ can be a composition of multiple gates from the gate set interleaved by noiseless single-qubit gates, which is similar to the concept of ``germs'' in gate set tomography~\cite{nielsen2021gate}, and also appears in other Pauli noise learning protocols~\cite{chen2023learnability,van2023probabilistic,carignan2023error}. 

\begin{proof}
    By definition, there exists a cycle $\bm g\in L\cap X^G$ such that $\bm f =\mc Q^\T(\bm g)$. Let $u$ be any vertex of the PTG that is contained in $\bm g$. Then $\bm\alpha\coleq \bm e^S_u + \bm e^M_u$ is a rooted cycle also containing $u$, and it is not hard to see that $\bm\alpha + m \bm g$ for any $m\in\mbb N$ forms a rooted cycle. Theorem~\ref{th:linearOb}(b) then implies that there exists an experiment $(P,\mc C_m, \rho_P)$ such that 
    \begin{equation}
        (\bm\alpha+m\bm g)\cdot\mc Q(\bm r) = -\log(\Tr[\tilde P\tilde{\mc C}_m(\tilde\rho_P)]).
    \end{equation}
    Finally, it is clear from the proof of Theorem~\ref{th:linearOb}(b) that each $\mc C_m$ can actually be chosen as $\mc C^m$ (i.e., concatenating $\mc C$ $m$ times) for an appropriate gate sequence $\mc C$. This completes the proof.
\end{proof}

\medskip

In light of Theorem~\ref{th:relative}, one can design an algorithm that not only learns everything in $L_R$, but also learn a basis of $L_R^G$ to relative precision.

\medskip
\noindent\textbf{Algorithm \algnum} [Relative precision learning of gate set Pauli noise].
\begin{enumerate}
    \item Select a reduced cycle basis of $L_R^G$, denoted as ${\mathsf B}_R^G$. 
    \item Augment ${\mathsf B}_R^G$ into a basis for $L_R$ using vectors from $Q^\T({\mathsf B})$. %
    \item Learn all vectors from ${\mathsf B}_R^G$ by running the corresponding set of experiments $\{(P,\mc C^m,\rho_P)\}_m$ described by Theorem~\ref{th:relative}.
    \item Learn all the other basis vectors using Step 2 and 3 from Algorithm 1.
\end{enumerate}
\noindent Again, the major challenge is to efficiently construct the reduced cycle basis, which would generally require doing linear algebra on an exponential-dimensional linear space. We will soon see that for some physically-motivated noise ansatz and realistic gate set, such basis can be constructed efficiently.

\medskip

We close this section with one final remark: Although specifying a reduced cycle basis of $L_R^G$ is sufficient to specify any functions therein, learning each basis function to relative precision does not guarantee relative precision estimation for arbitrary functions.
For example, relative-precision estimators for both $\bm f_1$ and $\bm f_2$ does not guarantee a relative-precision estimator for $\bm f_1-\bm f_2$, especially when the true value of $\bm f_1$ and $\bm f_2$ are close to each other. 
In such cases, one might need to conduct more experiments than merely learning a basis in order to retrieve relative precision estimation for those functions. We will leave this issue for future study.

\section{Applications to fully local noise model}\label{sec:fully}

\subsection{Basic definitions}

We have established general results on the learnability of any reduced Pauli noise models. Now we apply our theory to concrete physically-motivated noise ansatz. In this section, we will focus on the fully local noise model, or the so-called crosstalk-free noise model. As we will soon see, the reduced gauge space in this case is generated by certain subsystem depolarizing channel. We will also give efficient construction of the learning algorithms.

Before defining into the spatially local Pauli noise model, let us first introduce the \emph{layer-uncorrelated noise model}, which is mostly for mathematical convenience but also practically relevant:
\begin{definition}\label{de:layer_indep}
    For a reduced noise model $(X_R, \mc Q)$, if $X_R$ and $\mc Q$ can be decomposed as 
    \begin{equation}
        \begin{aligned}
            X_R &= X_R^S\oplus X_R^M\bigoplus_{\mc G\in\mf G}X_R^{\mc G},\\
            \mc Q&= \mc Q^S\oplus \mc Q^M\bigoplus_{\mc G\in\mf G}\mc Q^{\mc G},
        \end{aligned}
    \end{equation}
    where $\mc Q^\square:X_R^\square\mapsto X^\square$ satisfies $\mc Q^\square(\bm r^\square)=\bm x^\square$ for all $\square\in\{S,M\}\cup\mf G$, then we call it a \textbf{layer-uncorrelated noise model}.
\end{definition}
\noindent In words, a reduced noise model being layer-uncorrelated means that, there is an independent set of reduced noise parameters for state preparation, measurement, and each layer of gates, respectively. 
Many noise models studied in the literature of quantum benchmarking satisfies this assumption (e.g.,~\cite{flammia2021averaged,carignan2023error}). One prominent exception is the so-called ``gate-independent noise model'' studied in e.g. standard randomized benchmarking~\cite{Emerson2005,Knill2008,Dankert2009,magesan2011scalable,Magesan2012b} where the noise channel associated with any Clifford gate is assumed to be the same. Though we will not consider such a noise model in this section, we remark that the general framework established in the previous section is still applicable.

\begin{lemma}\label{le:layerwise_indep}
    For a layer-uncorrelated model $(X_R, \mc Q)$, the reduced gauge space $T_R$ satisfies
    \begin{equation}
        \mc Q(T_R) = \bigcap_{\square\in\{S,M\}\cup\mf G}\left\{\bm z\in T:\bm z^\square\in\imq^\square\right\}.
    \end{equation}
\end{lemma}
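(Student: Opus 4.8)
The plan is to reduce the statement to a purely linear-algebraic fact about the image of the block-diagonal map $\mc Q$, and then invoke Corollary~\ref{cor:main}. Recall that the proof of Corollary~\ref{cor:main} already establishes the identity $\mc Q(T_R) = T \cap \mr{Im}(\mc Q)$ (this is the alternate form noted immediately after the corollary, and it uses injectivity of $\mc Q$ to pass from $T_R = \mc Q^{-1}(T)$ to an equation for $\mc Q(T_R)$). So it suffices to show
\[
T \cap \mr{Im}(\mc Q) \;=\; \bigcap_{\square\in\{S,M\}\cup\mf G}\bigl\{\bm z \in T : \bm z^\square \in \imq^\square\bigr\}.
\]

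First I would unpack the right-hand side. Because the standard basis of $X$ partitions into the blocks indexed by $X^S$, $X^M$, and $\{X^{\mc G}\}_{\mc G\in\mf G}$ (Definition~\ref{de:param_space}), these subspaces are mutually orthogonal and every $\bm z\in X$ has a unique decomposition $\bm z = \bm z^S + \bm z^M + \sum_{\mc G}\bm z^{\mc G}$ with $\bm z^\square\in X^\square$. The condition ``$\bm z^\square\in\imq^\square$ for all $\square$'' is therefore exactly the statement that $\bm z$ lies in the internal orthogonal direct sum $\bigoplus_\square \imq^\square$. Hence the right-hand side equals $T \cap \bigl(\bigoplus_\square \imq^\square\bigr)$.

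Next I would identify $\mr{Im}(\mc Q)$ with $\bigoplus_\square \imq^\square$; this is the step where layer-uncorrelatedness does the work. By Definition~\ref{de:layer_indep}, $\mc Q = \mc Q^S \oplus \mc Q^M \oplus \bigoplus_{\mc G}\mc Q^{\mc G}$ with each block satisfying $\mc Q^\square : X_R^\square \to X^\square$. Applying $\mc Q$ to a general $\bm r = \sum_\square \bm r^\square$ gives $\mc Q(\bm r) = \sum_\square \mc Q^\square(\bm r^\square)$, where the $\square$-th summand lies in the orthogonal subspace $X^\square$. Since the summands sit in pairwise-orthogonal subspaces, the sum is direct and $\mr{Im}(\mc Q) = \bigoplus_\square \imq^\square$. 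Substituting this into the previous paragraph closes the chain:
\[
\mc Q(T_R) = T \cap \mr{Im}(\mc Q) = T \cap \Bigl(\bigoplus_\square \imq^\square\Bigr) = \bigcap_{\square\in\{S,M\}\cup\mf G}\bigl\{\bm z \in T : \bm z^\square \in \imq^\square\bigr\}.
\]

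I do not expect a serious obstacle: the content is simply that for a block-diagonal map whose blocks target mutually orthogonal subspaces, membership in the global image is equivalent to blockwise membership. The one point requiring a little care is justifying that the block images $\imq^\square$ lie in pairwise-orthogonal subspaces, so that the decomposition $\bm z = \sum_\square \bm z^\square$ is unique and the sum of images is direct; this is guaranteed precisely because $X^S$, $X^M$, and the $X^{\mc G}$ are spanned by disjoint subsets of the standard basis of $X$. Everything else is bookkeeping, and injectivity of $\mc Q$ is needed only for the initial reduction via Corollary~\ref{cor:main}.
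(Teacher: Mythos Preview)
Your proof is correct and follows essentially the same route as the paper: start from $\mc Q(T_R)=T\cap\mr{Im}(\mc Q)$ via Corollary~\ref{cor:main}, use the block-diagonal structure of $\mc Q$ from Definition~\ref{de:layer_indep} to write $\mr{Im}(\mc Q)=\bigoplus_\square \imq^\square$, and then observe that membership in this direct sum is equivalent to the blockwise conditions $\bm z^\square\in\imq^\square$. The paper's proof is more terse but identical in substance.
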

\begin{proof}
    Corollary~\ref{cor:main} says $\mc Q(T_R) = T\cap\imq$, which satisfies
    \begin{equation}
    \begin{aligned}
        T\cap\imq &= \{\bm z\in T:\bm z\in \imq^S\oplus\imq^M\bigoplus_{\mc G\in\mf G}\imq^{\mc G}\}\\
        &= \{\bm z\in T: \bm z^{S}\in\imq^S,~\bm z^{M}\in\imq^M,~\bm z^{\mc G}\in\imq^\mc G,~\forall\mc G\in\mf G\}\\
        &= \bigcap_{\square\in\{S,M\}\cup\mf G}\left\{\bm z\in T:\bm z^\square\in\imq^\square\right\}.
    \end{aligned}
\end{equation}
\end{proof}
As a result, we can separately analyze the gauge vectors allowed by state preparation, measurements, and each layer of gates. The intersection of all of them will give the correct reduced gauge space. In what follows, we will use this approach to analyze the gauge space for the fully local noise model, and for the quasi-local noise model in the next section.

\subsection{Learnability of fully local noise model}\label{sec:fully_learnability}

Let us first define the fully local noise model.
For any $\mc G\in\mf G$, we use $\mr{supp}(\mc G)$ to denote the support of $\mc G$, i.e., the subsystem of qubits that $\mc G$ non-trivially acts on. 
The restriction of $\mc G$ to its support is denoted by $\hat{\mc G}$.
The number of qubits in the support is denoted by $|\mc G|$.
Recall that we use $\{\bm\vartheta_i\}$ and $\{\bm e_i\}$ to denote the standard basis for $X_R$ and $X$, respectively. That is, the reduced and complete noise parameters are encoded into vectors according to $\bm r=\sum_i r_i\bm\vartheta_i$ and $\bm x=\sum_i x_i\bm e_i$, respectively.

We will first give the formal definition of the fully local noise model, and explain the intuition in a moment.

\begin{definition}[fully local noise]\label{de:fully_local}
    Consider a layer-uncorrelated noise model $(X_R,\mc Q)$:
    \begin{itemize}
        \item If the state preparation noise parameters are given by $\{r_j^S:j\in[n]\}$ and that
        \begin{equation}
            \mc Q(\bm\vartheta_j^S) = \sum_{u\neq 0_n}\mathds 1 [u_j=1] \bm e^S_u,\quad\forall j\in[n],
        \end{equation}
        we say the model has \textbf{fully local state preparation} noise. Similar for the measurement.
        \item If the noise parameters for some $\mc G\in\mf G$ are given by $\{r_a^{\mc G}:a\in{\sf P}^{|\mc G|},a\neq I_{|\mc G|}\}$ and that
        \begin{equation}
            \mc Q(\bm\vartheta_a^{\mc G}) = \sum_{b\neq I_n}\mathds 1 [b_{\mr{supp}(\mc G)}=a] \bm e^{\mc G}_b,\quad\forall a\in{\sf P}^{|\mc G|}, a\neq I_{|\mc G|},
        \end{equation}
        we say the model satisfies fully local noise for $\mc G$. If this holds for all $\mc G\in\mf G$ we say the model has \textbf{fully local gate noise}.
    \end{itemize}
    The model is \textbf{fully local} if it has fully local state preparation, measurement, and gate noise.
\end{definition}

The following lemma describes the transformation between the reduced parameters and the complete parameters within a fully local model, as an immediate consequence of the above definitions.
\begin{lemma}\label{le:fully_local_trans}
    If $(X_R,\mc Q)$ satisfies fully local state preparation noise, then
        \begin{equation}
        \left\{\begin{aligned}
            x_u^S &= \sum_{j\in[n]}\mathds 1[u_j=1]r_j^S,\quad\forall u\in\{0,1\}^n,u\neq0_n.\\
            r_j^S &= x^S_{1_j},\quad\forall j\in[n].
        \end{aligned} \right.   
        \end{equation}
        Similar for the measurement side; If $(X_R,\mc Q)$ satisfies fully local noise for some gate $\mc G$, then
        \begin{equation}
        \left\{
        \begin{aligned}
            x_b^{\mc G} &= r^{\mc G}_{b_{\mr{supp}(\mc G)}},\quad\forall b\in\Pn,b\neq I_n.\\
            r^{\mc G}_a &= x^{\mc G}_b~s.t.~b_{\mr{supp}{(\mc G})}=a,\quad \forall a\in\mathsf{P}^{|\mc G|},a\neq I_{|\mc G|}.
        \end{aligned}
        \right.
        \end{equation}
\end{lemma}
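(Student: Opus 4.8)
The plan is to treat this as a direct coordinate computation from the defining relation $\bm x = \mc Q(\bm r)$ together with the explicit action of $\mc Q$ on the standard basis of $X_R$ recorded in Definition~\ref{de:fully_local}. Since a fully local model is layer-uncorrelated, $\mc Q$ splits block-wise into $\mc Q^S$, $\mc Q^M$, and the $\mc Q^{\mc G}$, so each claimed identity can be established on its own block; the measurement case is verbatim the state-preparation case with $S$ replaced by $M$, so only two computations are actually needed.

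For the state-preparation block, I would first expand $\bm r^S = \sum_{j\in[n]} r_j^S\,\bm\vartheta_j^S$, apply linearity of $\mc Q^S$, and substitute $\mc Q(\bm\vartheta_j^S) = \sum_{u\neq 0_n}\mathds 1[u_j=1]\bm e_u^S$. Swapping the order of summation and reading off the coefficient of each $\bm e_u^S$ in $\bm x^S = \mc Q^S(\bm r^S)$ yields $x_u^S = \sum_{j}\mathds 1[u_j=1]\,r_j^S$, the forward direction. The inverse follows by specializing to $u = 1_j$: since $\mathds 1[(1_j)_k = 1] = \mathds 1[k=j]$, the sum collapses to the single term $r_j^S$, giving $r_j^S = x_{1_j}^S$.

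The gate block proceeds identically: expanding $\bm r^{\mc G} = \sum_{a\neq I_{|\mc G|}} r_a^{\mc G}\,\bm\vartheta_a^{\mc G}$ and substituting the definition gives, after collecting the coefficient of $\bm e_b^{\mc G}$, the relation $x_b^{\mc G} = \sum_{a\neq I_{|\mc G|}} \mathds 1[b_{\mr{supp}(\mc G)} = a]\, r_a^{\mc G}$. For each fixed $b$ the indicator is nonzero for at most one $a$, so the sum reduces to the single value $r_{b_{\mr{supp}(\mc G)}}^{\mc G}$, which is exactly the forward identity — under the convention that the identity reduced parameter $r_{I_{|\mc G|}}^{\mc G}$ is set to $0$, needed precisely for those $b$ whose restriction to $\mr{supp}(\mc G)$ is trivial, where the empty sum returns $0$ (consistent with $\lambda^{\mc G}_{I}=1$). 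To invert, I fix $a\neq I_{|\mc G|}$ and pick any $b$ with $b_{\mr{supp}(\mc G)} = a$.

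The only point requiring care — and the one I would flag as the main (mild) obstacle — is the well-definedness of this last step: the inverse formula reads off $r_a^{\mc G}$ from $x_b^{\mc G}$ for an arbitrary preimage $b$, so one must check the answer is independent of the chosen $b$. This is immediate from the forward formula, which shows $x_b^{\mc G}$ depends on $b$ only through the restriction $b_{\mr{supp}(\mc G)}$, so any two $b$'s sharing a restriction give the same $x_b^{\mc G}$. With this observation in hand, the lemma is the routine consequence of the definitions that the excerpt advertises.
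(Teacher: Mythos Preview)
Your proposal is correct and follows essentially the same approach as the paper's proof: expand $\bm x = \mc Q(\bm r)$ block-wise using the definition of $\mc Q$ on basis vectors, read off coefficients to get the forward map, and verify the inverse by substitution. Your treatment is in fact slightly more careful than the paper's, explicitly handling the convention $r^{\mc G}_{I_{|\mc G|}}=0$ for $b$ with trivial restriction and the well-definedness of the inverse gate formula, both of which the paper leaves implicit.
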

\begin{proof}
    By definition of the fully local noise model, one can easily verify that
    \begin{align}
         \bm x^S &= \mc Q^S(\bm r^S) = \sum_{u\neq0_n}\sum_{j\in[n]}\mathds 1[u_j=1]r_j^S\bm e_u^S.\\
         \bm x^{\mc G} &= \mc Q^{\mc G}(\bm r^{\mc G}) = \sum_{b\neq I_n}r^{\mc G}_{b_{\mr{supp}(\mc G)}}\bm e_b^{\mc G}.
    \end{align}
    Comparing these with the definitions of $\bm x$ and $\bm r$ immediately gives us the transformation from $\{r_i\}$ to $\{x_i\}$. The correctness of the inverse transformation can be verified by substitution.
\end{proof}

\medskip

\begin{figure}
    \centering
    \includegraphics[width=\linewidth]{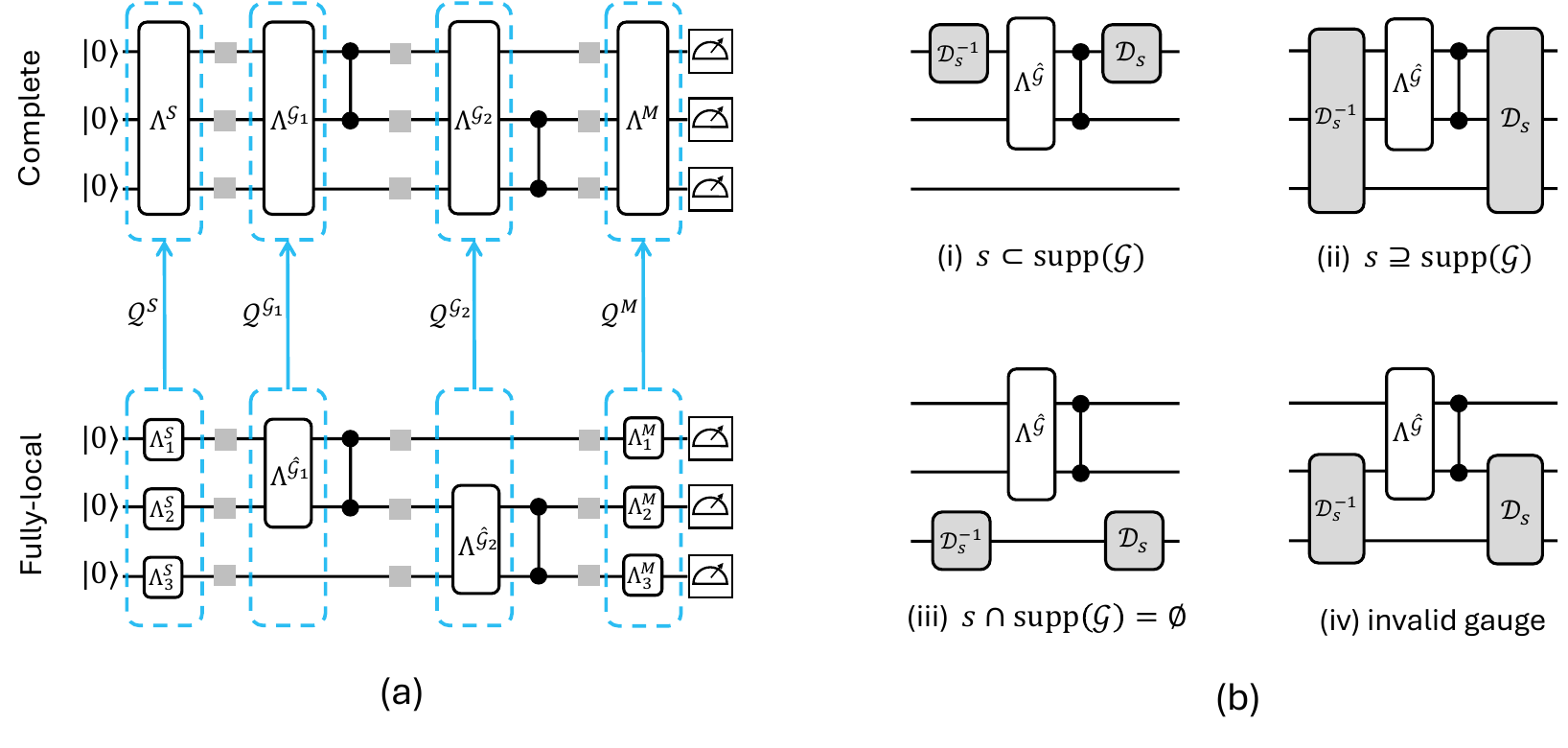}
    \caption{(a) Illustration of a fully local noise model (bottom) and how it is embedded into a complete noise model (above). Note that the state prep., meas., and each layer of gates have an independent embedding map, which means this is a layer-uncorrelated noise model.
    (b) Illustration of the subsystem depolarizing gauges (denoted by $\mc D_s$) acting on a gate with fully local noise. Here, (i)-(iii) shows the three types of valid SDG's as given by Theorem~\ref{th:fully_learnability}(b), out of which only case (i) can non-trivially changes the noise channel. (iv) shows a SDG that is not an allowed gauge transformation (when the gate has a fully-connected pattern transfer subgraph). Intuitively, it propagates the $2$-qubit noise channel into a $3$-qubit channel, violating the fully local assumptions.}
    \label{fig:fully_additional}
\end{figure}

In words, the fully local state preparation or measurement noise is a qubit-wise independent channel, and the fully local gate noise is a channel acting only on the gate's support and inducing no crosstalk on idling qubits. See Fig.~\ref{fig:fully_additional}(a) for an illustration. 
We study these cases separately, as there could be situations where the SPAM noise is local while gate noise is not, or the other way around.

We also note that whether fully local gate noise describes contextual noise depends on the definition of $\mf G$: If one define $\mf G$ in a way that each $\mc G\in\mf G$ acts non-trivially at, say, at most $2$-qubit. Let $\mc G_1$, $\mc G_2$ be two Clifford layers from $\mf G$ that act non-trivially on disjoint sets of qubits. The noise channel for applying both gates in parallel can then be viewed as the tensor product of two associated noise channels. However, if one believe the parallel application should suffer from a different noise channel, one can add it as an independent layer into $\mf G$. Although one needs to be careful not to makes the size $\mf G$ exponentially large.

The advantages of a fully local noise model is a small number of model parameters (as small as $O(n)$ for a system with a linear connectivity), which makes working with such model potentially computationally efficient.
Previous work, including ACES~\cite{flammia2021averaged}, has focused on characterizing such a noise model. It is thus of great interest to understand the learnable/gauge degrees of freedom within this model.
For this purpose, let us introduce the following set of gauge vectors.
\begin{definition}[Subsystem depolarizing gauge, SDG]\label{de:sdg}
    For any $s\in\{0,1\}^n\backslash0_n$, let $\mf d_s$ be the cut vector specified by the vertex partition $$
    {{\mathsf{V}_0}} = \{u:u_s=0_{|s|}\}\cup\{v_R\},\quad {{\mathsf{V}_0^c}}=\{u : u_s \neq 0_{|s|}\}.$$ 
    We call such $\mf d_s$ a \textbf{subsystem depolarizing gauge} (SDG).
\end{definition}

\noindent To see why $\mf d_s$ is called a subsystem depolarizing gauge, let us first write down its expansion on the standard basis: 
\begin{equation}\label{eq:sdg_entry}
\begin{aligned}
    \bm e_u^S\cdot \mf d_s &= \mathds 1[u_s\neq0_{|s|}],&&\quad\forall u\in\{0,1\}^n\backslash0_n.\\
    \bm e_u^M\cdot \mf d_s &= -\mathds 1[u_s\neq0_{|s|}],&&\quad\forall u\in\{0,1\}^n\backslash0_n.\\
    \bm e_a^{\mc G}\cdot \mf d_s &= \mathds 1[\pt(\mc G(a))_s\neq 0_{|s|}] - \mathds 1[\pt(a)_{s}\neq 0_{|s|}], &&\quad\forall \mc G\in\mf G,~\forall a\in {\sf P}^{n}\backslash I_n.
\end{aligned}
\end{equation}
which can be obtained using the definition of cut vectors and the PTG. Now, consider a transformation of noise parameters given by $\bm x\mapsto \bm x+\eta\mf d_s$. It is not hard to verify this corresponds to the following gauge transformation,
\begin{equation}
    \tilde\rho_0\mapsto \mc D_{s,\eta}(\tilde\rho_0),\quad
    \tilde E_j\mapsto \mc D^{-1}_{s,\eta}(\tilde E_j),\quad
    \tilde{\mc G}_s\mapsto \mc D_{s,\eta}\circ\tilde{\mc G}\circ\mc D_{s,\eta}^{-1},
\end{equation}
where $\mc D_{s,\eta}\coleq \sum_{a}e^{-\eta\mathds 1[a_s\neq I_{|s|}]}\lketbra{\sigma_a}{\sigma_a}$ is a partially depolarizing channel (with strength parameter $\eta$) acting on the subsystem specified by the index set $s$. For notational convenience, we define
\begin{equation}\label{eq:sd_channel}
    \mc D_s \coleq \mc D_{s,1} = \sum_{a}e^{-\mathds 1[a_s\neq I_{|s|}]}\lketbra{\sigma_a}{\sigma_a},
\end{equation}
and call $\mc D_s$ simply as the subsystem depolarizing channel on $s$.

\begin{lemma}\label{le:sdg_basis}
    The set of SDGs $\{\mf d_s\}_{s\neq0_n}$ forms a basis for the gauge space $T$.
\end{lemma}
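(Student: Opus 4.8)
The plan is to reduce the statement to a dimension count together with a single linear-independence check. By Definition~\ref{de:sdg} each $\mf d_s$ is by construction a cut vector, so $\mf d_s\in U$, and Theorem~\ref{th:complete} identifies $U=T$; hence every SDG already lies in the gauge space. Since $s$ ranges over $\{0,1\}^n\setminus\{0_n\}$ there are exactly $2^n-1$ SDGs, which matches $\dim T=\dim U=2^n-1$ (the cut space of the strongly-connected PTG on $2^n$ vertices, as already used in the proof of Lemma~\ref{le:rooted_cycle_basis}). Therefore it suffices to prove that $\{\mf d_s\}_{s\neq0_n}$ is linearly independent; spanning, and hence the basis property, then follows automatically.

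For the independence I would work only with the state-preparation block of the cut vectors. Supposing $\sum_{s\neq0_n}c_s\mf d_s=0$, I pair this identity with the basis covector $\bm e_u^S$ and invoke the first line of Eq.~\eqref{eq:sdg_entry}, obtaining $\sum_s c_s\,\mathds 1[u_s\neq0_{|s|}]=0$ for every $u\neq0_n$. Equivalently, the ``intersection-indicator'' vectors $\phi_s$ with entries $\phi_s(u)=\mathds 1[\supp(u)\cap s\neq\varnothing]$, indexed by the $2^n-1$ nonempty patterns $u$, satisfy $\sum_s c_s\phi_s=0$. Note that the measurement and gate blocks are not even needed: the state-preparation coordinates alone will pin down all the coefficients.

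The remaining task is to show that $\{\phi_s\}$ is linearly independent, which I would do by a triangular change of basis to the zeta family $\psi_s(u)=\mathds 1[s\subseteq u]$. Inclusion–exclusion on $\mathds 1[\supp(u)\cap s\neq\varnothing]=1-\prod_{i\in s}(1-\mathds 1[i\in u])$ gives the identity $\phi_s=\sum_{\varnothing\neq t\subseteq s}(-1)^{|t|+1}\psi_t$, whose transition matrix $A_{t,s}=(-1)^{|t|+1}\mathds 1[t\subseteq s]$ is triangular with respect to set inclusion and has nonzero diagonal entries $(-1)^{|s|+1}$, hence is invertible. Because the zeta matrix $[\mathds 1[s\subseteq u]]$ of the nonempty-subset lattice is unitriangular when subsets are ordered by increasing cardinality, and therefore nonsingular, the $\{\psi_s\}$ are independent; the invertible change of basis transfers this to $\{\phi_s\}$, forcing all $c_s=0$.

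I expect the only genuine obstacle to be this independence of $\{\phi_s\}$, i.e.\ the nonsingularity of the disjointness/incidence matrix $[\mathds 1[\supp(u)\cap s\neq\varnothing]]$; everything else is bookkeeping. A fully computational alternative is also available — setting $C=\sum_s c_s$, rewriting $\phi_s(u)$ as $C$ minus a cumulative sum over $s\subseteq u^c$, and applying Möbius inversion over proper subsets of $[n]$ to force first $C=0$ and then $c_s=0$ — but the triangular-basis-change argument is cleaner and is the one I would present.
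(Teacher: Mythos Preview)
Your proof is correct but takes a different route from the paper. Both arguments start from the same cardinality/dimension count ($|\{\mf d_s\}|=2^n-1=\dim T$), but then the paper shows \emph{spanning}: it writes each canonical single-vertex cut $\mf y_z$ as an explicit signed sum of SDGs via an inclusion--exclusion identity (giving an inversion formula $\mf y_z=\sum_{s\neq 0_n:\bar z\preceq s}(-1)^{|s|-|\bar z|}\mf d_s$), whereas you show \emph{linear independence} by projecting onto the state-preparation block and proving the intersection-indicator matrix $[\mathds 1[u\cap s\neq\varnothing]]$ is nonsingular through a triangular change to the zeta family $[\mathds 1[s\subseteq u]]$. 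Your approach is arguably tidier, since it works in a single $(2^n-1)$-dimensional coordinate block and reduces to a textbook M\"obius/zeta nonsingularity; the paper's approach buys an explicit inversion formula between the SDG basis and the canonical cut basis, which can be handy downstream. Either way, the substantive content is the same inclusion--exclusion mechanism, just run in opposite directions.
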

\noindent The proof is given in Sec.~\ref{sec:proof_sdg}. 
In fact, the gauge space of the fully local noise model is spanned by a subset of the SDG's. We have the following results:
\begin{theorem}\label{th:fully_learnability}
Given a layer-uncorrelated noise model $(X_R,\mc Q)$
\begin{enumerate}[label=(\alph*)]
    \item If the model satisfies fully local state preparation noise, then 
    \begin{equation}\label{eq:fully_local_spam}
        \{\bm z\in T:\bm z^S\in\imq^S\} = \mr{span}\{\mf d_s:|s|=1\}.
    \end{equation}
    Similarly for the measurement noise. 
    \item If the model satisfies fully local noise for some gate $\mc G\in\mf G$, then
    \begin{equation}\label{eq:fully_local_gate}
        \{\bm z\in T:\bm z^{\mc G}\in\imq^{\mc G}\}\supseteq \mr{span}\{\mf d_s:s\subset\mr{supp}(\mc G)~\textrm{or}~s\supseteq\mr{supp}(\mc G)~\textrm{or}~s\cap\mr{supp}(\mc G)=\varnothing\}.
    \end{equation}
    Furthermore, the two sides are equal if $\hat{\mc G}$ induces a connected pattern transfer subgraph.
    \item If the model is fully local, its embedded gauge space is given by
    \begin{equation}
        \mc Q(T_R) = \mr{span}\{\mf d_s:|s|=1\}.
    \end{equation}
    And thus $\dim T_R = \dim \mc Q(T_R) = n$.
\end{enumerate}
\end{theorem}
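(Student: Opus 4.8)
The plan is to work entirely in the SDG basis $\{\mf d_s\}_{s\neq 0_n}$ of the gauge space $T$ (Lemma~\ref{le:sdg_basis}), expanding any $\bm z=\sum_s\beta_s\mf d_s$ and translating each membership condition $\bm z^\square\in\imq^\square$ into an explicit linear condition on the $\beta_s$ via the entrywise formulas of Eq.~\eqref{eq:sdg_entry}. The combinatorial engine throughout is the monomial expansion of the indicator, $\mathds 1[u_s\neq 0_{|s|}]=1-\prod_{j\in s}(1-u_j)=\sum_{\varnothing\neq t\subseteq s}(-1)^{|t|+1}m_t(u)$ with $m_t(u)\coleq\prod_{j\in t}u_j$, which lets me pass between the ``$s$-basis'' and the monomial basis.

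For part (a), observe that $\imq^S$ is exactly the span of the degree-one monomials (the additive functions $z_u^S=\sum_{j:u_j=1}c_j$). Expanding $\bm z^S=\sum_s\beta_s\,\mathds 1[u_s\neq 0_{|s|}]$ in monomials, the coefficient of $m_t$ is $(-1)^{|t|+1}\sum_{s\supseteq t}\beta_s$. Requiring every monomial of degree $\geq 2$ to vanish gives $\sum_{s\supseteq t}\beta_s=0$ for all $|t|\geq 2$, and a short downward induction on $|s|$ forces $\beta_s=0$ for all $|s|\geq 2$. Hence $\bm z^S\in\imq^S$ iff $\bm z\in\mr{span}\{\mf d_s:|s|=1\}$; the measurement case is identical up to the sign in Eq.~\eqref{eq:sdg_entry}.

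For part (b), write $G=\mr{supp}(\mc G)$ and use that $\mc G$ acts as $\hat{\mc G}$ on $G$ and trivially on $G^c$, so $\pt(\mc G(a))$ agrees with $\pt(a)$ on $G^c$ and equals $\pt(\hat{\mc G}(a_G))$ on $G$. The inclusion $\supseteq$ is a direct case check from Eq.~\eqref{eq:sdg_entry}: for $s\cap G=\varnothing$ and for $s\supseteq G$ the two indicators in $\bm e_a^{\mc G}\cdot\mf d_s$ cancel, so $\mf d_s^{\mc G}=0\in\imq^{\mc G}$, while for $s\subseteq G$ both indicators depend on $a$ only through $a_G$ and vanish at $a_G=I$, so $\mf d_s^{\mc G}\in\imq^{\mc G}$. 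For the equality I characterize $\bm z^{\mc G}\in\imq^{\mc G}$ as ``$\bm z^{\mc G}[a]$ depends on $a$ only through $a_G$ and vanishes at $a_G=I$''; writing $\bm z^{\mc G}[a]=G_\beta(\pt(\mc G(a)))-G_\beta(\pt(a))$ with $G_\beta=\sum_t\gamma_t m_t$ and factoring each monomial over $G$ and $G^c$, the $a_{G^c}$-independence (using that $\{m_{t_{G^c}}\}$ are independent functions of the freely realizable pattern on $G^c$) forces, for every nonempty $t_{G^c}\subseteq G^c$, the function $H^{(t_{G^c})}(w)\coleq\sum_{t_G\subseteq G}\gamma_{t_G\cup t_{G^c}}m_{t_G}(w)$ to satisfy $H^{(t_{G^c})}(\pt(\hat{\mc G}(a_G)))=H^{(t_{G^c})}(\pt(a_G))$ for all $a_G$. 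Connectivity of the pattern transfer subgraph then forces $H^{(t_{G^c})}$ to be constant on the nonzero patterns of $G$ (its value at the empty pattern remaining free), which pins down all coefficients $\gamma_{t_G\cup t_{G^c}}$ with $t_G\neq\varnothing$ up to a single free parameter. Counting the free parameters in these $\gamma$-coordinates and comparing with the number of distinct SDGs in the three listed cases (both counts equal $2^{|G|}+2^{\,n-|G|+1}-3$) gives matching dimensions, so the established inclusion $\supseteq$ is in fact an equality.

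For part (c), I invoke Lemma~\ref{le:layerwise_indep} to write $\mc Q(T_R)=\bigcap_{\square}\{\bm z\in T:\bm z^\square\in\imq^\square\}$. Part (a) shows the two SPAM factors each equal $\mr{span}\{\mf d_s:|s|=1\}$, while every single-qubit $s$ lies in case~1 or case~3 relative to any gate support, so the inclusion in part (b) shows each gate factor contains $\mr{span}\{\mf d_s:|s|=1\}$; the intersection is therefore exactly $\mr{span}\{\mf d_s:|s|=1\}$, which is $n$-dimensional, and injectivity of $\mc Q$ gives $\dim T_R=n$. The main obstacle is the equality in part (b): showing that the $a_{G^c}$-independence condition reduces precisely to pattern-transfer invariance of each $H^{(t_{G^c})}$, and that connectivity leaves exactly the free parameters needed to match the SDG count. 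The inclusions, the SPAM analysis, and the part-(c) bookkeeping are routine once the monomial expansion is in place.
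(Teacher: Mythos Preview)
Your proof is correct and takes a genuinely different route from the paper on parts (a) and the equality in (b).

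For part (a), the paper observes that the projection $\pi^S:T\to X^S$ is an isomorphism (via the canonical single-vertex cut basis) and then counts dimensions: since $\dim\imq^S\le n$, the constrained set cannot exceed dimension $n$, and the easy inclusion finishes it. Your monomial-expansion argument is more explicit---you identify exactly which $\beta_s$ must vanish by a M\"obius-type downward induction on $|s|$, bypassing the isomorphism observation. For the equality in (b), the paper's appendix argument is graph-theoretic: it builds a linearly independent set $K\subseteq X^{\mc G}$ from a spanning tree of the pattern transfer subgraph, notes that $K\cap Z=\{\bm 0\}$ and that both $K$ and $Z$ are orthogonal to $T_0$, and concludes $\dim T_0\le\dim T-\dim K$, which matches the SDG count. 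You sidestep the cycle-space geometry: in the $\gamma$-coordinates, the $\imq^{\mc G}$ condition becomes edge-invariance of each $H^{(t_{G^c})}$ on the pattern transfer subgraph, and connectivity makes the free-parameter count transparent. Both arguments end by matching $\dim T_0$ to $2^{|G|}+2^{\,n-|G|+1}-3$, but yours computes it by direct parametrization while the paper bounds it via orthogonal complements. The paper's method highlights the role of spanning trees and cycle spaces; yours is more self-contained and arguably more elementary. The inclusion in (b) and all of (c) proceed essentially the same way in both proofs.
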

\noindent Here, the pattern transfer subgraph of $\hat{\mc G}$ is a $|\mc G|$-qubit PTG with the only gate being $\hat{\mc G}$, and with the SPAM (root) node removed. Examples of $2$-qubit gates that have connected pattern transfer subgraph include CZ, iSWAP, but not SWAP~\cite{chen2023learnability}, see Fig.~\ref{fig:PTMs}. As another example, the pattern transfer subgraph for $\mr{CZ}\otimes\mr{CZ}$ is not connected. 
The allowed/forbidden gauge transformation for the fully local gate noise are depicted in Fig.~\ref{fig:fully_additional}(b).

\medskip

Theorem~\ref{th:fully_learnability}(c) characterizes the learnability of fully local Pauli noise models. In words, the embedded gauge space $\mc Q(T_R)$ is spanned exactly by those $n$ single-qubit depolarizing gauges. 
One can also write down the reduced gauge space as $T_R = \mr{span}\{\mc Q^{-1}(\mf d_s):|s|=1\}$, where each $\mc Q^{-1}(\mf d_s)$ is well-defined\footnote{
Though $\mc Q$ is not invertible, it is injective and we know that $\forall|s|=1,~\mf d_s\in\imq$, thus $\mc Q^{-1}(\mf d_s)$ is uniquely defined.
} and can be efficiently calculated using the inverse transformation given in Lemma~\ref{le:fully_local_trans}.
The learnable space $L_R$ can be obtained by, e.g., computing the orthogonal complement of $T_R$ within $X_R$ using standard linear algebra procedures. We will come back to the construction of $L_R$ later.

\medskip

Now we present the proof for Theorem~\ref{th:fully_learnability}(a), the direct part of \ref{th:fully_learnability}(b), and \ref{th:fully_learnability}(c). The converse part of \ref{th:fully_learnability}(b) is slightly more involved and is postponed to App.~\ref{sec:proof_fully_converse} for clarity.

\begin{proof}[Proof of Theorem~\ref{th:fully_learnability}(a)]
   Let ${\pi}^S: X\mapsto X^S$ be the projection map ${\pi}^S(\bm x) = \bm x^S$, i.e., restriction to the state preparation parameters. We first show $L.H.S.\supseteq R.H.S.$, for which we only need to prove that $\pi^S(\mf d_{1_j})\in\imq^S$ for all $j\in[n]$. Indeed, we have that,
   \begin{equation}
       \pi^S(\mf d_{1_j}) = \sum_{u} \mathds 1[u_j\neq 0]\bm e^S_u = \mc Q^S(\bm\vartheta_j^S).
   \end{equation}
    The two equations are by Definition~\ref{de:sdg} and Definition~\ref{de:fully_local}, respectively. Thus we have $L.H.S.\supseteq R.H.S.$.

    \medskip
    
    Conversely, notice that ${\pi}^S$ is an invertible map when restricted to $T$. To see this, consider the canonical cut basis $\{\mf y_u:u\neq0_n\}$ for $T$ such that each $\mf y_u$ is induced by the vertex partition ${\mathsf V}={{\mathsf{V}_0}}\cup {\mathsf{V}_0^c}$ where ${{\mathsf{V}_0}} = \{u\}$. We have ${\pi}^S(\mf y_u) = -\bm e_u^S$. This means ${\pi}^S$ is a basis transform between $T$ and $X^S$, thus invertible. Therefore,
    \begin{equation}
        \dim(\{\bm z\in T:\pi^S(\bm z)\in\imq^S\}) = \dim(\imq^S) \le \dim(X_R^S) = n,
    \end{equation}
    where the first equation uses the invertibility of $\pi^S$ restricted to $T$. This means the dimension of L.H.S. is no larger than R.H.S. Consequently, they must be equal. The proof works similarly on the measurement side. 
\end{proof}

\medskip

\begin{proof}[Proof of Theorem~\ref{th:fully_learnability}(b)]
Let ${\pi}^{\mc G}:X\mapsto X^{\mc G}$ be the projection map ${\pi}^{\mc G}(\bm x) = \bm x^{\mc G}$.
Recall that $$\bm e_a^{\mc G}\cdot \mf d_s = \mathds 1[\pt(\mc G(a))_s\neq 0_{|s|}] - \mathds 1[\pt(a)_{s}\neq 0_{|s|}],\quad \forall a\neq I_n.$$ 
\begin{itemize}
    \item If $s\cap\mr{supp}(\mc G)=\varnothing$: we have $\pt(\mc G(a))_s = \pt(a)_s$ and thus $\bm e_a^{\mc G}\cdot\mf d_s = 0$ for all $a$. Therefore, we trivially have ${\pi}^{\mc G}(\mf d_s)\in\imq^{\mc G}$.
    \item  If $s\supseteq\mr{supp}(\mc G)$: we have $\pt(\mc G(a))_s=0_{|s|}\Leftrightarrow \pt(a)_s=0_{|s|}$ because $\hat{\mc G}(a_{\mr{supp}(\mc G)})=I_{|\mc G|}$ if and only if $a_{\mr{supp}(\mc G)} = I_{|\mc G|}$. Thus $\bm e_a^{\mc G}\cdot\mf d_s = 0$ and ${\pi}^{\mc G}(\mf d_s)\in\imq^{\mc G}$ trivially holds.
    \item If $s \subset \mr{supp}(\mc G)$: choose an $\mf r^{\mc G}\in X^{\mc G}_R$ such that 
    \begin{equation}
        \bm\vartheta_b^{\mc G}\cdot\mf r^{\mc G} = \mathds 1[\pt(\hat{\mc G}(b))_s\neq 0_{|s|}] - \mathds 1[\pt(b)_s\neq 0_{|s|}].
    \end{equation}
    Then we can see that
    \begin{equation}
    \begin{aligned}
        \bm e_a^{\mc G}\cdot\mc Q^{\mc G}(\mf r^{\mc G}) = \mathds 1[\pt({\hat{\mc G}}(a_{\mr{supp}(\mc G)}))_s\neq 0_{|s|}] - \mathds 1[\pt(a_{\mr{supp}(\mc G)})_s\neq 0_{|s|}] = \bm e_a^{\mc G}\cdot\mf d_s.
    \end{aligned}
    \end{equation}
    This means ${\pi}^{\mc G}(\mf d_s) = \mc Q^{\mc G}(\mf r^{\mc G})$. Thus ${\pi}^{\mc G}(\mf d_s)\in\imq^{\mc G}$.
\end{itemize}
Combining all the three cases and using linearity, we conclude that
\begin{equation}\label{eq:th_fully_again}
    \{\bm z\in T:\bm z^{\mc G}\in\imq^{\mc G}\}\supseteq \mr{span}\{\mf d_s:s\subset\mr{supp}(\mc G)~\textrm{or}~s\supseteq\mr{supp}(\mc G)~\textrm{or}~s\cap\mr{supp}(\mc G)=\varnothing\}.
\end{equation}
Conversely, the two sides are equal if $\hat{\mc G}$ induces a connected pattern transfer subgraph, which is proven in App.~\ref{sec:proof_fully_converse}. The proof works by establishing an upper bound on the dimension of the L.H.S., and then show by counting that this bound equals the dimension of the R.H.S. 
\end{proof}

We make a few remarks regarding the proof of  Theorem~\ref{th:fully_learnability}(b). First, it is clear from the proof that $\mr{span}\{\mf d_s:s\subset\mr{supp}(\mc G)\}$ are gauges that non-trivially change the noise parameters of $\mc G$, while $\mr{span}\{\mf d_s:s\supseteq\mr{supp}(\mc G)~\text{or}~s\cap\mr{supp}(\mc G)=\varnothing\}$ are gauges that ``pass through'' $\mc G$ leaving gate noise parameters unchanged;
Second, The proof for Eq.~\eqref{eq:th_fully_again} taking equality crucially relies on the connectivity of the pattern transfer subgraph of $\widehat{\mc G}$. If this is not satisfied, there could be more gauges that can trivially pass through $\mc G$, which needs to be analyzed \textit{ad hoc} using Theorem~\ref{th:reduced}.

\medskip

\begin{proof}[Proof of Theorem~\ref{th:fully_learnability}(c)]
    Thanks to Lemma~\ref{le:layerwise_indep}, $\mc Q(T_R)$ is given by the intersection of Eq.~\eqref{eq:fully_local_spam} for SPAM and Eq.~\eqref{eq:fully_local_gate} for all $\mc G\in\mf G$. It's easy to see that Eq.~\eqref{eq:fully_local_spam} is always a subset of Eq.~\eqref{eq:fully_local_gate} for any $\mc G\in\mf G$, thus their intersection is simply given by Eq.~\eqref{eq:fully_local_spam}, the single-qubit SDG's.
\end{proof}

\subsection{Efficient learning of fully local noise model}\label{sec:local_learning}

Having characterized the reduced gauge space of the fully local noise model, we turn to the task of designing efficient learning experiments. Following the discussion in Sec.~\ref{sec:learning}, the key is to find an appropriate reduced cycle basis for $L_R$. 
Let us start with the first type of construction which gives the simplest experimental design (but does not care about relative precision estimation). 
Recall the rooted cycle basis defined in Eq.~\eqref{eq:rooted_cycle_basis},
\begin{equation}
    {\mathsf B} \coleq \{\bm e^S_u+\bm e^M_u:u\neq {0_n}\} \cup \{ \bm e^S_{\pt(a)}+\bm e^{\mc G}_a+\bm e^M_{\pt(\mc G(a))}:\mc G\in\mf G,a\neq I_n\}.
\end{equation}
Consider the following subset of $\mathsf B$, denoted as $\mathsf B'$,
\begin{equation}
    {\mathsf B'} \coleq \{\bm e^S_{1_j}+\bm e^M_{1_j}:j\in[n]\} \cup \{ \bm e^S_{\pt(a)}+\bm e^{\mc G}_a+\bm e^M_{\pt(\mc G(a))}:\mc G\in\mf G,\mr{supp}(a)\subseteq\mr{supp}(\mc G),a\neq I\}.
\end{equation}
\begin{proposition}
    Let $\mathsf B_R\coleq\mc Q^\T(\mathsf B')$. Then, $\mathsf B_R$ forms a reduced cycle basis for $L_R$.
\end{proposition}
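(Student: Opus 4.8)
The plan is to verify the two defining properties of a reduced cycle basis in turn: first that every vector in $\mathsf B_R$ is a rooted reduced cycle, and second that $\mathsf B_R$ is a basis of $L_R$. The first property is immediate, since $\mathsf B'\subseteq\mathsf B$ and every vector in $\mathsf B$ is a rooted cycle of the PTG (Lemma~\ref{le:rooted_cycle_basis}); hence by Definition~\ref{de:reduced_cycle} each $\mc Q^\T(\bm b)$ with $\bm b\in\mathsf B'$ is a rooted reduced cycle. The content is in the basis claim, which I would prove by establishing (i) that $\mc Q^\T(\mathsf B')$ spans $L_R$, and (ii) that $|\mathsf B'|=\dim L_R$; a spanning family whose cardinality equals the dimension is automatically linearly independent, hence a basis.

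For the spanning step I would first record the explicit action of $\mc Q^\T$ in the standard bases, read off from Definition~\ref{de:fully_local}: $\mc Q^\T(\bm e_u^S)=\sum_{j\in u}\bm\vartheta_j^S$ (and likewise on the $M$ side), while $\mc Q^\T(\bm e_b^{\mc G})=\bm\vartheta^{\mc G}_{b_{\mr{supp}(\mc G)}}$ when $b_{\mr{supp}(\mc G)}\neq I_{|\mc G|}$ and $\mc Q^\T(\bm e_b^{\mc G})=0$ otherwise. Since $\mathsf B$ spans $Z=L$ (Theorem~\ref{th:complete}) and $L_R=\mc Q^\T(L)$ (Corollary~\ref{cor:main}), it suffices to show $\mc Q^\T$ sends every element of $\mathsf B\setminus\mathsf B'$ into $\mr{span}\,\mc Q^\T(\mathsf B')$. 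For a weight-$\geq 2$ SPAM cycle $\bm e_u^S+\bm e_u^M$, linearity of the formula above gives $\mc Q^\T(\bm e_u^S+\bm e_u^M)=\sum_{j\in u}\mc Q^\T(\bm e_{1_j}^S+\bm e_{1_j}^M)$, a combination of single-qubit SPAM cycles already contained in $\mathsf B'$.

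The main work, and the place where the fully-local structure is essential, is the gate cycles $\bm e^S_{\pt(a)}+\bm e^{\mc G}_a+\bm e^M_{\pt(\mc G(a))}$ with $\mr{supp}(a)\not\subseteq\mr{supp}(\mc G)$. Here I would factor $a=a_{\mr{in}}\otimes a_{\mr{out}}$ along the partition of $\mr{supp}(\mc G)$ and its complement. Because $\mc G$ acts as the identity off its support, $\mc G(a)=\hat{\mc G}(a_{\mr{in}})\otimes a_{\mr{out}}$ and, crucially, the fully-local gate parameter sees only $a_{\mr{in}}=a_{\mr{supp}(\mc G)}$. Applying $\mc Q^\T$ termwise and regrouping, the image splits into the image of the ``internal'' gate cycle for $a_{\mr{in}}$ (which lies in $\mathsf B'$ when $a_{\mr{in}}\neq I$, since then $\mr{supp}(a_{\mr{in}})\subseteq\mr{supp}(\mc G)$, and vanishes when $a_{\mr{in}}=I$) plus $\sum_{j\in\mr{supp}(a_{\mr{out}})}\mc Q^\T(\bm e_{1_j}^S+\bm e_{1_j}^M)$. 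This regrouping uses that $\mr{supp}(\hat{\mc G}(a_{\mr{in}}))$ and $\mr{supp}(a_{\mr{out}})$ are disjoint (one inside $\mr{supp}(\mc G)$, one outside), so the $S$- and $M$-weights separate cleanly into the internal cycle and the external single-qubit pieces. Thus every element of $\mathsf B\setminus\mathsf B'$ lands in $\mr{span}\,\mc Q^\T(\mathsf B')$, and $\mc Q^\T(\mathsf B')$ spans $L_R$. I expect this gate-cycle bookkeeping to be the only genuine obstacle; the rest is routine once the explicit form of $\mc Q^\T$ is in hand.

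For the dimension count I would compute $\dim X_R=2n+\sum_{\mc G\in\mf G}(4^{|\mc G|}-1)$ directly from Definition~\ref{de:fully_local}, and combine Theorem~\ref{th:fully_learnability}(c), which gives $\dim T_R=n$, with the orthogonal decomposition $X_R=L_R\oplus^\perp T_R$ (from Theorem~\ref{th:reduced}) to obtain $\dim L_R=n+\sum_{\mc G}(4^{|\mc G|}-1)$. On the other hand, direct enumeration gives $|\mathsf B'|=n+\sum_{\mc G}(4^{|\mc G|}-1)$, the second family being in bijection with $\{a\in\mathsf P^{|\mc G|}:a\neq I_{|\mc G|}\}$ for each $\mc G$ via restriction to $\mr{supp}(\mc G)$. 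Since $|\mathsf B'|=\dim L_R$, the spanning family $\mc Q^\T(\mathsf B')$ has exactly $\dim L_R$ members and is therefore a basis of $L_R$, which moreover forces $\mc Q^\T$ to be injective on $\mathsf B'$. This completes the verification that $\mathsf B_R$ is a reduced cycle basis for $L_R$.
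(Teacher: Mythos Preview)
Your proof is correct, but it takes the dual route from the paper's. Both arguments share the dimension count $|\mathsf B'|=n+\sum_{\mc G}(4^{|\mc G|}-1)=\dim X_R-\dim T_R=\dim L_R$ (via Theorem~\ref{th:fully_learnability}(c)). The paper then writes out $\mathsf B_R=\mc Q^\T(\mathsf B')$ explicitly and checks linear independence by inspection: each $\bm\vartheta_b^{\mc G}$ appears in exactly one gate-type vector, and the SPAM-only vectors $\bm\vartheta_j^S+\bm\vartheta_j^M$ are manifestly independent from those and from each other. You instead prove spanning, by showing that $\mc Q^\T$ maps every element of $\mathsf B\setminus\mathsf B'$ into $\mr{span}\,\mc Q^\T(\mathsf B')$ through the in/out decomposition $a=a_{\mr{in}}\otimes a_{\mr{out}}$.

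What each buys: the paper's linear-independence check is a one-line observation once the explicit form of $\mathsf B_R$ is written down, so it is shorter. Your spanning argument is more constructive---it exhibits precisely how an arbitrary rooted cycle reduces under $\mc Q^\T$ to the chosen basis cycles, which is the content actually used when one wants to \emph{interpret} an experiment's outcome in terms of the reduced parameters. Both are complete proofs of the proposition.
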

\begin{proof}
    We have
    \begin{equation}
    \begin{aligned}
        {\mathsf B}_R \coleq& \mc Q^\T\left(\{\bm e^S_u+\bm e^M_u:|u|=1\} \cup \{ \bm e^S_{\pt(a)}+\bm e^{\mc G}_a+\bm e^M_{\pt(\mc G(a))}:\mc G\in\mf G,\supp{(a)}\subseteq \supp{(\mc G)},a\neq I_n \}\right)\\
            =& \left\{\bm \vartheta^S_j+\bm \vartheta^M_j:j\in[n]\right\} \cup
            \left\{\sum_{j\in\mr{supp}(\mc G)}\mathds (\mathds 1[{\pt(b)_j\neq0}]\bm\vartheta^S_j+\mathds 1[{\pt(\widehat{\mc G}(b))_j\neq0}]\bm\vartheta^M_j)+\bm\vartheta^{\mc G}_{b}:\mc G\in\mf G,b\neq I_{|\mc G|}\right\}.
    \end{aligned}
    \end{equation}
    Recall that $\widehat{\mc G}$ is $\mc G$ restricted to its support.
    It is not hard to see that ${\mathsf B}_R$ forms a linearly-independent set. Meanwhile, $|\mathsf B_R| = n + \sum_{\mc G\in\mf G}(4^{|\mc G|}-1) = \dim(X_R) - \dim(T_R) = \dim(L_R)$. This means ${\mathsf B}_R$ is indeed a basis for $L_R$.
\end{proof}
\bigskip

\noindent Following Algorithm 1, the following set of experiments is sufficient to learn the noise model:
\begin{enumerate}
    \item For $j\in[n]$, prepare $\tilde\rho_0$ and measure $\tilde Z$ on the $j$th qubit.
    \item For $\mc G\in\mf G$, $b\in{\sf P}^{|\mc G|}\backslash I_{|\mc G|}$, prepare $\tilde{\rho}_b$, apply $\tilde{\mc G}$, and measure $\tilde P_{\widehat{\mc G}(b)}$ on the support of $\mc G$.
\end{enumerate}
The total number of experiments is $n+\sum_{\mc G\in\mf G}(4^{|\mc G|}-1)$. If each $\mc G$ acts non-trivially only on a constant number of qubits, the number of experiments needed is $O(n)$, and is efficient to conduct. 
We also remark that, here we have not taken advantages of the fact that multiple observables can be extracted from the same experiments. In practice, the sample complexity can thus be further reduced. 

To learn the gate noise parameters to relative precision, one needs to construct a reduced cycle basis for the gate noise parameter space $L_R^{G} = \mc Q^\T(L\cap X^G)$. We find this problem hard in general. However, if every $\mc G\in\mf G$ is supported non-trivially on at most 2 qubits, we do have an efficient algorithms to construct such a basis. See App~\ref{sec:local_relative}.

\section{Applications to quasi-local noise model}\label{sec:quasi}

We have presented a comprehensive treatment on learning the fully local Pauli noise model. 
In practice, the effect of spatially correlated noise can sometimes be non-negligible, both for SPAM~\cite{bravyi2021mitigating} and for gate~\cite{fowler2014quantifying,nickerson2019analysing}. 
It is thus desirable to study a quasi-local noise model that takes a finite spatial correlation (e.g., nearest-neighbor) into consideration while keeping the total number of parameters manageable. Such models have been studied in the literature of Pauli noise learning~\cite{Flammia2020,van2023probabilistic,Erhard2019,wagner2023learning}.
In this section, we will introduce the quasi-local model in the framework of gate set Pauli noise learning, provide results on its learnability, and discuss efficient learning algorithms. As a specific example, we will apply our theory to a noise model that has been intensively studied in recent experiments~\cite{van2023probabilistic,kim2023evidence} and obtain a self-consistent algorithm for learning the noise model, which has not been previously achieved.

\subsection{Basic definitions}\label{sec:quasi_notation}

We first introduce some notations that are needed for defining the quasi-local noise model.
For $a,b\in{\sf P}^n$, we write $b\triangleleft a$ if $b_i\neq I\Rightarrow a_i = b_i$ for all $i\in[n]$, calling $b$ majorized by $a$. 
For any $S\subseteq[n]$, we write $a\sim S$ if $a_i\neq I\Rightarrow i\in S$. For any $\mbb S\subseteq 2^{[n]}$, we write $a\sim\mbb S$ if $a\sim S$ for at least one $S\in\mbb S$.

Let $\Omega$ be a subset of $2^{[n]}$ that satisfies $\forall \nu\in\Omega$, $\forall \varnothing\subsetneq \mu\subseteq \nu$, $\mu\in\Omega$.
We call such $\Omega$ a \emph{factor set}.
We also use $\Omega_*$ to denote the set of maximal factors within $\Omega$, i.e., those that do not belong to any other factor. As an example, $\Omega = \{\{1\},\{2\},\{3\},\{1,2\},\{2,3\}\}$ is a valid factor set, with the corresponding $\Omega_* = \{\{1,2\},\{2,3\}\}$. Obviously, $\Omega$, $\Omega_*$ uniquely determines each other.

\begin{definition}\label{de:omega_local_channel}
    An $n$-qubit Pauli channel $\Lambda$ is $\Omega$-local if its Pauli eigenvalues satisfies
    \begin{equation}
        \lambda_a = \prod_{b\triangleleft a,b\sim \Omega}\exp(-r_b),\quad \forall a\neq I_n,
    \end{equation}
    for some $\{r_b\in\mbb R:b\sim\Omega, b\neq I_n\}$ and a factor set $\Omega$.
\end{definition}
\noindent For notational simplicity, we sometimes omit the requirement that $b\neq I_n$ and set $r_{I_n}=0$.

\medskip

A few remarks regarding this definition: First, all $n$-qubit Pauli channels (with all Pauli eigenvalues being positive) are $2^{[n]}\backslash\varnothing$-local. 
Note that we do not require $r_b$ to be non-negative.
Second, this definition basically says that the eigenvalues $\{\lambda_a\}_{a}$ factorize according to a Markov Random Field (MRF)~\cite{clifford1990markov} specified by the factor decomposition $\Omega$. This is opposed to \cite{Flammia2020} where Pauli channels with error rates $\{p_a\}_a$ factorized according to an MRF are considered. It is unclear whether these two models are equivalent, even approximately. 
Alternatively, our definition of $\Omega$-local is equivalent to requiring that $\Lambda$ can be expressed as a concatenation of Pauli diagonal maps\footnote{
Note that, we do not require those Pauli diagonal maps to be completely positive. Similarly, when we say our quasi-local model is equivalent to the sparse Pauli-Lindblad model, we have not required the positivity of the model parameters.
} acting on each subsystem from $\Omega$, as proven in~\cite{wagner2023learning} (In fact, this is the picture we will use in most figures shown in this section). One can also show that the sparse Pauli-Lindblad model used in~\cite{van2023probabilistic,kim2023evidence,hu2024demonstration} is equivalent to our definition.
Finally, throughout this paper we assume $\Omega$ is known a priori, from, e.g., knowledge about the device topology, leaving the topic of structure learning (see e.g. \cite{rouze2023efficient}) for future study. We provide more details about the quasi-local Pauli noise models in Appendix~\ref{app:factorize}.

\subsection{Learnability of quasi-local noise model}

\begin{figure}[!t]
    \centering
    \includegraphics[width=0.55\linewidth]{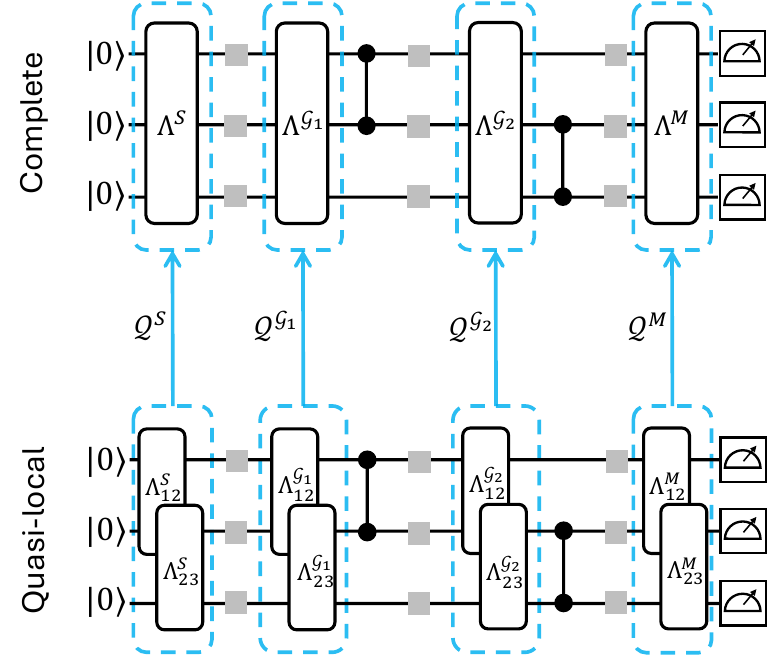}
    \caption{Example of a $3$-qubit quasi-local noise model and the embedding into a complete model. Here, the noise channels for state prep., meas., and all gates are assumed to be $\Omega$-local where $\Omega^*=\{\{1,2\},\{2,3\}\}$.
    In other words, $\Lambda^S = \Lambda^S_{12}\circ\Lambda^S_{23}$ where $\Lambda^S_{12},~\Lambda^S_{23}$ are Pauli diagonal maps supporting on $\{1,2\},~\{2,3\}$, respectively. Similar for other Pauli noise channels.}
    \label{fig:quasi_local_embedding}
\end{figure}

Note that, in terms of $x_a$, Definition~\ref{de:omega_local_channel} gives that $x_a = \sum_{b\triangleleft a,b\sim\Omega}r_b$.
This motivates the following definition of a quasi-local Pauli noise model. Note that, we have taken into account that the SPAM noise channels are symmetric Pauli channels (i.e., eigenvalues only depend on the Pauli pattern). An example of quasi-local noise model is given in Fig.~\ref{fig:quasi_local_embedding}
\begin{definition}[Quasi-local noise]\label{de:quasi_local_model}%
    Consider a layer-uncorrelated noise model $(X_R,\mc Q)$:
    \begin{itemize}
        \item If the state preparation noise parameters are given by $\{r_{\nu}^S:\nu\in\Omega^S\}$, where $\Omega^S$ is a factor set, such that
        \begin{equation}
            \mc Q(\bm\vartheta_\nu^S) = \sum_{\mu\neq \varnothing}\mathds 1 [\nu\subseteq \mu] \bm e^S_\mu,\quad\forall \nu\in\Omega^S, \nu\neq \varnothing,
        \end{equation}
        we say the model has \textbf{$\Omega^S$-local state preparation} noise. Similar for the measurement.
        \item If the noise parameters for some $\mc G\in\mf G$ are given by $\{r_{b}^{\mc G}:b\sim\Omega^{\mc G},b\neq I_n\}$, where $\Omega^{\mc G}$ is a factor set, such that
        \begin{equation}
            \mc Q(\bm\vartheta_b^{\mc G}) = \sum_{a\neq I_n}\mathds 1 [b\triangleleft a] \bm e^{\mc G}_a,\quad\forall b\sim\Omega^{\mc G}, b\neq I_n,
        \end{equation}
        We say the model has \textbf{$\Omega^{\mc G}$-local noise for $\mc G$}. 
        \item If the state preparation, measurements, and all layers have quasi-local noise with locality parameters $\{\Omega^S,\Omega^M,\Omega^{\mc G}:\mc G\in\mf G\}$, we say the model is quasi-local with the corresponding parameters.
    \end{itemize}
\end{definition}

The following lemma describes the transformation between the reduced and the complete parameters within a quasi-local model.
\begin{lemma}\label{le:quasi_local_trans}
    If $(X_R,\mc Q)$ satisfies $\Omega^S$-local state preparation noise, then
    \begin{equation}
    \left\{
    \begin{aligned}
        x_\mu^S &= \sum_{\substack{\nu\subseteq\mu,~\nu\in\Omega^S}}r_\nu^S,\quad\forall\mu\in2^{[n]}\backslash\varnothing.\\
        r_\nu^S &= \sum_{\substack{\mu\subseteq\nu,~\mu\in\Omega^S}}(-1)^{|\nu|-|\mu|}x_\mu^S,\quad \forall\nu\in\Omega^S.
    \end{aligned}
    \right.
    \end{equation}
    Similar for the measurement side;  If $(X_R,\mc Q)$ satisfies $\Omega^{\mc G}$-local noise for a gate $\mc G$, then
    \begin{equation}
    \left\{
    \begin{aligned}
        x_a^{\mc G} &= \sum_{\substack{b\triangleleft a,~b\sim\Omega^{\mc G}}}r_b^{\mc G},\quad\forall a\in{\Pn}\backslash I_n.\\
        r_b^{\mc G} &= \sum_{\substack{a\triangleleft b,~a\sim\Omega^{\mc G}}}(-1)^{w(b)-w(a)}x_a^{\mc G},\quad\forall b\sim\Omega^{\mc G},~b\neq I_n.
    \end{aligned}
    \right.
    \end{equation}
\end{lemma}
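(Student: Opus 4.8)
The plan is to handle the two directions of each identity separately, since the forward map is immediate from the definition of $\mc Q$ while the inverse is an application of M\"obius inversion on a suitable poset. Throughout, the decisive structural property is that a factor set is downward closed: $\nu\in\Omega$ and $\varnothing\subsetneq\mu\subseteq\nu$ imply $\mu\in\Omega$ (and correspondingly, $b\sim\Omega^{\mc G}$ together with $c\triangleleft b$ imply $c\sim\Omega^{\mc G}$, since $\supp(c)\subseteq\supp(b)\in\Omega^{\mc G}$).

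For the forward direction on the state-preparation side, I would expand $\bm x^S=\mc Q^S(\bm r^S)=\sum_{\nu\in\Omega^S}r_\nu^S\,\mc Q^S(\bm\vartheta_\nu^S)$ and substitute $\mc Q^S(\bm\vartheta_\nu^S)=\sum_{\mu\neq\varnothing}\mathds 1[\nu\subseteq\mu]\bm e_\mu^S$ from Definition~\ref{de:quasi_local_model}. Reading off the coefficient of $\bm e_\mu^S$ gives $x_\mu^S=\sum_{\nu\subseteq\mu,\,\nu\in\Omega^S}r_\nu^S$. The gate case is identical using $\mc Q^{\mc G}(\bm\vartheta_b^{\mc G})=\sum_{a\neq I_n}\mathds 1[b\triangleleft a]\bm e_a^{\mc G}$, yielding $x_a^{\mc G}=\sum_{b\triangleleft a,\,b\sim\Omega^{\mc G}}r_b^{\mc G}$.

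For the inverse direction I would invoke M\"obius inversion on the Boolean lattice $2^{[n]}$ (SPAM) and on the Pauli-majorization poset $(\Pn,\triangleleft)$ (gates). The key observation for the latter is that every interval $[b,a]=\{c:b\triangleleft c\triangleleft a\}$ is isomorphic to the subset lattice on the positions where $a$ is non-identity but $b$ is identity, hence Boolean of rank $w(a)-w(b)$, so its M\"obius function is $(-1)^{w(a)-w(b)}$. Concretely, I would substitute the forward relation into the candidate inverse and swap the order of summation; for the SPAM side this gives $\sum_{\mu\subseteq\nu,\,\mu\in\Omega^S}(-1)^{|\nu|-|\mu|}\sum_{\rho\subseteq\mu,\,\rho\in\Omega^S}r_\rho^S=\sum_{\rho\subseteq\nu,\,\rho\in\Omega^S}r_\rho^S\sum_{\rho\subseteq\mu\subseteq\nu}(-1)^{|\nu|-|\mu|}$, where downward closure lets me drop the constraint $\mu\in\Omega^S$ on the inner sum because $\rho\subseteq\mu\subseteq\nu\in\Omega^S$ already forces $\mu\in\Omega^S$. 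The inner alternating sum over the interval $[\rho,\nu]$ equals $\mathds 1[\rho=\nu]$ by the binomial identity $\sum_{j=0}^{k}\binom{k}{j}(-1)^{k-j}=0^{k}$, collapsing the double sum to $r_\nu^S$. The gate side then runs verbatim with $\subseteq$ replaced by $\triangleleft$ and $|\cdot|$ by $w(\cdot)$.

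The only real obstacle is making sure the restricted M\"obius inversion (over $\Omega^S$, respectively over $\{b:b\sim\Omega^{\mc G}\}$) agrees with the unrestricted inversion over the full lattice; this is exactly where downward closure of the factor set is indispensable, as it guarantees that every intermediate element appearing in the telescoping sums remains a valid index. Everything else reduces to the standard fact that the alternating sum over a Boolean interval vanishes off the diagonal, so I would not expect further complications.
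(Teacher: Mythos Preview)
Your proposal is correct and follows essentially the same approach as the paper: the forward direction is obtained by expanding $\bm x=\mc Q(\bm r)$ in the standard basis and reading off coefficients, and the inverse is verified by substitution using the binomial (M\"obius) identity on the relevant interval. The paper's proof is terser (it simply says ``can be verified by substitution'' and remarks that these are M\"obius transforms), but your explicit treatment of the substitution and your emphasis on the downward-closure of the factor set as the reason the restricted inversion agrees with the unrestricted one are exactly the content behind that one-line verification.
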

\begin{proof}
The transformation from $r$'s to $x$'s can be obtained by expanding $\bm x = \mc Q(\bm r)$ in the standard basis and compare coefficients. The inverse transformation can be verified by substitution. As a side comment, these pairs of transformation are known as the M\"obius transformation. They are also used in~\cite{wagner2023learning} to study Pauli noise learning in the logical level.
\end{proof}

\medskip

As a sanity check, one can retrieve the fully local model in Definition~\ref{de:fully_local} by setting
\begin{equation}
    \Omega^S = \Omega^M = \{\{j\}:j\in[n]\},\quad \Omega^{\mc G} = \{\nu: \varnothing\neq\nu\subseteq\mr{supp}(\mc G)\},~\forall\mc G\in\mf G.
\end{equation}

Now, we make two important remarks about the gate noise models. First, we will take a more general approach to deal with parallel gates than the fully local case. Instead of letting each $\mc G$ to have a constantly small support, we allow it to be a tensor product of many few-qubit gates, i.e., $\mc G = \hat{\mc G}_1\otimes\cdots\otimes\hat{\mc G}_M\otimes\id$, where the identity acts on $[n]\backslash\bigcup_i \supp(\mc G_i)$. 
Second, note that a Clifford gate affected by Pauli noise can be modeled in two equivalent ways: whether the noise channel happens before or after the gate, i.e.,
\begin{equation}
    \tilde{\mc G} = \mc G\circ\Lambda = \Lambda'\circ\mc G,\quad\text{where}~\Lambda' = \mc G\circ\Lambda\circ\mc G^\dagger.
\end{equation}
Throughout this work, we have been adopting the noise-before-gate convention, but now it is worth looking at the other convention.
Specifically, when $\Lambda$ is an $\Omega$-local Pauli channel, $\Lambda'$ might not be $\Omega$-local in general. This raised the following question: is there a physical reason to believe the gate noise has certain locality constraint in the noise-before-gate convention, but not in the noise-after-gate convention? To the best of our knowledge, no physical justification of this kind has been established in the literature. To circumvent the dilemma, we propose to study a specific type of locality assumption:
\begin{definition}\label{de:covariant}
    An $\Omega$-local noise model is called \textbf{$\mc G$-covariant}, if for any $\Omega$-local Pauli channel $\Lambda$,
    $\mc G\circ\Lambda\circ\mc G^\dagger$ and $\mc G^\dagger\circ\Lambda\circ\mc G$ are also $\Omega$-local Pauli channels.
\end{definition}

\begin{figure}[!t]
    \centering
    \includegraphics[width=0.9\linewidth]{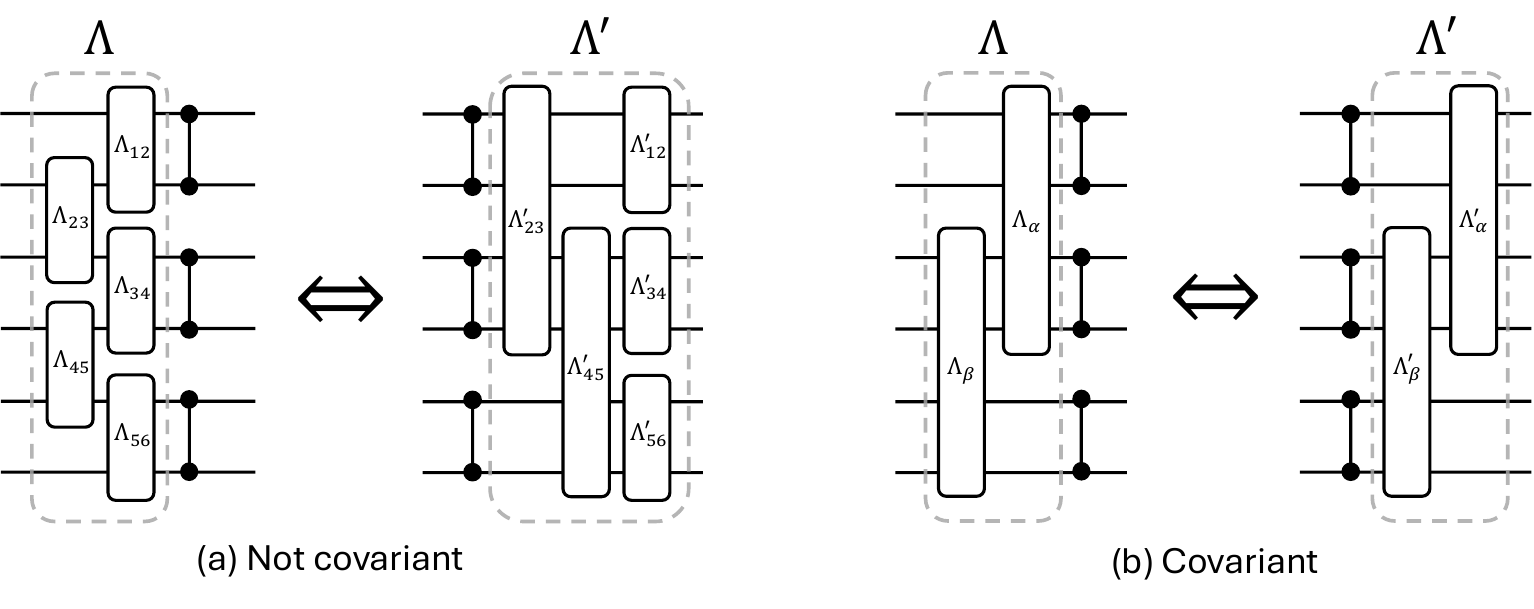}
    \caption{Examples of quasi-local gate noise models that are incovariant and covariant, respectively. Here we consider a $6$-qubit system and let $\mc G=\mr{CZ}^{\otimes 3}$. (a) The nearest-neighbor $2$-local noise model ($\Omega^*=\{\{1,2\},\{2,3\},\cdots,\{5,6\}\}$)
    \label{fig:covar_incovar} which is not $\mc G$-covariant. Indeed, if we commute the noise channel from before-$\mc G$ (left) to after-$\mc G$ (right),  certain $2$-body noise terms (e.g., $\Lambda_{23}$) will generically be propagated to be $4$-body, making the whole noise channel no longer $\Omega$-local; (b) A $4$-local noise model ($\Omega^*=\{\alpha=\{1,2,3,4\},\beta=\{3,4,5,6\}\}$) which is $\mc G$-covariant. Indeed, commuting the noise channel from before-$\mc G$ to after-$\mc G$ preserves the $\Omega$-local structure. 
    Note that all Pauli diagonal maps commute with each other.
    }
\end{figure}

\noindent Since a $\mc G$-covariant noise model has the same locality constraints in either convention, it is much easier to justify their physical relevance. An example of covariant vs. incovariant noise model is given in Fig.~\ref{fig:covar_incovar}.
When $\mc G$ represents a parallel application of many Clifford gates with disjoint supports, we have the following sufficient condition for a noise model to be $\mc G$-covariant:
\begin{lemma}\label{le:covariant_sufficient}
    Given $\mc G=\left(\bigotimes_{i=1}^M\hat{\mc G}_i\right)\otimes\id$, define the extended support map $\Xi_{\mc G}:[n]\mapsto[n]$ as
    \begin{equation}
        \Xi_{\mc G}(s) \coleq \left(\bigcup_{\supp(\hat{\mc G}_i)\cap s\neq\varnothing}\supp(\hat{\mc G}_i)\right)\cup s.
    \end{equation}
    For an $\Omega$-local noise model, if $\forall s\in\Omega$, $\Xi_{\mc G}(s)\in\Omega$, then $\Omega$ is $\mc G$-covariant.
\end{lemma}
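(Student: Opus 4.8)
The plan is to exploit the equivalent description of an $\Omega$-local channel, recorded just above Definition~\ref{de:covariant} and in~\cite{wagner2023learning}, as a composition of Pauli diagonal maps each supported on a single factor, and then to track how conjugation by $\mc G$ transports the support of each such factor map through the extended support map $\Xi_{\mc G}$.

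First I would write any $\Omega$-local Pauli channel $\Lambda$ as a composition $\Lambda=\bigcirc_{\nu\in\Omega_*}\Lambda_\nu$, where each $\Lambda_\nu$ is a Pauli diagonal map \emph{supported on} $\nu$, meaning its log-eigenvalues $x^{(\nu)}_a$ depend only on the restriction $a_\nu$ (equivalently, $\Lambda_\nu$ acts as the identity on qubits outside $\nu$). This is exactly the factorized form guaranteed by~\cite{wagner2023learning}: grouping the generators $r_b$ according to a maximal factor $\nu\in\Omega_*$ containing $\supp(b)$ makes it explicit. Since $\mc G^\dagger\mc G=\id$, conjugation distributes over composition, so $\mc G\circ\Lambda\circ\mc G^\dagger=\bigcirc_{\nu\in\Omega_*}(\mc G\circ\Lambda_\nu\circ\mc G^\dagger)$. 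Because log-eigenvalues add under composition and a composition of $\Omega$-local maps is again $\Omega$-local, it suffices to show that each conjugated factor $\mc G\circ\Lambda_\nu\circ\mc G^\dagger$ is $\Omega$-local.

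The core step is a support-tracking argument in the Pauli--Liouville picture, where the eigenvalue of $\mc G\circ\Lambda_\nu\circ\mc G^\dagger$ at label $a$ equals that of $\Lambda_\nu$ at $\mc G^{-1}(a)$ and hence depends only on $(\mc G^{-1}(a))_\nu$. Here I would use the product structure $\mc G=\bigl(\bigotimes_i\hat{\mc G}_i\bigr)\otimes\id$ with disjoint blocks: $\mc G^{-1}$ has the same block structure, so for $j\in\nu$ lying in a block $\supp(\hat{\mc G}_i)$ the component $(\mc G^{-1}(a))_j$ is a function of $a_{\supp(\hat{\mc G}_i)}$, while for $j\in\nu$ outside every block one has $(\mc G^{-1}(a))_j=a_j$. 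Collecting these, $(\mc G^{-1}(a))_\nu$ is a function of $a$ restricted to $\bigl(\bigcup_{\supp(\hat{\mc G}_i)\cap\nu\neq\varnothing}\supp(\hat{\mc G}_i)\bigr)\cup\nu=\Xi_{\mc G}(\nu)$. Therefore $\mc G\circ\Lambda_\nu\circ\mc G^\dagger$ is a Pauli diagonal map supported on $\Xi_{\mc G}(\nu)$, and since $\Xi_{\mc G}(\nu)\in\Omega$ by hypothesis (as $\nu\in\Omega_*\subseteq\Omega$), it is $\Omega$-local: any Pauli diagonal map whose eigenvalues depend only on $a_\mu$ for a factor $\mu\in\Omega$ is $\Omega$-local, because downward-closedness of $\Omega$ makes every $b$ with $\supp(b)\subseteq\mu$ satisfy $b\sim\Omega$.

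Finally I would run the identical argument with $\mc G^\dagger$ in place of $\mc G$, observing that $\Xi_{\mc G^\dagger}=\Xi_{\mc G}$ since both maps share the same block supports; this gives that $\mc G^\dagger\circ\Lambda\circ\mc G$ is $\Omega$-local as well, and together the two directions establish $\mc G$-covariance. I expect the main obstacle to be making the notion ``Pauli diagonal map supported on $\nu$'' and its behavior under conjugation precise enough that the support is pinned \emph{exactly} to $\Xi_{\mc G}(\nu)$ rather than to a coarser set. This is precisely where the disjointness of the gate blocks is essential, since it guarantees that each coordinate of $\mc G^{-1}(a)$ reads off at most a single block of $a$, preventing the support from spreading beyond $\Xi_{\mc G}(\nu)$.
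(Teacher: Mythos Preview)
Your proof is correct and follows the same core idea as the paper's---decompose $\Lambda$ into local pieces, conjugate each piece by $\mc G$, and use the block structure of $\mc G$ to show the support lands in $\Xi_{\mc G}(\nu)\in\Omega$---but at a different granularity. The paper invokes Lemma~\ref{le:factorize} to write $\Lambda$ in the sparse Pauli--Lindblad form $\bigcirc_{b\sim\Omega}\bigl(\eta_bP_b(\cdot)P_b+(1-\eta_b)(\cdot)\bigr)$, observes that conjugation by $\mc G$ sends each generator label $b$ to $\mc G(b)$, and then checks that $b\sim\Omega\Rightarrow\mc G(b)\sim\Omega$ (since $\supp(\mc G(b))\subseteq\Xi_{\mc G}(\supp(b))\in\Omega$), making $\mc G$ a permutation on $\{b:b\sim\Omega\}$; the converse direction of Lemma~\ref{le:factorize} then finishes. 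Your version works instead with the coarser factor-map decomposition $\bigcirc_{\nu\in\Omega_*}\Lambda_\nu$ referenced just before Definition~\ref{de:covariant} and tracks eigenvalues directly in the Pauli--Liouville picture. Both arguments are equally valid; the paper's has a one-line conjugation step but leans on the Lindblad equivalence lemma, while yours bypasses that lemma at the cost of a slightly longer support-tracking computation. Your observation that $\Xi_{\mc G^\dagger}=\Xi_{\mc G}$ (because the block supports coincide) is exactly what the paper uses implicitly when it says ``the same argument holds by replacing $\mc G$ with $\mc G^\dagger$.''
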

\noindent Clearly, for a Pauli $P$ with support $s$, the support of $\mc G(P)$ cannot exceed $\Xi_{\mc G}(s)$. Lemma~\ref{le:covariant_sufficient} roughly says that if for all Pauli $P\sim\Omega$, one has $\mc G(P)\sim\Omega$, then $\Omega$ must be $\mc G$-covariant. The proof is given in Appendix.~\ref{app:factorize}.

\medskip

We are ready to present our main results on the learnability of quasi-local Pauli noise model.
Perhaps not surprisingly, the gauge degrees of freedom can still be described using the SDG's in Definition~\ref{de:sdg}, as follows:
\begin{theorem}\label{th:quasi_learnability}
Given a layer-uncorrelated noise model $(X_R,\mc Q)$,
\begin{enumerate}[label=(\alph*)]
    \item If the model has $\Omega^S$-local state preparation noise, then 
    \begin{equation}\label{eq:quasi_local_spam}
        \{\bm z\in T:\bm z^S\in\imq^S\} = \mr{span}\{\mf d_\nu:\nu\in \Omega^S\}.
    \end{equation}
    Similarly for the measurement noise. 
    \item If the model has $\Omega^{\mc G}$-local noise for some $\mc G\in\mf G$ and is $\mc G$-covariant, then
    \begin{equation}\label{eq:quasi_local_gate}
        \{\bm z\in T:\bm z^{\mc G}\in\imq^{\mc G}\}\supseteq \mr{span}\{\mf d_\nu:\nu\in\Omega^{\mc G}~\textrm{or}~\Xi_{\mc G}(\nu) = \nu\}.
    \end{equation}
    \item If $\Omega^S\subseteq\Omega^M$, $\Omega^{\mc G}$ is $\mc G$ covariant for all $\mc G\in\mf G$, and that $\forall \nu\in\Omega^S,\forall\mc G\in\mf G$, either $\nu\in\Omega^\mc G$ or $\Xi_{\mc G}(\nu)=\nu$, then the embedded gauge space for the whole model is given by
    \begin{equation}
        \mc Q(T_R) = \mr{span}\{\mf d_\nu:\nu\in\Omega^S\}.
    \end{equation}
\end{enumerate}
\end{theorem}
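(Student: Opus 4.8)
The plan is to mirror the structure of the proof of Theorem~\ref{th:fully_learnability}: handle the state-preparation, measurement, and each gate layer separately, and then assemble part (c) from these pieces via Lemma~\ref{le:layerwise_indep}. Throughout I would work in the SDG basis for $T$ guaranteed by Lemma~\ref{le:sdg_basis}, using the explicit entries of $\mf d_\nu$ from Eq.~\eqref{eq:sdg_entry} together with the reduced-to-complete dictionary of Lemma~\ref{le:quasi_local_trans}.

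For part (a) I would prove the two inclusions separately. For $\supseteq$ it suffices to show $\pi^S(\mf d_\nu)\in\imq^S$ for each $\nu\in\Omega^S$. Writing $\pi^S(\mf d_\nu)=\sum_\mu\mathds 1[\mu\cap\nu\neq\varnothing]\,\bm e_\mu^S$ and applying the inclusion–exclusion identity $\mathds 1[\mu\cap\nu\neq\varnothing]=\sum_{\varnothing\neq\nu'\subseteq\nu}(-1)^{|\nu'|+1}\mathds 1[\nu'\subseteq\mu]$, I recognize each inner sum as $\mc Q^S(\bm\vartheta_{\nu'}^S)$; since $\Omega^S$ is a factor set, every $\nu'\subseteq\nu$ lies in $\Omega^S$, so the whole combination lands in $\imq^S$. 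For $\subseteq$ I would reuse the observation (as in Theorem~\ref{th:fully_learnability}(a)) that $\pi^S$ restricts to a linear isomorphism $T\to X^S$, whence $\dim\{\bm z\in T:\bm z^S\in\imq^S\}=\dim\imq^S=\dim X_R^S=|\Omega^S|$ by injectivity of $\mc Q$; since $\{\mf d_\nu:\nu\in\Omega^S\}$ is independent (a subset of the SDG basis), the right-hand side has the same dimension, forcing equality. The measurement case is identical.

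For part (b) I would split along the union. When $\Xi_{\mc G}(\nu)=\nu$ I claim $\pi^{\mc G}(\mf d_\nu)=0$: because $\mc G$ acts blockwise and $\nu$ contains the full support of every block it meets, $\pt(\mc G(a))$ and $\pt(a)$ intersect $\nu$ simultaneously for every $a$, so each entry $\bm e_a^{\mc G}\cdot\mf d_\nu$ vanishes. When $\nu\in\Omega^{\mc G}$ I would pass to the depolarizing-channel picture: the gauge $\mf d_\nu$ sends $\Lambda^{\mc G}\mapsto(\mc G^\dagger\mc D_\nu\mc G)\circ\Lambda^{\mc G}\circ\mc D_\nu^{-1}$, and since $\mc D_\nu,\mc D_\nu^{-1}$ are $\Omega^{\mc G}$-local for $\nu\in\Omega^{\mc G}$ (the same inclusion–exclusion as in part (a)) while $\mc G$-covariance forces $\mc G^\dagger\mc D_\nu\mc G$ to be $\Omega^{\mc G}$-local, the transformed channel is a composition of $\Omega^{\mc G}$-local Pauli diagonal maps and hence $\Omega^{\mc G}$-local. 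Evaluated at the trivial base point $\bm x^{\mc G}=0$, the induced change of gate log-eigenvalues equals $\pi^{\mc G}(\mf d_\nu)$, which therefore lies in the linear subspace $\imq^{\mc G}$; taking the span over both families of $\nu$ gives the asserted containment. I expect this covariance step to be the main obstacle, since it is where the physical locality assumption must be turned into an algebraic statement about $\imq^{\mc G}$; the technical points to pin down are that Definition~\ref{de:covariant} indeed applies to $\mc G^\dagger\mc D_\nu\mc G$ and that compositions of $\Omega^{\mc G}$-local Pauli diagonal maps stay $\Omega^{\mc G}$-local (immediate, as the log-eigenvalues form a linear subspace).

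Finally, for part (c) I would invoke Lemma~\ref{le:layerwise_indep} to write $\mc Q(T_R)$ as the intersection of the four constraint sets. Part (a) identifies the $S$-set exactly as $\mr{span}\{\mf d_\nu:\nu\in\Omega^S\}$, which already bounds the intersection from above. For the matching lower bound I would verify that each generator $\mf d_\nu$, $\nu\in\Omega^S$, sits in every remaining constraint set: it lies in the $M$-set because $\Omega^S\subseteq\Omega^M$, and it lies in each gate set because the hypothesis supplies, for every $\mc G$, either $\nu\in\Omega^{\mc G}$ or $\Xi_{\mc G}(\nu)=\nu$, so part (b) applies. Hence $\mr{span}\{\mf d_\nu:\nu\in\Omega^S\}$ is contained in the intersection, and combined with the upper bound this yields the claimed equality.
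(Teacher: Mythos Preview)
Your proposal is correct and follows essentially the same route as the paper. In part (a) you use the inclusion--exclusion identity $\mathds 1[\mu\cap\nu\neq\varnothing]=-\sum_{\varnothing\neq\nu'\subseteq\nu}(-1)^{|\nu'|}\mathds 1[\nu'\subseteq\mu]$ to show $\pi^S(\mf d_\nu)\in\imq^S$ and then close the dimension count via the isomorphism $\pi^S|_T:T\to X^S$; the paper does exactly this, phrasing the identity as $\bm\beta_\nu=-\sum_{\nu'\subseteq\nu}(-1)^{|\nu'|}\bm\alpha_{\nu'}$. In part (b) your two cases match the paper's: for $\Xi_{\mc G}(\nu)=\nu$ you argue $\pi^{\mc G}(\mf d_\nu)=0$ directly from the block structure (the paper says equivalently that $\mc G$ commutes with $\mc D_\nu$), and for $\nu\in\Omega^{\mc G}$ you use the depolarizing-channel picture $\mc G^\dagger\mc D_\nu\mc G\mc D_\nu^{-1}$ together with $\mc G$-covariance and closure of $\Omega^{\mc G}$-locality under composition, which is precisely the paper's argument. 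Part (c) is assembled via Lemma~\ref{le:layerwise_indep} in both.
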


Theorem~\ref{th:quasi_learnability}(c) confirms that, given certain ``goodness'' assumptions of the quasi-local noise model, the embedded gauge space is indeed spanned by a subset of SDG's consistent with the factor sets. Similar to the fully local case, one can efficiently compute the reduced gauge space $T_R = \mr{span}\{\mc Q^{-1}(\mf d_\nu):\nu\in\Omega^S\}$ by applying the M\"obius inverse transformation given in Lemma~\ref{le:quasi_local_trans}.

\medskip

\begin{proof}[Proof of Theorem~\ref{th:quasi_learnability}(a)]
    Let ${\pi}^S: X\mapsto X^S$ be the projection map ${\pi}^S(\bm x) = \bm x^S$. 
    Recall from the proof of Theorem~\ref{th:fully_learnability}(a) that ${\pi}^S$ is an isomorphism between $T$ and $X^S$. 
    We claim the following two linear spaces are equal.
    \begin{equation}\label{eq:quasi_local_thm1_step1}
        \mr{span}\{\mc Q^S(\bm \vartheta_\nu^S):\nu\in\Omega^S\} = \mr{span}\{{\pi}^S(\mf d_\nu):\nu\in\Omega^S\}.
    \end{equation}
    Now we prove this claim. By definition, for any $\nu\in\Omega^S$ we have
    \begin{align}
        \mc Q^S(\bm\vartheta_\nu^S) &= \sum_{\mu\neq\varnothing}\mathds 1[\nu\subseteq\mu]\bm e_{\mu}^S \eqcol\bm\alpha_\nu,\\
        {\pi}^S(\mf d_\nu) &= \sum_{\mu\neq\varnothing}\mathds 1[\nu\cap\mu\neq\varnothing]\bm e_\mu^S \eqcol\bm\beta_\nu.
    \end{align}
    Since both $\{\bm\alpha_\nu\},\{\bm\beta_\nu\}$ have the same cardinality, and we know $\{\bm\beta_\nu\}$ are linearly-independent thanks to the linear independence of the SDG's and ${\pi}^S$ being an isomorphism from $T$ to $X^S$, we just need to show  $\{\bm\alpha_\nu\}$ can linearly represent $\{\bm\beta_\nu\}$. We claim the following holds,
    \begin{equation}\label{eq:quasi_linear_indep}
        \bm\beta_\nu = -\sum_{\nu'\in\Omega^{S}}\mathds 1[\nu'\subseteq\nu](-1)^{|\nu'|}\bm\alpha_{\nu'},\quad\forall \nu\in\Omega^S.
    \end{equation}
    Indeed, expanding the RHS yields,
    \begin{equation}
        \begin{aligned}
            R.H.S.&= -\sum_{\mu\neq\varnothing}\bm e_\mu^S\sum_{\nu'\in\Omega^S}\mathds 1[\nu'\subseteq\nu]\mathds 1[\nu'\subseteq\mu](-1)^{|\nu'|}\\
            &= -\sum_{\mu\neq\varnothing}\bm e_\mu^S\sum_{\nu'\in\Omega^S}\mathds 1[\nu'\subseteq\nu\cap\mu](-1)^{|\nu'|}\\
            &= -\sum_{\mu\neq\varnothing}\bm e_\mu^S\sum_{\substack{\nu'\subseteq\nu\cap\mu,\\\nu'\neq\varnothing}}(-1)^{|\nu'|}\\
            &= \sum_{\mu\neq\varnothing}\bm e_\mu^S\mathds 1[\nu\cap\mu\neq\varnothing] = \bm\beta_\nu,
        \end{aligned}
    \end{equation}
    where the third line uses the defining property of $\Omega^S$ that for any $\nu\in\Omega^S$, any nonempty $\nu'\subseteq\nu$ also satisfies $\nu'\in\Omega^S$; The last line uses the binomial theorem. This completes the proof of Eq.~\eqref{eq:quasi_local_thm1_step1}.
    Consequently, we have
    \begin{equation}
        \begin{aligned}
            \{\bm z\in T:\bm z^S\in\imq^S\} &= \{\bm z\in T: {\pi}^S(\bm z)\in\mr{span}\{\mc Q^S(\bm\vartheta_\nu^S):\nu\in\Omega^S\}\}\\
            &= \{\bm z\in T: {\pi}^S(\bm z)\in\mr{span}\{{\pi}^S(\mf d_\nu):\nu\in\Omega^S\}\}\\
            &=\mr{span}\{\mf d_{\nu}:\nu\in\Omega^S\}.
        \end{aligned}
    \end{equation}
    The second line uses Eq.~\eqref{eq:quasi_local_thm1_step1}. The last line uses ${\pi}^S$ being an isomorphism between $T$ and $X^S$.
    The proof works similarly for $\Omega^M$-local measurement noise.
\end{proof}

\begin{proof}[Proof of Theorem~\ref{th:quasi_learnability}(b)]
Let ${\pi}^{\mc G}:X\mapsto X^{\mc G}$ be the projection map ${\pi}^{\mc G}(\bm x) = \bm x^{\mc G}$.
Recall that for all $a\neq I_n$, $\nu\neq\varnothing$,
\begin{equation}
\begin{aligned}
    \bm e_a^{\mc G}\cdot \mf d_\nu &= \mathds 1[\pt(\mc G(a))_\nu\neq 0_\nu] - \mathds 1[\pt(a)_\nu\neq 0_\nu],\\
    &= -\log\left(\lbra{\sigma_a}\mc G^\dagger \mc D_\nu\mc G \mc D_\nu^{-1}\lket{\sigma_a}\right).
\end{aligned}
\end{equation}
Here 
$\mc D_\nu$ is the subsystem depolarizing channel on $\nu$ defined in Eq.~\eqref{eq:sd_channel}.
\begin{itemize}
    \item If $\Xi_{\mc G}(\nu) = \nu$, then $\mc G$ is a tensor product of gates on $\nu$ and on $[n]\backslash\nu$. Consequently, $\mc G$ commutes with $\mc D_\nu$. Thus, $\bm e_a^{\mc G}\cdot\mf d_\nu = 0$ for all $a$, which means ${\pi}^{\mc G}(\mf d_\nu)\in\imq^{\mc G}$ trivially holds. See Fig.~\ref{fig:covariant_gauge}(ii) for an example.
    \item If $\nu\in\Omega^{\mc G}$, one can see that both $\mc D_\nu$ and $\mc D_\nu^{-1}$ are $\Omega^{\mc G}$-local Pauli channels. Since the noise model is $\mc G$-covariant, $\mc G^\dagger\mc D_\nu\mc G$ is $\Omega^{\mc G}$-local. 
    Since the concatenation of two $\Omega^{\mc G}$-local Pauli channels is clearly also $\Omega^{\mc G}$-local, $\mc G^\dagger\mc D_\nu\mc G\mc D_\nu^{-1}$ is $\Omega$-local. This means,
    \begin{equation}
        \bm e_a^{\mc G}\cdot\mf d_\nu = \sum_{\substack{b\triangleleft a,b\sim\Omega^{\mc G}}}r'_b,
    \end{equation}
    for some coefficients $\{r'_b\in\mbb R:b\sim\Omega^{\mc G},b\neq I_n\}$. Comparing this to the definition of quasi-local gate noise, we see that ${\pi}^{\mc G}(\mf d_\nu)\in\imq^{\mc G}$. See Fig.~\ref{fig:covariant_gauge}(i) for an example.
\end{itemize}
Combining the above two cases and using linearity, we complete the proof for Theorem~\ref{th:quasi_learnability}(b).
\end{proof}

We make a few remarks about Theorem~\ref{th:quasi_learnability}(b). First, it is clear from the proof that $\mr{span}\{\mf d_\nu:\Xi_{\mc G}(\nu)=\nu\}$ are gauges that pass trivially through $\mc G$ leaving its noise parameters unchanged, while $\mr{span}\{\mf d_\nu:\nu\in\Omega^{\mc G}\}$ are gauges that non-trivially transform the noise parameters of $\mc G$;
Second, here we do not have a converse side as in Theorem~\ref{th:fully_learnability}(b), meaning there could be more allowed gauges than what we have characterized. See Fig.~\ref{fig:covariant_gauge} for an example of different types of gauges. We do not pursue such a tight characterization in the quasi-local case as it looks considerably more complicated. For our purpose of proving Theorem~\ref{th:quasi_learnability}(c), the current results are sufficient; Finally, the requirement of noise being $\mc G$-covariant is indispensable for the theorem to hold. In Fig.~\ref{fig:incovariant_gauge} we show a counterexample where $\mf d_\nu$ is not an allowed gauge even if $\nu\in\Omega^{\mc G}$, when the gate noise model is not $\mc G$-covariant. This will be further discussed in Sec.~\ref{sec:CZ} when we encounter some realistic examples.

\begin{figure}[!tp]
    \centering
    \includegraphics[width=0.85\linewidth]{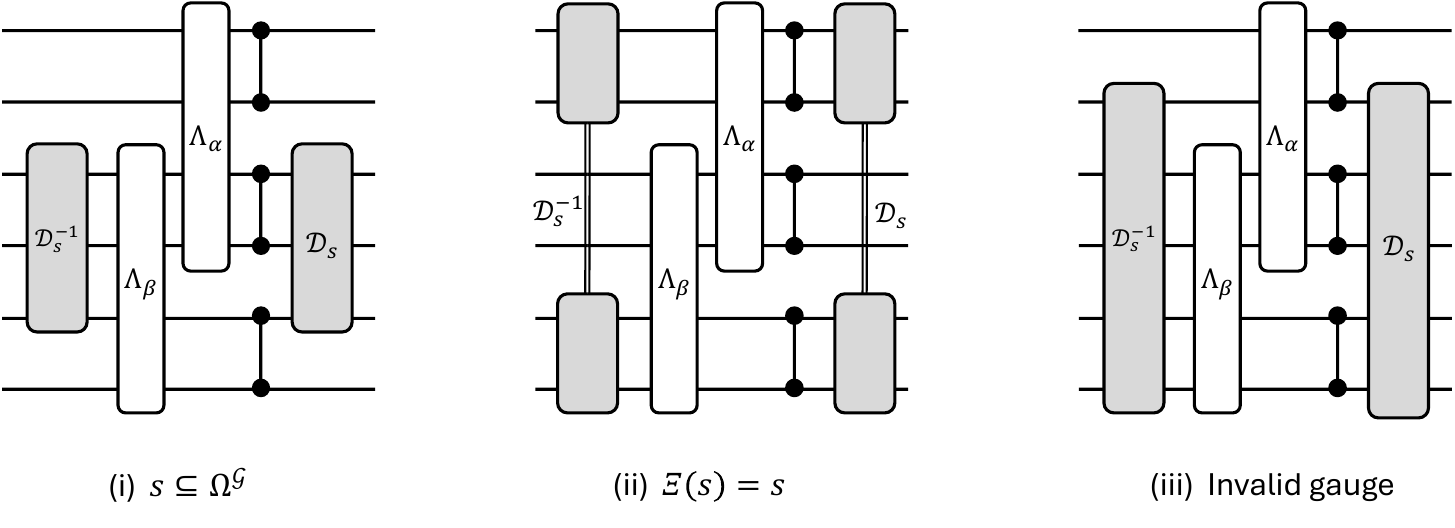}
    \caption{Examples of three different cases of SDG's ($\mc D_s$) acting on $\mc G=\mr{CZ}^{\otimes 3}$ with covariant $\Omega$-local noise model, $\Omega^* = \{\{1,2,3,4\},\{3,4,5,6\}\}$. Case (i) and (ii) are allowed gauges as shown in Theorem~\ref{th:quasi_learnability}(b). Specifically, In Case (i) we have $s=\{3,4,5\}\in\Omega$, thus $\mc D_s$ is a valid gauge that non-trivially transform the noise channel; In Case (ii) we have $s=\{1,2,5,6\} = \Xi_{\mc G}(s)$, or in words, any Pauli supports within $s$ still supports within $s$ after the action of $\mc G$ (see Lemma~\ref{le:covariant_sufficient}).
    Here, even though $s\notin\Omega$, $\mc D_s$ commutes through $\mc G$ leaving the noise channel unchanged, and is thus a trivially valid gauge; Case (iii) shows an invalid gauge ($s=\{2,3,4,5,6\}$), which would generate a $6$-body noise term, violating the locality assumption.}
    \label{fig:covariant_gauge}
\end{figure}
\begin{figure}[!tp]
    \centering
    \includegraphics[width=0.5\linewidth]{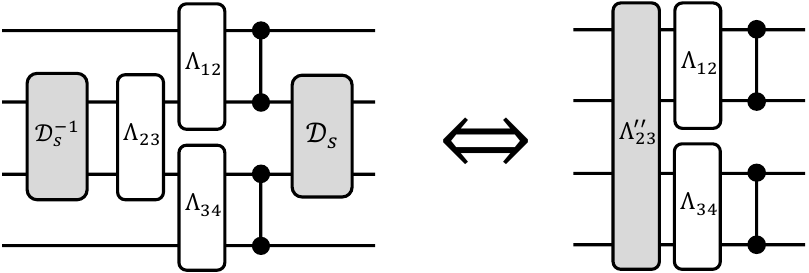}
    \caption{Example of SDG acting on $\mc G=\mr{CZ}^{\otimes 2}$ with nearest-neighbor $2$-local noise model (i.e., $\Omega^*=\{\{1,2\},\{2,3\},\{3,4\}\}$) that is \emph{not} $\mc G$-covariant. In this case, though $s=\{2,3\}\in\Omega$, The SDG $\mc D_s$ would create a $4$-body noise term (shown in the right) violating the $\Omega$-local assumption, and is thus not an allowed gauge. This highlights the necessity of $\mc G$-covariance in obtaining Theorem~\ref{th:quasi_learnability}(b).}
    \label{fig:incovariant_gauge}
\end{figure}

\begin{proof}[Proof of Theorem~\ref{th:quasi_learnability}(c)]
    Thanks to Lemma~\ref{le:layerwise_indep}, $\mc Q(T_R)$ is given by the intersection of Eq.~\eqref{eq:quasi_local_spam} for SPAM and Eq.~\eqref{eq:quasi_local_gate} for all $\mc G\in\mf G$. Under the given conditions, it is easy to see that Eq.~\eqref{eq:quasi_local_spam} is always a subset of Eq.~\eqref{eq:quasi_local_gate} for any $\mc G\in\mf G$, thus their intersection is simply given by Eq.~\eqref{eq:quasi_local_spam}.
\end{proof}

\subsection{Efficient learning of quasi-local noise model}

We now discuss algorithms for learning a quasi-local noise model. In this section, we will assume all conditions of Theorem~\ref{th:quasi_learnability}(c) are satisfied. That is, $\Omega^S\subseteq\Omega^M$; $\Omega^{\mc G}$ is $\mc G$-covariant for all $\mc G\in\mf G$; and that $\forall \nu\in\Omega^S,\forall\mc G\in\mf G$ either $\nu\in\Omega^{\mc G}$ or $\Xi_{\mc G}(\nu)=\nu$. For such a well-conditioned quasi-local noise model, we can easily find a reduced cycle basis for $L_R$ as follows:

Consider the following subset of the rooted cycle basis ${\mathsf B}$ defined in Eq.~\eqref{eq:rooted_cycle_basis}, denoted by $\mathsf B''$,
\begin{equation}
    {\mathsf B}''\coleq\{\bm e_\mu^S+\bm e_\mu^M:\mu\in\Omega^M\}\cup\{\bm e_{\supp(a)}^S+\bm e_a^{\mc G}+\bm e_{\supp(\mc G(a))}^M:\mc G\in\mf G,~a\sim\Omega^{\mc G}\}.
\end{equation}
\begin{lemma}
    Let $\mathsf B_R\coleq\mc Q^\T(\mathsf B'')$. Then $\mathsf B_R$ forms a reduced cycle basis for $L_R$.
\end{lemma}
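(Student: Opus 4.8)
The plan is to mirror the fully-local Proposition of Sec.~\ref{sec:local_learning}. First I would observe that $\mathsf B''$ is a subset of the rooted cycle basis $\mathsf B$ of Lemma~\ref{le:rooted_cycle_basis}: since $\supp(a)=\pt(a)$ and every $\mu\in\Omega^M$ is nonempty, the first part of $\mathsf B''$ sits inside $\{\bm e^S_u+\bm e^M_u\}$ and the second part is obtained from $\{\bm e^S_{\pt(a)}+\bm e^{\mc G}_a+\bm e^M_{\pt(\mc G(a))}\}$ by restricting to $a\sim\Omega^{\mc G}$. Hence each element of $\mathsf B''$ is a rooted cycle in $L=Z$ (Theorem~\ref{th:complete}), and by Corollary~\ref{cor:main} its image $\mathsf B_R=\mc Q^\T(\mathsf B'')\subseteq\mc Q^\T(L)=L_R$ is a set of rooted reduced cycles. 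It then suffices to prove that $\mathsf B_R$ is linearly independent and that $|\mathsf B_R|=\dim L_R$, from which the basis claim follows.

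The dimension count is routine: from Definition~\ref{de:quasi_local_model}, $\dim X_R=|\Omega^S|+|\Omega^M|+\sum_{\mc G}|\{a\sim\Omega^{\mc G}:a\neq I_n\}|$, while Theorem~\ref{th:quasi_learnability}(c) together with the injectivity of $\mc Q$ and the linear independence of the SDGs (Lemma~\ref{le:sdg_basis}) gives $\dim T_R=\dim\mc Q(T_R)=|\Omega^S|$. Thus $\dim L_R=\dim X_R-\dim T_R$ has the $|\Omega^S|$ terms cancel, leaving exactly $|\mathsf B''|$, so once independence is shown the counting closes the argument.

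For the independence I would compute the dual images explicitly: $\mc Q^\T(\bm e_\mu^{S/M})=\sum_{\nu\in\Omega^{S/M}}\mathds 1[\nu\subseteq\mu]\bm\vartheta_\nu^{S/M}$ and $\mc Q^\T(\bm e_a^{\mc G})=\sum_{b\sim\Omega^{\mc G}}\mathds 1[b\triangleleft a]\bm\vartheta_b^{\mc G}$. Writing $\bm g_\mu\coleq\mc Q^\T(\bm e_\mu^S+\bm e_\mu^M)$ and $\bm h_a^{\mc G}\coleq\mc Q^\T(\bm e_{\supp(a)}^S+\bm e_a^{\mc G}+\bm e_{\supp(\mc G(a))}^M)$, suppose a linear combination of these vanishes. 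Because the $\bm g_\mu$ carry no gate component and each $\bm h_a^{\mc G}$ has its gate component supported only in $X_R^{\mc G}$ (and the $X_R^{\mc G}$ are distinct summands), the vanishing splits gate-by-gate into $\sum_a d_a^{\mc G}\sum_{b\triangleleft a}\bm\vartheta_b^{\mc G}=0$ for each $\mc G$.

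The crux — and the step I expect to be the main obstacle — is to argue these gate-component vectors are independent. The key structural fact is that $\{a\sim\Omega^{\mc G}:a\neq I_n\}$ is downward closed under $\triangleleft$: if $a\sim\Omega^{\mc G}$ and $b\triangleleft a$ then $\supp(b)\subseteq\supp(a)\in\Omega^{\mc G}$, and since a factor set is downward closed $\supp(b)\in\Omega^{\mc G}$, so $b\sim\Omega^{\mc G}$ automatically (this is why the constraint $b\sim\Omega^{\mc G}$ in the sum may be dropped). Ordering Paulis by any linear extension of $\triangleleft$, the matrix $[\mathds 1[b\triangleleft a]]$ is then triangular with unit diagonal — it is the zeta transform of the majorization poset, which is Möbius-invertible — so all $d_a^{\mc G}=0$. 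The leftover combination $\sum_\mu c_\mu\bm g_\mu=0$ is eliminated by the same argument on its $X_R^M$ component: since $\Omega^M$ is downward closed, $[\mathds 1[\nu\subseteq\mu]]_{\mu,\nu\in\Omega^M}$ is triangular with unit diagonal, forcing all $c_\mu=0$. Together with the dimension count, this shows $\mathsf B_R$ is a reduced cycle basis for $L_R$.
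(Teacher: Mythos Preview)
Your proof is correct and follows essentially the same approach as the paper: both establish that $\mathsf B_R\subseteq L_R$ consists of rooted reduced cycles, prove linear independence by first isolating the gate components (invoking M\"obius/inclusion-exclusion invertibility of the zeta matrix on the downward-closed poset $\{a\sim\Omega^{\mc G}\}$) and then the $X_R^M$ component, and close with the dimension count $|\mathsf B_R|=\dim X_R-|\Omega^S|=\dim L_R$ via Theorem~\ref{th:quasi_learnability}(c). Your explicit observation that downward closure of $\Omega^{\mc G}$ makes the zeta matrix square and triangular is the same content as the paper's appeal to inclusion-exclusion.
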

\begin{proof}
First, acting $\mc Q^\T$ on the rooted cycle basis ${\mathsf B}$ defined in Lemma~\ref{le:rooted_cycle_basis} yields,
\begin{equation}
\begin{aligned}
    \mc Q^\T(\bm e_\mu^S+\bm e_\mu^M) &= \sum_{\substack{\nu\subseteq\mu,\\\nu\sim \Omega^S}}\bm\vartheta_\nu^S + \sum_{\substack{\nu\subseteq\mu,\\\nu\sim \Omega^M}}\bm\vartheta_\nu^M,\quad&& \forall\mu\neq\varnothing,\\
    \mc Q^\T(\bm e_{\supp(a)}^S+\bm e_a^{\mc G}+\bm e_{\supp(\mc G(a))}^M)&= \sum_{\substack{\nu\subseteq\supp(a),\\\nu\sim \Omega^S}}\bm\vartheta_\nu^S + 
    \sum_{\substack{b\triangleleft a,\\b\sim\Omega^{\mc G}}}\bm\vartheta_b^{\mc G}
    +\sum_{\substack{\nu\subseteq\supp(\mc G(a)),\\\nu\sim \Omega^M}}\bm\vartheta_\nu^M,\quad&& \forall a\neq I_n,~\forall\mc G\in\mf G.
\end{aligned}
\end{equation}
Thus, ${\mathsf B}_R\coleq \mc Q^\T({\mathsf B}'')$ takes the following form,
\begin{equation}
\begin{aligned}
    {\mathsf B}_R &=\left\{\sum_{\substack{\nu\subseteq\mu,\\\nu\sim \Omega^S}}\bm\vartheta_\nu^S + \sum_{\substack{\nu\subseteq\mu}}\bm\vartheta_\nu^M: \mu\in\Omega^M\right\}
    \cup\\
    &\left\{
    \sum_{\substack{\nu\subseteq\supp(a),\\\nu\sim \Omega^S}}\bm\vartheta_\nu^S + 
    \sum_{\substack{b\triangleleft a}}\bm\vartheta_b^{\mc G}
    +\sum_{\substack{\nu\subseteq\supp(\mc G(a)),\\\nu\sim \Omega^M}}\bm\vartheta_\nu^M: \mc G\in\mf G,~a\sim \Omega^{\mc G}
    \right\},
\end{aligned}
\end{equation}
where we have used the defining property of $\Omega^{M}$ and $\Omega^{\mc G}$. 
let us first prove the linear independence of $\mathsf B_R$.
Consider a linear combination of all vectors from ${\mathsf B}_R$ that yields $\bm 0$. Since only the second part of ${\mathsf B}_R$ contains gate noise parameters, and one can show $\{\sum_{b\triangleleft a}\bm\vartheta_b^{\mc G}:\mc G\in\mf G,~a\sim\Omega^{\mc G}\}$ are linearly-independent (as they can represent  $\{\bm\vartheta_a^{\mc G}:\mc G\in\mf G,a\sim\Omega^{\mc G}\}$ using the inclusion-exclusion principle, see Appendix~\ref{app:factorize} for similar expressions), the coefficients for all vectors from the second part must be zero; On the other hand, because $\{\sum_{\nu\subseteq\mu}\bm\vartheta_\nu^M:\nu\in\Omega^M\}$ are also linearly-independent (also thanks to the inclusion-exclusion principle), the coefficients for vectors from the first part must also be zero. This means ${\mathsf B}_R$ must be linearly independent.
On the other hand, Theorem~\ref{th:quasi_learnability}(c) implies $\dim(T_R)=|\Omega^S|$. We can see from the expression of ${\mathsf B}_R$ that $|{\mathsf B}_R| = \dim(X_R) - |\Omega^S| = \dim(X_R)-\dim(T_R) =\dim(L_R)$. We can thus conclude that ${\mathsf B}_R$ forms a basis for $L_R$.
\end{proof}

The construction of ${\mathsf B}_R$ implies the following set of experiments can fully learn the quasi-local reduced noise model,
\begin{enumerate}
    \item For all $\mu\in\Omega^{M}$, prepare $\tilde\rho_0$ and measure $\tilde {Z}^\mu$.
    \item For all $\mc G\in\mf G$, $b\sim\Omega^{\mc G}$, prepare $\tilde{\rho}_b$, apply $\tilde{\mc G}$, and measure $\tilde P_{{\mc G}(b)}$.
\end{enumerate}
If $\Omega^M$ and $\Omega^{\mc G}$ have polynomial size, the number of experiments needed is also polynomial, and is thus efficient. Besides, if each $\mc G$ is a parallel application of many Clifford gates each acting on a constant number of qubits, many of the above experiments can be conducted in parallel. We leave such an optimization of experiment design as future research.

\medskip

Of course, one might also want to learn gate noise parameter to relative precision in the quasi-local noise model. 
We will not pursue a general efficient algorithm for finding reduced cycle basis for $L_R^{\mc G}$ in the quasi-local case due to its complexity. Instead, we will study one practically-interesting examples in the following section.

\section{Case study: CZ gates with 1D topology}\label{sec:CZ}

In this section, we will apply our theory to a concrete gate set where the Control-Z (CZ) gates are the only multi-qubit entangling gates. Such a gate set, augmented with single-qubit gates, are able to conduct universal quantum computation, and is adopted by state-of-the-art experimental platforms (e.g., IBM Quantum~\cite{ibmquantum}\footnote{
Note that the Control-NOT (CNOT) and the Echoed Cross-Resonance (ECR) gates are all equivalent to CZ up to single-qubit rotations. Our theory can similarly be applied to them.
}).
We will discuss the noise learnability and learning algorithms with several different noise ansatz. We hope the results presented here can not only illustrate our theoretical results, but also serve as a new noise characterization protocol for 
practical use. A quick summary of results is given in Table~\ref{tab:CZ}. 

\begin{table}[!htp]
    \centering
    \begin{tabular}{|c|c|c|c|c|c|c|c|c|}
         \hline
         Section & $n$ & $\mf G$ & Noise model & $\dim(X_R)$  & $\dim(L_R)$ & $\dim(X_R^{G})$ & 
         $\dim(L_R^{G})$\\
         \hline
         \ref{sec:cz1} & $2$ & $\{\mr{CZ}\}$ & complete &$21$  & $18$ & $15$ & $13$\\
         \hline
         \ref{sec:cz2} & $n\ge 4$ & $\{\mr{CZ}_{i,i+1}\}$ & fully local & $17n$  & $16n$ & $15n$ & $14n$\\
         \hline
         \ref{sec:NN_CZ} & $n\ge 6$ & $\{\mr{CZ}_e,\mr{CZ}_o\}$ & nearest-neighbor &$28n$  & $27n$ & $24n$ & $23n$\\
         \hline
         \ref{sec:cz_covar} & $n\ge 6$ & $\{\mr{CZ}_e,\mr{CZ}_o\}$ & covariant $4$-local &$244n$ & $242n$ & $240n$ & $-$\\
         \hline
    \end{tabular}
    \caption{Summary of results of this section. Each row corresponds to a gate set and noise model studied in one subsection.}
    \label{tab:CZ}
\end{table}

\subsection{Single CZ, complete noise}\label{sec:cz1}
We begin with a minimal example: $n=2$, $\mf G = \{\mr{CZ}\}$. The goal is to learn the complete Pauli noise model. Recall that the associated PTG for this model is given in Fig.~\ref{fig:cz_ptg}. Thanks to Theorem~\ref{th:complete}, the parameter space $X$, gauge space $T$, and learnable space $L$ are given by (let $\mc G=\mr{CZ}$ for notational simplicity),
\begin{itemize}
    \item $X$: $\dim(X)=21$. Basis: $\{\bm e_u^S,\bm e_u^M,\bm e_a^{\mc G}:u\neq 00, a\neq II\}$. 
    \item $T$: $\dim(T)=3$. Basis: $\{\mf d_{10},\mf d_{01},\mf d_{11}\}$. 
    \item $L$: $\dim(L)=18$. Cycle basis: $\left\{\bm e^S_u+\bm e^M_u:u\in\{01,10,11\}\right\}\cup\left\{\bm e_{\pt(a)}^S+\bm e_a^{\mc G}+\bm e^M_{\pt(\mc{G}(a))}:a\in{\sf P}^2\backslash{II}\right\}$. 
\end{itemize}
The learnable space for gate noise $L^{G}\coleq L\cap X^{G}$ (which is the cycle space for the subgraph of PTG with edges only for gate noise parameters) is spanned by the following cycle basis
\begin{equation}\label{eq:1CZ_gate}
\begin{gathered}
    \left\{\bm e_{IZ}^{\mc{G}},\bm e_{ZI}^{\mc{G}},\bm e_{ZZ}^{\mc{G}},\bm e_{XX}^{\mc{G}},\bm e_{YY}^{\mc{G}},\bm e_{XY}^{\mc{G}},\bm e_{YX}^{\mc{G}},\right.\\\left.
    \bm e_{XI}^{\mc{G}}+\bm e_{XZ}^{\mc{G}},\bm e_{YI}^{\mc{G}}+\bm e_{YZ}^{\mc{G}},\bm e_{XI}^{\mc{G}}+\bm e_{YZ}^{\mc{G}},\bm e_{IX}^{\mc{G}}+\bm e_{ZX}^{\mc{G}},\bm e_{IY}^{\mc{G}}+\bm e_{ZY}^{\mc{G}},\bm e_{IX}^{\mc{G}}+\bm e_{ZY}^{\mc{G}}\right\}.
\end{gathered}
\end{equation}
We have $\dim(L^G)=13$. Note that $\dim(X^G)-\dim(L^G)=2$. This is because ${\pi}^{\mc G}(\mf d_{11})=\bm 0$, which means the global depolarizing gauge does not change gate noise parameters. Only the two single-qubit depolarizing gauges $\{\mf d_{01},\mf d_{10}\}$ changes gate parameters nontrivially. The above basis for $L^G$ can be augmented to a basis of $L$ by adding the following $5$ rooted cycles,
\begin{equation}
    \left\{\bm e^S_u+\bm e^M_u:u\in\{01,10,11\}\right\}\cup\left\{\bm e^S_{10}+\bm e^{\mc G}_{XI}+\bm e^M_{11},\bm e^S_{01}+\bm e^{\mc G}_{IX}+\bm e^M_{11}\right\}
\end{equation}

\medskip

We remark that Eq.~\eqref{eq:1CZ_gate} was previously given in~\cite{chen2023learnability}. The new insight from our theory is to include SPAM parameters as the target of learning, and to explicitly parameterize the gauge degrees of freedom. 
We also note that \cite{chen2023learnability} has presented experimental designs to learn all of these gate parameters to relative precision. 
Concretely, any of the above parameters can be learned by concatenating $\mr{CZ}$ or $\mr{CZ}\circ(\sqrt{Z}\otimes\sqrt{Z})$ an even number of time, input the correct Pauli eigenstate, and measurement in the correct Pauli eigenbasis. See Fig.~\ref{fig:all_exp} (a).

\begin{figure}[t]
    \centering
    \includegraphics[width=0.95\linewidth]{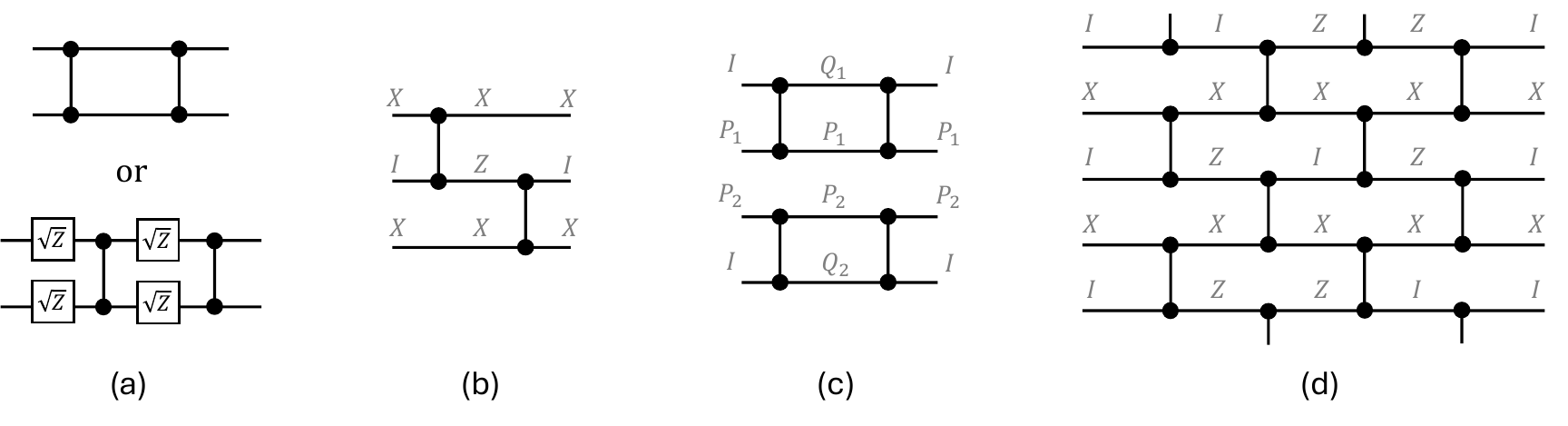}
    \caption{Circuits for learning different types of cycles.
    (a) Circuits for learning all cycles of a single CZ (and for certain cycles of the other models). Any cycles from Eq.~\eqref{eq:1CZ_gate} can be learned by inputting an appropriate Pauli eigenstate to the upper or lower circuit and measure the same Pauli observable.
    (b) Circuits for learning the 2-gate cycle from Eq.~\eqref{eq:fl_cz_2gate}. The evolution of the corresponding Pauli operators is shown in gray.
     (c) Circuits for learning cycles from Eq.~\eqref{eq:NN_CZ_basis2}. 
     (d) Circuits for learning for cycles from Eq.~\eqref{eq:NN_CZ_basis3}. 
     }
    \label{fig:all_exp}
\end{figure}

\subsection{CZ's on 1D ring, fully local noise}\label{sec:cz2}

Let $n>2$, consider the case where the $n$ qubits are placed in an 1D ring, and one can apply an CZ between any neighboring pair of qubits, i.e., $\mf G=\{\mr{CZ}_{i,i+1}:i\in[n]\}$, where the ($i+1$)-th qubit is the same as the first qubit. Suppose the noise model is fully local (c.f.~Definition~\ref{de:fully_local}), using Theorem~\ref{th:fully_learnability}, we obtain the following characterization for the learnability of this model (let $\mc G_i=\mr{CZ}_{i,i+1}$),
\begin{itemize}
    \item $X_R$: $\dim(X_R)=17n$. Basis: $\{\bm\vartheta_j^S,\bm\vartheta_j^M,\bm\vartheta_a^{\mc G_i}:j\in[n],i\in[n],a\in{\sf P}^2\backslash II\}$. 
    \item $T_R$: $\dim(T_R)=n$. Basis: $\{\mc Q^{-1}(\mf d_{\{i\}}):i\in[n]\}$. 
    \item $L_R$: $\dim(L_R)=16n$. Corresponding cycle basis: 
    $\{\bm e_{\{i\}}^S+\bm e_{\{i\}}^M:i\in[n]\}\cup\{\bm e_{\pt(a)}^S+\bm e_{a}^{\mc G}+\bm e_{\pt(\mc G_i(a))}^{M}:i\in[n],a\in{\sf P}^{\{i,i+1\}}\backslash I_n\}$.
\end{itemize}
Note that $\mc Q^{-1}$ is well-defined when acting within the image of $\mc Q$. For $L_R$, instead of writing down a reduced cycle basis, we write down the corresponding cycle basis in $L$ whose connection to experiment design is more direct. Acting $\mc Q^{\T}$ on that basis will give a reduced cycle basis for $L_R$.

\medskip

Using the algorithms introduced in Sec.~\ref{sec:local_learning}, we can construct a reduced cycle basis for the learnable spaces of gate parameters $L_R^G$. In fact, there exists a basis where each basis vector contains parameters of only one gate or two neighboring gates, as follows 
\begin{itemize}
    \item 1-gate (same as Eq.~\eqref{eq:1CZ_gate}): 
    \begin{equation}
    \begin{gathered}
    \left\{\bm\vartheta_{IZ}^{\mc{G}_i},\bm\vartheta_{ZI}^{\mc{G}_i},\bm\vartheta_{ZZ}^{\mc{G}_i},\bm\vartheta_{XX}^{\mc{G}_i},\bm\vartheta_{YY}^{\mc{G}_i},\bm\vartheta_{XY}^{\mc{G}_i},\bm\vartheta_{YX}^{\mc{G}_i},\right.\\\left.
    \bm\vartheta_{XI}^{\mc{G}_i}+\bm\vartheta_{XZ}^{\mc{G}_i},\bm\vartheta_{YI}^{\mc{G}_i}+\bm\vartheta_{YZ}^{\mc{G}_i},\bm\vartheta_{XI}^{\mc{G}_i}+\bm\vartheta_{YZ}^{\mc{G}_i},\bm\vartheta_{IX}^{\mc{G}_i}+\bm\vartheta_{ZX}^{\mc{G}_i},\bm\vartheta_{IY}^{\mc{G}_i}+\bm\vartheta_{ZY}^{\mc{G}_i},\bm\vartheta_{IX}^{\mc{G}_i}+\bm\vartheta_{ZY}^{\mc{G}_i}:i\in[n]\right\}.
\end{gathered}
\end{equation}
    \item 2-gate: 
    \begin{equation}\label{eq:fl_cz_2gate}
        \left\{ \bm\vartheta_{XI}^{\mc{G}_i}+\bm\vartheta_{ZX}^{\mc{G}_{i+1}}:i\in[n]  \right\}.
    \end{equation}
\end{itemize}
We have $\dim(L^G) = 14n$. The 1-gate basis vector can be learned by running the same learning protocol for a single CZ in each consecutive 2-qubit subsystem, shown in Fig.~\ref{fig:all_exp} (a); The 2-gate basis can be learned using the circuits shown in Fig.~\ref{fig:all_exp} (b) on each consecutive 3-qubit subsystem. 
By learning disjoint regions in parallel, a constant number of experiments is sufficient to learn all the gate noise parameters.
We also note that the above basis for $L_R^{G}$ can be augmented to a basis for $L_R$ by adding the following $2n$ reduced cycles,
\begin{equation}
    \left\{\bm\vartheta_j^S+\bm\vartheta_j^M:j\in[n]\right\}\cup\left\{\bm\vartheta_j^S+\bm\vartheta_{XI}^{\mc G_j}+\bm\vartheta_j^M+\bm\vartheta_{j+1}^M:j\in[n]\right\}.
\end{equation}

\subsection{CZ's on ring, nearest-neighbor noise}\label{sec:NN_CZ}

Let $n\ge6$ be an even number. We again consider $n$ qubits placed in an 1D ring, but this time let the gate set be $\mf G = \{\mc G_e,\mc G_o\}$ where
\begin{equation}
    \mc G_e = \mr{CZ}_{1,2}\otimes\mr{CZ}_{3,4}\otimes\cdots\otimes\mr{CZ}_{n-1,n},\quad \mc G_o = \mr{CZ}_{2,3}\otimes\mr{CZ}_{4,5}\otimes\cdots\otimes\mr{CZ}_{n,1},
\end{equation}
and we assume the state preparation, measurement, and gate noise for both $\mc G_e,\mc G_o$ are all $\Omega$-local, with $\Omega$ associated with $\Omega_* = \left\{\{i,i+1\}:i\in[n]\right\}$. The setup is illustrated in Fig.~\ref{fig:NN_CZ}. This setup is essentially the one used in \cite{kim2023evidence} for conducting an error-mitigated quantum simulation task. It is thus of great interest to understand the learnability of this noise model. 

\begin{figure}[!htp]
    \centering
    \includegraphics[width=0.85\linewidth]{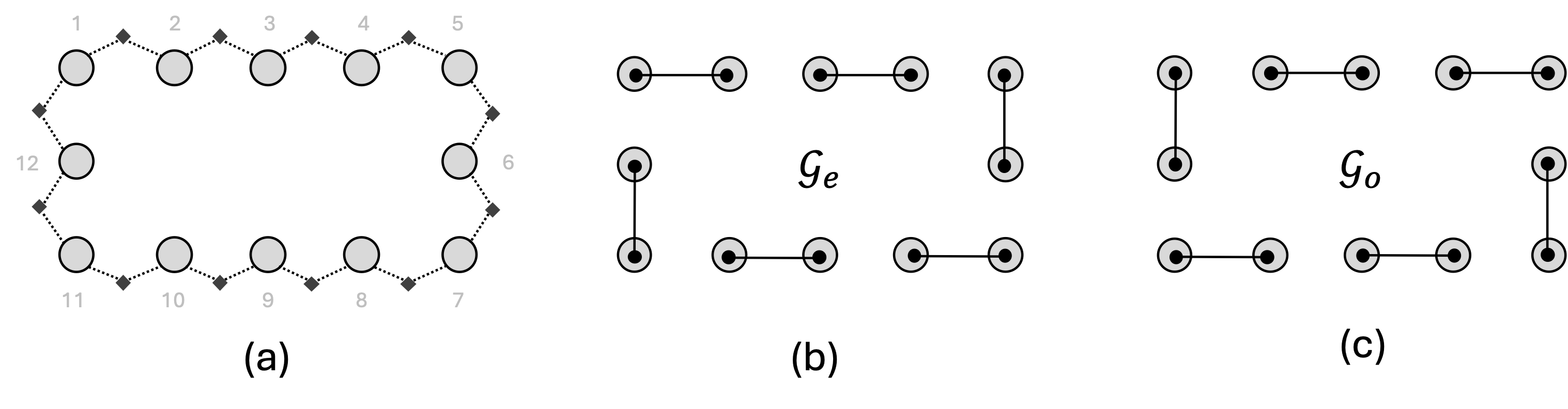}
    \caption{Illustration of the setup in Sec.~\ref{sec:NN_CZ} with $n=12$. (a) Qubit topology and noise model. Each circle represents a qubit with their label annotated. Each diamond connects to a maximal factor from $\Omega_*$, specifying the nearest-neighbor noise model for both SPAM and gate noises. (b), (c) Illustration of $\mc G_e,\mc G_o$, respectively.}
    \label{fig:NN_CZ}
\end{figure}

To begin with, this noise model is not gate-covariant in the sense of Definition~\ref{de:covariant}. 
To see this, consider the Pauli channel $\Lambda'(\rho) = \eta X_2X_3\rho X_2X_3 + (1-\eta)\rho$ with some $\eta<1/2$, which is $\Omega$-local thanks to Lemma~\ref{le:factorize}. We have $\mc G_e\Lambda'\mc G_e^\dagger = \eta Z_1X_2X_3Z_4\rho Z_1X_2X_3Z_4 + (1-\eta)\rho$, which is not $\Omega$-local, again by Lemma~\ref{le:factorize}.
Similarly for $\mc G_o$. This means we cannot directly apply Theorem~\ref{th:quasi_learnability}(c). In fact, we can show that the reduced gauge space is given by $T_R=\mr{span}\{\mc Q^{-1}(\mf d_{\{j\}}):j\in[n]\}$, i.e., only the single-qubit depolarizing gauges. Intuitively, any (linear combination of) two-qubit depolarizing gauge transformation will violate the $\Omega$-local noise assumption of either $\mc G_e$ or $\mc G_o$. A formal proof is given in Appendix~\ref{sec:proof_NN_CZ}. 
Therefore, we obtain the following characterization of the model learnability,
\begin{itemize}
    \item $X_R$: $\dim(X_R)=28n$. Basis: $\{\bm\vartheta_\nu^S,\bm\vartheta_\nu^M,\bm\vartheta_a^{\mc G_e},\bm\vartheta_a^{\mc G_o}:\nu\in\Omega,a\sim\Omega\}$. 
    \item $T_R$: $\dim(T_R)=n$. Basis: $\{\mc Q^{-1}(\mf d_{\{i\}}):i\in[n]\}$. 
    \item $L_R$: $\dim(L_R)=27n$.     
\end{itemize}
Note that the cardinality of $\{a\in{\sf P}^n\backslash I_n: a\sim\Omega\}$ can be computed using the inclusion-exclusion principle: $(4^2-1)n - (4-1)n = 12n$.

\medskip

We have also found a reduced cycle basis for $L_R^{G}$. To make the connection to experiments more straightforward, we will write down a cycle basis that generates the reduced cycle basis by applying $\mc Q^{\T}$ on them (c.f. Definition~\ref{de:reduced_cycle}).
The cycle bases are divided into two parts, depending on whether the basis vector involves one or two gates, as follows,
\begin{itemize}
    \item 1-gate:
    \begin{itemize}
        \item For $k=1,...,\frac n2$, similar as Eq.~\eqref{eq:1CZ_gate},
        \begin{equation}\label{eq:NN_CZ_basis1}
        \begin{gathered}
        \{\bm e^{\mc G_e}_{I_{2k-1}Z_{2k}},\cdots,\bm e^{\mc G_e}_{X_{2k-1}I_{2k}} + \bm e^{\mc G_e}_{X_{2k-1}Z_{2k}},\cdots\}\\
       \cup\{\bm e^{\mc G_o}_{I_{2k}Z_{2k+1}},\cdots,\bm e^{\mc G_o}_{X_{2k}I_{2k+1}} + \bm e^{\mc G_o}_{X_{2k}Z_{2k+1}},\cdots\}
        \end{gathered}
        \end{equation}
        \item For $k=1,...,\frac n2$, 
        \begin{equation}\label{eq:NN_CZ_basis2}
        \begin{gathered}
            \{\bm e_{a}^{\mc G_e} + \bm e_{a'}^{\mc G_e}:a\in{\sf P}^{\{2k,2k+1\}},w(a)=2,a'=\mc G_e(a)\}\\
            \cup \{\bm e_{a}^{\mc G_o} + \bm e_{a'}^{\mc G_o}:a\in{\sf P}^{\{2k+1,2k+2\}},w(a)=2,a'=\mc G_o(a)\}
        \end{gathered}
        \end{equation}
    \end{itemize}
    \item 2-gate: For $k=1,\cdots,\frac n2$,
    \begin{equation}\label{eq:NN_CZ_basis3}
    \begin{gathered}
        \left\{\bm e_{IXIXI_{\{2k-1,\cdots,2k+3\}}}^{\mc G_o} + 
        \bm e_{IXZXZ_{\{2k-1,\cdots,2k+3\}}}^{\mc G_e} + 
        \bm e_{ZXIXZ_{\{2k-1,\cdots,2k+3\}}}^{\mc G_o} + 
        \bm e_{ZXZXI_{\{2k-1,\cdots,2k+3\}}}^{\mc G_e}\right\} \\
        \cup\left\{\bm e_{IXIXI_{\{2k,\cdots,2k+4\}}}^{\mc G_o} + 
        \bm e_{ZXZXI_{\{2k,\cdots,2k+4\}}}^{\mc G_e} + 
        \bm e_{ZXIXZ_{\{2k,\cdots,2k+4\}}}^{\mc G_o} + 
        \bm e_{IXZXZ_{\{2k,\cdots,2k+4\}}}^{\mc G_e}\right\}
    \end{gathered}
    \end{equation}
\end{itemize}
Eqs.~\eqref{eq:NN_CZ_basis1},~\eqref{eq:NN_CZ_basis2},~\eqref{eq:NN_CZ_basis3} contributes ${13}n$, $9n$, $n$ cycles, respectively, and is consistent with the fact that $\dim(L_R^G) = 23n$. Elusive as they might look, each type of cycles corresponds to an explicit family of experiment designs, as shown in Fig.~\ref{fig:all_exp}~(a), (c), (d), respectively.

\medskip

For a proof that the above cycles constitute a cycle basis for $L_R^G$ and a way to supplement it into a complete basis for $L_R$, see Appendix~\ref{sec:basis_NN_CZ}.

\subsection{CZ's on ring, covariant quasi-local noise}\label{sec:cz_covar}

\begin{figure}[!htp]
    \centering
    \includegraphics[width=0.65\linewidth]{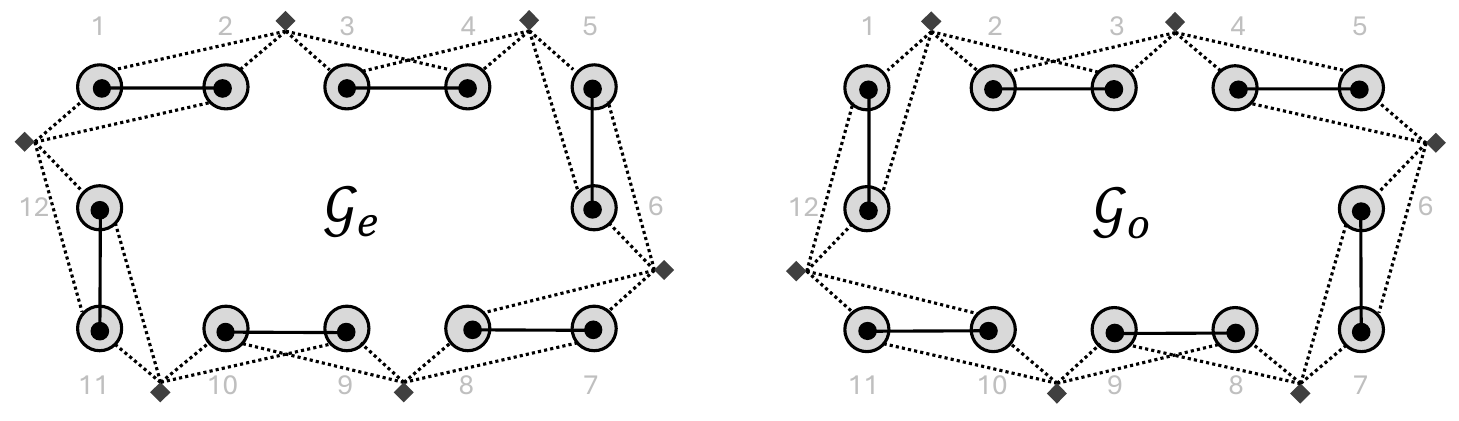}
    \caption{The covariant quasi-local gate noise model from Sec.~\ref{sec:cz_covar}. The left and right shows $\Omega^e$, $\Omega^o$, respectively, with each diamond connecting to a maximal factor from $\Omega^e_*$ (resp. $\Omega^o_*$).}
    \label{fig:covar_CZ}
\end{figure}

Finally, we present a minimal example of quasi-local noise model that is gate-covariant. Let $n\ge6$ be even. Consider the same gate set $\mf G=\{\mc G_e,\mc G_o\}$ as in the last section. Assume the SPAM noise is still $\Omega$-local for $\Omega_* = \{\{i,i+1\}:i\in[n]\}$. Assume the noise for $\mc G_e$, $\mc G_o$ is $\Omega^{\mc G_e}$, $\Omega^{\mc G_o}$-local, respectively, where
\begin{equation}
\begin{aligned}
    \Omega_*^{\mc G_e}&=\{\{2k-1,2k,2k+1,2k+2\}:k\in[n/2]\},\\
    \Omega_*^{\mc G_o}&=\{\{2k,2k+1,2k+2,2k+3\}:k\in[n/2]\}.
\end{aligned}
\end{equation}
The factor graphs for each noise model are given in Fig.~\ref{fig:covar_CZ}. The fact that this model is gate-covariant follows from Lemma~\ref{le:covariant_sufficient}. Indeed, each maximal factor is weight-$4$, containing exactly the support of two neighboring CZ. Consequently, Theorem~\ref{th:quasi_learnability} gives us that
\begin{itemize}
    \item $X_R$: $\dim(X_R)=244n$. Basis: $\{\bm\vartheta_\nu^S,\bm\vartheta_\nu^M,\bm\vartheta_a^{\mc G_e},\bm\vartheta_b^{\mc G_o}:\nu\in\Omega,a\sim\Omega^{\mc G_e},b\sim\Omega^{\mc G_o}\}$. 
    \item $T_R$: $\dim(T_R)=2n$. Basis: $\{\mc Q^{-1}(\mf d_{\nu}):\nu\in\Omega\}$. 
    \item $L_R$: $\dim(L_R)=242n$. Corresponding cycle basis:   $\{\bm e_\nu^S+\bm e_\nu^M:\nu\in\Omega\}\cup
    \{\bm e^S_{\pt(a)} +\bm e^{\mc G}_a +\bm e^M_{\pt(\mc G(a))}:\mc G\in\{\mc G_e,\mc G_o\},a\sim\Omega^{\mc G}\}$.
\end{itemize}
Note that the cardinality of $\{a\in{\sf P}^n\backslash I_n:a\sim\Omega^{\mc G_e}\}$ can again be computed using the inclusion-exclusion principle: $\frac n2\times(4^4-4^2)=120n$. One can see that a linear number of experiments that involves measuring at most weight-$4$ Pauli observables is sufficient to learn the whole noise model. 
It is also not hard to see that, by paralleling many of the experiments, a constant number of experiments would also suffice.

\medskip

We will not pursue a reduced cycle basis for $L_R^G$ in this example due to its complexity, and leave this for future investigation.

\bigskip
\noindent\textbf{Notes added.} We became aware of an independent and concurrent work~\cite{calzona2024multi} that also studied the learnability of the sparse Pauli-Lindblad noise model and proposed novel learning protocols.
We thank Alessio Calzona for communicating with us about their manuscript.

\section*{Acknowledgments} 
We would like to thank Alireza Seif, Yunchao Liu, Edward Chen, Joseph Emerson, Matthew Graydon, Hsin-Yuan Huang, Ming Yuan, Sisi Zhou, and Yunxiao Zhao for helpful discussions. 
S.C. \& L.J. acknowledge support from the ARO(W911NF-23-1-0077), ARO MURI (W911NF-21-1-0325), AFOSR MURI (FA9550-21-1-0209, FA9550-23-1-0338), DARPA (HR0011-24-9-0359, HR0011-24-9-0361), NSF (ERC-1941583, OMA-2137642, OSI-2326767, CCF-2312755, OSI-2426975), and Packard Foundation (2020-71479).

\newpage
\appendix
\renewcommand{\thelemma}{\Alph{section}.\arabic{lemma}}

\section{Proof of Theorem~\ref{th:linearOb}}\label{sec:proof_linearOb}

The proof follows~\cite[App. D]{chen2023learnability} with some modification to fit in our framework.

\begin{proof}[Proof of Theorem~\ref{th:linearOb}(a)]
    For any experiment $F$, we need to find a finite set of rooted cycles (\textit{i.e.}, linear functions corresponding to a rooted cycle) $\{f_j\}_{j=1}^M$ and a function $\widehat F$ such that
    \begin{equation}
        F(\bm x) = \widehat F(f_1(\bm x),\cdots,f_M(\bm x)),\quad\forall\bm x\in X.
    \end{equation}
    By definition, $F(\bm x)$ is a $2^n$-dimensional real vector such that $F_k(\bm x) = \Tr[\widetilde{E}_k \widetilde{\mc C}(\widetilde\rho_0) ]$ where the noisy initial state, circuit, and POVM element are determined by the noise parameter $\bm x$. Expanding the SPAM noise parameters yields
    \begin{equation}
        F_k(\bm x) = \frac{1}{4^n}\sum_{u,u'\in{\{0,1\}}^n}(-1)^{k\cdot u'}\lambda_{u'}^M\lambda_u^{S}\Tr\left[Z^{u'}\widetilde{\mc C}(Z^{u})\right].
    \end{equation}
    On the other hand, any noisy circuit $\mc C$ satisfying our assumptions can be written as
    \begin{equation}
        \widetilde{\mc C} = \mc U\od m \circ \widetilde{\mc G}_{{m}} \circ \cdots \circ \mc U\od 1 \circ \widetilde{\mc G}_{{1}} \circ \mc U\od 0,
    \end{equation}
    where each $\mc U\od{i}=\bigotimes_{l=1}^n \mc U_l\od{i}$ is a layer of parallel single-qubit gate, and each $\widetilde{\mc G}_i$ is the noisy realization of a Clifford gate $\mc G_i$ from our gate set $\mf G$. Note that consecutive layers of single-qubit gates can always be merged into a single layer, and we have assumed single-qubit gate to be noiseless, so the above form of $\widetilde{\mc C}$ is indeed general.
    
    An important property for single-qubit gate layers is that, even if they are non-Clifford, they always preserve the pattern of an input Pauli operator. That is,
    \begin{equation}
        \mc U\od{i}(P_a) = \sum_{b:\pt(b)=\pt(a)} c_{b,a}\od i P_b,\quad\forall P_a\in{\sf P}^n,
    \end{equation}
    for some real coefficients $c_{b,a}\od i$. This comes from the fact that any single-qubit unitary channel is trace-preserving and unital. Thanks to this property, the action of $\widetilde{\mc C}$ on any input Pauli operator can be expressed as
    \begin{equation}
        \begin{aligned}
        \widetilde{\mc C}(P_a) &= ({\mc U}\od m \circ \widetilde{\mc G}_{{m}} \circ \cdots \circ {\mc U}\od 1 \circ \widetilde{\mc G}_{{1}} \circ {\mc U}\od 0)(P_a)\\
        &= ({\mc U}\od m \circ \widetilde{\mc G}_{{m}} \circ \cdots \circ {\mc U}\od 1 \circ \widetilde{\mc G}_{{1}}) \left(\sum_{b_0:\pt(b_0)=\pt(a)} c_{b_0,a}\od{0} P_{b_0} \right)\\
        &= ({\mc U}\od m \circ \widetilde{\mc G}_{{m}} \circ \cdots \circ {\mc U}\od 1) \left(\sum_{b_0:\pt(b_0)=\pt(a)} c_{b_0,a}\od 0\lambda_{b_0}^{\mc G_{1}} P_{\mc G_{1}(b_0)} \right)\\
        &= ({\mc U}\od m \circ \widetilde{\mc G}_{{m}} \circ \cdots \circ {\mc U}\od 2) \left(\sum_{\substack{
        b_0:\pt(b_0)=\pt(a),\\
        b_1:\pt(b_1)=\pt(\mc G_{1}(b_0))
        }} c_{b_1,\mc G_{1}(b_0)}\od{1}c_{b_0,a}\od{0} \lambda_{b_1}^{\mc G_{2}}\lambda_{b_0}^{\mc G_{1}} P_{\mc G_{2}(b_1)} \right)\\
        &= \cdots\\
        &= \sum_{\substack{
        b_0:\pt(b_0)=\pt(a),\\
        b_1:\pt(b_1)=\pt(\mc G_{1}(b_0)),\\
        \dots\\
        b_m:\pt(b_m)=\pt(\mc G_{m}(b_{m-1}))
        }}c_{b_m,\mc G_{m}(b_{m-1})}\od{m}\cdots c_{b_1,\mc G_{1}(b_0)}\od{1} c_{b_0,a}\od{0} \lambda_{b_{m-1}}^{\mc G_{m}}\cdots\lambda_{b_1}^{\mc G_{2}}\lambda_{b_0}^{\mc G_{1}} P_{b_m}.
        \end{aligned}
    \end{equation}
    Now, substitute this back into the expression of $F_k(\bm x)$,
    \begin{equation}
        \begin{aligned}
            &F_k(\bm x)\\ 
            =~&\sum_{\substack{u,u'\in\{0,1\}^n,\\
        b_0:\pt(b_0)=\pt(u),\\
        b_1:\pt(b_1)=\pt(\mc G_{1}(b_0)),\\
        \dots\\
        b_m:\pt(b_m)=\pt(\mc G_{m}(b_{m-1}))
        }}
        \frac{(-1)^{k\cdot u'}}{4^n}~c_{b_m,\mc G_{m}(b_{m-1})}\od{m}\cdots c_{b_1,\mc G_{1}(b_0)}\od{1} c_{b_0,a}\od{0} \lambda_{u'}^M\lambda_{b_{m-1}}^{\mc G_{m}}\cdots\lambda_{b_1}^{\mc G_{2}}\lambda_{b_0}^{\mc G_{1}}\lambda_u^{S} \Tr\left[Z^{u'}P_{b_m}\right]\\
        =~&\sum_{\substack{u\in\{0,1\}^n,\\
        b_0:\pt(b_0)=\pt(u),\\
        b_1:\pt(b_1)=\pt(\mc G_{1}(b_0)),\\
        \dots\\
        b_m:\pt(b_m)=\pt(\mc G_{m}(b_{m-1}))
        }}
        \frac{(-1)^{k\cdot \pt(b_m)}}{2^n}~c_{b_m,\mc G_{m}(b_{m-1})}\od{m}\cdots c_{b_0,a}\od{0} \lambda_{\pt(b_m)}^M\lambda_{b_{m-1}}^{\mc G_{m}}\cdots\lambda_{b_0}^{\mc G_{1}}\lambda_u^{S} ~\mathds 1[Z^{\pt(b_m)}=P_{b_m}].
        \end{aligned}
    \end{equation}
    Let us look at the indexes of the summation. For $u=\bm 0$, we must have $b_m=\cdots=b_0=I$, and the corresponding term in the sum becomes ${1}/{2^n}$ that is a constant; For $u\neq\bm 0$, we must have $b_i\neq I$ for all $i$, and the corresponding term in the sum depends on the noise parameters via
    \begin{equation}
    \begin{aligned}
        \lambda_{\pt(b_m)}^M\lambda_{b_{m-1}}^{\mc G_{m}}\cdots\lambda_{b_1}^{\mc G_{2}}\lambda_{b_0}^{\mc G_{1}}\lambda_u^{S} &= \exp\left(-\left(x_{\pt(b_m)}^M+x_{b_{m-1}}^{\mc G_{m}}+\cdots+x_{b_0}^{\mc G_{1}}+x_u^{S}\right)\right)\eqcol\exp\left(-f_{\bm b,u}(\bm x)\right),
    \end{aligned}
    \end{equation}
    where $f_{\bm b, u}$ is a linear function of $\bm x$. More importantly, thanks to the restriction of $(b_0,\cdots,b_m)$, one can see that $f_{\bm b, u}$ is actually a rooted cycle in the pattern transfer graph. Therefore, $F_k(\bm x)$ (and thus $F(\bm x)$) is indeed determined by $\{f_{\bm b,u}(\bm x)\}_{\bm b,u}$ consisting of finitely many rooted cycles. This completes the proof of Theorem~\ref{th:linearOb}(a).
\end{proof}

The following proof again comes from \cite{chen2023learnability} with minor modification, and is presented here for completeness.

\begin{proof}[Proof of Theorem~\ref{th:linearOb}(b)]
    For any rooted cycles $f$, we need to find an experiment $F$ and a function $\widehat f$ such that
    \begin{equation}
        f(\bm x) = \widehat f(F(\bm x)),\quad\forall \bm x\in X.
    \end{equation}
    A generic rooted cycle $f$ of depth $m$ takes the following form (here, we write down the vertices for clarity, though they are not needed to define a cycle),
    \begin{equation}\label{eq:general_rooted_cycle}
        \begin{gathered}
            (v_R,x^S_{b_0},v_{b_0},x_{P_1}^{\mc G_1},v_{b_1},x_{P_2}^{\mc G_2},\cdots,x_{P_m}^{\mc G_m},v_{b_m},x_{b_m}^M,v_R),\\
            \text{s.t.}\quad b_{i-1}=\pt(P_i),~\pt(\mc G_i(P_i))=b_{i},~\forall i=1,\cdots, m.
        \end{gathered}
    \end{equation}
    Consider the following noisy circuit
    \begin{equation}\label{eq:def_of_tildeC}
    \begin{gathered}
        \widetilde{\mc C} = \mc U\od m \circ \widetilde{\mc G}_{{m}} \circ \cdots \circ \mc U\od 1 \circ \widetilde{\mc G}_{{1}} \circ \mc U\od 0,\\
        \text{where}~ \mc U\od i\equiv \bigotimes_{j=1}^n\mc U_j\od i~\text{is Clifford, and satisfies}\left\{\begin{aligned}
            &\mc U\od 0(Z^{b_0})=P_1;\\
            &\mc U\od i(\mc G_i(P_i))=P_{i+1},~\forall i=1,\cdots,m-1;    \\
            &\mc U\od m(\mc G_m(P_m))=Z^{b_m}.
        \end{aligned}
        \right.
    \end{gathered}
    \end{equation}
    Note that, thanks to the constraints from Eq.~\eqref{eq:general_rooted_cycle}, the above requirements for $\mc U\od i$ are literally mapping one Pauli operator to another \emph{with the same pattern}, thus can indeed be achieved by a layer of single-qubit Clifford gates.
    
    This circuit yields the following experiment (\textit{i.e.}, experimental outcome distribution)
    \begin{equation}
    \begin{aligned}
        F_k(\bm x) = \Tr[\widetilde E_k\widetilde{\mc C}(\widetilde\rho_0)] = \frac{1}{4^n}\sum_{u,u'\in\{0,1\}^n}(-1)^{k\cdot u'}\lambda^M_{u'}\lambda^S_{u}\Tr\left[Z^{u'}\widetilde{\mc C}(Z^u)\right],\quad\forall k\in\{0,1\}^n.
    \end{aligned}
    \end{equation}
    Compute the following function of experimental outcome (which is, intuitively, the expectation value of $Z^{b_m}$ on the equivalent noisy state right before a perfect measurement)
    \begin{equation}
        \begin{aligned}
            \sum_{k\in\{0,1\}^n}(-1)^{k\cdot b_m}F_k(\bm x) &= \frac{1}{4^n}\sum_{k,u,u'}(-1)^{k\cdot(b_m+u')}\lambda^M_{u'}\lambda^S_{u}\Tr\left[Z^{u'}\widetilde{\mc C}(Z^u)\right]\\
            &= \frac{1}{2^n}\sum_u \lambda^M_{b_m}\lambda^S_{u}\Tr\left[Z^{b_m}\widetilde{\mc C}(Z^u)\right]\\
            &= \frac{1}{2^n}\sum_u \lambda^M_{b_m}\lambda^{\mc G_m}_{P_m}\cdots\lambda^{\mc G_1}_{P_1}\lambda^S_{u}\Tr\left[Z^{b_0}Z^u\right]\\
            &= \lambda_{b_m}^M\lambda^{\mc G_m}_{P_m}\cdots\lambda^{\mc G_1}_{P_1}\lambda^S_{b_0}\\
            &= \exp\left(-\left(x_{b_0}^S + x_{P_1}^{\mc G_1}+\cdots+x_{P_m}^{\mc G_m} + x^M_{b_m}\right)\right) \equiv \exp\left(-f(\bm x)\right).
        \end{aligned}
    \end{equation}
    The third line is obtained by acting $\widetilde{\mc C}^\dagger$ on $Z^{b_m}$ inside the trace, and then evolve $Z^{b_m}$ backwards according to the definition of $\widetilde{\mc C}$ from Eq.~\eqref{eq:def_of_tildeC}. As a result, we have
    \begin{equation}
        f(\bm x) = -\log\left(\sum_{k\in\{0,1\}^n}(-1)^{k\cdot b_m}F_k(\bm x)\right).
    \end{equation}
    This completes the proof of Theorem~\ref{th:linearOb}(b). Recall that this means any rooted cycle $f$ can be learned from a single experiment $F$.
\end{proof}

\section{Proof of Theorem~\ref{th:reduced}}\label{sec:proof_reduced}
\begin{proof}
    We first show that $T_R=\mr{Ker}(\mc P\circ\mc Q)$.
    For any experiment $F$, Theorem~\ref{th:linearOb}(b) implies that there exists some function $\widehat F$ such that $F(\mc Q(\bm r)) = \widehat F(\mc P\circ\mc Q(\bm r))$ for all $\bm r\in X_R$. Thus, for any $\bm t\in\mr{Ker}(\mc P\circ\mc Q)$,
    \begin{equation}
        F(\mc Q(\bm r +\eta\bm t)) = \widehat F(\mc P\circ\mc Q(\bm r+\eta\bm t)) = \widehat F(\mc P\circ\mc Q(\bm r)) =F(\mc Q(\bm r)),\quad\forall\bm r\in X_R,~\forall \eta\in\mbb R.
    \end{equation}
    The second equality uses linearity of $\mc P\circ\mc Q$. The above means $\bm t\in T_R$ by definition. Therefore, $\mr{Ker}(\mc P\circ\mc Q)\subseteq T_R$; On the other hand, for any $\bm t\notin\mr{Ker}(\mc P\circ\mc Q)$, $\mc Q(\bm t)\notin\mr{Ker}(\mc P)$, which means there is at least one $f_i$ such that $f_i(\mc Q(\bm t))\neq 0$. Now, Theorem~\ref{th:linearOb}(b) says that there exists an experiment $F_i$ and a function $\widehat{f}_i$ such that $f_i(\mc Q(\bm t))=\widehat f_i(F_i(\mc Q(\bm t)))$ for all $\bm r$. Thus
    \begin{equation}
        \begin{aligned}
            f_i(\mc Q(\bm t))\neq 0&\Rightarrow f_i(\mc Q(\bm r))\neq f_i(\mc Q(\bm r+\bm t)),\quad\forall\bm r\in X_R,\\
            &\Rightarrow \widehat f_i(F(\mc Q(\bm r)))\neq \widehat f_i(F(\mc Q(\bm r+\bm t))),\quad\forall\bm r\in X_R,\\
            &\Rightarrow F(\mc Q(\bm r))\neq F(\mc Q(\bm r+\bm t)),\quad\forall\bm r\in X_R,
        \end{aligned}
    \end{equation}
    where the first line uses linearity of $f_i$ and $\mc Q$.
    This means $\bm t\notin T_R$. Therefore, $T_R\subseteq \mr{Ker}(\mc P\circ\mc Q)$. This proves that $T_R = \mr{Ker}(\mc P\circ\mc Q)$. 

    \medskip

    We then show that $L_R = \mr{Im}(\mc Q^{\T}\circ\mc P^{\T})$.
    For any $f\in\mr{Im}(\mc Q^{\T}\circ\mc P^\T)$, there exists some linear function $h:\mbb R^{|Z|}\mapsto \mbb R$ such that $f(\bm r) = \mc Q^\T\circ\mc P^\T(h)(\bm r)\equiv h(\mc P\circ\mc Q(\bm r)),~\forall\bm r\in X_R$. Theorem~\ref{th:linearOb} implies that each entry of $\mc P(\mc Q(\bm r))$ can be learned from some experiments, thus $f\in L_R$. This means $\mr{Im}(\mc Q^{\T}\circ\mc P^\T)\subseteq L_R$; On the other hand, $L_R\perp T_R$, according to similar arguments as in Lemma~\ref{le:L_T_perp}. We have shown $T_R = \mr{Ker}(\mc P\circ\mc Q)$, thus $L_R\subseteq\mr{Im}(\mc Q^{\T}\circ\mc P^\T)$ thanks to the orthogonality between kernel and adjoint image. This proves that $L_R=\mr{Im}(\mc Q^{\T}\circ\mc P^\T)$. 
\end{proof}

\section{Additional proofs for Sec.~\ref{sec:fully}}
\subsection{Proof of Lemma~\ref{le:sdg_basis}}\label{sec:proof_sdg}
\begin{proof}
Since $\#\{\mf d_s\}_s = \mr{dim}(T) = 2^n-1$ (recall that the PTG is strongly connected thanks to the root vertex), we only need to show the completeness of $\{\mf d_s\}_s$. It is sufficient to show that $\{\mf d_s\}_s$ can express the canonical cut basis $\mf y_z$ generated by cutting off a single vertex ${{\mathsf{V}_0}}=\{v_z\}$ for all non-zero bit string $z$. (Note that the root node gives a redundant degree of freedom and can thus be ignored). By definition, we have
\begin{equation}\label{eq:dep_in_can}
    \mf d_s = -\sum_{z} \mathds 1[z_s\neq 0]\mf y_z.
\end{equation}
For two $n$-bit string $s,t$ we write $t\preceq s$ if $t_i=0$ for any index $i$ such that $s_i = 0$. Let $\bar s$ denote the bitwise flip of $s$ (e.g., $\overline{1101}=0010$). We claim the following relation holds,
\begin{equation}\label{eq:can_in_dep}
    \mf y_z = \sum_{s\neq\bm 0:\bar{z}\preceq s}
    (-1)^{|s|-|\bar z|}\mf d_s,\quad\forall z\neq\bm0. 
\end{equation}
This is basically a consequence of the inclusion-exclusion principle.
To see this, thanks to Eq.~\eqref{eq:dep_in_can},
\begin{equation}
    \begin{aligned}
        R.H.S. &= -\sum_{s\neq\bm 0:\bar{z}\preceq s}(-1)^{|s|-|\bar z|}\sum_{t:t_s\neq 0}\mf y_t\\
        &=-\sum_{t}\sum_{s:\bar z\preceq s,~s\not\preceq \bar{t}}(-1)^{|s|-|\bar{z}|}\mf y_t\\
        &=\sum_t\sum_{s:\bar z\preceq s\preceq\bar t}(-1)^{|s|-|\bar z|}\mf y_t\\
        &= \sum_t \mathds 1[\bar z\preceq\bar t]\sum_{k=0}^{|\bar t|-|\bar z|}(-1)^{k}\binom{|\bar t|-|\bar z|}{k} \mf y_t\\
        &= \sum_{t} \mathds 1[\bar z\preceq\bar t]\mathds 1[|\bar t|-|\bar z|=0]\mf y_t\\
        &= \mf y_z.
    \end{aligned}
\end{equation}
The order change of sum in the second line uses the fact that $t_s\neq 0$ is equivalent to $s\not\preceq \bar t$, and that $\bm 0\preceq \bar t$ for any $t$, so we can drop the constraint $s\neq\bm 0$. 
The third line uses the following
\begin{equation}
    \sum_{s:\bar z\preceq s\preceq t}(-1)^{|s|-|\bar z|} + \sum_{s:\bar z\preceq s,~s\not\preceq t}(-1)^{|s|-|\bar z|} = 
    \sum_{s:\bar z\preceq s}(-1)^{|s|-|\bar z|} = \sum_{k=0}^{n-|\bar z|}\binom{n-|\bar z|}{k}(-1)^k=0.
\end{equation}
Note that $z\neq\bm 0$ implies $n-|\bar z|>0$. The fifth line uses the Binomial theorem.
This proves Eq.~\eqref{eq:can_in_dep}, and thus establishes the fact that the SDG's $\{\mf d_s\}_s$ form a basis for $T$ and $U$.
\end{proof}

\subsection{Proof of converse part of Theorem~\ref{th:fully_learnability}(b)}\label{sec:proof_fully_converse}

\begin{proof}
Let $(X_R,\mc Q)$ satisfies a fully local noise model for gate $\mc G$. Define
\begin{equation}
\begin{aligned}
    T_0 &\coleq \{\bm z\in T:\bm z^{\mc G}\in\imq^{\mc G}\},\\
    \quad D_0 &\coleq \mr{span}\{\mf d_s:s\subset\mr{supp}(\mc G)~\textrm{or}~s\supseteq\mr{supp}(\mc G)~\textrm{or}~s\cap\mr{supp}(\mc G)=\varnothing\}.
\end{aligned}
\end{equation}
It has been proved that $T_0\supseteq D_0$.    
Now, assume that $\hat{\mc G}$ induces a connected pattern transfer subgraph.
Let $k=|\mc G|$ and assume $\mc G$ supports on the first $k$ qubits, without loss of generality. 
By definition, any $\bm z^{\mc G}\in \imq^{\mc G}$ satisfies
    \begin{equation}\label{eq:fully_ortho}
        z^{\mc G}_{b\oplus c_1} = z^{\mc G}_{b\oplus c_2},\quad\forall b\in{\sf P}^{k}\backslash\{I_k\},~\forall c_1,c_2\in{\sf P}^{n-k}.
    \end{equation}
    Here $\oplus$ means concatenation of bit string, i.e., $P_{b\oplus c}=P_b\otimes P_c$.
    Equivalently, $\bm z\cdot (\bm e^{\mc G}_{b\oplus c_1} - \bm e^{\mc G}_{b\oplus c_2})=0$.
    Now, let us pick a set of $k$-qubit Paulis $\{b_j\in{\sf P}^k\backslash\{I\}\}_{j=1}^{2^{k}-2}$ such that 
    the set of edges $\{\bm e^{\mc G}_{b_j\oplus \bm 0}\}_{j=1}^{2^k-2}$, viewed as undirected, form a spanning tree for the subgraph of vertices $\{{t\oplus 0_{n-k}}\in {\mathsf V}:t\in\{0,1\}^k,t\neq 0_{k}\}$.
    Such a choice always exists, as we have assumed that the pattern transfer subgraph of $\widehat{\mc G}$ is strongly connected.
    Next, consider the following vectors
    \begin{equation}
        \bm k_{j,c}\coleq \bm e^{\mc G}_{b_j\oplus I} - \bm e^{\mc G}_{b_j\oplus c},\quad\forall j=1,\cdots, 2^{k}-2,~\forall c\in{\sf P}^{n-k}\backslash\{I\}.
    \end{equation}
    It is not hard to see $\{\bm k_{j,c}\}$ forms a linearly-independent set, as each $ \bm e^{\mc G}_{b_j\oplus c}~(c\neq I)$ appears in exactly one $\bm k_{j,c}$. Let $K\coleq \mr{span}\{\bm k_{j,c}\}$. We thus have 
    \begin{equation}
        \dim(K)=(2^k-2)\times(2^{n-k}-1) = 2^n-2^{k}-2^{n-k+1}+2
    \end{equation}
    Since $K$ supports on a union of spanning trees, the intersection between $K$ and the cycle space $Z$ must be trivial, i.e., $K\cap Z=\{\bm 0\}$ (see e.g.,~\cite{berge2001theory}). 
    Meanwhile, one has $T_0\perp K$ by Eq.~\eqref{eq:fully_ortho} and $T_0\perp Z$ by the orthogonality between cycle space and cut space. Combining all of these, we have
    $X\supseteq T_0\oplus K\oplus Z$, and that
    \begin{equation}
    \begin{aligned}
        \dim(T_0)&\le \dim(X)-\dim(Z)-\dim(K)\\
        &= \dim(T) - \dim(K)\\
        &= 2^k + 2^{n-k+1} - 3.
    \end{aligned}
    \end{equation}
    The second line uses the fact that $T$ is the complement of $Z$ in $X$.
    On the other hand, by directly counting the number of SDG's in $D_0$, we have $\dim(D_0) = (2^k-2) + 2^{n-k} + (2^{n-k}-1) \ge \dim(T_0)$.
    Since we have shown $T_0\supseteq D_0$, we conclude $D_0=T_0$. This completes our proof.
\end{proof}

\section{Relative-precision learning for 2-qubit gates with fully-local noise}\label{sec:local_relative}

We consider the second type of algorithms that learn gate noise parameters to relative precision. The key is to construct a reduced cycle basis for the gate noise parameter space $L_R^{G} = \mc Q^\T(L\cap X^G)$.
Note that, one might think 
\begin{equation}
    \widetilde{L_R^{G}}\coleq L_R\cap X^G_R = \mc Q^\T(L)\cap\mc Q^\T(X^G)
\end{equation}
should be the natural definition of the reduced learnable gate parameter space. However, $L_R^{G}$ and $\widetilde{L_R^{G}}$ might not be equal in general. Theorem~\ref{th:relative} implies the former is the space of functions that can be amplified and learned to relative precision.
Nevertheless, in some cases $L_R^{G}$ and $\widetilde{L_R^{G}}$ can be shown to be equal.

We focus on the case that every $\mc G\in\mf G$ is supported non-trivially on at most $2$ qubits. Since many existing quantum computing platform uses two-qubit entangling gates such as CZ as a primitive, we expect our method to be practically useful.

\medskip

To begin with, we will first describe an algorithm for finding a basis for $\widetilde{L_R^G}$ and then show how to augment it to an algorithm for finding reduced cycle basis for $\widetilde{L_R^G}$.
Since reduced cycle basis belongs to $L_R^G$, this implies that $\widetilde{L_R^G}\subseteq L_R^G$.
Moreover, by definition $L_R^G\subseteq\widetilde{L_R^G}$, so $\widetilde{L_R^G}=L_R^G$ in this case, and the reduced cycle basis we find for $\widetilde{L_R^G}$ is also a reduced cycle basis for $L_R^G$.

\noindent\textbf{Algorithm \algnum} [Finding a basis for $\widetilde{L_R^G}$].%
\begin{enumerate}
\item Set $A=\{\bm\vartheta_a^\mc G:a\in\sf P^2\backslash\{I_2\},\mc G\in\mf G\}$ and $C=\varnothing$.
\item Find a non-zero vector $\bm c\in \widetilde{L_R^G}\cap\mr{span}(A)$.
\item Pick a $\bm\vartheta$ with $\bm\vartheta\cdot \bm c\neq0$, set $A=A\backslash\{\bm\vartheta\}$ and $C=C\cup\{\bm c\}$.
\item Repeat step 2 and 3 until no $\bm c$ can be found, return $C$.
\end{enumerate}

\begin{lemma}\label{le:alg_correct}
Algorithm 3 outputs a basis for $\widetilde{L_R^G}$.
\end{lemma}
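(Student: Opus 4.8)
The plan is to verify the two defining properties of a basis, linear independence and spanning, by carrying a suitable loop invariant through the iterations of the greedy loop in Algorithm 3. Let $\bm c_1,\dots,\bm c_k$ be the vectors appended to $C$ in order, and let $\bm\vartheta_1,\dots,\bm\vartheta_k$ be the standard basis vectors deleted from $A$ at the corresponding steps, so that at the start of iteration $i$ the candidate set is $A_i = A_1\setminus\{\bm\vartheta_1,\dots,\bm\vartheta_{i-1}\}$ with $A_1$ the full standard basis of $X_R^G$. Termination is immediate because $|A|$ strictly decreases each round and $A_1$ is finite.

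The central step I would establish is the loop invariant
\[
\widetilde{L_R^G} = \mr{span}\{\bm c_1,\dots,\bm c_{i-1}\}\ \oplus\ \bigl(\widetilde{L_R^G}\cap\mr{span}(A_i)\bigr),
\]
together with the fact that $\{\bm c_1,\dots,\bm c_{i-1}\}$ is linearly independent. The base case is trivial since $\mr{span}(A_1)=X_R^G\supseteq\widetilde{L_R^G}$, so the invariant reads $\widetilde{L_R^G}=\{\bm0\}\oplus\widetilde{L_R^G}$. For the inductive step write $W=\widetilde{L_R^G}\cap\mr{span}(A_i)$ and $W'=\widetilde{L_R^G}\cap\mr{span}(A_{i+1})$, and the crux is to show $W=\mr{span}(\bm c_i)\oplus W'$. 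This is where the structure of the algorithm enters: because $A_i$ is a subset of the orthonormal standard basis, every $\bm w\in\mr{span}(A_i)$ decomposes uniquely as $\bm w=\bm w'+(\bm w\cdot\bm\vartheta_i)\bm\vartheta_i$ with $\bm w'\in\mr{span}(A_{i+1})$. Given $\bm w\in W$, subtracting the multiple $\tfrac{\bm w\cdot\bm\vartheta_i}{\bm c_i\cdot\bm\vartheta_i}\bm c_i$, which is well defined since $\bm c_i\cdot\bm\vartheta_i\neq0$ by Step 3, annihilates the $\bm\vartheta_i$-component and lands in $\widetilde{L_R^G}\cap\mr{span}(A_{i+1})=W'$; hence $W=\mr{span}(\bm c_i)+W'$. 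The sum is direct because any multiple of $\bm c_i$ lying in $\mr{span}(A_{i+1})$ is orthogonal to $\bm\vartheta_i$ and must therefore vanish. Combining with the inductive hypothesis and observing that $\bm c_i\in W$ but $\bm c_i\notin\mr{span}\{\bm c_1,\dots,\bm c_{i-1}\}$, else $\bm c_i$ would lie in the intersection of the two direct summands, which is $\{\bm0\}$, advances the invariant and preserves linear independence.

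At termination Step 2 fails, meaning $\widetilde{L_R^G}\cap\mr{span}(A)=\{\bm0\}$, so the invariant collapses to $\widetilde{L_R^G}=\mr{span}(C)$; together with the linear independence carried along, this shows $C$ is a basis. The only step requiring genuine care is the inductive decomposition $W=\mr{span}(\bm c_i)\oplus W'$, and the key enabling fact is precisely that $A$ remains a subset of the orthonormal standard basis throughout, so that deleting one basis vector with nonzero overlap against $\bm c_i$ trims the candidate span by exactly one dimension in a controlled fashion. Everything else is routine linear-algebraic bookkeeping.
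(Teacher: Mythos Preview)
Your proof is correct and follows essentially the same approach as the paper: both maintain a loop invariant of the form $\widetilde{L_R^G}=\mr{span}(C)+\bigl(\widetilde{L_R^G}\cap\mr{span}(A)\bigr)$, advance it via the same subtraction $\bm w-\tfrac{\bm w\cdot\bm\vartheta_i}{\bm c_i\cdot\bm\vartheta_i}\bm c_i$, and read off the conclusion at termination. The only cosmetic difference is that you carry the direct sum (and hence linear independence) inside the invariant, whereas the paper records the sum in the invariant and argues linear independence separately from the triangular support structure of the $\bm c_i$'s against the removed $\bm\vartheta_i$'s.
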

\noindent This is a standard linear algebraic procedure. For completeness, a proof is given below.
\begin{proof}
We prove the following loop invariant: $(\widetilde{L_R^G}\cap\mr{span}(A))+\mr{span}(C)=\widetilde{L_R^G}$.

Initially, $\mr{span}(A)=X_R^G\supseteq \widetilde{L_R^G}$ and $\mr{span}(C)=\{\bm0\}$, so $(\widetilde{L_R^G}\cap\mr{span}(A))+\mr{span}(C)=\widetilde{L_R^G}$.
Now suppose the loop invariant is true before the update step 3.
Then $\forall \bm l\in \widetilde{L_R^G}$, $\exists \bm l_0\in \widetilde{L_R^G}\cap\mr{span}(A)$ and $\bm c_0\in\mr{span}(C)$ s.t. $\bm l=\bm l_0+\bm c_0$.
Then $\bm l_0-\frac{\bm\vartheta\cdot \bm l_0}{\bm\vartheta\cdot \bm c}\bm c\in \widetilde{L_R^G}\cap\mr{span}(A)$.
Moreover, $\bm\vartheta\cdot(\bm l_0-\frac{\bm\vartheta\cdot \bm l_0}{\bm\vartheta\cdot \bm c}\bm c)=0$, so $\bm l_0-\frac{\bm\vartheta\cdot \bm l_0}{\bm\vartheta\cdot \bm c}\bm c\in \widetilde{L_R^G}\cap\mr{span}(A\backslash{\bm\vartheta})$.
Since $\frac{\bm\vartheta\cdot \bm l_0}{\bm\vartheta\cdot \bm c}\bm c+\bm c_0\in\mr{span}(C\cup\{\bm c\})$, by $\bm l=(\bm l_0-\frac{\bm\vartheta\cdot \bm l_0}{\bm\vartheta\cdot \bm c}\bm c)+(\frac{\bm\vartheta\cdot \bm l_0}{\bm\vartheta\cdot \bm c}\bm c+\bm c_0)$ we have $(\widetilde{L_R^G}\cap\mr{span}(A\backslash\bm\vartheta))+\mr{span}(C\cup\{\bm c\})=\widetilde{L_R^G}$, proving the loop invariant.
In the end, $\widetilde{L_R^G}\cap\mr{span}(A)=\{\bm0\}$, so the span of the output $C$ is $\widetilde{L_R^G}$.

Furthermore, for any update step 3, the vector $c$ appended to $C$ has support on $\bm\vartheta$, whereas all the vectors appended afterward do not.
Hence $C$ is linearly independent and the output is a basis for $\widetilde{L_R^G}$.
\end{proof}
\medskip

Before we augment the algorithm, we first give a characterization of the space $\widetilde{L_R^G}$.

\begin{definition}
The boundary operator is a linear operator  $\partial:X_R^G\rightarrow X_R^S$ defined as
\[\partial(\bm\vartheta_a^\mc G)=\sum_{j\in\mr{supp}(\mc G)}\left(\mathds1[\pt(a)_j\neq\bm0]-\mathds1[\pt(\hat{\mc G}(a))_j\neq\bm0]\right)\bm\vartheta_j^S.\]
For $\bm r\in X_R^G$, $\partial(\bm r)$ is called its boundary.
\end{definition}

\begin{lemma}\label{le:LR_iff}
$\forall\bm r\in X_R^G$, $\bm r\in \widetilde{L_R^G}$ if and only if $\partial(\bm r)=\bm 0$.
\end{lemma}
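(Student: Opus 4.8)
The plan is to reduce the membership condition $\bm r\in\widetilde{L_R^G}$ to an orthogonality condition against the reduced gauge space, and then recognize that condition as the vanishing of $\partial(\bm r)$. The backbone is the reduced orthogonal decomposition: since Theorem~\ref{th:reduced} gives $L_R=\mr{Im}\big((\mc P\circ\mc Q)^\T\big)$ and $T_R=\mr{Ker}(\mc P\circ\mc Q)$, the standard identity $\mr{Im}(A^\T)=(\mr{Ker}\,A)^\perp$ yields $L_R=T_R^\perp$ inside $X_R$. Hence for any $\bm r\in X_R^G$ one has $\bm r\in\widetilde{L_R^G}=L_R\cap X_R^G$ if and only if $\bm r\perp T_R$.

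First I would obtain an explicit spanning set for $T_R$. By Theorem~\ref{th:fully_learnability}(c) the fully local model satisfies $\mc Q(T_R)=\mr{span}\{\mf d_s:|s|=1\}$, and since $\mc Q$ is injective with each single-qubit SDG lying in $\imq$, we may set $\bm t_j\coleq\mc Q^{-1}(\mf d_{1_j})$ and write $T_R=\mr{span}\{\bm t_j:j\in[n]\}$. Thus $\bm r\perp T_R$ collapses to the $n$ scalar equations $\bm r\cdot\bm t_j=0$.

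Next I would compute $\bm t_j$ by feeding the complete-space components of $\mf d_{1_j}$ from Eq.~\eqref{eq:sdg_entry} through the inverse transformation of Lemma~\ref{le:fully_local_trans}. Using that $\mc G$ acts trivially off its support, so that $\pt(\mc G(a))_j=\pt(\hat{\mc G}(a))_j$ for $j\in\mr{supp}(\mc G)$ while the gate entries vanish for $j\notin\mr{supp}(\mc G)$, this gives
\begin{equation}
    \bm t_j=\bm\vartheta_j^S-\bm\vartheta_j^M+\sum_{\mc G:\,j\in\mr{supp}(\mc G)}\sum_{a}\big(\mathds 1[\pt(\hat{\mc G}(a))_j\neq 0]-\mathds 1[\pt(a)_j\neq 0]\big)\,\bm\vartheta_a^{\mc G}.
\end{equation}
Since $\bm r\in X_R^G$ carries no SPAM components, only the gate part of $\bm t_j$ contributes to the pairing, and comparing term by term with the definition of the boundary operator shows $\bm r\cdot\bm t_j=-(\partial\bm r)_j$, the negative of the $\bm\vartheta_j^S$-coefficient of $\partial(\bm r)$.

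Putting these together closes the argument: $\partial(\bm r)=\bm 0\Leftrightarrow\bm r\cdot\bm t_j=0\ \text{for all }j\Leftrightarrow\bm r\perp T_R\Leftrightarrow\bm r\in L_R$, which combined with $\bm r\in X_R^G$ is precisely $\bm r\in\widetilde{L_R^G}$. The only points demanding care are bookkeeping ones rather than genuine obstacles: the $X_R$--$X_R'$ identification implicit in writing $L_R=T_R^\perp$, and checking that each $\mf d_{1_j}$ indeed lies in $\imq$ so that $\bm t_j$ is well defined---both of which are guaranteed in the fully local setting. It is worth noting that this derivation nowhere uses the two-qubit restriction, so the criterion holds for general fully local gate noise; the $2$-qubit hypothesis of this section is needed only downstream, when converting the criterion $\partial(\bm r)=\bm 0$ into an efficient basis-construction algorithm.
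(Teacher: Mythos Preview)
Your proof is correct and takes a somewhat different route from the paper's. The paper argues by subtracting off known basis vectors from $\mathsf B_R$: starting from $\bm r\in X_R^G$, it adds the learnable cycles $\bm\vartheta_j^S+\bm\vartheta_a^{\mc G}+\bm\vartheta_{j'}^M$ and $\bm\vartheta_j^S+\bm\vartheta_j^M$ to reduce the question to whether a certain vector supported purely on $X_R^S$ lies in $L_R$, and then observes that any nonzero vector in $X_R^S$ fails to be orthogonal to some single-qubit SDG, hence is unlearnable. Your argument instead invokes $L_R=T_R^\perp$ at the outset, computes the gauge generators $\bm t_j=\mc Q^{-1}(\mf d_{1_j})$ explicitly from Lemma~\ref{le:fully_local_trans} and Eq.~\eqref{eq:sdg_entry}, and reads off $\bm r\cdot\bm t_j=-(\partial\bm r)_j$ directly. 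The two are dual: the paper manipulates on the $L_R$ side using an explicit basis, you manipulate on the $T_R$ side using an explicit gauge basis. Your approach is slightly cleaner and, as you correctly note, makes transparent that the criterion holds for any fully local model regardless of gate support size; the paper's approach has the minor advantage of foreshadowing the basis $\mathsf B_R$ that is reused in the surrounding algorithm.
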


\begin{proof}
Note that any $\bm r\in X_R^G$ can be expanded as $\bm r = \sum_{a\in\sf P^2\backslash\{I_2\},\mc G\in\mf G}r_a^\mc G\bm\vartheta_a^\mc G$. We have
\begin{equation}
\begin{aligned}
&\sum_{a\in\sf P^2\backslash\{I_2\},\mc G\in\mf G}r_a^\mc G\bm\vartheta_a^\mc G\in L_R\\
\Leftrightarrow&\sum_{a\in\sf P^2\backslash\{I_2\},\mc G\in\mf G}\sum_{j\in\mr{supp}(\mc G)}r_a^\mc G\left(\mathds1[\pt(a)_j\neq\bm0]\bm\vartheta_j^S+\mathds1[\pt(\hat{\mc G}(a))_j\neq\bm0]\bm\vartheta_j^M\right)\in L_R\\
\Leftrightarrow&\sum_{a\in\sf P^2\backslash\{I_2\},\mc G\in\mf G}\sum_{j\in\mr{supp}(\mc G)}r_a^\mc G\left(\mathds1[\pt(a)_j\neq\bm0]-\mathds1[\pt(\hat{\mc G}(a))_j\neq\bm0]\right)\bm\vartheta_j^S\in L_R\\
\Leftrightarrow&\sum_{a\in\sf P^2\backslash\{I_2\},\mc G\in\mf G}\sum_{j\in\mr{supp}(\mc G)}r_a^\mc G\left(\mathds1[\pt(a)_j\neq\bm0]-\mathds1[\pt(\hat{\mc G}(a))_j\neq\bm0]\right)\bm\vartheta_j^S=\bm0\\
\Leftrightarrow&\quad\partial(\bm r)=\bm 0.
\end{aligned}
\end{equation}
The second and third lines use the fact that ${\mathsf B}_R$ is a basis for $L_R$. 
In the fourth line we use the fact that any nontrivial functions of the reduced state preparation parameters changes under some single-qubit depolarization gauge, and is thus not learnable (More formally, $\forall \bm0\neq\bm f\in X_R^S$, $\exists |s|=1$, s.t., $\bm f\not\perp\mf d_s$).
\end{proof}

Now we augment the algorithm so that it finds a reduced cycle basis.
That is, we need to guarantee that every $\bm c$ appended is a reduced cycle.
To do this, we specify how to find $\bm c$ in step 2 and how to pick $\bm\vartheta$ in step 3.
The augmented algorithm consists of $2$ stages, for which we will describe how to make the choices separately.

In the first stage, we look for $\bm c=\sum_{a\in\sf P^2\backslash\{I_2\},\mc G\in\mf G}r_a^\mc G\bm\vartheta_a^\mc G$ with the property that all non-zero $r_a^\mc G$ share the same $\mr{supp}(\mc G)$.
That is, we group the base vectors $\bm\vartheta_a^\mc G$ by $\mr{supp}(\mc G)$ and consider them separately.
Fix a common support $s$, the task simplifies substantially since Lemma~\ref{le:LR_iff} simplifies into a requirement on $2$ dimensions (the common support, which is of size $2$).
Then we may perform the analysis on the pattern transfer subgraph (the union of the pattern transfer subgraphs of $\hat{\mc G}$ for all $\mc G\in\mf G$ with $\mr{supp}(\mc G)=s$).
In the following we consider the possible shapes of the pattern transfer subgraph.
Since self loops can be indentified and picked as reduced cycle basis easily at this stage, we do not need to explicitly consider them.

\begin{lemma}\label{le:shape}
For two qubit Clifford gates, there are $3$ types of non-trivial pattern transfer subgraphs, the CNOT-like, SWAP-like and iSWAP-like, as shown in Figure~\ref{fig:PTMs}.
\end{lemma}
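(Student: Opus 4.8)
The plan is to prove the classification by reducing it to the local-Clifford classification of two-qubit Clifford gates, which has exactly four classes, three of them entangling, and then computing one representative subgraph per class.

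First I would establish that the pattern transfer subgraph is an invariant of the local-Clifford equivalence class of $\hat{\mc G}$. Writing $\mc U = \mc U_1\otimes\mc U_2$ and $\mc V = \mc V_1\otimes\mc V_2$ for layers of single-qubit Cliffords, suppose $\hat{\mc G}' = \mc V\circ\hat{\mc G}\circ\mc U$. For any non-identity Pauli $P_a$ one has $\pt(\hat{\mc G}'(a)) = \pt(\hat{\mc G}(\mc U(a)))$, because a single-qubit Clifford permutes $\{X,Y,Z\}$ on its qubit and hence preserves the pattern of any Pauli (in particular $\mc V$ does not change the pattern of the output). Since $\mc U$ permutes the Paulis within each pattern class $\{10,01,11\}$, the multiset of edges of the subgraph is unchanged; relabeling the two physical qubits merely interchanges the weight-one nodes $v_{10}$ and $v_{01}$. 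Thus the shape of the subgraph depends only on the local-Clifford-plus-swap class of $\hat{\mc G}$.

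Second, I would invoke the Clifford analogue of the Cartan/KAK decomposition: every two-qubit Clifford is local-Clifford equivalent to $\exp\!\big(i\tfrac{\pi}{4}\sum_{\sigma\in S}\sigma\otimes\sigma\big)$ for some $S\subseteq\{X,Y,Z\}$, so its class is fixed by $|S|\in\{0,1,2,3\}$, giving exactly four classes represented by the identity ($|S|=0$), $\mr{CNOT}$ ($|S|=1$, locally equivalent to $\mr{CZ}$), $\mr{iSWAP}$ ($|S|=2$), and $\mr{SWAP}$ ($|S|=3$). Equivalently, one argues this over $\mbb F_2$ by putting the symplectic matrix $M\in\mathrm{Sp}(4,\mbb F_2)$ into $2\times2$ block form organized by qubit and using that the ranks of its off-diagonal blocks, together with whether the diagonal blocks vanish, are complete local invariants, yielding the same four cases. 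The identity class is trivial (every edge is a self-loop), leaving exactly three non-trivial classes. Finally, I would exhibit the three shapes by direct computation on the three representatives and check they are pairwise non-isomorphic: for $\mr{CNOT}/\mr{CZ}$ each weight-one node sends two of its Paulis into $v_{11}$ and keeps one self-loop while $v_{11}$ returns edges to both, so the whole subgraph is strongly connected; for $\mr{iSWAP}$ one obtains a distinct connected shape in which the weight-one Paulis are sent predominantly into $v_{11}$; and for $\mr{SWAP}$ the two weight-one nodes are exchanged and $v_{11}$ is fixed, so the subgraph splits into the two-cycle $\{v_{10},v_{01}\}$ and the isolated (self-looped) node $v_{11}$ --- in particular $\mr{SWAP}$ is the only non-trivial class whose subgraph is disconnected, consistent with~\cite{chen2023learnability}. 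These are exactly the shapes drawn in Figure~\ref{fig:PTMs}.

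The main obstacle is the exhaustiveness asserted in the second step: ruling out any entangling two-qubit Clifford outside the $\mr{CNOT}$, $\mr{iSWAP}$, and $\mr{SWAP}$ classes. This is where care is needed, since one must either upgrade the local-\emph{unitary} equivalence provided by the Cartan decomposition to local-\emph{Clifford} equivalence among Clifford representatives (the invariance argument above only applies to local \emph{Cliffords}), or instead carry out the finite double-coset enumeration of $\mathrm{Sp}(4,\mbb F_2)$ modulo the local subgroup $\mathrm{SL}(2,\mbb F_2)^{\times 2}$ extended by the qubit swap. By contrast, the invariance of the subgraph under local Cliffords and the three explicit representative computations are routine finite checks.
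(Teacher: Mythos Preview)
Your proposal is correct and follows the same overall skeleton as the paper---show the subgraph depends only on the local-Clifford class, classify two-qubit Cliffords into four such classes, then read off the three non-trivial shapes---but the classification step is handled differently. The paper cites a structural result of Grier--Schaeffer (their Theorem~26) stating that any two-qubit Clifford factors as (single-qubit Cliffords)$\,\times\,$(optional SWAP)$\,\times\,$(optional generalized CNOT); since the first factor does not affect the pattern transfer subgraph, one is left with the four presence/absence combinations of SWAP and CNOT, and ``both present'' is exactly the iSWAP-like case. This sidesteps precisely the obstacle you flag: it is a \emph{Clifford} decomposition from the outset, so no upgrade from local-unitary to local-Clifford equivalence is needed.

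Your KAK route reaches the same four classes but, as you note, only after justifying that the local unitaries in the Cartan decomposition of a Clifford can themselves be taken Clifford (equivalently, that the Weyl-chamber parameters of a Clifford lie at the vertices $\{0,\pi/4\}^3$ and that choices of $S$ with the same $|S|$ are related by single-qubit Cliffords). Your alternative of enumerating double cosets in $\mathrm{Sp}(4,\mathbb F_2)$ modulo the local subgroup is self-contained and would work, but is more labor than invoking the ready-made Grier--Schaeffer theorem. One minor point: your description of the iSWAP-like shape is a bit imprecise; concretely, for iSWAP one finds $ZI\mapsto IZ$ while $XI,YI$ land in weight two, so the subgraph carries edges both between $v_{10}$ and $v_{01}$ and between each weight-one node and $v_{11}$, distinguishing it from both the CNOT-like (no $10\leftrightarrow 01$ edge) and SWAP-like (no weight-one to weight-two edge) shapes.
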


\begin{figure}[!htp]
    \centering
    \includegraphics[width=0.9\linewidth]{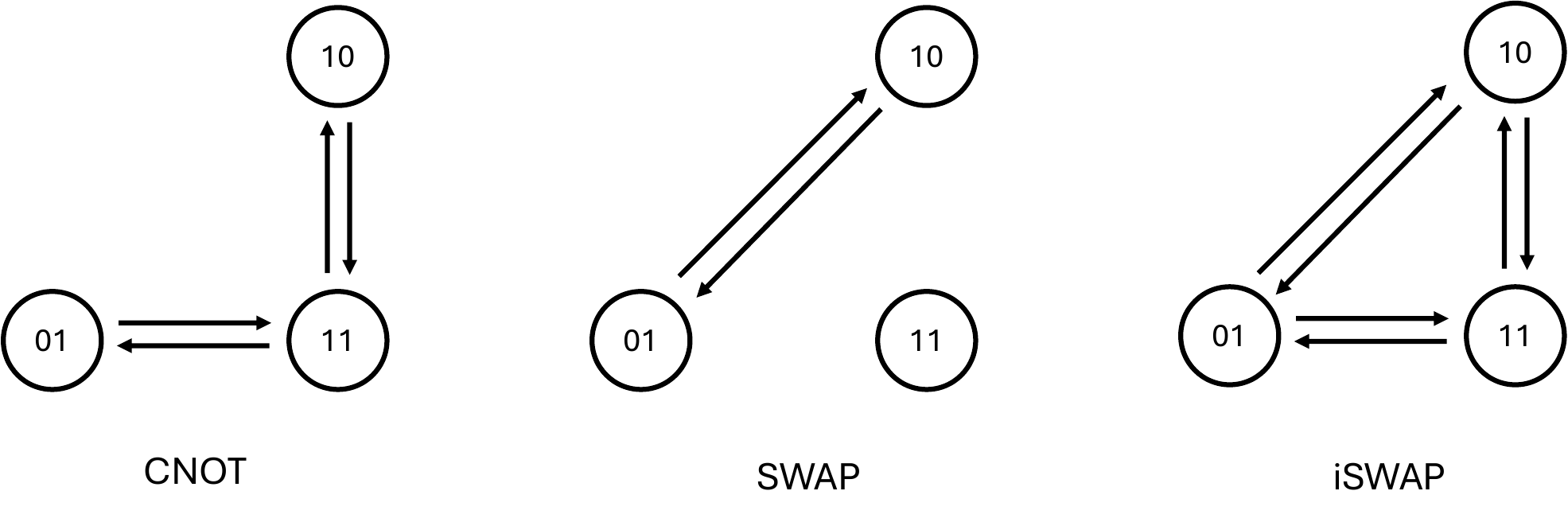}
    \caption{The possible pattern transfer subgraphs of non-trivial $2$-qubit Clifford gates. Self loops are omitted. Multi-edges are shown by single edges.}
    \label{fig:PTMs}
\end{figure}

\begin{proof}
By theorem 26 of~\cite{grier2022classification}, any $2$ qubits Clifford gates is equivalent to at most $3$ layers of gates (each layer may or may not exist): 1. a tensor of single qubit Clifford gates $G\otimes H$ 2. a SWAP gate 3. a generalized CNOT gate $\frac{I\otimes I+P\otimes I+I\otimes Q-P\otimes Q}{2}$.
Since single qubit Clifford gates before or after a $2$ qubit Clifford gate does not change the pattern transfer graph, in the pattern transfer graph sense, any $2$ qubits Clifford gates is equivalent to at most $2$ layers of gates: 1. a SWAP gate 2. a CNOT gate.
When both gates are absent, the pattern transfer graph is trivial (in this case the Clifford gate is a tensor product of single qubit Clifford gates, which by assumption shouldn't be considered as noisy).
If only CNOT is present, the pattern transfer graph is CNOT-like.
If only SWAP is present, the pattern transfer graph is SWAP-like.
If both gate are present, the pattern transfer graph is iSWAP-like.
\end{proof}

From Figure~\ref{fig:PTMs} one can easily see that the union of pattern transfer subgraphs still has only $3$ possible shapes.
The important message of Lemma~\ref{le:shape} is that for any edge in the pattern transfer subgraph, there is another edge of the reverse direction.

Now we resume the description of the first stage.
Obviously the cycles in the pattern transfer subgraphs all corresponds to reduced cycles (by considering Pauli operators with the same support).
One can easily find a cycle basis for the cycle space of the pattern transfer subgraph.
Additionally, the cycle basis can be chosen such that for each edge $\bm\vartheta_i$ in the pattern transfer subgraph, there exists a length $2$ reduced cycle that uses it.
That is, we have picked a reduced cycle $\bm\vartheta_i+\bm\vartheta_i'$ in the basis, where $\bm\vartheta_i'$ is an edge with the reverse direction of $\bm\vartheta_i$.
In the first stage of the algorithm, we just pick $\bm c$ to be such a cycle basis.
They are naturally reduced cycles.
This can always be done due to the vast freedom in the choice of $\bm\vartheta$.
For the pattern transfer subgraph, at each iteration we eliminate an edge from $A$.
In the end the remaining edges form a spanning tree.
Again due to the freedom in the choice of $\bm\vartheta$ we are free to choose which spannings tree should be kept in $A$ and direction of the edges of the remaining spanning tree.
For CNOT-like and SWAP-like pattern transfer subgraphs, there is only one possible shape of the spanning tree.
As for iSWAP-like pattern transfer graph, wlog we pick $(01,11)$ and $(10,11)$ as the spanning tree.
The edge orientation for the spanning trees could be arbitrary.

\begin{figure}[!htp]
\subfloat[There exists two type-1 boundaries with the same support.\label{subfig:2tp1}]{
\begin{minipage}{\textwidth}
\begin{center}
\begin{tabular}{ c c c c c c c c c }
$0$ & $=$ & $r_{a_1}^{\mc G_1}\partial(\bm\vartheta_{a_1}^{\mc G_1})$ & $+$ & $r_{a_2}^{\mc G_2}\partial(\bm\vartheta_{a_2}^{\mc G_2})$ & $+$ & $r_{a_3}^{\mc G_3}\partial(\bm\vartheta_{a_3}^{\mc G_3})$ & $+$ & $r_{a_4}^{\mc G_4}\partial(\bm\vartheta_{a_4}^{\mc G_4})$ \\
$\begin{pmatrix} 0 \\ 0 \\ 0 \end{pmatrix}$ & $=$ & $\begin{pmatrix} 0 \\ \textcolor{red}{2} \\ 0 \end{pmatrix}$ & $+$ & $\begin{pmatrix} 0 \\ \textcolor{red}{-1} \\ 0 \end{pmatrix}$ & $+$ & $\begin{pmatrix} 0 \\ -1 \\ 1 \end{pmatrix}$ & $+$ & $\begin{pmatrix} 0 \\ 0 \\ -1 \end{pmatrix}$ \\
& & & & \huge$\Downarrow$ & & & & \\
$\begin{pmatrix} 0 \\ 0 \\ 0 \end{pmatrix}$ & $=$ & $\begin{pmatrix} 0 \\ \textcolor{red}{1} \\ 0 \end{pmatrix}$ & $+$ & $\begin{pmatrix} 0 \\ \textcolor{red}{-1} \\ 0 \end{pmatrix}$ & $+$ & $\begin{pmatrix} 0 \\ 0 \\ 0 \end{pmatrix}$ & $+$ & $\begin{pmatrix} 0 \\ 0 \\ 0 \end{pmatrix}$ \\
\end{tabular}
\end{center}
\end{minipage}
}\\
\subfloat[Chasing procedure at both end terminates with a type-1 boundary.\label{subfig:tp2chain}]{
\begin{minipage}{\textwidth}
\begin{center}
\begin{NiceTabular}{ c c c c c c c c c }
$0$ & $=$ & $r_{a_{-1}}^{\mc G_{-1}}\partial(\bm\vartheta_{a_{-1}}^{\mc G_{-1}})$ & $+$ & $r_{a_0}^{\mc G_0}\partial(\bm\vartheta_{a_0}^{\mc G_0})$ & $+$ & $r_{a_1}^{\mc G_1}\partial(\bm\vartheta_{a_1}^{\mc G_1})$ & $+$ & $r_{a_2}^{\mc G_2}\partial(\bm\vartheta_{a_2}^{\mc G_2})$ \\
$\begin{pmatrix} 0 \\ 0 \\ 0 \end{pmatrix}$ & $=$ & $\begin{pmatrix} \textcolor{red}{1} \\ 0 \\ 0 \end{pmatrix}$ & $+$ & $\begin{pmatrix} \textcolor{red}{-1} \\ \textcolor{red}{1} \\ 0 \end{pmatrix}$ & $+$ & $\begin{pmatrix} 0 \\ \textcolor{red}{-1} \\ \textcolor{red}{1} \end{pmatrix}$ & $+$ & $\begin{pmatrix} 0 \\ 0 \\ \textcolor{red}{-1} \end{pmatrix}$ \\
& & & & \huge$\Downarrow$ & & & & \\
$\begin{pmatrix} 0 \\ 0 \\ 0 \end{pmatrix}$ & $=$ & $\begin{pmatrix} \textcolor{red}{1} \\ 0 \\ 0 \end{pmatrix}$ & $+$ & $\begin{pmatrix} \textcolor{red}{-1} \\ \textcolor{red}{1} \\ 0 \end{pmatrix}$ & $+$ & $\begin{pmatrix} 0 \\ \textcolor{red}{-1} \\ \textcolor{red}{1} \end{pmatrix}$ & $+$ & $\begin{pmatrix} 0 \\ 0 \\ \textcolor{red}{-1} \end{pmatrix}$ \\
\CodeAfter
\begin{tikzpicture}
\draw[-{.latex}, ultra thick, opacity=0.3, transform canvas={yshift=12pt}] (2-5) to  (2-3);
\draw[-{.latex}, ultra thick, opacity=0.3] (2-5) to  (2-7);
\draw[-{.latex}, ultra thick, opacity=0.3, transform canvas={yshift=-12pt}] (2-7) to  (2-9);
\end{tikzpicture}
\end{NiceTabular}
\end{center}
\end{minipage}
}\\
\subfloat[A chasing procedure ends with a loop.\label{subfig:tp2ring}]{
\begin{minipage}{\textwidth}
\begin{center}
\begin{NiceTabular}{ c c c c c c c c c c c c c }
$0$ & $=$ & $r_{a_{-1}}^{\mc G_{-1}}\partial(\bm\vartheta_{a_{-1}}^{\mc G_{-1}})$ & $+$ & $r_{a_0}^{\mc G_0}\partial(\bm\vartheta_{a_0}^{\mc G_0})$ & $+$ & $r_{a_1}^{\mc G_1}\partial(\bm\vartheta_{a_1}^{\mc G_1})$ & $+$ & $r_{a_2}^{\mc G_2}\partial(\bm\vartheta_{a_2}^{\mc G_2})$ & $+$ & $r_{a_3}^{\mc G_3}\partial(\bm\vartheta_{a_3}^{\mc G_3})$ & $+$ & $r_{a_4}^{\mc G_4}\partial(\bm\vartheta_{a_4}^{\mc G_4})$ \\
$\begin{pmatrix} 0 \\ 0 \\ 0 \\ 0 \end{pmatrix}$ & $=$ & $\begin{pmatrix} 0 \\ 0 \\ 0 \\ 1 \end{pmatrix}$ & $+$ & $\begin{pmatrix} 1 \\ 0 \\ 0 \\ -1 \end{pmatrix}$ & $+$ & $\begin{pmatrix} \textcolor{red}{-2} \\ \textcolor{red}{2} \\ 0 \\ 0 \end{pmatrix}$ & $+$ & $\begin{pmatrix} 0 \\ \textcolor{red}{-1} \\ \textcolor{red}{1} \\ 0 \end{pmatrix}$ & $+$ & $\begin{pmatrix} \textcolor{red}{1} \\ 0 \\ \textcolor{red}{-1} \\ 0 \end{pmatrix}$ & $+$ & $\begin{pmatrix} 0 \\ -1 \\ 0 \\ 0 \end{pmatrix}$ \\
& & & & & & \huge$\Downarrow$ & & & & & & \\
$\begin{pmatrix} 0 \\ 0 \\ 0 \\ 0 \end{pmatrix}$ & $=$ & $\begin{pmatrix} 0 \\ 0 \\ 0 \\ 0 \end{pmatrix}$ & $+$ & $\begin{pmatrix} 0 \\ 0 \\ 0 \\ 0 \end{pmatrix}$ & $+$ & $\begin{pmatrix} \textcolor{red}{-1} \\ \textcolor{red}{1} \\ 0 \\ 0 \end{pmatrix}$ & $+$ & $\begin{pmatrix} 0 \\ \textcolor{red}{-1} \\ \textcolor{red}{1} \\ 0 \end{pmatrix}$ & $+$ & $\begin{pmatrix} \textcolor{red}{1} \\ 0 \\ \textcolor{red}{-1} \\ 0 \end{pmatrix}$ & $+$ & $\begin{pmatrix} 0 \\ 0 \\ 0 \\ 0 \end{pmatrix}$ \\
\CodeAfter
\begin{tikzpicture}
\draw[-{.latex}, ultra thick, opacity=0.3, transform canvas={yshift=-18pt}] (2-5) to  (2-3);
\draw[-{.latex}, ultra thick, opacity=0.3, transform canvas={yshift=18}] (2-5) to  (2-7);
\draw[-{.latex}, ultra thick, opacity=0.3, transform canvas={yshift=6}] (2-7) to  (2-9);
\draw[-{.latex}, ultra thick, opacity=0.3, transform canvas={yshift=-6}] (2-9) to  (2-11);
\draw[-{.latex}, ultra thick, opacity=0.3, transform canvas={yshift=18}, dashed] (2-11) to  (2-7);
\end{tikzpicture}
\end{NiceTabular}
\end{center}
\end{minipage}
}
\caption{Illustration of the second stage of the algorithm. In each subgraph, the upper equation stands for some $\bm c'$ such that $\partial(\bm c')=0$ while the bottom stands for a reduced cycle $\bm c\in L_R^G$ constructed based on $\bm c'$. Concrete examples of $\bm c$ corresponding to each class is given in Fig.~\ref{fig:IlluConcrete}.}
\label{fig:algIllu}
\end{figure}

\begin{figure}[!htp]
    \centering
    \includegraphics[width=0.8\linewidth]{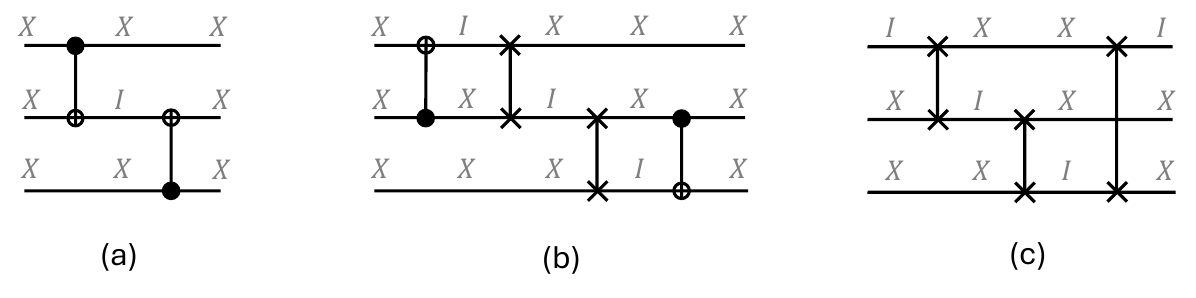}
    \caption{Concrete examples of three possible classes of reduced cycles involving more than one gate, corresponding to Fig.~\ref{fig:algIllu}. The 2-qubit gates appearing here are CNOT and SWAP.}
    \label{fig:IlluConcrete}
\end{figure}

Now the algorithm proceeds to the second stage, where we find reduced cycles across different supports.
This stage is illustrated in Figure~\ref{fig:algIllu}.
According to Lemma~\ref{le:LR_iff} we associate the base vectors in $A$ with their boundaries.
Thanks to Lemma~\ref{le:shape}, there are only two types of boundaries: Type 1 boundary takes $\pm1$ at one entry and $0$ at other places, corresponding to edges (10,11) or (01,11) in Fig.~\ref{fig:PTMs}.
Type $2$ boundary takes $1$ at one entry, $-1$ at another entry, and $0$ at other places, corresponding to edge (10,01) in Fig.~\ref{fig:PTMs}.
Moreover, if we consider the edge of reverse direction, the corresponding boundary is negated.
In the following we will find some candidate $\bm c=\sum\mathrm{sgn}_i\bm\vartheta_i$ and then prove that it is a reduced cycle when all the signs are positive.
When some signs are negative, we revert some edges to make the signs positive.
This is actually a simplification and hides the following fact: though $\bm c$ itself is not a reduced cycle, $\tilde{\bm c}=\bm c+\sum_{\mathrm{sgn}_i=-1}(\bm\vartheta_i+\bm\vartheta_i')$ ($\bm c$ with edge directions reverted) is a reduced cycle.
Moreover, since the $\bm\vartheta_i+\bm\vartheta_i'$s are already in $C$, putting $\bm c$ or $\tilde{\bm c}$ into $C$ is equivalent in terms of making $C$ a basis for $\widetilde{L_R^G}$.

According to the proof of Lemma~\ref{le:alg_correct}, if there are remaining degree of freedom in $\widetilde{L_R^G}$, there should exist a non-zero vector $\bm c'\in \widetilde{L_R^G}\cap\mr{supp}(A)$.
Then $\partial(\bm c')=\bm0$.
We can find such $\bm c'=\sum_ir_{a_i}^{\mc G_i}\bm\vartheta_{a_i}^{\mc G_i}$ (the summation is only over non-zero terms) via simple linear algebra on the boundaries of the edges in $A$.
The $r_{a_i}^{\mc G_i}$ found is rational and thus can be made integral by multiplying them by the least common multiple of the denominators.

{We now explain how to find a reduced cycle $\bm c\in L_R^G$ based on $\bm c'$.}
Expand $\partial(\bm c')=\sum_ir_{a_i}^{\mc G_i}\partial(\bm\vartheta_{a_i}^{\mc G_i})=\bm0$.
Among the summation, if there are two type-1 boundaries with non-zero entry at the same qubit
(reverting the edge directions if needed, wlog assume their sum is $\bm0$: $\partial(\bm\vartheta_{a_i}^{\mc G_i})+\partial(\bm\vartheta_{a_j}^{\mc G_j})=\bm0$), then $\bm\vartheta_{a_i}^{\mc G_i}+\bm\vartheta_{a_j}^{\mc G_j}$ is a reduced cycle that can be chosen as $\bm c$.
To see this, first observe that $\mr{supp}(\mc G_i)\neq\mr{supp}(\mc G_j)$, since otherwise $\bm\vartheta_{a_i}^{\mc G_i}$ and $\bm\vartheta_{a_j}^{\mc G_j}$ are two edges of a spanning tree.
But $\mr{supp}(\mc G_i)$ and $\mr{supp}(\mc G_j)$ intersect at the shared non-zero entry of the two boundaries.
So wlog assume $\mr{supp}(\mc G_i)=\{1,2\}$, $\mr{supp}(\mc G_j)=\{2,3\}$, $\partial(\bm\vartheta_{a_i}^{\mc G_i})=[0,1,0,\cdots]$ and $\partial(\bm\vartheta_{a_j}^{\mc G_j})=[0,-1,0,\cdots]$.
Then $\bm\vartheta_{a_i}^{\mc G_i}+\bm\vartheta_{a_j}^{\mc G_j}=\mc Q^\T(\bm e_{(a_i)_{12}X_3I_4\cdots I_n}^{\mc G_i}+\bm e_{X_1(a_j)_{23}I_4\cdots I_n}^{\mc G_j})$, 
so it is a reduced cycle.

If there are no two type-1 boundaries with the same non-zero entry, since the sum of the boundaries is $\bm0$, there must be some type-2 boundaries in the summation.
Pick a type-2 boundary $r_{a_0}^{\mc G_0}\partial{\bm\vartheta_{a_0}^{\mc G_0}}$.
This vector has a positive entry.
So there must be another boundary $r_{a_1}^{\mc G_1}\partial{\bm\vartheta_{a_1}^{\mc G_1}}$ that is negative on the entry.
Either it is a type-1 boundary and the procedure terminates here, or it is a type-2 boundary and it has another positive entry, and we repeat this chasing procedure.
The procedure cannot last infinitely, so at some point we must terminate at a type-1 boundary, or we have chased back to an entry we have already considered.
We do the same chasing procedure for the negative entry of $r_{a_0}^{\mc G_0}\partial{\bm\vartheta_{a_0}^{\mc G_0}}$.

If both chasing procedure terminates with a type-1 boundary, we get the following vector of boundary $\bm0$: $\sum_{i=-t_1}^{t_2}\mr{sgn}(r_{a_i}^{\mc G_i})\bm\vartheta_{a_i}^{\mc G_i}$, where $\bm\vartheta_{a_{-t_1}}^{\mc G_{-t_1}}$ and $\bm\vartheta_{a_{t_2}}^{\mc G_{t_2}}$ have type-1 boundaries and the others have type-2 boundaries.
Reverting the directions of basis if necessary, we can further assume all the signs are $+1$.
Then we get a reduced cycle that can be used as $\bm c$.
To see this, note that since type-2 boundaries only come from SWAP-like pattern transfer graph, different type-2 boundaries have different supports (type-2 boundaries have the same support as the corresponding $\mc G$).
{Let $m=t_1+t_2-1$. Wlog we assume the support of $\mc G_{-t_1+1},\ldots,\mc G_{t_2-1}$ are 
$\{1,2\},\{2,3\},\{3,4\},\cdots,\{m,m+1\}$, and that $\partial(\bm\vartheta_{a_{-t_1}}^{\mc G_{-t_{1}}})=[1,0,\cdots,0]$, $\partial(\bm\vartheta_{a_{t_2}}^{\mc G_{t_{2}}})=[\cdots,0,-1,0,\cdots]$. Then, 
\begin{equation}\label{eq:type1-type2}
\begin{aligned}
    \sum_{i=-t_1}^{t_2}\bm\vartheta_{a_i}^{\mc G_i}=&\mc Q^\T(\bm e_{P_1Q_2Q_3\cdots Q_n}^{\mc G_{-t_1}}+\bm e_{(a_{-t_1+1})_{12}Q_3\cdots Q_n}^{\mc G_{-t_1+1}}+\bm e_{Q_1(a_{-t_1+2})_{23}Q_3\cdots Q_n}^{\mc G_{-t_1+2}}+\cdots+\\
    &\bm e_{Q_1\cdots Q_{m-1}(a_{t_2-1})_{m,m+1}Q_{m+2}\cdots Q_n}^{\mc G_{t_2-1}} + \bm e_{Q'_1\cdots Q'_mI_{m+1}Q'_{m+2}\cdots Q'_n}^{\mc G_{t_2}} ),
\end{aligned}
\end{equation}
where we use $a_{-t_1} = P_1Q_k$ and $a_{t_2} = Q'_{k'}I_{m+1}$, for some indexes $k,k'$ and non-trivial single-qubit Pauli operators $P,Q,Q'$. Note that the patterns of $a_{-t_1+1},\cdots,a_{t_2-1}$ are all $01$, according to their corresponding boundaries. It is not hard to see the above equation represents the action of a sequence of SWAP-like gates, thus yields a reduced cycle which can be chosen as $\bm c$. An example is given in Fig.~\ref{fig:algIllu}b.}

Finally, consider the case where some chasing procedure ends with a loop.
That is, we have found a cycle of type-2 boundaries $\sum_{i=1}^m\mr{sgn}(r_{a_i}^{\mc G_i})\bm\vartheta_{a_i}^{\mc G_i}$.
Revert directions if necessary, wlog we can assume that all the signs are positive.
Wlog assume the support of $\mc G_1,\mc G_2,\cdots,\mc G_m$ are $\{1,2\},\{2,3\},\cdots,\{m,1\}$, and that $\partial(\bm\vartheta_{a_1}^{\mc G_1})$ = $[-1,1,0,\cdots]$.
Then,
\begin{equation}
    \sum_i\bm\vartheta_{a_i}^{\mc G_i}=\mc Q^\T(\bm e_{(a_1)_{12}X_3X_4\cdots X_n}^{\mc G_1}+\bm e_{X_1(a_2)_{23}X_4\cdots X_n}^{\mc G_2}+\cdots + \bm e_{X_2\cdots X_{m-1}(a_m)_{m,1}X_{m+1}\cdots X_n}^{\mc G_m})
\end{equation}
so it is a reduced cycle.
Note that the patterns of $a_{1},\cdots,a_{m}$ are all $01$ according to their corresponding boundaries.
This case is illustrated in Figure~\ref{subfig:tp2ring}.
Thus we see that as long as there are remaining degree of freedom in $\widetilde{L_R^G}$, we can find a reduced cycle, so algorithm 3 can terminate correctly. Since all the basis vectors are also elements of $L_R^G$, we have proved in this case $L_R^G = \widetilde{L_R^G}$.

\section{About quasi-local Pauli channels}\label{app:factorize}

Throughout this section, when we talk about a Pauli channel, we always assume all of its eigenvalues to be strictly positive, so that $x_a = -\log\lambda_a$ is well-defined.

\begin{lemma}\label{le:quasi_local_complete}
    Let $\Omega = 2^{[n]}\backslash\varnothing$. Then any $n$-qubit Pauli channel is $\Omega$-local.
\end{lemma}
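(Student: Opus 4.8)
The plan is to reduce the statement to the M\"obius inversion already recorded in Lemma~\ref{le:quasi_local_trans}. First I would check that $\Omega = 2^{[n]}\backslash\varnothing$ is a legitimate factor set: this is immediate, since $\Omega$ contains every nonempty subset of $[n]$, so the closure requirement ($\forall\nu\in\Omega$, $\forall\varnothing\subsetneq\mu\subseteq\nu$, $\mu\in\Omega$) holds trivially. Next, and crucially, I would observe that for this choice of $\Omega$ the constraint $b\sim\Omega$ is vacuous: the support of any Pauli operator is a subset of $[n]$, and $[n]\in\Omega$, so every $b\neq I_n$ satisfies $b\sim\Omega$. Hence the product in Definition~\ref{de:omega_local_channel} ranges over all $b$ with $b\triangleleft a$, $b\neq I_n$, with no locality restriction.

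Then I would pass to logarithms. Since we assume $\lambda_a>0$ for all $a$, the quantities $x_a\coleq-\log\lambda_a$ are well-defined, and the $\Omega$-locality condition is equivalent to the linear system $x_a = \sum_{b\triangleleft a}r_b$ for all $a\neq I_n$ (setting $r_{I_n}=x_{I_n}=0$). So the task is exactly to exhibit coefficients $\{r_b\}$ solving this triangular system; such coefficients always exist because the map $\{r_b\}\mapsto\{x_a\}$ is the invertible zeta transform of the majorization poset $(\Pn,\triangleleft)$.

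To produce the $r_b$ explicitly I would invoke M\"obius inversion. The down-set $\{b:b\triangleleft a\}$ factorizes qubit-by-qubit: on each qubit $i$ with $a_i\neq I$ the choice $b_i\in\{I,a_i\}$ is a two-element chain, while on each qubit with $a_i=I$ one has $b_i=I$ forced. Thus the interval $[I_n,a]$ is isomorphic to the Boolean lattice $2^{\supp(a)}$, whose M\"obius function is $\mu(b,a)=(-1)^{w(a)-w(b)}$. Inverting gives $r_b=\sum_{a\triangleleft b}(-1)^{w(b)-w(a)}x_a$, which is precisely the inverse transformation of Lemma~\ref{le:quasi_local_trans} specialized to $\Omega^{\mc G}=2^{[n]}\backslash\varnothing$; substituting back recovers $x_a=\sum_{b\triangleleft a}r_b$, and hence $\lambda_a=\prod_{b\triangleleft a,\,b\sim\Omega}\exp(-r_b)$, showing that $\Lambda$ is $\Omega$-local.

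I do not anticipate a genuine obstacle here: the content is essentially a one-line corollary of the inclusion--exclusion identity underlying Lemma~\ref{le:quasi_local_trans}. The only points that require care are the two bookkeeping observations above, namely that $b\sim\Omega$ imposes nothing when $\Omega=2^{[n]}\backslash\varnothing$, and that the M\"obius function of the majorization poset factorizes across qubits into the Boolean-lattice form $(-1)^{w(a)-w(b)}$. After these are in place, the positivity assumption $\lambda_a>0$ guarantees that the logarithms, and therefore the $r_b$, are real and finite.
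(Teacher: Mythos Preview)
Your proposal is correct and follows essentially the same route as the paper: both reduce to showing that the triangular system $x_a=\sum_{b\triangleleft a}r_b$ is invertible, with inverse $r_b=\sum_{a\triangleleft b}(-1)^{w(b)-w(a)}x_a$, and both identify the interval $[I_n,a]$ with a Boolean lattice so that inclusion--exclusion (equivalently, the binomial theorem) verifies the inversion. The only cosmetic difference is that you invoke Lemma~\ref{le:quasi_local_trans} and the language of M\"obius functions, whereas the paper writes out the binomial-theorem substitution explicitly.
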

\begin{proof}
    For any set of parameters $\{x_a:a\in{\sf P}^n\}$, we claim there always exist $\{r_b:b\in{\sf P}^n\}$ such that,
    \begin{align}
        x_a &= \sum_{b\triangleleft a} r_b,\quad&&\forall a\in{\sf P}^n.\label{eq:in_out_1} \\
        r_b &= \sum_{a\triangleleft b}(-1)^{w(a)-w(b)}x_a,\quad&&\forall b\in{\sf P}^n.\label{eq:in_out_2}
    \end{align}
    Indeed, substituting the second expression into R.H.S. of the first yields
    \begin{equation}
        \begin{aligned}
            \sum_{b\triangleleft a}\sum_{c\triangleleft b}(-1)^{w(c)-w(b)}x_c &= \sum_{c\triangleleft a}x_c\sum_{b: c\triangleleft b\triangleleft a}(-1)^{w(c)-w(b)}\\
            &= \sum_{c\triangleleft a}x_c\sum_{k=0}^{|w(a)-w(c)|}\binom{w(a)-w(c)}{k}(-1)^{k}\\
            &= \sum_{c\triangleleft a}x_c\mathds 1[w(a)=w(c)]\\
            &= x_a.
        \end{aligned}
    \end{equation}
    The first line changes the order of summation. The third line uses the binomial theorem. Further note that $x_{I_n} = 0\Leftrightarrow r_{I_n}=0$. Since any Pauli channel is trace-preserving (i.e., $x_{I_n}=0$), it is $2^{[n]}\backslash\varnothing$-local by choosing $r_b$ according to Eq.~\eqref{eq:in_out_2}.
\end{proof}

\medskip

The following Lemma explains the equivalence between the quasi-local Pauli channels we consider and the sparse Pauli-Lindblad noise model.
We note that basically the same results are given by \cite[ Theorem SIV.1]{van2023probabilistic}.
\begin{lemma}\label{le:factorize}
    Given an $n$-qubit Pauli channel $\Lambda$ and a factor set $\Omega$, if $\Lambda$ can be written in the following concatenating form
    \begin{equation}\label{eq:eta_form}
        \Lambda(\cdot) = \mathop{\bigcirc}_{b\sim \Omega}\left(\eta_b P_b(\cdot) P_b + (1-\eta_b)(\cdot)\right),
    \end{equation}
    with real parameters $\{\eta_b\}_{b\sim \Omega}$ such that $\eta_b<1/2,~\forall b\neq I_n$,
    then $\Lambda$ is $\Omega$-local. Conversely, any $\Omega$-local Pauli channel can be written in the above form.
\end{lemma}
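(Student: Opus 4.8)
The plan is to reduce the whole statement to a comparison of two linear parametrizations of the same space of logarithmic eigenvalues, after first recording what the concatenating form does to the Pauli eigenvalues. The elementary factor $\mc E_b(\cdot)=\eta_b P_b(\cdot)P_b+(1-\eta_b)(\cdot)$ has error rates $p_{I_n}=1-\eta_b$, $p_b=\eta_b$, so by the Walsh--Hadamard transform its eigenvalue on $P_a$ is $(1-\eta_b)+\eta_b(-1)^{\langle a,b\rangle}=1-2\eta_b\mathds 1[\langle a,b\rangle=1]$. Since all Pauli-diagonal maps are simultaneously diagonal in the Pauli basis, the eigenvalues of the concatenation multiply, and writing $s_b\coleq-\log(1-2\eta_b)$ (the $b=I_n$ factor being the identity, as $\langle a,I_n\rangle=0$) the log-eigenvalues become
\[
x_a=\sum_{b\sim\Omega,\,b\neq I_n}\mathds 1[\langle a,b\rangle=1]\,s_b,\qquad\forall a\neq I_n.
\]

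For the forward direction I would show this $x$ is $\Omega$-local by computing its M\"obius coefficients $r_c=\sum_{a\triangleleft c}(-1)^{w(a)-w(c)}x_a$, which exist and reproduce $x$ by Lemma~\ref{le:quasi_local_complete}, and proving they vanish whenever $\supp(c)\notin\Omega$. Exchanging sums gives $r_c=\sum_b s_b\,S(c,b)$ with $S(c,b)=\sum_{a\triangleleft c}(-1)^{w(a)-w(c)}\mathds 1[\langle a,b\rangle=1]$. Parametrizing $a\triangleleft c$ by the subset $T\subseteq\supp(c)$ of qubits where $a$ agrees with $c$ and using $\mathds 1[\,\cdot\,\text{odd}]=\tfrac12\bigl(1-(-1)^{\langle a,b\rangle}\bigr)$, the inner sum factorizes (up to a sign) into $\prod_{i\in\supp(c)}\bigl(1-(-1)^{\langle c_i,b_i\rangle}\bigr)$, which is nonzero only if $c_i$ anticommutes with $b_i$ at every $i\in\supp(c)$, i.e. only if $\supp(c)\subseteq\supp(b)$. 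Because $\Omega$ is a factor set (downward closed), $\supp(b)\in\Omega$ together with $\supp(c)\subseteq\supp(b)$ would force $\supp(c)\in\Omega$; contrapositively, $\supp(c)\notin\Omega$ kills every surviving term, so $r_c=0$ there and $\Lambda$ is $\Omega$-local.

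For the converse I would compare the two linear maps $N,M\colon\mbb R^{\mc I}\to\{x_a\}$ on the index set $\mc I=\{b:b\sim\Omega,\,b\neq I_n\}$, defined by $N(\bm r)_a=\sum_{b\triangleleft a}r_b$ and $M(\bm s)_a=\sum_b\mathds 1[\langle a,b\rangle=1]\,s_b$. By Definition~\ref{de:omega_local_channel}, $\mr{Im}(N)$ is exactly the set of $\Omega$-local log-eigenvalue vectors, and the forward computation shows $\mr{Im}(M)\subseteq\mr{Im}(N)$. The two domains have the same dimension $|\mc I|$; $N$ is injective by the M\"obius inversion of Lemma~\ref{le:quasi_local_complete}, and $M$ is injective by the symplectic Fourier transform: extending $\bm s$ by zero to all of ${\sf P}^n$, the equation $M(\bm s)=0$ reads $\sum_b(-1)^{\langle a,b\rangle}\tilde s_b=\sum_b\tilde s_b$ for every $a$, and orthogonality of the symplectic characters $a\mapsto(-1)^{\langle a,b\rangle}$ forces $\tilde s_b=0$ for all $b\neq I_n$. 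Hence $\dim\mr{Im}(M)=\dim\mr{Im}(N)$, the two images coincide, and every $\Omega$-local $\Lambda$ arises from some $\bm s$; setting $\eta_b=(1-e^{-s_b})/2$ recovers the concatenating form, and since $e^{-s_b}>0$ one automatically has $\eta_b<1/2$ for $b\neq I_n$.

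The main obstacle is the combinatorial identity at the heart of the forward direction, namely showing that the M\"obius coefficients of the anticommutation parametrization are supported on $\Omega$; this is exactly the step that converts the symplectic relation $\langle a,b\rangle$ into the majorization relation $\triangleleft$, and the factor-set (downward-closure) hypothesis on $\Omega$ is precisely what guarantees this conversion respects the locality constraint. Once this $S(c,b)$ computation and the injectivity of both parametrizations are in hand, the equivalence follows by the dimension count, and the bound $\eta_b<1/2$ is immediate.
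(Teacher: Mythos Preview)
Your argument is correct. The forward direction follows the same overall strategy as the paper---compute the M\"obius coefficients $r_c$ of the concatenation and show they vanish whenever $\supp(c)\notin\Omega$---but you carry out the key cancellation differently: the paper pairs off terms $a,\tilde a$ differing at a single coordinate $i$ with $b_i=I$, $c_i\neq I$, whereas you factorize the inner sum over qubits and read off directly that the only surviving terms have $\supp(c)\subseteq\supp(b)$. Both arguments use the downward closure of $\Omega$ in exactly the same way.

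Your converse is genuinely different, and in fact cleaner. The paper constructs $\tau_b$ explicitly via the inverse symplectic Fourier transform, verifies $x_a=\sum_b\langle a,b\rangle\tau_b$, and then runs a second pairing argument to show that $\tau_b=0$ whenever $b\not\sim\Omega$. You instead observe that both the M\"obius parametrization $N$ and the anticommutation parametrization $M$ are injective linear maps from $\mbb R^{\mc I}$ into $\{x_a\}$, the forward direction gives $\mr{Im}(M)\subseteq\mr{Im}(N)$, and the images therefore coincide by a dimension count. This buys you the converse without any further combinatorics; the paper's explicit formula for $\tau_b$ in terms of $\{r_c\}$ is never needed. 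The trade-off is that the paper's proof is constructive---one can read off $\tau_b$ (and hence $\eta_b$) directly from the $r_c$'s---whereas yours only guarantees existence. For the purposes of the lemma as stated, your route is at least as good.
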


\begin{proof}

    Suppose Eq.~\eqref{eq:eta_form} holds. By definition of the Pauli eigenvalues,
    \begin{equation}
        \begin{aligned}
            \lambda_a = \Tr[P_a\Lambda(P_a)]/2^n=\prod_{b\sim \Omega}(1-2\eta_b)^{\expval{a,b}}.
        \end{aligned}
    \end{equation}
    Taking a log on both sides yields (recall that we require $\eta_b<1/2$)
    \begin{equation}\label{eq:x_factor_wh}
        x_a\coleq -\log(\lambda_a) = -\sum_{b\sim \Omega} \expval{a,b}\log(1-2\eta_b) \eqcol \sum_{b\sim \Omega}\expval{a,b}\tau_b,
    \end{equation}
    where we define $\tau_b\coleq -\log(1-2\eta_b)$. On the other hand, recall $\{r_b\}$ as defined in Eq.~\eqref{eq:in_out_2}.
    Substituting Eq.~\eqref{eq:x_factor_wh} into the expression of $r_b$ yields,
    \begin{equation}
    \begin{aligned}
        r_b &= \sum_{c\sim \Omega}\sum_{a\triangleleft b}(-1)^{w(a)-w(b)}\expval{a,c}\tau_c.
    \end{aligned}
    \end{equation}
    Now we claim that $r_b = 0$ unless $b\sim \Omega$. If the condition does not hold, for any $c\sim \Omega$, there is at least one $i\in[n]$ such that $c_i=I$ but $b_i\neq I$. For any $a\triangleleft b$ such that $a_i=I$, let $\tilde a$ be obtained by replacing the $i$th entry of $a$ by $b_i$. Then we obviously have $$(-1)^{w(a)-w(b)}\expval{a,c}+(-1)^{w(\tilde a)-w(b)}\expval{\tilde a,c}=0.$$
    Further note that $\{a\in{\sf P}^n:a\triangleleft b\}$ can be decomposed into disjoint pairs of the above form ($a$, $\tilde a$), we thus conclude that $r_b$ must be $0$, justifying our claim. We can thus rewrite the expression for $\bm x$ as
    \begin{equation}\label{eq:x_in_r_Omega}
        x_a = \sum_{b\triangleleft a,b\sim \Omega}r_b,    
    \end{equation}
    depending on a set of real parameters $\{r_b\}_{b\sim \Omega}$.
    Exponentiating both sides yields,
    \begin{equation}
        \lambda_a = \prod_{b\triangleleft a,b\sim \Omega}\exp(-r_b).
    \end{equation}
    This completes the proof that $\Lambda$ is $\Omega$-local.

    \medskip

    Conversely, let us assume that $\Lambda$ is $\Omega$-local, i.e., Eq.~\eqref{eq:x_in_r_Omega} holds. 
    Define $\tau_b$ according to
    \begin{equation}\label{eq:tau_in_x}
        \tau_b \coleq \mathds 1[b\neq I_n]\cdot\frac{-2}{4^n}\sum_{a}(-1)^{\expval{a,b}}x_a,\quad\forall b\in{\sf P}^n,
    \end{equation}
    which gives us $x_a = \sum_{b}\expval{a,b}\tau_b$. Indeed,
    \begin{equation}
    \begin{aligned}
        \sum_b\expval{a,b}\tau_b &= -\frac{2}{4^n}\sum_{b\neq I_n}\expval{a,b}\sum_c (-1)^{\expval{c,b}}x_c\\  
        &=-\frac{2}{4^n}\sum_{b\neq I_n}\frac{1-(-1)^{\expval{a,b}}}{2}\sum_c (-1)^{\expval{c,b}}x_c\\
        &= \frac{1}{4^n}\sum_c x_c \sum_{b\neq I_n}\left(-(-1)^{\expval{c,b}}+(-1)^{\expval{a+c,b}}\right)\\
        &= \sum_c x_c \left(-\mathds 1[c=I_n]+\mathds 1[c= a]\right)      \\
        &= x_a - x_{I_n},
    \end{aligned}
    \end{equation}
    which is just $x_a$ since $x_{I_n}=0$ by the trace preserving condition. Now substituting Eq.~\eqref{eq:x_in_r_Omega} into Eq.~\eqref{eq:tau_in_x}, considering only $b\neq I_n$,
    \begin{equation}
    \begin{aligned}
        \tau_b &= -\frac2{4^n}\sum_a(-1)^{\expval{a,b}}\sum_{c\triangleleft a,c\sim\Omega}r_c\\
        &= -\frac2{4^n}\sum_{c\sim\Omega}r_c\sum_{a:c\triangleleft a}(-1)^{\expval{a,b}}.
    \end{aligned}
    \end{equation}
    We claim that $\tau_b= 0$ unless $b\sim\Omega$. Similar to the previous case, if $b\sim\Omega$ does not hold, for any $c\sim\Omega$ there exists an index $i$ such that $c_i=I$ but $b_i\neq I$. Consequently, exactly half of the Pauli operators from $\{a:c\triangleleft a\}$ commute with $b$ while the other half anti-commute with $b$, depending on whether $a_i$ commutes with $b_i$. Thus, the above sum always yields zero, proving our claim. By setting $\eta_b = (1-\exp(-\tau_b))/2$ for all $b\sim\Omega$, we see Eq.~\eqref{eq:eta_form} is a valid representation for $\Lambda$. This completes our proof.             
\end{proof}

\medskip

We are now ready to give a proof for Lemma~\ref{le:covariant_sufficient}.
\begin{proof}[Proof for Lemma~\ref{le:covariant_sufficient}]
    Thanks to Lemma~\ref{le:factorize}, any $\Omega$-local Pauli channel $\Lambda$ can be written in the form of Eq.~\eqref{eq:eta_form}. Conjugate $\Lambda$ with $\mc G$ yields,
    \begin{equation}
    \begin{aligned}
        \mc G\Lambda\mc G^\dagger &= \mathop{\bigcirc}_{b\sim\Omega}\left(\eta_b \mc G\left(P_b(\mc G^\dagger(\cdot))P_b\right) + (1-\eta_b)(\cdot)\right)\\
        &= \mathop{\bigcirc}_{b\sim\Omega}\left(\eta_b P_{\mc G(b)}(\cdot)P_{\mc G(b)} + (1-\eta_b)(\cdot)\right).
    \end{aligned}
    \end{equation}
    If one can show that $\mc G$ is a permutation acting on $\{b:b\sim\Omega\}$, by simple relabeling one can see $\mc G\Lambda\mc G^\dagger$ also has the form of Eq.~\eqref{eq:eta_form} and is thus $\Omega$-local.
    For any $b\sim\Omega$, there must be a $\nu\in\Omega$ such that $\supp(b)=\nu$.    
    it is not hard to see that
    \begin{equation}
        \supp(\mc G(b)) \subseteq \Xi_{\mc G}(\nu).
    \end{equation}
    By assumption, $\Xi_{\mc G}(\nu)\in\Omega$, thus $\supp(\mc G(b))\in\Omega$ by the property of $\Omega$, which means $\mc G(b)\sim\Omega$. Since $\mc G$ is invertible, this completes the proof that $\mc G$ acts as a permutation over $\{b:b\sim\Omega\}$, and thus proving $\mc G\Lambda\mc G^\dagger$ is $\Omega$-local. Since the same argument holds by replacing $\mc G$ with $\mc G^\dagger$, this completes the proof of Lemma~\ref{le:covariant_sufficient}.
\end{proof}

\section{Additional proofs for Sec.~\ref{sec:NN_CZ}}
\subsection{Gauge in the nearest-neighbor \texorpdfstring{$CZ$}{CZ} model}\label{sec:proof_NN_CZ}

\begin{proposition}
    For the gate set and noise model described in Sec.~\ref{sec:NN_CZ}, the embedded gauge space is given by $\mc Q(T_R)=\mr{span}\{\mf d_{\{i\}}:i\in[n]\}$.
\end{proposition}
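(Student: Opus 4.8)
The plan is to exploit the non-covariance of the nearest-neighbor noise under $\mc G_e,\mc G_o$ together with the layer-uncorrelated structure. By Corollary~\ref{cor:main} and Lemma~\ref{le:layerwise_indep}, $\mc Q(T_R)=\bigcap_{\square\in\{S,M,\mc G_e,\mc G_o\}}\{\bm z\in T:\bm z^\square\in\imq^\square\}$. The SPAM noise is genuinely $\Omega$-local with $\Omega^S=\Omega^M=\Omega$, so Theorem~\ref{th:quasi_learnability}(a) gives $\{\bm z\in T:\bm z^S\in\imq^S\}=\mr{span}\{\mf d_\nu:\nu\in\Omega\}$, and since $\Omega$ is generated by $\Omega_*=\{\{i,i+1\}\}$ this equals $\mr{span}\{\mf d_{\{i\}},\mf d_{\{i,i+1\}}:i\in[n]\}$. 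This yields the a priori bound $\mc Q(T_R)\subseteq\mr{span}\{\mf d_\nu:\nu\in\Omega\}$, so the task reduces to deciding which combinations of the singleton and adjacent-pair SDGs survive the two gate constraints.

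For the inclusion $\supseteq$, I would show that every singleton SDG $\mf d_{\{i\}}$ lies in $\mc Q(T_R)=T\cap\mr{Im}(\mc Q)$. It is automatically in $T$ as a cut vector, and its SPAM components are in the image since $\{i\}\in\Omega$. For the gate components, recall from the proof of Theorem~\ref{th:quasi_learnability}(b) that the relevant map is $\mc G^\dagger\mc D_{\{i\}}\mc G\,\mc D_{\{i\}}^{-1}$. Because $\mc G_e$ is a tensor product of $\mr{CZ}$'s on disjoint nearest-neighbor pairs, the $i$-th output pattern of $\mc G_e(a)$ depends only on $a$ restricted to the single pair $\{i,i'\}\in\Omega_*$ containing $i$; hence $\mc G_e^\dagger\mc D_{\{i\}}\mc G_e$ is $\{i,i'\}$-local, and its composition with the $\{i\}$-local map $\mc D_{\{i\}}^{-1}$ is $\Omega$-local. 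The same holds for $\mc G_o$, so $\mf d_{\{i\}}^{\mc G_e},\mf d_{\{i\}}^{\mc G_o}\in\imq$, giving $\mf d_{\{i\}}\in\mr{Im}(\mc Q)$.

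For the reverse inclusion, write any $\bm z\in\mc Q(T_R)$ as $\bm z=\sum_i c_{\{i\}}\mf d_{\{i\}}+\sum_i c_{\{i,i+1\}}\mf d_{\{i,i+1\}}$ and show all pair coefficients vanish. Projecting onto the $\mc G_e$ parameters, an adjacent pair $\{i,i+1\}$ with $i$ odd coincides with a $\mr{CZ}$ pair of $\mc G_e$, so $\mc D_{\{i,i+1\}}$ commutes through $\mc G_e$ (here $\Xi_{\mc G_e}(\{i,i+1\})=\{i,i+1\}$) and $\pi^{\mc G_e}(\mf d_{\{i,i+1\}})=\bm 0$; for $i$ even the pair straddles two $\mr{CZ}$ pairs and $\mc G_e$ spreads $\mc D_{\{i,i+1\}}$ onto $\{i-1,i,i+1,i+2\}$. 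Subtracting the singleton contributions (which are $\Omega$-local by the previous paragraph), the constraint $\bm z^{\mc G_e}\in\imq^{\mc G_e}$ becomes $\sum_{i\text{ even}}c_{\{i,i+1\}}\,\pi^{\mc G_e}(\mf d_{\{i,i+1\}})\in\imq^{\mc G_e}$. I then apply the M\"obius inversion of Lemma~\ref{le:quasi_local_trans}: a direct computation of the pattern-change function $\bm e_a^{\mc G_e}\cdot\mf d_{\{i,i+1\}}$ (which depends only on $a_{\{i-1,i,i+1,i+2\}}$) shows it carries a nonzero reduced coefficient $r_b^{\mc G_e}$ on some $b$ supported on the full four-qubit block $\{i-1,i,i+1,i+2\}\notin\Omega$. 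Since these blocks are distinct index sets for distinct even $i$, the four-body coefficients do not mix, so $\Omega$-locality (which forbids reduced coefficients on supports outside $\Omega$) forces $c_{\{i,i+1\}}=0$ for every even $i$. Repeating with $\mc G_o$, whose aligned and straddling roles are swapped, kills the odd-$i$ pair coefficients, leaving $\bm z\in\mr{span}\{\mf d_{\{i\}}\}$.

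The main obstacle is this last computational step: verifying that each straddling SDG genuinely produces a nonzero four-body reduced coefficient under $\mc G_e$, i.e.\ that $\pi^{\mc G_e}(\mf d_{\{i,i+1\}})$ is truly not $\Omega$-local rather than being obscured by cancellation with lower-weight terms, and confirming via disjointness of the four-qubit blocks that no combination of straddling SDGs plus singletons can be $\Omega$-local. This is the quantitative, non-covariant analogue of the trivial-versus-nontrivial pass-through dichotomy established abstractly in Theorem~\ref{th:quasi_learnability}(b), and it is exactly where the failure of $\mc G$-covariance is used to exclude the two-body gauges.
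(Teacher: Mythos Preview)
Your proposal is correct and takes essentially the same route as the paper: restrict to $\mr{span}\{\mf d_\nu:\nu\in\Omega\}$ via the SPAM constraint (Theorem~\ref{th:quasi_learnability}(a)), show that singletons and the aligned pairs pass each gate constraint, and then eliminate the straddling pair SDGs by exhibiting a violation of $\Omega$-locality of $\pi^{\mc G}(\mf d_{\{i,i+1\}})$. The only difference is the witness used in the last step: the paper checks that the additive factorization $z^{\mc G_e}_{ab}=z^{\mc G_e}_{a}+z^{\mc G_e}_{b}$ fails for explicit Paulis $a,b$ with $\supp(a)\cup\supp(b)\notin\Omega$, whereas you propose detecting a nonzero weight-$4$ M\"obius coefficient on the block $\{i-1,i,i+1,i+2\}$; these are equivalent tests, and your observation that distinct straddling SDGs carry their weight-$4$ coefficients on distinct $4$-element supports is precisely what makes the argument go through without cross-cancellation.
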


\begin{proof}
    Since both state preparation and measurement noise are $\Omega$-local, Theorem~\ref{th:quasi_learnability}(a) says,
    \begin{equation}
        \{\bm z\in T:{\pi}^S(\bm z)\in\imq^S,~{\pi}^M(\bm z)\in\imq^M\} =
        \mr{span}\{\mf d_\nu:\nu\in\Omega\}=\mr{span}\{\mf d_{\{i\}},~\mf d_{\{i,i+1\}}:i\in[n]\}.
    \end{equation}
    Then, by Lemma~\ref{le:layerwise_indep},
    \begin{equation}
    \begin{aligned}
        \mc Q(T_R) &= \mr{span}\{\mf d_{\{i\}},~\mf d_{\{i,i+1\}}:i\in[n]\}\cap\{\bm z\in T:{\pi}^{\mc G_e}(\bm z)\in\imq^{\mc G_e},~{\pi}^{\mc G_o}(\bm z)\in\imq^{\mc G_o}\}\\
        &= \left\{\bm z\in \mr{span}\{\mf d_{\{i\}},~\mf d_{\{i,i+1\}}:i\in[n]\}:{\pi}^{\mc G_e}(\bm z)\in\imq^{\mc G_e},~{\pi}^{\mc G_o}(\bm z)\in\imq^{\mc G_o}\right\}.
    \end{aligned}
    \end{equation}
    We are going to show that
    \begin{align}
        \left\{\bm z\in \mr{span}\{\mf d_{\{i\}},~\mf d_{\{i,i+1\}}:i\in[n]\}:{\pi}^{\mc G_e}(\bm z)\in\imq^{\mc G_e}\right\} &= \mr{span}\left\{\{\mf d_{\{i\}}:i\in[n]\}\cup\{\mf d_{\{2k-1,2k\}}:k\in[n/2]\}\right\},\label{eq:AppE_1}\\
        \left\{\bm z\in \mr{span}\{\mf d_{\{i\}},~\mf d_{\{i,i+1\}}:i\in[n]\}:{\pi}^{\mc G_o}(\bm z)\in\imq^{\mc G_o}\right\} &= \mr{span}\left\{\{\mf d_{\{i\}}:i\in[n]\}\cup\{\mf d_{\{2k,2k+1\}}:k\in[n/2]\}\right\},\label{eq:AppE_2}
    \end{align}
    which would then implies $\mc Q(T_R)=\mr{span}\{\mf d_{\{i\}}:i\in[n]\}$. Let us look at Eq.~\eqref{eq:AppE_1} first. To see $\mr{L.H.S.}\supseteq\mr{R.H.S.}$, note that $\mc G_e^\dagger\mc D_{\{i\}}\mc G_e\mc D_{\{i\}}^{-1}$ and $\mc G_e^\dagger\mc D_{\{2k-1,2k\}}\mc G_e\mc D_{\{2k-1,2k\}}^{-1}$ are both clearly $\Omega$-local for all $i\in[n],k\in[n/2]$. Here $\mc D_\nu$ is the subsystem depolarizing channel on $\nu$ defined in Eq.~\eqref{eq:sd_channel}. 
    Using the same argument as in the proof of Theorem~\ref{th:quasi_learnability}(b), we can conclude those SDGs are indeed allowed.
    To see $\mr{L.H.S.}\subseteq\mr{R.H.S}$, it remains to show that
    $\mr{span}\{\mf d_{\{2k,2k+1\}}:k\in[n/2]\}\cap \mr{L.H.S.} = \{\bm 0\}$. 
    Consider any vector of the form $\bm z = \sum_{k\in[n/2]}\alpha_k \mf d_{\{2k,2k+1\}}$. By definition of the SDG,
    \begin{equation}\label{eq:AppE_main}
        z^{\mc G_e}_a \coleq \bm e_a^{\mc G_e}\cdot\bm z = \sum_{k\in[n/2]}\alpha_k\left(\mathds 1[\pt(\mc G(a))_{\{2k,2k+1\}}\neq 00]-\mathds 1[\pt(a)_{\{2k,2k+1\}}\neq 00]\right).
    \end{equation}
    Suppose ${\pi}^{\mc G_e}(\bm z)\in\imq^{\mc G_e}$, then $z_a^{\mc G_e} = \sum_{b\triangleleft a,b\sim\Omega}t_b$ for some parameters $t_b$. Specifically, for any $k\in[n/2]$, if we choose $a=X_{2k-1}X_{2k+3}$, since $\{2k-1,2k+3\}\notin\Omega$ given that $n\ge6$, we have
    \begin{equation}
        z_{X_{2k-1}X_{2k+3}}^{\mc G_e} = z^{\mc G_e}_{X_{2k-1}} + z^{\mc G_e}_{X_{2k+3}}.
    \end{equation}
    On the other hand, Eq.~\eqref{eq:AppE_main} gives (see Fig.~\ref{fig:AppE} for how these are computed)
    \begin{equation}
    \begin{aligned}
        z_{X_{2k-1}X_{2k+3}}^{\mc G_e} &= \alpha_{k-1}(1-1) + \alpha_{k}(1-0) + \alpha_{k+1}(1-1) = \alpha_k.\\
        z^{\mc G_e}_{X_{2k-1}}&= \alpha_{k-1}(1-1) + \alpha_{k}(1-0) = \alpha_k.\\
        z^{\mc G_e}_{X_{2k+3}}&= \alpha_{k+1}(1-1) + \alpha_{k}(1-0) =\alpha_k.
    \end{aligned}
    \end{equation}
    \begin{figure}[!htp]
        \centering
        \includegraphics[width=0.7\linewidth]{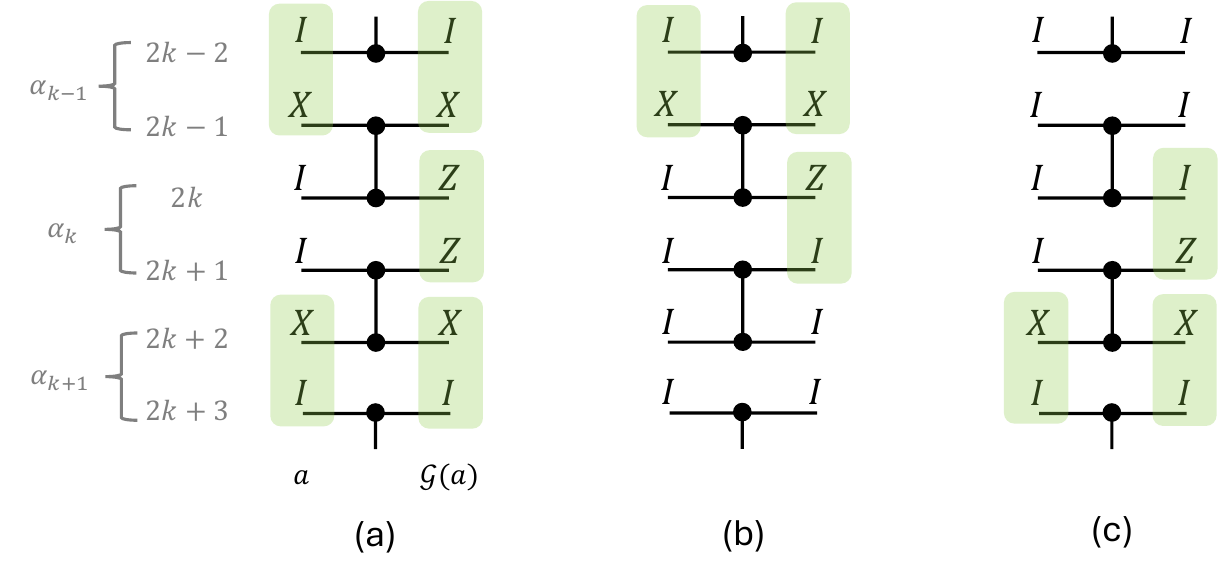}
        \caption{Illustration for calculations of $z_a^{\mc G_e}$ where (a) $a=X_{2k-1}X_{2k+3}$, (b) $a=X_{2k-1}$, and (c) $a=X_{2k+3}$, respectively. The labels of qubits and the corresponding coefficient $\alpha_k$ is shown in the leftmost side.   
        The shaded region represents a non-zero pattern in each $\{2k,2k+1\}$ subsystem, for both $a$ and $\mc G_e(a)$.}
        \label{fig:AppE}
    \end{figure}    
    \noindent But this means $\alpha_k$ must be $0$, which finishes the proof that $\mr{span}\{\mf d_{\{2k,2k+1\}}:k\in[n/2]\}\cap \mr{L.H.S.} = \{\bm 0\}$. Eq.~\eqref{eq:AppE_2} can be proved similarly by cyclically shifting every index by $1$. Combining everything, we complete the proof that $\mc Q(T_R) = \mr{span}\{\mf d_{\{i\}}:i\in[n]\}$.
\end{proof}

\subsection{Cycle basis for the nearest-neighbor \texorpdfstring{$CZ$}{CZ} model}
\label{sec:basis_NN_CZ}

To prove that Eqs.~\eqref{eq:NN_CZ_basis1},~\eqref{eq:NN_CZ_basis2},~\eqref{eq:NN_CZ_basis3} constitute a basis for $L_R^G$, it suffices to prove that they form a complete basis for $\widetilde{L_R^G}$ since $L_R^G\subseteq\widetilde{L_R^G}$.
We may further perform an invertible linear operation on them first. 

Take $\mc Q^\T\left(\bm e_{I_1X_2I_3X_4I_5}^{\mc G_o} + \bm e_{I_1X_2Z_3X_4Z_5}^{\mc G_e} + \bm e_{Z_1X_2I_3X_4Z_5}^{\mc G_o} + \bm e_{Z_1X_2Z_3X_4I_5}^{\mc G_e}\right)$ from Eq.~\eqref{eq:NN_CZ_basis3} as an example.
We change it into
\begin{align}
&\frac{1}{2}\mc Q^\T\left(\bm e_{I_1X_2I_3X_4I_5}^{\mc G_o} + \bm e_{I_1X_2Z_3X_4Z_5}^{\mc G_e} + \bm e_{Z_1X_2I_3X_4Z_5}^{\mc G_o} + \bm e_{Z_1X_2Z_3X_4I_5}^{\mc G_e}\right)-\frac{1}{4}\mc Q^\T\left(\bm e_{Z_1X_2I_3}^{\mc G_o}+\bm e_{Z_1X_2Z_3}^{\mc G_o}\right)\nonumber\\
&-\frac{1}{2}\mc Q^\T\left(\bm e_{I_1X_2Z_3}^{\mc G_e}+\bm e_{Z_1X_2Z_3}^{\mc G_e}\right)-\frac{1}{4}\mc Q^\T\left(\bm e_{I_3X_4Z_5}^{\mc G_e}+\bm e_{Z_3X_4Z_5}^{\mc G_e}\right)+\frac{1}{4}\mc Q^\T\left(\bm e_{X_2Z_3}^{\mc G_o}+\bm e_{X_2I_3}^{\mc G_o}\right)\nonumber\\
&-\frac{1}{2}\mc Q^\T\left(\bm e_{X_4Z_5}^{\mc G_o}+\bm e_{X_4I_5}^{\mc G_o}\right)-\frac{3}{4}\mc Q^\T\left(\bm e_{Z_3X_4}^{\mc G_e}+\bm e_{I_3X_4}^{\mc G_e}\right)+\mc Q^\T\left(\bm e_{Z_3I_4}^{\mc G_e}\right)\nonumber\\
=~&\frac{1}{2}\left(\bm\vartheta_{Z_1}^{\mc G_o}+2\bm\vartheta_{X_2}^{\mc G_o}+2\bm\vartheta_{X_4}^{\mc G_o}+\bm\vartheta_{Z_5}^{\mc G_o}+\bm\vartheta_{Z_1X_2}^{\mc G_o}+\bm\vartheta_{X_4Z_5}^{\mc G_o}+\bm\vartheta_{Z_1}^{\mc G_e}+2\bm\vartheta_{X_2}^{\mc G_e}+2\bm\vartheta_{Z_3}^{\mc G_e}+2\bm\vartheta_{X_4}^{\mc G_e}+\bm\vartheta_{Z_5}^{\mc G_e}\right.\nonumber\\
&+\left.\bm\vartheta_{Z_1X_2}^{\mc G_e}+2\bm\vartheta_{X_2Z_3}^{\mc G_e}+2\bm\vartheta_{Z_3X_4}^{\mc G_e}+\bm\vartheta_{X_4Z_5}^{\mc G_e}\right)-\frac{1}{4}\left(2\bm\vartheta_{Z_1}^{\mc G_o}+2\bm\vartheta_{X_2}^{\mc G_o}+\bm\vartheta_{Z_3}^{\mc G_o}+2\bm\vartheta_{Z_1X_2}^{\mc G_o}+\bm\vartheta_{X_2Z_3}^{\mc G_o}\right)\nonumber\\
&-\frac{1}{2}\left(\bm\vartheta_{Z_1}^{\mc G_e}+2\bm\vartheta_{X_2}^{\mc G_e}+2\bm\vartheta_{Z_3}^{\mc G_e}+\bm\vartheta_{Z_1X_2}^{\mc G_e}+2\bm\vartheta_{X_2Z_3}^{\mc G_e}\right)-\frac{1}{4}\left(\bm\vartheta_{Z_3}^{\mc G_e}+2\bm\vartheta_{X_4}^{\mc G_e}+2\bm\vartheta_{Z_5}^{\mc G_e}+\bm\vartheta_{Z_3X_4}^{\mc G_e}+2\bm\vartheta_{X_4Z_5}^{\mc G_e}\right)\nonumber\\
&+\frac{1}{4}\left(2\bm\vartheta_{X_2}^{\mc G_o}+\bm\vartheta_{Z_3}^{\mc G_o}+\bm\vartheta_{X_2Z_3}^{\mc G_o}\right)-\frac{1}{2}\left(2\bm\vartheta_{X_4}^{\mc G_o}+\bm\vartheta_{Z_5}^{\mc G_o}+\bm\vartheta_{X_4Z_5}^{\mc G_o}\right)-\frac{3}{4}\left(\bm\vartheta_{Z_3}^{\mc G_e}+2\bm\vartheta_{X_4}^{\mc G_e}+\bm\vartheta_{Z_3X_4}^{\mc G_e}\right)+\left(\bm\vartheta_{Z_3}^{\mc G_e}\right)\nonumber\\
=~&\bm\vartheta_{X_2}^{\mc G_o}-\bm\vartheta_{X_4}^{\mc G_e},
\end{align}
where among the added terms, the first $3$ terms are from Eq.~\eqref{eq:NN_CZ_basis2} and the rest are from Eq.~\eqref{eq:NN_CZ_basis1}.
In this way, we can change the basis in Eq.~\eqref{eq:NN_CZ_basis3} into for $k=1,...,\frac n2$, 
\begin{equation}\label{eq:NN_CZ_basis4}
\left\{-\bm\vartheta_{X_{2k-1}}^{\mc G_e}+\bm\vartheta_{X_{2k+1}}^{\mc G_o},\bm\vartheta_{X_{2k}}^{\mc G_o}-\bm\vartheta_{X_{2k+2}}^{\mc G_e}\right\}.
\end{equation}
The change from Eqs.~\eqref{eq:NN_CZ_basis1},~\eqref{eq:NN_CZ_basis2},~\eqref{eq:NN_CZ_basis3} to Eqs.~\eqref{eq:NN_CZ_basis1},~\eqref{eq:NN_CZ_basis2},~\eqref{eq:NN_CZ_basis4} is obviously invertible, so now we prove that the latter is a basis for $\widetilde{L_R^G}$.

In fact, Eqs.~\eqref{eq:NN_CZ_basis1},~\eqref{eq:NN_CZ_basis2},~\eqref{eq:NN_CZ_basis4} is a valid output of Algorithm 3.
Initially, $A=\{\bm\vartheta_a^{\mc G_e},\bm\vartheta_a^{\mc G_o}:a\sim\Omega\}$.
We then put cycles from Eq.~\eqref{eq:NN_CZ_basis2} into $C$ one by one (the order does not matter).
For example, when we pick $\bm c=\mc Q^\T\left(\bm e_a^{\mc G_e}+\bm e_{a'}^{\mc G_e}\right)$ $(a\in{\sf P}^{2k,2k+1},w(a)=2,a'=\mc G_e(a))$ into $C$, we choose $\bm\vartheta_a^{\mc G_e}$ as $\bm\vartheta$ and remove it from $A$.
The case when $\mc G_e$ is replaced by $\mc G_o$ is similar.
These $\bm\vartheta$ removed does not appear in other cycles in Eq.~\eqref{eq:NN_CZ_basis2}, so the $\bm c$ picked each time indeed belongs to $\widetilde{L_R^G}\cap\mr{span}(A)$ at the time.

Now $C$ contains elements from Eq.~\eqref{eq:NN_CZ_basis2}, and $A$ can be partitioned into disjoint sets: $A=\bigcup_{k=1,\ldots,\frac{n}{2}}\{\bm\vartheta_a^{\mc G_e}:a\in{\sf P}^{2k-1,2k}\}\cup\bigcup_{k=1,\ldots,\frac{n}{2}}\{\bm\vartheta_a^{\mc G_e}:a\in{\sf P}^{2k-1,2k}\}$.
This partition corresponds to the partition in Eq.~\eqref{eq:NN_CZ_basis1}, and we put them into $C$ one by one.
For $k=1,\ldots,\frac{n}{2}$, first pick
\begin{equation}
\bm c=\mc Q^\T\left(\bm e_{X_{2k-1}X_{2k}}^{\mc G_e}\right),\mc Q^\T\left(\bm e_{Y_{2k-1}Y_{2k}}^{\mc G_e}\right),\mc Q^\T\left(\bm e_{X_{2k-1}Y_{2k}}^{\mc G_e}\right),\mc Q^\T\left(\bm e_{Y_{2k-1}X_{2k}}^{\mc G_e}\right),\mc Q^\T\left(\bm e_{Z_{2k-1}Z_{2k}}^{\mc G_e}\right),
\end{equation}
which corresponds to picking
\begin{equation}
\bm\vartheta=\bm\vartheta_{X_{2k-1}X_{2k}}^{\mc G_e},\bm\vartheta_{Y_{2k-1}Y_{2k}}^{\mc G_e},\bm\vartheta_{X_{2k-1}Y_{2k}}^{\mc G_e},\bm\vartheta_{Y_{2k-1}X_{2k}}^{\mc G_e},\bm\vartheta_{Z_{2k-1}Z_{2k}}^{\mc G_e}.
\end{equation}
Then pick in the order of
\begin{align}
\bm c=&\mc Q^\T\left(\bm e_{Y_{2k-1}Z_{2k}}^{\mc G_e}+\bm e_{Y_{2k-1}I_{2k}}^{\mc G_e}\right),\mc Q^\T\left(\bm e_{Y_{2k-1}I_{2k}}^{\mc G_e}+\bm e_{X_{2k-1}Z_{2k}}^{\mc G_e}\right),\mc Q^\T\left(\bm e_{X_{2k-1}Z_{2k}}^{\mc G_e}+\bm e_{X_{2k-1}I_{2k}}^{\mc G_e}\right),\nonumber\\
&\mc Q^\T\left(\bm e_{Z_{2k-1}Y_{2k}}^{\mc G_e}+\bm e_{I_{2k-1}Y_{2k}}^{\mc G_e}\right),\mc Q^\T\left(\bm e_{I_{2k-1}Y_{2k}}^{\mc G_e}+\bm e_{Z_{2k-1}X_{2k}}^{\mc G_e}\right),\mc Q^\T\left(\bm e_{Z_{2k-1}X_{2k}}^{\mc G_e}+\bm e_{I_{2k-1}X_{2k}}^{\mc G_e}\right),
\end{align}
which corresponds to picking
\begin{equation}
\bm\vartheta=\bm\vartheta_{Y_{2k-1}Z_{2k}}^{\mc G_e},\bm\vartheta_{Y_{2k-1}I_{2k}}^{\mc G_e},\bm\vartheta_{X_{2k-1}Z_{2k}}^{\mc G_e},\bm\vartheta_{Z_{2k-1}Y_{2k}}^{\mc G_e},\bm\vartheta_{I_{2k-1}Y_{2k}}^{\mc G_e},\bm\vartheta_{Z_{2k-1}X_{2k}}^{\mc G_e}.
\end{equation}
Finally pick
\begin{equation}
\bm c=\mc Q^\T\left(\bm e_{Z_{2k-1}I_{2k}}^{\mc G_e}\right),\mc Q^\T\left(\bm e_{I_{2k-1}Z_{2k}}^{\mc G_e}\right),
\end{equation}
which corresponds to picking
\begin{equation}
\bm\vartheta=\bm\vartheta_{Z_{2k-1}I_{2k}}^{\mc G_e},\bm\vartheta_{I_{2k-1}Z_{2k}}^{\mc G_e}.
\end{equation}
And similarly for replacing $\mc G_e$ by $\mc G_o$.
One can verify that in this order each time $\bm c\in\widetilde{L_R^G}\cap A$, and after picking cycles in Eq.~\eqref{eq:NN_CZ_basis1} in this order, $A=\{\bm\vartheta_{X_k}^{\mc G_e},\bm\vartheta_{X_k}^{\mc G_o}:k\in[n]\}$.

Then put elements in Eq.~\eqref{eq:NN_CZ_basis4} into $C$ one by one (order does not matter)
When $\bm c=-\bm\vartheta_{X_{2k-1}}^{\mc G_e}+\bm\vartheta_{X_{2k+1}}^{\mc G_o}$, pick $\bm\vartheta=\bm\vartheta_{X_{2k+1}}^{\mc G_o}$, and when $\bm c=\bm\vartheta_{X_{2k}}^{\mc G_o}-\bm\vartheta_{X_{2k+2}}^{\mc G_e}$ is picked, pick $\bm\vartheta=\bm\vartheta_{X_{2k+2}}^{\mc G_o}$.

At this stage $C$ contains all elements from Eqs.~\eqref{eq:NN_CZ_basis1},~\eqref{eq:NN_CZ_basis2},~\eqref{eq:NN_CZ_basis4} and $A=\{\bm\vartheta_{X_k}^{\mc G_e}:k\in[n]\}$.
It remains to prove that $\widetilde{L_R^G}\cap\mr{span}\left(\{\bm\vartheta_{X_k}^{\mc G_e}:k\in[n]\}\right)=\varnothing$.
We have
\begin{equation}
\sum_{k=1}^{\frac{n}{2}}a_{2k-1}\bm\vartheta_{X_{2k-1}}^{\mc G_e}+a_{2k}\bm\vartheta_{X_{2k}}^{\mc G_e}\in L_R\Leftrightarrow\sum_{k=1}^{\frac{n}{2}}-a_{2k-1}\bm\vartheta_{\{2k\}}^M-a_{2k}\bm\vartheta_{\{2k-1\}}^M\in L_R\Leftrightarrow\forall k,~a_k=0
\end{equation}
Where the first equivalence is because
\begin{align}
&\mc Q^\T\left(\bm e_{\{2k-1\}}^S+\bm e_{X_{2k-1}}^{\mc G_e}+\bm e_{\{2k-1,2k\}}^M\right)-\mc Q^\T\left(\bm e_{\{2k-1\}}^S+\bm e_{\{2k-1\}}^M\right)+\mc Q^\T\left(\bm e_{\{2k-2\}}^S+\bm e_{X_{2k-2}}^{\mc G_o}+\bm e_{\{2k-2,2k-1\}}^M\right)\nonumber\\
&-\mc Q^\T\left(\bm e_{\{2k-2,2k+1\}}^S+\bm e_{X_{2k-2}X_{2k+1}}^{\mc G_o}+\bm e_{\{2k-2,2k-1,2k,2k+1\}}^M\right)+\mc Q^\T\left(\bm e_{\{2k+1\}}^S+\bm e_{X_{2k+1}}^{\mc G_o}+\bm e_{\{2k,2k+1\}}^M\right)\nonumber\\
=&\bm\vartheta_{X_{2k-1}}^{\mc G_e}+\bm\vartheta_{\{2k\}}^M\in L_R
\end{align}
and similarly $\bm\vartheta_{X_{2k}}^{\mc G_e}+\bm\vartheta_{\{2k-1\}}^M\in L_R$.
The second equivalence is because $\bm\vartheta_{\{k\}}^M$ changes independently under the single-qubit depolarizing gauges.
Hence now $\widetilde{L_R^G}\cap\mr{span}(A)=\varnothing$ and Algorithm $3$ terminates.
By the correctness of Algorithm $3$, $C$, which contains Eqs.~\eqref{eq:NN_CZ_basis1},~\eqref{eq:NN_CZ_basis2},~\eqref{eq:NN_CZ_basis4} is a basis for $\widetilde{L_R^G}$ and thus a basis for $L_R^G$.
Thus Eqs.~\eqref{eq:NN_CZ_basis1},~\eqref{eq:NN_CZ_basis2},~\eqref{eq:NN_CZ_basis3} is a basis for $L_R^G$.

Next, we give a way to supplement Eqs.~\eqref{eq:NN_CZ_basis1},~\eqref{eq:NN_CZ_basis2},~\eqref{eq:NN_CZ_basis3} into a reduced cycle basis for $L_R$.
We claim the following reduced cycles work:
\begin{equation}\label{eq:NN_CZ_basis5}
\begin{gathered}
\left\{\mc Q^\T\left(\bm e_{\{2k-1\}}^S+\bm e_{X_{2k-1}}^{\mc G_e}+\bm e_{\{2k-1,2k\}}^M\right),\mc Q^\T\left(\bm e_{\{2k\}}^S+\bm e_{X_{2k}}^{\mc G_e}+\bm e_{\{2k-1,2k\}}^M\right):k=1,\ldots,\frac{n}{2}\right\}\\
\backslash\left\{\mc Q^\T\left(\bm e_{\{1\}}^S+\bm e_{X_{1}}^{\mc G_e}+\bm e_{\{1,2\}}^M\right)\right\}
\end{gathered}
\end{equation}
\begin{equation}\label{eq:NN_CZ_basis6}
\left\{\mc Q^\T\left(\bm e_{\{2k\}}^S+\bm e_{X_{2k}}^{\mc G_o}+\bm e_{\{2k,2k+1\}}^M\right),\mc Q^\T\left(\bm e_{\{2k+1\}}^S+\bm e_{X_{2k+1}}^{\mc G_o}+\bm e_{\{2k,2k+1\}}^M\right):k=1,\ldots,\frac{n}{2}\right\}
\end{equation}
\begin{equation}\label{eq:NN_CZ_basis7}
\left\{\mc Q^\T\left(\bm e_{\{k,k+1\}}^S+\bm e_{\{k,k+1\}}^M\right):k\in[n]\right\}
\end{equation}
\begin{equation}\label{eq:NN_CZ_basis8}
\left\{\mc Q^\T\left(\bm e_{\{k\}}^S+e_{\{k\}}^M\right):k\in[n]\right\}
\end{equation}
\begin{equation}\label{eq:NN_CZ_basis9}
\left\{\mc Q^\T\left(\bm e_{\{n,3\}}^S+\bm e_{X_nX_3}^{\mc G_o}+\bm e_{\{n,1,2,3\}}^M\right)\right\}.
\end{equation}
As a sanity check, these give $(n-1)+n+n+n+1=4n$ dimensions, which together with the $23n$ dimensions from $L_R^G$ gives the $27n$ dimensions for $L_R$.

To prove that Eqs.~\eqref{eq:NN_CZ_basis1},~\eqref{eq:NN_CZ_basis2},~\eqref{eq:NN_CZ_basis3},~\eqref{eq:NN_CZ_basis5},~\eqref{eq:NN_CZ_basis6},~\eqref{eq:NN_CZ_basis7},~\eqref{eq:NN_CZ_basis8},~\eqref{eq:NN_CZ_basis9} form a basis of $L_R$, again we may first perform an invertible linear transformation on them.
We change Eq.~\eqref{eq:NN_CZ_basis9} into
\begin{equation}\label{eq:NN_CZ_basis10}
\mc Q^\T\left(\bm e_{\{n,3\}}^S+\bm e_{X_nX_3}^{\mc G_o}+\bm e_{\{n,1,2,3\}}^M\right)-\mc Q^\T\left(\bm e_{\{n\}}^S+\bm e_{X_n}^{\mc G_o}+\bm e_{\{n,1\}}^M\right)-\mc Q^\T\left(\bm e_{\{3\}}^S+\bm e_{X_3}^{\mc G_o}+\bm e_{\{2,3\}}^M\right)=\bm\vartheta_{\{1,2\}}^M.
\end{equation}
We then change Eq.~\eqref{eq:NN_CZ_basis3} into Eq.~\eqref{eq:NN_CZ_basis4} and then change for example $\bm\vartheta_{X_{2k}}^{\mc G_o}-\bm\vartheta_{X_{2k+2}}^{\mc G_e}$ into
\begin{align}
&\bm\vartheta_{X_{2k}}^{\mc G_o}-\bm\vartheta_{X_{2k+2}}^{\mc G_e}-\mc Q^\T\left(\bm e_{\{2k\}}^S+\bm e_{X_{2k}}^{\mc G_o}+\bm e_{\{2k,2k+1\}}^M\right)+\mc Q^\T\left(\bm e_{\{2k+2\}}^S+\bm e_{X_{2k+2}}^{\mc G_e}+\bm e_{\{2k+1,2k+2\}}^M\right)\nonumber\\
&+\mc Q^\T\left(\bm e_{\{2k\}}^S+\bm e_{\{2k\}}^M\right)-\mc Q^\T\left(\bm e_{\{2k+2\}}^S+\bm e_{\{2k+2\}}^M\right)\nonumber\\
=&\bm\vartheta_{\{2k+1,2k+2\}}^M-\bm\vartheta_{\{2k,2k+1\}}^M,
\end{align}
and similarly for the $-\bm\vartheta_{X_{2k-1}}^{\mc G_e}+\bm\vartheta_{X_{2k+1}}^{\mc G_o}$s, except that $-\bm\vartheta_{X_1}^{\mc G_e}+\bm\vartheta_{X_3}^{\mc G_o}$ is changed into
\begin{equation}\label{eq:NN_CZ_basis11}
-\bm\vartheta_{X_1}^{\mc G_e}+\bm\vartheta_{X_3}^{\mc G_o}-\mc Q^\T\left(\bm e_{\{3\}}^S+\bm e_{X_3}^{\mc G_o}+\bm e_{\{2,3\}}^M\right)+\mc Q^\T\left(\bm e_{\{3\}}^S+\bm e_{\{3\}}^M\right)=-\bm\vartheta_{X_1}^{\mc G_e}-\bm\vartheta_{\{2\}}^M-\bm\vartheta_{\{2,3\}}^M.
\end{equation}
That is, Eq.~\eqref{eq:NN_CZ_basis3} is changed into
\begin{equation}\label{eq:NN_CZ_basis12}
\left\{\bm\vartheta_{\{k,k+1\}}^M-\bm\vartheta_{\{k+1,k+2\}}^M:k\in[n],k\neq1\right\}
\end{equation}
together with Eq.~\eqref{eq:NN_CZ_basis11}.

Hence now it suffices to prove that Eqs.~\eqref{eq:NN_CZ_basis2},~\eqref{eq:NN_CZ_basis1},~\eqref{eq:NN_CZ_basis5},~\eqref{eq:NN_CZ_basis6},~\eqref{eq:NN_CZ_basis11},~\eqref{eq:NN_CZ_basis7},~\eqref{eq:NN_CZ_basis8},~\eqref{eq:NN_CZ_basis12},~\eqref{eq:NN_CZ_basis10} form a basis for $L_R$.
Again we resort to Algorithm $3$.
This time the algorithm tries to find a basis for $L_R$, and initially $A=\{\bm\vartheta_\nu^S,\bm\vartheta_\nu^M,\bm\vartheta_a^{\mc G_e},\bm\vartheta_a^{\mc G_o}:\nu\in\Omega,a\sim\Omega\}$.
We first pick elements in Eqs.~\eqref{eq:NN_CZ_basis2},~\eqref{eq:NN_CZ_basis1} into $C$, using the same procedure described previously.
Now $A=\{\bm\vartheta_\nu^S,\bm\vartheta_\nu^M,\bm\vartheta_{X_k}^{\mc G_e},\bm\vartheta_{X_k}^{\mc G_o}:\nu\in\Omega,k\in[n]\}$.

Then put elements in Eqs.~\eqref{eq:NN_CZ_basis5},~\eqref{eq:NN_CZ_basis6} into $C$.
For example, when $\mc Q^\T\left(\bm e_{\{2k-1\}}^S+\bm e_{X_{2k-1}}^{\mc G_e}+\bm e_{\{2k-1,2k\}}^M\right)$ is picked, choose $\bm \vartheta=\bm\vartheta_{X_{2k-1}}^{\mc G_e}$.
Afterward, $A=\{\bm\vartheta_\nu^S,\bm\vartheta_\nu^M:\nu\in\Omega\}\cup\{\bm\vartheta_{X_1}^{\mc G_e}\}$.

Next, put Eq.~\eqref{eq:NN_CZ_basis11} into $C$, picking $\bm\vartheta=\bm\vartheta_{X_1}^{\mc G_e}$.
Now $A=\{\bm\vartheta_\nu^S,\bm\vartheta_\nu^M:\nu\in\Omega\}$.

Then put elements in Eq.~\eqref{eq:NN_CZ_basis7} into $C$, and when $\mc Q^\T\left(\bm e_{\{k,k+1\}}^S+\bm e_{\{k,k+1\}}^M\right)$ is picked, set $\bm\vartheta$ to be $\bm\vartheta_{\{k,k+1\}}^S$.
What are left in $A$ at this time are $A=\{\bm\vartheta_{\{k\}}^S:k\in[n]\}\cup\{\bm\vartheta_\nu^M:\nu\sim\Omega\}$.

Now put elements in Eq.~\eqref{eq:NN_CZ_basis8} into $C$, this time when $\mc Q^\T\left(\bm e_{\{k\}}^S+e_{\{k\}}^M\right)$ is picked, let $\bm\vartheta=\bm\vartheta_{\{k\}}^S$.
At this time $A=\{\bm\vartheta_\nu^M:\nu\sim\Omega\}$.

Then put elements in Eq.~\eqref{eq:NN_CZ_basis12} as $\bm c$ and put into $C$, in the increasing order of $k$.
When $\bm\vartheta_{\{k,k+1\}}^M-\bm\vartheta_{\{k+1,k+2\}}^M$ is picked, use $\bm\vartheta=\bm\vartheta_{\{k,k+1\}}^M$.
Now $A$ is left with $\{\bm\vartheta_{\{k\}}^M:k\in[n]\}\cup\{\bm\vartheta_{\{1,2\}}^M\}$.

Finally put Eq.~\eqref{eq:NN_CZ_basis10} into $C$ and let $\bm\vartheta=\bm\vartheta_{\{1,2\}}^M$.
$A$ is now $\{\bm\vartheta_{\{k\}}^M:k\in[n]\}$.
Since each $\bm\vartheta_{\{k\}}^M$ changes independently under single-qubit depolarization gauges, we conclude that $\mr{span}(A)\cap L_R=\varnothing$ and Algorithm $3$ terminates.
By the correctness of the Algorithm $3$, by this time $C$, which contains Eqs.~\eqref{eq:NN_CZ_basis2},~\eqref{eq:NN_CZ_basis1},~\eqref{eq:NN_CZ_basis5},~\eqref{eq:NN_CZ_basis6},~\eqref{eq:NN_CZ_basis11},~\eqref{eq:NN_CZ_basis7},~\eqref{eq:NN_CZ_basis8},~\eqref{eq:NN_CZ_basis12},~\eqref{eq:NN_CZ_basis10}, form a basis for $L_R$.
Thus Eqs.~\eqref{eq:NN_CZ_basis1},~\eqref{eq:NN_CZ_basis2},\\\eqref{eq:NN_CZ_basis3},~\eqref{eq:NN_CZ_basis5},~\eqref{eq:NN_CZ_basis6},~\eqref{eq:NN_CZ_basis7},~\eqref{eq:NN_CZ_basis8},~\eqref{eq:NN_CZ_basis9} form a basis for $L_R$.

\bibliography{refs.bib}

\begin{thebibliography}{72}%
\makeatletter
\providecommand \@ifxundefined [1]{%
 \@ifx{#1\undefined}
}%
\providecommand \@ifnum [1]{%
 \ifnum #1\expandafter \@firstoftwo
 \else \expandafter \@secondoftwo
 \fi
}%
\providecommand \@ifx [1]{%
 \ifx #1\expandafter \@firstoftwo
 \else \expandafter \@secondoftwo
 \fi
}%
\providecommand \natexlab [1]{#1}%
\providecommand \enquote  [1]{``#1''}%
\providecommand \bibnamefont  [1]{#1}%
\providecommand \bibfnamefont [1]{#1}%
\providecommand \citenamefont [1]{#1}%
\providecommand \href@noop [0]{\@secondoftwo}%
\providecommand \href [0]{\begingroup \@sanitize@url \@href}%
\providecommand \@href[1]{\@@startlink{#1}\@@href}%
\providecommand \@@href[1]{\endgroup#1\@@endlink}%
\providecommand \@sanitize@url [0]{\catcode `\\12\catcode `\$12\catcode `\&12\catcode `\#12\catcode `\^12\catcode `\_12\catcode `\%12\relax}%
\providecommand \@@startlink[1]{}%
\providecommand \@@endlink[0]{}%
\providecommand \url  [0]{\begingroup\@sanitize@url \@url }%
\providecommand \@url [1]{\endgroup\@href {#1}{\urlprefix }}%
\providecommand \urlprefix  [0]{URL }%
\providecommand \Eprint [0]{\href }%
\providecommand \doibase [0]{https://doi.org/}%
\providecommand \selectlanguage [0]{\@gobble}%
\providecommand \bibinfo  [0]{\@secondoftwo}%
\providecommand \bibfield  [0]{\@secondoftwo}%
\providecommand \translation [1]{[#1]}%
\providecommand \BibitemOpen [0]{}%
\providecommand \bibitemStop [0]{}%
\providecommand \bibitemNoStop [0]{.\EOS\space}%
\providecommand \EOS [0]{\spacefactor3000\relax}%
\providecommand \BibitemShut  [1]{\csname bibitem#1\endcsname}%
\let\auto@bib@innerbib\@empty
\bibitem [{\citenamefont {Nielsen}\ and\ \citenamefont {Chuang}(2010)}]{nielsen2010quantum}%
  \BibitemOpen
  \bibfield  {author} {\bibinfo {author} {\bibfnamefont {M.~A.}\ \bibnamefont {Nielsen}}\ and\ \bibinfo {author} {\bibfnamefont {I.~L.}\ \bibnamefont {Chuang}},\ }\href@noop {} {\emph {\bibinfo {title} {Quantum computation and quantum information}}}\ (\bibinfo  {publisher} {Cambridge university press},\ \bibinfo {year} {2010})\BibitemShut {NoStop}%
\bibitem [{\citenamefont {Gisin}\ and\ \citenamefont {Thew}(2007)}]{gisin2007quantum}%
  \BibitemOpen
  \bibfield  {author} {\bibinfo {author} {\bibfnamefont {N.}~\bibnamefont {Gisin}}\ and\ \bibinfo {author} {\bibfnamefont {R.}~\bibnamefont {Thew}},\ }\bibfield  {title} {\bibinfo {title} {Quantum communication},\ }\href@noop {} {\bibfield  {journal} {\bibinfo  {journal} {Nature photonics}\ }\textbf {\bibinfo {volume} {1}},\ \bibinfo {pages} {165} (\bibinfo {year} {2007})}\BibitemShut {NoStop}%
\bibitem [{\citenamefont {Kimble}(2008)}]{kimble2008quantum}%
  \BibitemOpen
  \bibfield  {author} {\bibinfo {author} {\bibfnamefont {H.~J.}\ \bibnamefont {Kimble}},\ }\bibfield  {title} {\bibinfo {title} {The quantum internet},\ }\href@noop {} {\bibfield  {journal} {\bibinfo  {journal} {Nature}\ }\textbf {\bibinfo {volume} {453}},\ \bibinfo {pages} {1023} (\bibinfo {year} {2008})}\BibitemShut {NoStop}%
\bibitem [{\citenamefont {Giovannetti}\ \emph {et~al.}(2006)\citenamefont {Giovannetti}, \citenamefont {Lloyd},\ and\ \citenamefont {Maccone}}]{giovannetti2006quantum}%
  \BibitemOpen
  \bibfield  {author} {\bibinfo {author} {\bibfnamefont {V.}~\bibnamefont {Giovannetti}}, \bibinfo {author} {\bibfnamefont {S.}~\bibnamefont {Lloyd}},\ and\ \bibinfo {author} {\bibfnamefont {L.}~\bibnamefont {Maccone}},\ }\bibfield  {title} {\bibinfo {title} {Quantum metrology},\ }\href@noop {} {\bibfield  {journal} {\bibinfo  {journal} {Physical review letters}\ }\textbf {\bibinfo {volume} {96}},\ \bibinfo {pages} {010401} (\bibinfo {year} {2006})}\BibitemShut {NoStop}%
\bibitem [{\citenamefont {Giovannetti}\ \emph {et~al.}(2011)\citenamefont {Giovannetti}, \citenamefont {Lloyd},\ and\ \citenamefont {Maccone}}]{giovannetti2011advances}%
  \BibitemOpen
  \bibfield  {author} {\bibinfo {author} {\bibfnamefont {V.}~\bibnamefont {Giovannetti}}, \bibinfo {author} {\bibfnamefont {S.}~\bibnamefont {Lloyd}},\ and\ \bibinfo {author} {\bibfnamefont {L.}~\bibnamefont {Maccone}},\ }\bibfield  {title} {\bibinfo {title} {Advances in quantum metrology},\ }\href@noop {} {\bibfield  {journal} {\bibinfo  {journal} {Nature photonics}\ }\textbf {\bibinfo {volume} {5}},\ \bibinfo {pages} {222} (\bibinfo {year} {2011})}\BibitemShut {NoStop}%
\bibitem [{\citenamefont {Eisert}\ \emph {et~al.}(2020)\citenamefont {Eisert}, \citenamefont {Hangleiter}, \citenamefont {Walk}, \citenamefont {Roth}, \citenamefont {Markham}, \citenamefont {Parekh}, \citenamefont {Chabaud},\ and\ \citenamefont {Kashefi}}]{eisert2020quantum}%
  \BibitemOpen
  \bibfield  {author} {\bibinfo {author} {\bibfnamefont {J.}~\bibnamefont {Eisert}}, \bibinfo {author} {\bibfnamefont {D.}~\bibnamefont {Hangleiter}}, \bibinfo {author} {\bibfnamefont {N.}~\bibnamefont {Walk}}, \bibinfo {author} {\bibfnamefont {I.}~\bibnamefont {Roth}}, \bibinfo {author} {\bibfnamefont {D.}~\bibnamefont {Markham}}, \bibinfo {author} {\bibfnamefont {R.}~\bibnamefont {Parekh}}, \bibinfo {author} {\bibfnamefont {U.}~\bibnamefont {Chabaud}},\ and\ \bibinfo {author} {\bibfnamefont {E.}~\bibnamefont {Kashefi}},\ }\bibfield  {title} {\bibinfo {title} {Quantum certification and benchmarking},\ }\href@noop {} {\bibfield  {journal} {\bibinfo  {journal} {Nature Reviews Physics}\ }\textbf {\bibinfo {volume} {2}},\ \bibinfo {pages} {382} (\bibinfo {year} {2020})}\BibitemShut {NoStop}%
\bibitem [{\citenamefont {Proctor}\ \emph {et~al.}(2024)\citenamefont {Proctor}, \citenamefont {Young}, \citenamefont {Baczewski},\ and\ \citenamefont {Blume-Kohout}}]{proctor2024benchmarking}%
  \BibitemOpen
  \bibfield  {author} {\bibinfo {author} {\bibfnamefont {T.}~\bibnamefont {Proctor}}, \bibinfo {author} {\bibfnamefont {K.}~\bibnamefont {Young}}, \bibinfo {author} {\bibfnamefont {A.~D.}\ \bibnamefont {Baczewski}},\ and\ \bibinfo {author} {\bibfnamefont {R.}~\bibnamefont {Blume-Kohout}},\ }\bibfield  {title} {\bibinfo {title} {Benchmarking quantum computers},\ }\href@noop {} {\bibfield  {journal} {\bibinfo  {journal} {arXiv preprint arXiv:2407.08828}\ } (\bibinfo {year} {2024})}\BibitemShut {NoStop}%
\bibitem [{\citenamefont {Hashim}\ \emph {et~al.}(2024)\citenamefont {Hashim}, \citenamefont {Nguyen}, \citenamefont {Goss}, \citenamefont {Marinelli}, \citenamefont {Naik}, \citenamefont {Chistolini}, \citenamefont {Hines}, \citenamefont {Marceaux}, \citenamefont {Kim}, \citenamefont {Gokhale} \emph {et~al.}}]{hashim2024practical}%
  \BibitemOpen
  \bibfield  {author} {\bibinfo {author} {\bibfnamefont {A.}~\bibnamefont {Hashim}}, \bibinfo {author} {\bibfnamefont {L.~B.}\ \bibnamefont {Nguyen}}, \bibinfo {author} {\bibfnamefont {N.}~\bibnamefont {Goss}}, \bibinfo {author} {\bibfnamefont {B.}~\bibnamefont {Marinelli}}, \bibinfo {author} {\bibfnamefont {R.~K.}\ \bibnamefont {Naik}}, \bibinfo {author} {\bibfnamefont {T.}~\bibnamefont {Chistolini}}, \bibinfo {author} {\bibfnamefont {J.}~\bibnamefont {Hines}}, \bibinfo {author} {\bibfnamefont {J.}~\bibnamefont {Marceaux}}, \bibinfo {author} {\bibfnamefont {Y.}~\bibnamefont {Kim}}, \bibinfo {author} {\bibfnamefont {P.}~\bibnamefont {Gokhale}}, \emph {et~al.},\ }\bibfield  {title} {\bibinfo {title} {A practical introduction to benchmarking and characterization of quantum computers},\ }\href@noop {} {\bibfield  {journal} {\bibinfo  {journal} {arXiv preprint arXiv:2408.12064}\ } (\bibinfo {year} {2024})}\BibitemShut {NoStop}%
\bibitem [{\citenamefont {Flammia}\ and\ \citenamefont {Wallman}(2020)}]{Flammia2020}%
  \BibitemOpen
  \bibfield  {author} {\bibinfo {author} {\bibfnamefont {S.}~\bibnamefont {Flammia}}\ and\ \bibinfo {author} {\bibfnamefont {J.}~\bibnamefont {Wallman}},\ }\bibfield  {title} {\bibinfo {title} {Efficient estimation of {P}auli channels},\ }\href {https://doi.org/10.1145/3408039} {\bibfield  {journal} {\bibinfo  {journal} {ACM Transactions on Quantum Computing}\ }\textbf {\bibinfo {volume} {1}},\ \bibinfo {pages} {1} (\bibinfo {year} {2020})},\ \Eprint {https://arxiv.org/abs/1907.12976} {arXiv:1907.12976} \BibitemShut {NoStop}%
\bibitem [{\citenamefont {{Erhard}}\ \emph {et~al.}(2019)\citenamefont {{Erhard}}, \citenamefont {{Wallman}}, \citenamefont {{Postler}}, \citenamefont {{Meth}}, \citenamefont {{Stricker}}, \citenamefont {{Martinez}}, \citenamefont {{Schindler}}, \citenamefont {{Monz}}, \citenamefont {{Emerson}},\ and\ \citenamefont {{Blatt}}}]{Erhard2019}%
  \BibitemOpen
  \bibfield  {author} {\bibinfo {author} {\bibfnamefont {A.}~\bibnamefont {{Erhard}}}, \bibinfo {author} {\bibfnamefont {J.~J.}\ \bibnamefont {{Wallman}}}, \bibinfo {author} {\bibfnamefont {L.}~\bibnamefont {{Postler}}}, \bibinfo {author} {\bibfnamefont {M.}~\bibnamefont {{Meth}}}, \bibinfo {author} {\bibfnamefont {R.}~\bibnamefont {{Stricker}}}, \bibinfo {author} {\bibfnamefont {E.~A.}\ \bibnamefont {{Martinez}}}, \bibinfo {author} {\bibfnamefont {P.}~\bibnamefont {{Schindler}}}, \bibinfo {author} {\bibfnamefont {T.}~\bibnamefont {{Monz}}}, \bibinfo {author} {\bibfnamefont {J.}~\bibnamefont {{Emerson}}},\ and\ \bibinfo {author} {\bibfnamefont {R.}~\bibnamefont {{Blatt}}},\ }\bibfield  {title} {\bibinfo {title} {{Characterizing large-scale quantum computers via cycle benchmarking}},\ }\href@noop {} {\bibfield  {journal} {\bibinfo  {journal} {arXiv e-prints}\ ,\ \bibinfo {eid} {arXiv:1902.08543}} (\bibinfo {year} {2019})},\ \Eprint {https://arxiv.org/abs/1902.08543} {arXiv:1902.08543 [quant-ph]} \BibitemShut {NoStop}%
\bibitem [{\citenamefont {Flammia}\ and\ \citenamefont {O'Donnell}(2021)}]{Flammia2021}%
  \BibitemOpen
  \bibfield  {author} {\bibinfo {author} {\bibfnamefont {S.~T.}\ \bibnamefont {Flammia}}\ and\ \bibinfo {author} {\bibfnamefont {R.}~\bibnamefont {O'Donnell}},\ }\bibfield  {title} {\bibinfo {title} {Pauli error estimation via population recovery},\ }\href {https://doi.org/10.22331/q-2021-09-23-549} {\bibfield  {journal} {\bibinfo  {journal} {Quantum}\ }\textbf {\bibinfo {volume} {5}},\ \bibinfo {pages} {549} (\bibinfo {year} {2021})},\ \Eprint {https://arxiv.org/abs/2105.02885} {arXiv:2105.02885} \BibitemShut {NoStop}%
\bibitem [{\citenamefont {Harper}\ \emph {et~al.}(2020)\citenamefont {Harper}, \citenamefont {Flammia},\ and\ \citenamefont {Wallman}}]{Harper2020}%
  \BibitemOpen
  \bibfield  {author} {\bibinfo {author} {\bibfnamefont {R.}~\bibnamefont {Harper}}, \bibinfo {author} {\bibfnamefont {S.~T.}\ \bibnamefont {Flammia}},\ and\ \bibinfo {author} {\bibfnamefont {J.~J.}\ \bibnamefont {Wallman}},\ }\bibfield  {title} {\bibinfo {title} {Efficient learning of quantum noise},\ }\href {https://doi.org/10.1038/s41567-020-0992-8} {\bibfield  {journal} {\bibinfo  {journal} {Nature Physics}\ }\textbf {\bibinfo {volume} {16}},\ \bibinfo {pages} {1184} (\bibinfo {year} {2020})},\ \Eprint {https://arxiv.org/abs/1907.13022} {arXiv:1907.13022} \BibitemShut {NoStop}%
\bibitem [{\citenamefont {Flammia}(2022)}]{flammia2021averaged}%
  \BibitemOpen
  \bibfield  {author} {\bibinfo {author} {\bibfnamefont {S.~T.}\ \bibnamefont {Flammia}},\ }\bibfield  {title} {\bibinfo {title} {Averaged circuit eigenvalue sampling},\ }in\ \href {https://doi.org/10.4230/LIPIcs.TQC.2022.4} {\emph {\bibinfo {booktitle} {17th Conference on the Theory of Quantum Computation, Communication and Cryptography (TQC 2022)}}},\ \bibinfo {series} {Leibniz International Proceedings in Informatics (LIPIcs)}, Vol.\ \bibinfo {volume} {232},\ \bibinfo {editor} {edited by\ \bibinfo {editor} {\bibfnamefont {F.}~\bibnamefont {Le~Gall}}\ and\ \bibinfo {editor} {\bibfnamefont {T.}~\bibnamefont {Morimae}}}\ (\bibinfo  {publisher} {Schloss Dagstuhl -- Leibniz-Zentrum f{\"u}r Informatik},\ \bibinfo {address} {Dagstuhl, Germany},\ \bibinfo {year} {2022})\ pp.\ \bibinfo {pages} {4:1--4:10},\ \Eprint {https://arxiv.org/abs/2108.05803} {arXiv:2108.05803} \BibitemShut {NoStop}%
\bibitem [{\citenamefont {Wagner}\ \emph {et~al.}(2023)\citenamefont {Wagner}, \citenamefont {Kampermann}, \citenamefont {Bru{\ss}},\ and\ \citenamefont {Kliesch}}]{wagner2023learning}%
  \BibitemOpen
  \bibfield  {author} {\bibinfo {author} {\bibfnamefont {T.}~\bibnamefont {Wagner}}, \bibinfo {author} {\bibfnamefont {H.}~\bibnamefont {Kampermann}}, \bibinfo {author} {\bibfnamefont {D.}~\bibnamefont {Bru{\ss}}},\ and\ \bibinfo {author} {\bibfnamefont {M.}~\bibnamefont {Kliesch}},\ }\bibfield  {title} {\bibinfo {title} {Learning logical pauli noise in quantum error correction},\ }\href@noop {} {\bibfield  {journal} {\bibinfo  {journal} {Physical review letters}\ }\textbf {\bibinfo {volume} {130}},\ \bibinfo {pages} {200601} (\bibinfo {year} {2023})}\BibitemShut {NoStop}%
\bibitem [{\citenamefont {Van Den~Berg}\ \emph {et~al.}(2023)\citenamefont {Van Den~Berg}, \citenamefont {Minev}, \citenamefont {Kandala},\ and\ \citenamefont {Temme}}]{van2023probabilistic}%
  \BibitemOpen
  \bibfield  {author} {\bibinfo {author} {\bibfnamefont {E.}~\bibnamefont {Van Den~Berg}}, \bibinfo {author} {\bibfnamefont {Z.~K.}\ \bibnamefont {Minev}}, \bibinfo {author} {\bibfnamefont {A.}~\bibnamefont {Kandala}},\ and\ \bibinfo {author} {\bibfnamefont {K.}~\bibnamefont {Temme}},\ }\bibfield  {title} {\bibinfo {title} {Probabilistic error cancellation with sparse {Pauli--Lindblad} models on noisy quantum processors},\ }\href@noop {} {\bibfield  {journal} {\bibinfo  {journal} {Nature Physics}\ ,\ \bibinfo {pages} {1}} (\bibinfo {year} {2023})}\BibitemShut {NoStop}%
\bibitem [{\citenamefont {Chen}\ \emph {et~al.}(2023)\citenamefont {Chen}, \citenamefont {Liu}, \citenamefont {Otten}, \citenamefont {Seif}, \citenamefont {Fefferman},\ and\ \citenamefont {Jiang}}]{chen2023learnability}%
  \BibitemOpen
  \bibfield  {author} {\bibinfo {author} {\bibfnamefont {S.}~\bibnamefont {Chen}}, \bibinfo {author} {\bibfnamefont {Y.}~\bibnamefont {Liu}}, \bibinfo {author} {\bibfnamefont {M.}~\bibnamefont {Otten}}, \bibinfo {author} {\bibfnamefont {A.}~\bibnamefont {Seif}}, \bibinfo {author} {\bibfnamefont {B.}~\bibnamefont {Fefferman}},\ and\ \bibinfo {author} {\bibfnamefont {L.}~\bibnamefont {Jiang}},\ }\bibfield  {title} {\bibinfo {title} {The learnability of {P}auli noise},\ }\href@noop {} {\bibfield  {journal} {\bibinfo  {journal} {Nature Communications}\ }\textbf {\bibinfo {volume} {14}},\ \bibinfo {pages} {52} (\bibinfo {year} {2023})}\BibitemShut {NoStop}%
\bibitem [{\citenamefont {Chen}\ \emph {et~al.}(2022{\natexlab{a}})\citenamefont {Chen}, \citenamefont {Zhou}, \citenamefont {Seif},\ and\ \citenamefont {Jiang}}]{chen2022quantum}%
  \BibitemOpen
  \bibfield  {author} {\bibinfo {author} {\bibfnamefont {S.}~\bibnamefont {Chen}}, \bibinfo {author} {\bibfnamefont {S.}~\bibnamefont {Zhou}}, \bibinfo {author} {\bibfnamefont {A.}~\bibnamefont {Seif}},\ and\ \bibinfo {author} {\bibfnamefont {L.}~\bibnamefont {Jiang}},\ }\bibfield  {title} {\bibinfo {title} {Quantum advantages for pauli channel estimation},\ }\href@noop {} {\bibfield  {journal} {\bibinfo  {journal} {Physical Review A}\ }\textbf {\bibinfo {volume} {105}},\ \bibinfo {pages} {032435} (\bibinfo {year} {2022}{\natexlab{a}})}\BibitemShut {NoStop}%
\bibitem [{\citenamefont {Carignan-Dugas}\ \emph {et~al.}(2023)\citenamefont {Carignan-Dugas}, \citenamefont {Dahlen}, \citenamefont {Hincks}, \citenamefont {Ospadov}, \citenamefont {Beale}, \citenamefont {Ferracin}, \citenamefont {Skanes-Norman}, \citenamefont {Emerson},\ and\ \citenamefont {Wallman}}]{carignan2023error}%
  \BibitemOpen
  \bibfield  {author} {\bibinfo {author} {\bibfnamefont {A.}~\bibnamefont {Carignan-Dugas}}, \bibinfo {author} {\bibfnamefont {D.}~\bibnamefont {Dahlen}}, \bibinfo {author} {\bibfnamefont {I.}~\bibnamefont {Hincks}}, \bibinfo {author} {\bibfnamefont {E.}~\bibnamefont {Ospadov}}, \bibinfo {author} {\bibfnamefont {S.~J.}\ \bibnamefont {Beale}}, \bibinfo {author} {\bibfnamefont {S.}~\bibnamefont {Ferracin}}, \bibinfo {author} {\bibfnamefont {J.}~\bibnamefont {Skanes-Norman}}, \bibinfo {author} {\bibfnamefont {J.}~\bibnamefont {Emerson}},\ and\ \bibinfo {author} {\bibfnamefont {J.~J.}\ \bibnamefont {Wallman}},\ }\bibfield  {title} {\bibinfo {title} {The error reconstruction and compiled calibration of quantum computing cycles},\ }\href@noop {} {\bibfield  {journal} {\bibinfo  {journal} {arXiv preprint arXiv:2303.17714}\ } (\bibinfo {year} {2023})}\BibitemShut {NoStop}%
\bibitem [{\citenamefont {van~den Berg}\ and\ \citenamefont {Wocjan}(2023)}]{van2023techniques}%
  \BibitemOpen
  \bibfield  {author} {\bibinfo {author} {\bibfnamefont {E.}~\bibnamefont {van~den Berg}}\ and\ \bibinfo {author} {\bibfnamefont {P.}~\bibnamefont {Wocjan}},\ }\bibfield  {title} {\bibinfo {title} {Techniques for learning sparse {Pauli--Lindblad} noise models},\ }\href@noop {} {\bibfield  {journal} {\bibinfo  {journal} {arXiv e-prints}\ ,\ \bibinfo {pages} {arXiv}} (\bibinfo {year} {2023})}\BibitemShut {NoStop}%
\bibitem [{\citenamefont {Hockings}\ \emph {et~al.}(2025)\citenamefont {Hockings}, \citenamefont {Doherty},\ and\ \citenamefont {Harper}}]{hockings2024scalable}%
  \BibitemOpen
  \bibfield  {author} {\bibinfo {author} {\bibfnamefont {E.~T.}\ \bibnamefont {Hockings}}, \bibinfo {author} {\bibfnamefont {A.~C.}\ \bibnamefont {Doherty}},\ and\ \bibinfo {author} {\bibfnamefont {R.}~\bibnamefont {Harper}},\ }\bibfield  {title} {\bibinfo {title} {Scalable noise characterization of syndrome-extraction circuits with averaged circuit eigenvalue sampling},\ }\href@noop {} {\bibfield  {journal} {\bibinfo  {journal} {PRX quantum}\ }\textbf {\bibinfo {volume} {6}},\ \bibinfo {pages} {010334} (\bibinfo {year} {2025})}\BibitemShut {NoStop}%
\bibitem [{\citenamefont {Chen}\ \emph {et~al.}(2022{\natexlab{b}})\citenamefont {Chen}, \citenamefont {Yoder}, \citenamefont {Kim}, \citenamefont {Sundaresan}, \citenamefont {Srinivasan}, \citenamefont {Li}, \citenamefont {C{\'o}rcoles}, \citenamefont {Cross},\ and\ \citenamefont {Takita}}]{chen2022calibrated}%
  \BibitemOpen
  \bibfield  {author} {\bibinfo {author} {\bibfnamefont {E.~H.}\ \bibnamefont {Chen}}, \bibinfo {author} {\bibfnamefont {T.~J.}\ \bibnamefont {Yoder}}, \bibinfo {author} {\bibfnamefont {Y.}~\bibnamefont {Kim}}, \bibinfo {author} {\bibfnamefont {N.}~\bibnamefont {Sundaresan}}, \bibinfo {author} {\bibfnamefont {S.}~\bibnamefont {Srinivasan}}, \bibinfo {author} {\bibfnamefont {M.}~\bibnamefont {Li}}, \bibinfo {author} {\bibfnamefont {A.~D.}\ \bibnamefont {C{\'o}rcoles}}, \bibinfo {author} {\bibfnamefont {A.~W.}\ \bibnamefont {Cross}},\ and\ \bibinfo {author} {\bibfnamefont {M.}~\bibnamefont {Takita}},\ }\bibfield  {title} {\bibinfo {title} {Calibrated decoders for experimental quantum error correction},\ }\href@noop {} {\bibfield  {journal} {\bibinfo  {journal} {Physical Review Letters}\ }\textbf {\bibinfo {volume} {128}},\ \bibinfo {pages} {110504} (\bibinfo {year} {2022}{\natexlab{b}})}\BibitemShut {NoStop}%
\bibitem [{\citenamefont {Knill}(2005)}]{Knill2005}%
  \BibitemOpen
  \bibfield  {author} {\bibinfo {author} {\bibfnamefont {E.}~\bibnamefont {Knill}},\ }\bibfield  {title} {\bibinfo {title} {Quantum computing with realistically noisy devices},\ }\href {https://doi.org/10.1038/nature03350} {\bibfield  {journal} {\bibinfo  {journal} {Nature}\ }\textbf {\bibinfo {volume} {434}},\ \bibinfo {pages} {39} (\bibinfo {year} {2005})},\ \Eprint {https://arxiv.org/abs/quant-ph/0410199} {arXiv:quant-ph/0410199} \BibitemShut {NoStop}%
\bibitem [{\citenamefont {Wallman}\ and\ \citenamefont {Emerson}(2016)}]{Wallman2016}%
  \BibitemOpen
  \bibfield  {author} {\bibinfo {author} {\bibfnamefont {J.~J.}\ \bibnamefont {Wallman}}\ and\ \bibinfo {author} {\bibfnamefont {J.}~\bibnamefont {Emerson}},\ }\bibfield  {title} {\bibinfo {title} {Noise tailoring for scalable quantum computation via randomized compiling},\ }\href {https://doi.org/10.1103/PhysRevA.94.052325} {\bibfield  {journal} {\bibinfo  {journal} {Phys. Rev. A}\ }\textbf {\bibinfo {volume} {94}},\ \bibinfo {pages} {052325} (\bibinfo {year} {2016})},\ \Eprint {https://arxiv.org/abs/1512.01098} {arXiv:1512.01098} \BibitemShut {NoStop}%
\bibitem [{\citenamefont {Hashim}\ \emph {et~al.}(2020)\citenamefont {Hashim}, \citenamefont {Naik}, \citenamefont {Morvan}, \citenamefont {Ville}, \citenamefont {Mitchell}, \citenamefont {Kreikebaum}, \citenamefont {Davis}, \citenamefont {Smith}, \citenamefont {Iancu}, \citenamefont {O'Brien} \emph {et~al.}}]{hashim2020randomized}%
  \BibitemOpen
  \bibfield  {author} {\bibinfo {author} {\bibfnamefont {A.}~\bibnamefont {Hashim}}, \bibinfo {author} {\bibfnamefont {R.~K.}\ \bibnamefont {Naik}}, \bibinfo {author} {\bibfnamefont {A.}~\bibnamefont {Morvan}}, \bibinfo {author} {\bibfnamefont {J.-L.}\ \bibnamefont {Ville}}, \bibinfo {author} {\bibfnamefont {B.}~\bibnamefont {Mitchell}}, \bibinfo {author} {\bibfnamefont {J.~M.}\ \bibnamefont {Kreikebaum}}, \bibinfo {author} {\bibfnamefont {M.}~\bibnamefont {Davis}}, \bibinfo {author} {\bibfnamefont {E.}~\bibnamefont {Smith}}, \bibinfo {author} {\bibfnamefont {C.}~\bibnamefont {Iancu}}, \bibinfo {author} {\bibfnamefont {K.~P.}\ \bibnamefont {O'Brien}}, \emph {et~al.},\ }\bibfield  {title} {\bibinfo {title} {Randomized compiling for scalable quantum computing on a noisy superconducting quantum processor},\ }\href@noop {} {\bibfield  {journal} {\bibinfo  {journal} {arXiv preprint arXiv:2010.00215}\ } (\bibinfo {year} {2020})}\BibitemShut {NoStop}%
\bibitem [{\citenamefont {Nielsen}\ \emph {et~al.}(2021)\citenamefont {Nielsen}, \citenamefont {Gamble}, \citenamefont {Rudinger}, \citenamefont {Scholten}, \citenamefont {Young},\ and\ \citenamefont {Blume-Kohout}}]{nielsen2021gate}%
  \BibitemOpen
  \bibfield  {author} {\bibinfo {author} {\bibfnamefont {E.}~\bibnamefont {Nielsen}}, \bibinfo {author} {\bibfnamefont {J.~K.}\ \bibnamefont {Gamble}}, \bibinfo {author} {\bibfnamefont {K.}~\bibnamefont {Rudinger}}, \bibinfo {author} {\bibfnamefont {T.}~\bibnamefont {Scholten}}, \bibinfo {author} {\bibfnamefont {K.}~\bibnamefont {Young}},\ and\ \bibinfo {author} {\bibfnamefont {R.}~\bibnamefont {Blume-Kohout}},\ }\bibfield  {title} {\bibinfo {title} {Gate set tomography},\ }\href@noop {} {\bibfield  {journal} {\bibinfo  {journal} {Quantum}\ }\textbf {\bibinfo {volume} {5}},\ \bibinfo {pages} {557} (\bibinfo {year} {2021})}\BibitemShut {NoStop}%
\bibitem [{\citenamefont {Huang}\ \emph {et~al.}(2022)\citenamefont {Huang}, \citenamefont {Flammia},\ and\ \citenamefont {Preskill}}]{huang2022foundations}%
  \BibitemOpen
  \bibfield  {author} {\bibinfo {author} {\bibfnamefont {H.-Y.}\ \bibnamefont {Huang}}, \bibinfo {author} {\bibfnamefont {S.~T.}\ \bibnamefont {Flammia}},\ and\ \bibinfo {author} {\bibfnamefont {J.}~\bibnamefont {Preskill}},\ }\bibfield  {title} {\bibinfo {title} {Foundations for learning from noisy quantum experiments},\ }\href@noop {} {\bibfield  {journal} {\bibinfo  {journal} {arXiv preprint arXiv:2204.13691}\ } (\bibinfo {year} {2022})}\BibitemShut {NoStop}%
\bibitem [{\citenamefont {Bollob{\'a}s}(1998)}]{bollobas1998modern}%
  \BibitemOpen
  \bibfield  {author} {\bibinfo {author} {\bibfnamefont {B.}~\bibnamefont {Bollob{\'a}s}},\ }\href@noop {} {\emph {\bibinfo {title} {Modern graph theory}}},\ Vol.\ \bibinfo {volume} {184}\ (\bibinfo  {publisher} {Springer Science \& Business Media},\ \bibinfo {year} {1998})\BibitemShut {NoStop}%
\bibitem [{\citenamefont {Gleiss}\ \emph {et~al.}(2003)\citenamefont {Gleiss}, \citenamefont {Leydold},\ and\ \citenamefont {Stadler}}]{gleiss2003circuit}%
  \BibitemOpen
  \bibfield  {author} {\bibinfo {author} {\bibfnamefont {P.~M.}\ \bibnamefont {Gleiss}}, \bibinfo {author} {\bibfnamefont {J.}~\bibnamefont {Leydold}},\ and\ \bibinfo {author} {\bibfnamefont {P.~F.}\ \bibnamefont {Stadler}},\ }\bibfield  {title} {\bibinfo {title} {Circuit bases of strongly connected digraphs},\ }\href {https://doi.org/10.7151/dmgt.1200} {\bibfield  {journal} {\bibinfo  {journal} {Discuss. Math. Graph Theory}\ }\textbf {\bibinfo {volume} {23}},\ \bibinfo {pages} {241} (\bibinfo {year} {2003})}\BibitemShut {NoStop}%
\bibitem [{\citenamefont {Harper}\ \emph {et~al.}(2019)\citenamefont {Harper}, \citenamefont {Hincks}, \citenamefont {Ferrie}, \citenamefont {Flammia},\ and\ \citenamefont {Wallman}}]{harper2019statistical}%
  \BibitemOpen
  \bibfield  {author} {\bibinfo {author} {\bibfnamefont {R.}~\bibnamefont {Harper}}, \bibinfo {author} {\bibfnamefont {I.}~\bibnamefont {Hincks}}, \bibinfo {author} {\bibfnamefont {C.}~\bibnamefont {Ferrie}}, \bibinfo {author} {\bibfnamefont {S.~T.}\ \bibnamefont {Flammia}},\ and\ \bibinfo {author} {\bibfnamefont {J.~J.}\ \bibnamefont {Wallman}},\ }\bibfield  {title} {\bibinfo {title} {Statistical analysis of randomized benchmarking},\ }\href@noop {} {\bibfield  {journal} {\bibinfo  {journal} {Physical Review A}\ }\textbf {\bibinfo {volume} {99}},\ \bibinfo {pages} {052350} (\bibinfo {year} {2019})}\BibitemShut {NoStop}%
\bibitem [{\citenamefont {Gambetta}\ \emph {et~al.}(2012)\citenamefont {Gambetta}, \citenamefont {C\'{o}rcoles}, \citenamefont {Merkel}, \citenamefont {Johnson}, \citenamefont {Smolin}, \citenamefont {Chow}, \citenamefont {Ryan}, \citenamefont {Rigetti}, \citenamefont {Poletto}, \citenamefont {Ohki}, \citenamefont {Ketchen},\ and\ \citenamefont {Steffen}}]{Gambetta2012}%
  \BibitemOpen
  \bibfield  {author} {\bibinfo {author} {\bibfnamefont {J.~M.}\ \bibnamefont {Gambetta}}, \bibinfo {author} {\bibfnamefont {A.~D.}\ \bibnamefont {C\'{o}rcoles}}, \bibinfo {author} {\bibfnamefont {S.~T.}\ \bibnamefont {Merkel}}, \bibinfo {author} {\bibfnamefont {B.~R.}\ \bibnamefont {Johnson}}, \bibinfo {author} {\bibfnamefont {J.~A.}\ \bibnamefont {Smolin}}, \bibinfo {author} {\bibfnamefont {J.~M.}\ \bibnamefont {Chow}}, \bibinfo {author} {\bibfnamefont {C.~A.}\ \bibnamefont {Ryan}}, \bibinfo {author} {\bibfnamefont {C.}~\bibnamefont {Rigetti}}, \bibinfo {author} {\bibfnamefont {S.}~\bibnamefont {Poletto}}, \bibinfo {author} {\bibfnamefont {T.~A.}\ \bibnamefont {Ohki}}, \bibinfo {author} {\bibfnamefont {M.~B.}\ \bibnamefont {Ketchen}},\ and\ \bibinfo {author} {\bibfnamefont {M.}~\bibnamefont {Steffen}},\ }\bibfield  {title} {\bibinfo {title} {Characterization of addressability by simultaneous randomized benchmarking},\ }\href {https://doi.org/10.1103/PhysRevLett.109.240504} {\bibfield  {journal} {\bibinfo  {journal} {Phys. Rev. Lett.}\ }\textbf {\bibinfo {volume} {109}},\ \bibinfo {pages} {240504} (\bibinfo {year} {2012})},\ \Eprint {https://arxiv.org/abs/1204.6308} {arXiv:1204.6308} \BibitemShut {NoStop}%
\bibitem [{\citenamefont {Pelaez~Cisneros}\ \emph {et~al.}(2024)\citenamefont {Pelaez~Cisneros}, \citenamefont {Omole}, \citenamefont {Gokhale}, \citenamefont {Rines}, \citenamefont {Smith}, \citenamefont {Perlin},\ and\ \citenamefont {Hashim}}]{pelaez2024average}%
  \BibitemOpen
  \bibfield  {author} {\bibinfo {author} {\bibfnamefont {E.}~\bibnamefont {Pelaez~Cisneros}}, \bibinfo {author} {\bibfnamefont {V.}~\bibnamefont {Omole}}, \bibinfo {author} {\bibfnamefont {P.}~\bibnamefont {Gokhale}}, \bibinfo {author} {\bibfnamefont {R.}~\bibnamefont {Rines}}, \bibinfo {author} {\bibfnamefont {K.~N.}\ \bibnamefont {Smith}}, \bibinfo {author} {\bibfnamefont {M.~A.}\ \bibnamefont {Perlin}},\ and\ \bibinfo {author} {\bibfnamefont {A.}~\bibnamefont {Hashim}},\ }\bibfield  {title} {\bibinfo {title} {Average circuit eigenvalue sampling on nisq devices},\ }\href@noop {} {\bibfield  {journal} {\bibinfo  {journal} {arXiv e-prints}\ ,\ \bibinfo {pages} {arXiv}} (\bibinfo {year} {2024})}\BibitemShut {NoStop}%
\bibitem [{ibm(2022)}]{ibmquantum}%
  \BibitemOpen
  \href@noop {} {\bibinfo {title} {{IBM} {Quantum}}},\ \bibinfo {howpublished} {https://quantum-computing.ibm.com/services?services=systems} (\bibinfo {year} {2022})\BibitemShut {NoStop}%
\bibitem [{\citenamefont {Mao}\ and\ \citenamefont {Kschischang}(2005)}]{mao2005factor}%
  \BibitemOpen
  \bibfield  {author} {\bibinfo {author} {\bibfnamefont {Y.}~\bibnamefont {Mao}}\ and\ \bibinfo {author} {\bibfnamefont {F.~R.}\ \bibnamefont {Kschischang}},\ }\bibfield  {title} {\bibinfo {title} {On factor graphs and the fourier transform},\ }\href@noop {} {\bibfield  {journal} {\bibinfo  {journal} {IEEE Transactions on Information Theory}\ }\textbf {\bibinfo {volume} {51}},\ \bibinfo {pages} {1635} (\bibinfo {year} {2005})}\BibitemShut {NoStop}%
\bibitem [{\citenamefont {Cisneros}\ \emph {et~al.}(2024)\citenamefont {Cisneros}, \citenamefont {Omole}, \citenamefont {Gokhale}, \citenamefont {Rines}, \citenamefont {Smith}, \citenamefont {Perlin},\ and\ \citenamefont {Hashim}}]{cisneros2024average}%
  \BibitemOpen
  \bibfield  {author} {\bibinfo {author} {\bibfnamefont {E.~P.}\ \bibnamefont {Cisneros}}, \bibinfo {author} {\bibfnamefont {V.}~\bibnamefont {Omole}}, \bibinfo {author} {\bibfnamefont {P.}~\bibnamefont {Gokhale}}, \bibinfo {author} {\bibfnamefont {R.}~\bibnamefont {Rines}}, \bibinfo {author} {\bibfnamefont {K.~N.}\ \bibnamefont {Smith}}, \bibinfo {author} {\bibfnamefont {M.~A.}\ \bibnamefont {Perlin}},\ and\ \bibinfo {author} {\bibfnamefont {A.}~\bibnamefont {Hashim}},\ }\bibfield  {title} {\bibinfo {title} {Average circuit eigenvalue sampling on nisq devices},\ }\href@noop {} {\bibfield  {journal} {\bibinfo  {journal} {arXiv preprint arXiv:2403.12857}\ } (\bibinfo {year} {2024})}\BibitemShut {NoStop}%
\bibitem [{\citenamefont {Ostrove}\ \emph {et~al.}(2023)\citenamefont {Ostrove}, \citenamefont {Rudinger}, \citenamefont {Seritan}, \citenamefont {Young},\ and\ \citenamefont {Blume-Kohout}}]{ostrove2023near}%
  \BibitemOpen
  \bibfield  {author} {\bibinfo {author} {\bibfnamefont {C.}~\bibnamefont {Ostrove}}, \bibinfo {author} {\bibfnamefont {K.}~\bibnamefont {Rudinger}}, \bibinfo {author} {\bibfnamefont {S.}~\bibnamefont {Seritan}}, \bibinfo {author} {\bibfnamefont {K.}~\bibnamefont {Young}},\ and\ \bibinfo {author} {\bibfnamefont {R.}~\bibnamefont {Blume-Kohout}},\ }\bibfield  {title} {\bibinfo {title} {Near-minimal gate set tomography experiment designs},\ }in\ \href@noop {} {\emph {\bibinfo {booktitle} {2023 IEEE International Conference on Quantum Computing and Engineering (QCE)}}},\ Vol.~\bibinfo {volume} {1}\ (\bibinfo {organization} {IEEE},\ \bibinfo {year} {2023})\ pp.\ \bibinfo {pages} {1422--1432}\BibitemShut {NoStop}%
\bibitem [{\citenamefont {Chen}\ \emph {et~al.}(2024)\citenamefont {Chen}, \citenamefont {Oh}, \citenamefont {Zhou}, \citenamefont {Huang},\ and\ \citenamefont {Jiang}}]{chen2024tight}%
  \BibitemOpen
  \bibfield  {author} {\bibinfo {author} {\bibfnamefont {S.}~\bibnamefont {Chen}}, \bibinfo {author} {\bibfnamefont {C.}~\bibnamefont {Oh}}, \bibinfo {author} {\bibfnamefont {S.}~\bibnamefont {Zhou}}, \bibinfo {author} {\bibfnamefont {H.-Y.}\ \bibnamefont {Huang}},\ and\ \bibinfo {author} {\bibfnamefont {L.}~\bibnamefont {Jiang}},\ }\bibfield  {title} {\bibinfo {title} {Tight bounds on pauli channel learning without entanglement},\ }\href@noop {} {\bibfield  {journal} {\bibinfo  {journal} {Physical Review Letters}\ }\textbf {\bibinfo {volume} {132}},\ \bibinfo {pages} {180805} (\bibinfo {year} {2024})}\BibitemShut {NoStop}%
\bibitem [{\citenamefont {Fawzi}\ \emph {et~al.}(2023)\citenamefont {Fawzi}, \citenamefont {Oufkir},\ and\ \citenamefont {Fran{\c{c}}a}}]{fawzi2023lower}%
  \BibitemOpen
  \bibfield  {author} {\bibinfo {author} {\bibfnamefont {O.}~\bibnamefont {Fawzi}}, \bibinfo {author} {\bibfnamefont {A.}~\bibnamefont {Oufkir}},\ and\ \bibinfo {author} {\bibfnamefont {D.~S.}\ \bibnamefont {Fran{\c{c}}a}},\ }\bibfield  {title} {\bibinfo {title} {Lower bounds on learning pauli channels},\ }\href@noop {} {\bibfield  {journal} {\bibinfo  {journal} {arXiv preprint arXiv:2301.09192}\ } (\bibinfo {year} {2023})}\BibitemShut {NoStop}%
\bibitem [{\citenamefont {Chen}\ and\ \citenamefont {Gong}(2023)}]{chen2023efficient}%
  \BibitemOpen
  \bibfield  {author} {\bibinfo {author} {\bibfnamefont {S.}~\bibnamefont {Chen}}\ and\ \bibinfo {author} {\bibfnamefont {W.}~\bibnamefont {Gong}},\ }\bibfield  {title} {\bibinfo {title} {Efficient pauli channel estimation with logarithmic quantum memory},\ }\href@noop {} {\bibfield  {journal} {\bibinfo  {journal} {arXiv preprint arXiv:2309.14326}\ } (\bibinfo {year} {2023})}\BibitemShut {NoStop}%
\bibitem [{\citenamefont {Trinh}\ and\ \citenamefont {Yu}(2024)}]{trinh2024adaptivity}%
  \BibitemOpen
  \bibfield  {author} {\bibinfo {author} {\bibfnamefont {X.~D.}\ \bibnamefont {Trinh}}\ and\ \bibinfo {author} {\bibfnamefont {N.}~\bibnamefont {Yu}},\ }\bibfield  {title} {\bibinfo {title} {Adaptivity is not helpful for pauli channel learning},\ }\href@noop {} {\bibfield  {journal} {\bibinfo  {journal} {arXiv preprint arXiv:2403.09033}\ } (\bibinfo {year} {2024})}\BibitemShut {NoStop}%
\bibitem [{\citenamefont {Cai}\ \emph {et~al.}(2023)\citenamefont {Cai}, \citenamefont {Babbush}, \citenamefont {Benjamin}, \citenamefont {Endo}, \citenamefont {Huggins}, \citenamefont {Li}, \citenamefont {McClean},\ and\ \citenamefont {O’Brien}}]{cai2023quantum}%
  \BibitemOpen
  \bibfield  {author} {\bibinfo {author} {\bibfnamefont {Z.}~\bibnamefont {Cai}}, \bibinfo {author} {\bibfnamefont {R.}~\bibnamefont {Babbush}}, \bibinfo {author} {\bibfnamefont {S.~C.}\ \bibnamefont {Benjamin}}, \bibinfo {author} {\bibfnamefont {S.}~\bibnamefont {Endo}}, \bibinfo {author} {\bibfnamefont {W.~J.}\ \bibnamefont {Huggins}}, \bibinfo {author} {\bibfnamefont {Y.}~\bibnamefont {Li}}, \bibinfo {author} {\bibfnamefont {J.~R.}\ \bibnamefont {McClean}},\ and\ \bibinfo {author} {\bibfnamefont {T.~E.}\ \bibnamefont {O’Brien}},\ }\bibfield  {title} {\bibinfo {title} {Quantum error mitigation},\ }\href@noop {} {\bibfield  {journal} {\bibinfo  {journal} {Reviews of Modern Physics}\ }\textbf {\bibinfo {volume} {95}},\ \bibinfo {pages} {045005} (\bibinfo {year} {2023})}\BibitemShut {NoStop}%
\bibitem [{\citenamefont {Ferracin}\ \emph {et~al.}(2022)\citenamefont {Ferracin}, \citenamefont {Hashim}, \citenamefont {Ville}, \citenamefont {Naik}, \citenamefont {Carignan-Dugas}, \citenamefont {Qassim}, \citenamefont {Morvan}, \citenamefont {Santiago}, \citenamefont {Siddiqi},\ and\ \citenamefont {Wallman}}]{ferracin2022efficiently}%
  \BibitemOpen
  \bibfield  {author} {\bibinfo {author} {\bibfnamefont {S.}~\bibnamefont {Ferracin}}, \bibinfo {author} {\bibfnamefont {A.}~\bibnamefont {Hashim}}, \bibinfo {author} {\bibfnamefont {J.-L.}\ \bibnamefont {Ville}}, \bibinfo {author} {\bibfnamefont {R.}~\bibnamefont {Naik}}, \bibinfo {author} {\bibfnamefont {A.}~\bibnamefont {Carignan-Dugas}}, \bibinfo {author} {\bibfnamefont {H.}~\bibnamefont {Qassim}}, \bibinfo {author} {\bibfnamefont {A.}~\bibnamefont {Morvan}}, \bibinfo {author} {\bibfnamefont {D.~I.}\ \bibnamefont {Santiago}}, \bibinfo {author} {\bibfnamefont {I.}~\bibnamefont {Siddiqi}},\ and\ \bibinfo {author} {\bibfnamefont {J.~J.}\ \bibnamefont {Wallman}},\ }\bibfield  {title} {\bibinfo {title} {Efficiently improving the performance of noisy quantum computers},\ }\href@noop {} {\bibfield  {journal} {\bibinfo  {journal} {arXiv preprint arXiv:2201.10672}\ } (\bibinfo {year} {2022})}\BibitemShut {NoStop}%
\bibitem [{\citenamefont {Kim}\ \emph {et~al.}(2023)\citenamefont {Kim}, \citenamefont {Eddins}, \citenamefont {Anand}, \citenamefont {Wei}, \citenamefont {Van Den~Berg}, \citenamefont {Rosenblatt}, \citenamefont {Nayfeh}, \citenamefont {Wu}, \citenamefont {Zaletel}, \citenamefont {Temme} \emph {et~al.}}]{kim2023evidence}%
  \BibitemOpen
  \bibfield  {author} {\bibinfo {author} {\bibfnamefont {Y.}~\bibnamefont {Kim}}, \bibinfo {author} {\bibfnamefont {A.}~\bibnamefont {Eddins}}, \bibinfo {author} {\bibfnamefont {S.}~\bibnamefont {Anand}}, \bibinfo {author} {\bibfnamefont {K.~X.}\ \bibnamefont {Wei}}, \bibinfo {author} {\bibfnamefont {E.}~\bibnamefont {Van Den~Berg}}, \bibinfo {author} {\bibfnamefont {S.}~\bibnamefont {Rosenblatt}}, \bibinfo {author} {\bibfnamefont {H.}~\bibnamefont {Nayfeh}}, \bibinfo {author} {\bibfnamefont {Y.}~\bibnamefont {Wu}}, \bibinfo {author} {\bibfnamefont {M.}~\bibnamefont {Zaletel}}, \bibinfo {author} {\bibfnamefont {K.}~\bibnamefont {Temme}}, \emph {et~al.},\ }\bibfield  {title} {\bibinfo {title} {Evidence for the utility of quantum computing before fault tolerance},\ }\href@noop {} {\bibfield  {journal} {\bibinfo  {journal} {Nature}\ }\textbf {\bibinfo {volume} {618}},\ \bibinfo {pages} {500} (\bibinfo {year} {2023})}\BibitemShut {NoStop}%
\bibitem [{\citenamefont {Chen}\ \emph {et~al.}(2020)\citenamefont {Chen}, \citenamefont {Yu}, \citenamefont {Zeng},\ and\ \citenamefont {Flammia}}]{chen2020robust}%
  \BibitemOpen
  \bibfield  {author} {\bibinfo {author} {\bibfnamefont {S.}~\bibnamefont {Chen}}, \bibinfo {author} {\bibfnamefont {W.}~\bibnamefont {Yu}}, \bibinfo {author} {\bibfnamefont {P.}~\bibnamefont {Zeng}},\ and\ \bibinfo {author} {\bibfnamefont {S.~T.}\ \bibnamefont {Flammia}},\ }\href@noop {} {\bibinfo {title} {Robust shadow estimation}} (\bibinfo {year} {2020}),\ \Eprint {https://arxiv.org/abs/2011.09636} {arXiv:2011.09636} \BibitemShut {NoStop}%
\bibitem [{\citenamefont {Endo}\ \emph {et~al.}(2018)\citenamefont {Endo}, \citenamefont {Benjamin},\ and\ \citenamefont {Li}}]{endo2018practical}%
  \BibitemOpen
  \bibfield  {author} {\bibinfo {author} {\bibfnamefont {S.}~\bibnamefont {Endo}}, \bibinfo {author} {\bibfnamefont {S.~C.}\ \bibnamefont {Benjamin}},\ and\ \bibinfo {author} {\bibfnamefont {Y.}~\bibnamefont {Li}},\ }\bibfield  {title} {\bibinfo {title} {Practical quantum error mitigation for near-future applications},\ }\href@noop {} {\bibfield  {journal} {\bibinfo  {journal} {Physical Review X}\ }\textbf {\bibinfo {volume} {8}},\ \bibinfo {pages} {031027} (\bibinfo {year} {2018})}\BibitemShut {NoStop}%
\bibitem [{\citenamefont {Suzuki}\ \emph {et~al.}(2022)\citenamefont {Suzuki}, \citenamefont {Endo}, \citenamefont {Fujii},\ and\ \citenamefont {Tokunaga}}]{suzuki2022quantum}%
  \BibitemOpen
  \bibfield  {author} {\bibinfo {author} {\bibfnamefont {Y.}~\bibnamefont {Suzuki}}, \bibinfo {author} {\bibfnamefont {S.}~\bibnamefont {Endo}}, \bibinfo {author} {\bibfnamefont {K.}~\bibnamefont {Fujii}},\ and\ \bibinfo {author} {\bibfnamefont {Y.}~\bibnamefont {Tokunaga}},\ }\bibfield  {title} {\bibinfo {title} {Quantum error mitigation as a universal error reduction technique: Applications from the nisq to the fault-tolerant quantum computing eras},\ }\href@noop {} {\bibfield  {journal} {\bibinfo  {journal} {PRX Quantum}\ }\textbf {\bibinfo {volume} {3}},\ \bibinfo {pages} {010345} (\bibinfo {year} {2022})}\BibitemShut {NoStop}%
\bibitem [{\citenamefont {Chen}\ \emph {et~al.}(2025)\citenamefont {Chen}, \citenamefont {Chen}, \citenamefont {Fischer}, \citenamefont {Eddins}, \citenamefont {Govia}, \citenamefont {Mitchell}, \citenamefont {He}, \citenamefont {Kim}, \citenamefont {Jiang},\ and\ \citenamefont {Seif}}]{chen2025disambiguating}%
  \BibitemOpen
  \bibfield  {author} {\bibinfo {author} {\bibfnamefont {E.~H.}\ \bibnamefont {Chen}}, \bibinfo {author} {\bibfnamefont {S.}~\bibnamefont {Chen}}, \bibinfo {author} {\bibfnamefont {L.}~\bibnamefont {Fischer}}, \bibinfo {author} {\bibfnamefont {A.}~\bibnamefont {Eddins}}, \bibinfo {author} {\bibfnamefont {L.}~\bibnamefont {Govia}}, \bibinfo {author} {\bibfnamefont {B.}~\bibnamefont {Mitchell}}, \bibinfo {author} {\bibfnamefont {A.}~\bibnamefont {He}}, \bibinfo {author} {\bibfnamefont {Y.}~\bibnamefont {Kim}}, \bibinfo {author} {\bibfnamefont {L.}~\bibnamefont {Jiang}},\ and\ \bibinfo {author} {\bibfnamefont {A.}~\bibnamefont {Seif}},\ }\bibfield  {title} {\bibinfo {title} {Disambiguating {P}auli noise in quantum computers},\ }\href@noop {} {\bibfield  {journal} {\bibinfo  {journal} {under preparation}\ } (\bibinfo {year} {2025})}\BibitemShut {NoStop}%
\bibitem [{\citenamefont {Nielsen}\ \emph {et~al.}(2022)\citenamefont {Nielsen}, \citenamefont {Young},\ and\ \citenamefont {Blume-Kohout}}]{nielsen2022first}%
  \BibitemOpen
  \bibfield  {author} {\bibinfo {author} {\bibfnamefont {E.}~\bibnamefont {Nielsen}}, \bibinfo {author} {\bibfnamefont {K.}~\bibnamefont {Young}},\ and\ \bibinfo {author} {\bibfnamefont {R.}~\bibnamefont {Blume-Kohout}},\ }\bibfield  {title} {\bibinfo {title} {First-order gauge-invariant error rates in quantum processors},\ }in\ \href@noop {} {\emph {\bibinfo {booktitle} {APS March Meeting Abstracts}}},\ Vol.\ \bibinfo {volume} {2022}\ (\bibinfo {year} {2022})\ pp.\ \bibinfo {pages} {M38--009}\BibitemShut {NoStop}%
\bibitem [{\citenamefont {Blume-Kohout}\ \emph {et~al.}(2022)\citenamefont {Blume-Kohout}, \citenamefont {da~Silva}, \citenamefont {Nielsen}, \citenamefont {Proctor}, \citenamefont {Rudinger}, \citenamefont {Sarovar},\ and\ \citenamefont {Young}}]{blume2022taxonomy}%
  \BibitemOpen
  \bibfield  {author} {\bibinfo {author} {\bibfnamefont {R.}~\bibnamefont {Blume-Kohout}}, \bibinfo {author} {\bibfnamefont {M.~P.}\ \bibnamefont {da~Silva}}, \bibinfo {author} {\bibfnamefont {E.}~\bibnamefont {Nielsen}}, \bibinfo {author} {\bibfnamefont {T.}~\bibnamefont {Proctor}}, \bibinfo {author} {\bibfnamefont {K.}~\bibnamefont {Rudinger}}, \bibinfo {author} {\bibfnamefont {M.}~\bibnamefont {Sarovar}},\ and\ \bibinfo {author} {\bibfnamefont {K.}~\bibnamefont {Young}},\ }\bibfield  {title} {\bibinfo {title} {A taxonomy of small markovian errors},\ }\href@noop {} {\bibfield  {journal} {\bibinfo  {journal} {PRX Quantum}\ }\textbf {\bibinfo {volume} {3}},\ \bibinfo {pages} {020335} (\bibinfo {year} {2022})}\BibitemShut {NoStop}%
\bibitem [{\citenamefont {Ryan-Anderson}\ \emph {et~al.}(2021)\citenamefont {Ryan-Anderson}, \citenamefont {Bohnet}, \citenamefont {Lee}, \citenamefont {Gresh}, \citenamefont {Hankin}, \citenamefont {Gaebler}, \citenamefont {Francois}, \citenamefont {Chernoguzov}, \citenamefont {Lucchetti}, \citenamefont {Brown} \emph {et~al.}}]{ryan2021realization}%
  \BibitemOpen
  \bibfield  {author} {\bibinfo {author} {\bibfnamefont {C.}~\bibnamefont {Ryan-Anderson}}, \bibinfo {author} {\bibfnamefont {J.~G.}\ \bibnamefont {Bohnet}}, \bibinfo {author} {\bibfnamefont {K.}~\bibnamefont {Lee}}, \bibinfo {author} {\bibfnamefont {D.}~\bibnamefont {Gresh}}, \bibinfo {author} {\bibfnamefont {A.}~\bibnamefont {Hankin}}, \bibinfo {author} {\bibfnamefont {J.}~\bibnamefont {Gaebler}}, \bibinfo {author} {\bibfnamefont {D.}~\bibnamefont {Francois}}, \bibinfo {author} {\bibfnamefont {A.}~\bibnamefont {Chernoguzov}}, \bibinfo {author} {\bibfnamefont {D.}~\bibnamefont {Lucchetti}}, \bibinfo {author} {\bibfnamefont {N.~C.}\ \bibnamefont {Brown}}, \emph {et~al.},\ }\bibfield  {title} {\bibinfo {title} {Realization of real-time fault-tolerant quantum error correction},\ }\href {https://journals.aps.org/prx/pdf/10.1103/PhysRevX.11.041058} {\bibfield  {journal} {\bibinfo  {journal} {Physical Review X}\ }\textbf {\bibinfo {volume} {11}},\ \bibinfo {pages} {041058} (\bibinfo {year} {2021})}\BibitemShut {NoStop}%
\bibitem [{\citenamefont {Singh}\ \emph {et~al.}(2023)\citenamefont {Singh}, \citenamefont {Bradley}, \citenamefont {Anand}, \citenamefont {Ramesh}, \citenamefont {White},\ and\ \citenamefont {Bernien}}]{singh2023mid}%
  \BibitemOpen
  \bibfield  {author} {\bibinfo {author} {\bibfnamefont {K.}~\bibnamefont {Singh}}, \bibinfo {author} {\bibfnamefont {C.~E.}\ \bibnamefont {Bradley}}, \bibinfo {author} {\bibfnamefont {S.}~\bibnamefont {Anand}}, \bibinfo {author} {\bibfnamefont {V.}~\bibnamefont {Ramesh}}, \bibinfo {author} {\bibfnamefont {R.}~\bibnamefont {White}},\ and\ \bibinfo {author} {\bibfnamefont {H.}~\bibnamefont {Bernien}},\ }\bibfield  {title} {\bibinfo {title} {Mid-circuit correction of correlated phase errors using an array of spectator qubits},\ }\href {https://www.science.org/doi/abs/10.1126/science.ade5337} {\bibfield  {journal} {\bibinfo  {journal} {Science}\ }\textbf {\bibinfo {volume} {380}},\ \bibinfo {pages} {1265} (\bibinfo {year} {2023})}\BibitemShut {NoStop}%
\bibitem [{\citenamefont {Aleiner}\ \emph {et~al.}(2023)\citenamefont {Aleiner}, \citenamefont {Allen}, \citenamefont {Andersen}, \citenamefont {Ansmann}, \citenamefont {Arute}, \citenamefont {Arya}, \citenamefont {Asfaw}, \citenamefont {Atalaya}, \citenamefont {Babbush}, \citenamefont {Bacon} \emph {et~al.}}]{google2023suppressing}%
  \BibitemOpen
  \bibfield  {author} {\bibinfo {author} {\bibfnamefont {I.}~\bibnamefont {Aleiner}}, \bibinfo {author} {\bibfnamefont {R.}~\bibnamefont {Allen}}, \bibinfo {author} {\bibfnamefont {T.~I.}\ \bibnamefont {Andersen}}, \bibinfo {author} {\bibfnamefont {M.}~\bibnamefont {Ansmann}}, \bibinfo {author} {\bibfnamefont {F.}~\bibnamefont {Arute}}, \bibinfo {author} {\bibfnamefont {K.}~\bibnamefont {Arya}}, \bibinfo {author} {\bibfnamefont {A.}~\bibnamefont {Asfaw}}, \bibinfo {author} {\bibfnamefont {J.}~\bibnamefont {Atalaya}}, \bibinfo {author} {\bibfnamefont {R.}~\bibnamefont {Babbush}}, \bibinfo {author} {\bibfnamefont {D.}~\bibnamefont {Bacon}}, \emph {et~al.},\ }\bibfield  {title} {\bibinfo {title} {Suppressing quantum errors by scaling a surface code logical qubit},\ }\href {https://www.nature.com/articles/s41586-022-05434-1} {\bibfield  {journal} {\bibinfo  {journal} {Nature}\ }\textbf {\bibinfo {volume} {614}},\ \bibinfo {pages} {676} (\bibinfo {year} {2023})}\BibitemShut {NoStop}%
\bibitem [{\citenamefont {Bluvstein}\ \emph {et~al.}(2024)\citenamefont {Bluvstein}, \citenamefont {Evered}, \citenamefont {Geim}, \citenamefont {Li}, \citenamefont {Zhou}, \citenamefont {Manovitz}, \citenamefont {Ebadi}, \citenamefont {Cain}, \citenamefont {Kalinowski}, \citenamefont {Hangleiter} \emph {et~al.}}]{bluvstein2024logical}%
  \BibitemOpen
  \bibfield  {author} {\bibinfo {author} {\bibfnamefont {D.}~\bibnamefont {Bluvstein}}, \bibinfo {author} {\bibfnamefont {S.~J.}\ \bibnamefont {Evered}}, \bibinfo {author} {\bibfnamefont {A.~A.}\ \bibnamefont {Geim}}, \bibinfo {author} {\bibfnamefont {S.~H.}\ \bibnamefont {Li}}, \bibinfo {author} {\bibfnamefont {H.}~\bibnamefont {Zhou}}, \bibinfo {author} {\bibfnamefont {T.}~\bibnamefont {Manovitz}}, \bibinfo {author} {\bibfnamefont {S.}~\bibnamefont {Ebadi}}, \bibinfo {author} {\bibfnamefont {M.}~\bibnamefont {Cain}}, \bibinfo {author} {\bibfnamefont {M.}~\bibnamefont {Kalinowski}}, \bibinfo {author} {\bibfnamefont {D.}~\bibnamefont {Hangleiter}}, \emph {et~al.},\ }\bibfield  {title} {\bibinfo {title} {Logical quantum processor based on reconfigurable atom arrays},\ }\href {https://www.nature.com/articles/s41586-023-06927-3} {\bibfield  {journal} {\bibinfo  {journal} {Nature}\ }\textbf {\bibinfo {volume} {626}},\ \bibinfo {pages} {58} (\bibinfo {year} {2024})}\BibitemShut {NoStop}%
\bibitem [{\citenamefont {Beale}\ and\ \citenamefont {Wallman}(2023)}]{beale2023randomized}%
  \BibitemOpen
  \bibfield  {author} {\bibinfo {author} {\bibfnamefont {S.~J.}\ \bibnamefont {Beale}}\ and\ \bibinfo {author} {\bibfnamefont {J.~J.}\ \bibnamefont {Wallman}},\ }\bibfield  {title} {\bibinfo {title} {Randomized compiling for subsystem measurements},\ }\href@noop {} {\bibfield  {journal} {\bibinfo  {journal} {arXiv preprint arXiv:2304.06599}\ } (\bibinfo {year} {2023})}\BibitemShut {NoStop}%
\bibitem [{\citenamefont {Zhang}\ \emph {et~al.}(2024)\citenamefont {Zhang}, \citenamefont {Chen}, \citenamefont {Liu},\ and\ \citenamefont {Jiang}}]{zhang2024generalized}%
  \BibitemOpen
  \bibfield  {author} {\bibinfo {author} {\bibfnamefont {Z.}~\bibnamefont {Zhang}}, \bibinfo {author} {\bibfnamefont {S.}~\bibnamefont {Chen}}, \bibinfo {author} {\bibfnamefont {Y.}~\bibnamefont {Liu}},\ and\ \bibinfo {author} {\bibfnamefont {L.}~\bibnamefont {Jiang}},\ }\bibfield  {title} {\bibinfo {title} {A generalized cycle benchmarking algorithm for characterizing mid-circuit measurements},\ }\href@noop {} {\bibfield  {journal} {\bibinfo  {journal} {arXiv preprint arXiv:2406.02669}\ } (\bibinfo {year} {2024})}\BibitemShut {NoStop}%
\bibitem [{\citenamefont {Hines}\ and\ \citenamefont {Proctor}(2024)}]{hines2024pauli}%
  \BibitemOpen
  \bibfield  {author} {\bibinfo {author} {\bibfnamefont {J.}~\bibnamefont {Hines}}\ and\ \bibinfo {author} {\bibfnamefont {T.}~\bibnamefont {Proctor}},\ }\bibfield  {title} {\bibinfo {title} {Pauli noise learning for mid-circuit measurements},\ }\href@noop {} {\bibfield  {journal} {\bibinfo  {journal} {arXiv preprint arXiv:2406.09299}\ } (\bibinfo {year} {2024})}\BibitemShut {NoStop}%
\bibitem [{\citenamefont {Combes}\ \emph {et~al.}(2017)\citenamefont {Combes}, \citenamefont {Granade}, \citenamefont {Ferrie},\ and\ \citenamefont {Flammia}}]{combes2017logical}%
  \BibitemOpen
  \bibfield  {author} {\bibinfo {author} {\bibfnamefont {J.}~\bibnamefont {Combes}}, \bibinfo {author} {\bibfnamefont {C.}~\bibnamefont {Granade}}, \bibinfo {author} {\bibfnamefont {C.}~\bibnamefont {Ferrie}},\ and\ \bibinfo {author} {\bibfnamefont {S.~T.}\ \bibnamefont {Flammia}},\ }\bibfield  {title} {\bibinfo {title} {Logical randomized benchmarking},\ }\href@noop {} {\bibfield  {journal} {\bibinfo  {journal} {arXiv preprint arXiv:1702.03688}\ } (\bibinfo {year} {2017})}\BibitemShut {NoStop}%
\bibitem [{\citenamefont {Rouz{\'e}}\ and\ \citenamefont {Fran{\c{c}}a}(2023)}]{rouze2023efficient}%
  \BibitemOpen
  \bibfield  {author} {\bibinfo {author} {\bibfnamefont {C.}~\bibnamefont {Rouz{\'e}}}\ and\ \bibinfo {author} {\bibfnamefont {D.~S.}\ \bibnamefont {Fran{\c{c}}a}},\ }\bibfield  {title} {\bibinfo {title} {Efficient learning of the structure and parameters of local pauli noise channels},\ }\href@noop {} {\bibfield  {journal} {\bibinfo  {journal} {arXiv preprint arXiv:2307.02959}\ } (\bibinfo {year} {2023})}\BibitemShut {NoStop}%
\bibitem [{\citenamefont {Hsiao}(1983)}]{hsiao1983identification}%
  \BibitemOpen
  \bibfield  {author} {\bibinfo {author} {\bibfnamefont {C.}~\bibnamefont {Hsiao}},\ }\bibfield  {title} {\bibinfo {title} {Identification},\ }\href@noop {} {\bibfield  {journal} {\bibinfo  {journal} {Handbook of econometrics}\ }\textbf {\bibinfo {volume} {1}},\ \bibinfo {pages} {223} (\bibinfo {year} {1983})}\BibitemShut {NoStop}%
\bibitem [{\citenamefont {Mohseni}\ \emph {et~al.}(2008)\citenamefont {Mohseni}, \citenamefont {Rezakhani},\ and\ \citenamefont {Lidar}}]{mohseni2008quantum}%
  \BibitemOpen
  \bibfield  {author} {\bibinfo {author} {\bibfnamefont {M.}~\bibnamefont {Mohseni}}, \bibinfo {author} {\bibfnamefont {A.~T.}\ \bibnamefont {Rezakhani}},\ and\ \bibinfo {author} {\bibfnamefont {D.~A.}\ \bibnamefont {Lidar}},\ }\bibfield  {title} {\bibinfo {title} {Quantum-process tomography: Resource analysis of different strategies},\ }\href@noop {} {\bibfield  {journal} {\bibinfo  {journal} {Physical Review A}\ }\textbf {\bibinfo {volume} {77}},\ \bibinfo {pages} {032322} (\bibinfo {year} {2008})}\BibitemShut {NoStop}%
\bibitem [{\citenamefont {Emerson}\ \emph {et~al.}(2005)\citenamefont {Emerson}, \citenamefont {Alicki},\ and\ \citenamefont {\.{Z}yczkowski}}]{Emerson2005}%
  \BibitemOpen
  \bibfield  {author} {\bibinfo {author} {\bibfnamefont {J.}~\bibnamefont {Emerson}}, \bibinfo {author} {\bibfnamefont {R.}~\bibnamefont {Alicki}},\ and\ \bibinfo {author} {\bibfnamefont {K.}~\bibnamefont {\.{Z}yczkowski}},\ }\bibfield  {title} {\bibinfo {title} {Scalable noise estimation with random unitary operators},\ }\href {https://doi.org/10.1088/1464-4266/7/10/021} {\bibfield  {journal} {\bibinfo  {journal} {J. Opt. B}\ }\textbf {\bibinfo {volume} {7}},\ \bibinfo {pages} {S347} (\bibinfo {year} {2005})},\ \Eprint {https://arxiv.org/abs/quant-ph/0503243} {quant-ph/0503243} \BibitemShut {NoStop}%
\bibitem [{\citenamefont {Knill}\ \emph {et~al.}(2008)\citenamefont {Knill}, \citenamefont {Leibfried}, \citenamefont {Reichle}, \citenamefont {Britton}, \citenamefont {Blakestad}, \citenamefont {Jost}, \citenamefont {Langer}, \citenamefont {Ozeri}, \citenamefont {Seidelin},\ and\ \citenamefont {Wineland}}]{Knill2008}%
  \BibitemOpen
  \bibfield  {author} {\bibinfo {author} {\bibfnamefont {E.}~\bibnamefont {Knill}}, \bibinfo {author} {\bibfnamefont {D.}~\bibnamefont {Leibfried}}, \bibinfo {author} {\bibfnamefont {R.}~\bibnamefont {Reichle}}, \bibinfo {author} {\bibfnamefont {J.}~\bibnamefont {Britton}}, \bibinfo {author} {\bibfnamefont {R.~B.}\ \bibnamefont {Blakestad}}, \bibinfo {author} {\bibfnamefont {J.~D.}\ \bibnamefont {Jost}}, \bibinfo {author} {\bibfnamefont {C.}~\bibnamefont {Langer}}, \bibinfo {author} {\bibfnamefont {R.}~\bibnamefont {Ozeri}}, \bibinfo {author} {\bibfnamefont {S.}~\bibnamefont {Seidelin}},\ and\ \bibinfo {author} {\bibfnamefont {D.~J.}\ \bibnamefont {Wineland}},\ }\bibfield  {title} {\bibinfo {title} {Randomized benchmarking of quantum gates},\ }\href {https://doi.org/10.1103/PhysRevA.77.012307} {\bibfield  {journal} {\bibinfo  {journal} {Phys. Rev. A}\ }\textbf {\bibinfo {volume} {77}},\ \bibinfo {pages} {012307} (\bibinfo {year} {2008})},\ \Eprint {https://arxiv.org/abs/0707.0963} {arXiv:0707.0963} \BibitemShut {NoStop}%
\bibitem [{\citenamefont {Dankert}\ \emph {et~al.}(2009)\citenamefont {Dankert}, \citenamefont {Cleve}, \citenamefont {Emerson},\ and\ \citenamefont {Livine}}]{Dankert2009}%
  \BibitemOpen
  \bibfield  {author} {\bibinfo {author} {\bibfnamefont {C.}~\bibnamefont {Dankert}}, \bibinfo {author} {\bibfnamefont {R.}~\bibnamefont {Cleve}}, \bibinfo {author} {\bibfnamefont {J.}~\bibnamefont {Emerson}},\ and\ \bibinfo {author} {\bibfnamefont {E.}~\bibnamefont {Livine}},\ }\bibfield  {title} {\bibinfo {title} {Exact and approximate unitary 2-designs and their application to fidelity estimation},\ }\href {https://doi.org/10.1103/PhysRevA.80.012304} {\bibfield  {journal} {\bibinfo  {journal} {Phys. Rev. A}\ }\textbf {\bibinfo {volume} {80}},\ \bibinfo {pages} {012304} (\bibinfo {year} {2009})},\ \Eprint {https://arxiv.org/abs/quant-ph/0606161} {arXiv:quant-ph/0606161} \BibitemShut {NoStop}%
\bibitem [{\citenamefont {Magesan}\ \emph {et~al.}(2011)\citenamefont {Magesan}, \citenamefont {Gambetta},\ and\ \citenamefont {Emerson}}]{magesan2011scalable}%
  \BibitemOpen
  \bibfield  {author} {\bibinfo {author} {\bibfnamefont {E.}~\bibnamefont {Magesan}}, \bibinfo {author} {\bibfnamefont {J.~M.}\ \bibnamefont {Gambetta}},\ and\ \bibinfo {author} {\bibfnamefont {J.}~\bibnamefont {Emerson}},\ }\bibfield  {title} {\bibinfo {title} {Scalable and robust randomized benchmarking of quantum processes},\ }\href@noop {} {\bibfield  {journal} {\bibinfo  {journal} {Physical review letters}\ }\textbf {\bibinfo {volume} {106}},\ \bibinfo {pages} {180504} (\bibinfo {year} {2011})}\BibitemShut {NoStop}%
\bibitem [{\citenamefont {Magesan}\ \emph {et~al.}(2012)\citenamefont {Magesan}, \citenamefont {Gambetta}, \citenamefont {Johnson}, \citenamefont {Ryan}, \citenamefont {Chow}, \citenamefont {Merkel}, \citenamefont {da~Silva}, \citenamefont {Keefe}, \citenamefont {Rothwell}, \citenamefont {Ohki}, \citenamefont {Ketchen},\ and\ \citenamefont {Steffen}}]{Magesan2012b}%
  \BibitemOpen
  \bibfield  {author} {\bibinfo {author} {\bibfnamefont {E.}~\bibnamefont {Magesan}}, \bibinfo {author} {\bibfnamefont {J.~M.}\ \bibnamefont {Gambetta}}, \bibinfo {author} {\bibfnamefont {B.~R.}\ \bibnamefont {Johnson}}, \bibinfo {author} {\bibfnamefont {C.~A.}\ \bibnamefont {Ryan}}, \bibinfo {author} {\bibfnamefont {J.~M.}\ \bibnamefont {Chow}}, \bibinfo {author} {\bibfnamefont {S.~T.}\ \bibnamefont {Merkel}}, \bibinfo {author} {\bibfnamefont {M.~P.}\ \bibnamefont {da~Silva}}, \bibinfo {author} {\bibfnamefont {G.~A.}\ \bibnamefont {Keefe}}, \bibinfo {author} {\bibfnamefont {M.~B.}\ \bibnamefont {Rothwell}}, \bibinfo {author} {\bibfnamefont {T.~A.}\ \bibnamefont {Ohki}}, \bibinfo {author} {\bibfnamefont {M.~B.}\ \bibnamefont {Ketchen}},\ and\ \bibinfo {author} {\bibfnamefont {M.}~\bibnamefont {Steffen}},\ }\bibfield  {title} {\bibinfo {title} {Efficient measurement of quantum gate error by interleaved randomized benchmarking},\ }\href {https://doi.org/10.1103/PhysRevLett.109.080505} {\bibfield  {journal} {\bibinfo  {journal} {Phys. Rev. Lett.}\ }\textbf {\bibinfo {volume} {109}},\ \bibinfo {pages} {080505} (\bibinfo {year} {2012})},\ \Eprint {https://arxiv.org/abs/1203.4550} {arXiv:1203.4550} \BibitemShut {NoStop}%
\bibitem [{\citenamefont {Bravyi}\ \emph {et~al.}(2021)\citenamefont {Bravyi}, \citenamefont {Sheldon}, \citenamefont {Kandala}, \citenamefont {Mckay},\ and\ \citenamefont {Gambetta}}]{bravyi2021mitigating}%
  \BibitemOpen
  \bibfield  {author} {\bibinfo {author} {\bibfnamefont {S.}~\bibnamefont {Bravyi}}, \bibinfo {author} {\bibfnamefont {S.}~\bibnamefont {Sheldon}}, \bibinfo {author} {\bibfnamefont {A.}~\bibnamefont {Kandala}}, \bibinfo {author} {\bibfnamefont {D.~C.}\ \bibnamefont {Mckay}},\ and\ \bibinfo {author} {\bibfnamefont {J.~M.}\ \bibnamefont {Gambetta}},\ }\bibfield  {title} {\bibinfo {title} {Mitigating measurement errors in multiqubit experiments},\ }\href@noop {} {\bibfield  {journal} {\bibinfo  {journal} {Physical Review A}\ }\textbf {\bibinfo {volume} {103}},\ \bibinfo {pages} {042605} (\bibinfo {year} {2021})}\BibitemShut {NoStop}%
\bibitem [{\citenamefont {Fowler}\ and\ \citenamefont {Martinis}(2014)}]{fowler2014quantifying}%
  \BibitemOpen
  \bibfield  {author} {\bibinfo {author} {\bibfnamefont {A.~G.}\ \bibnamefont {Fowler}}\ and\ \bibinfo {author} {\bibfnamefont {J.~M.}\ \bibnamefont {Martinis}},\ }\bibfield  {title} {\bibinfo {title} {Quantifying the effects of local many-qubit errors and nonlocal two-qubit errors on the surface code},\ }\href@noop {} {\bibfield  {journal} {\bibinfo  {journal} {Physical Review A}\ }\textbf {\bibinfo {volume} {89}},\ \bibinfo {pages} {032316} (\bibinfo {year} {2014})}\BibitemShut {NoStop}%
\bibitem [{\citenamefont {Nickerson}\ and\ \citenamefont {Brown}(2019)}]{nickerson2019analysing}%
  \BibitemOpen
  \bibfield  {author} {\bibinfo {author} {\bibfnamefont {N.~H.}\ \bibnamefont {Nickerson}}\ and\ \bibinfo {author} {\bibfnamefont {B.~J.}\ \bibnamefont {Brown}},\ }\bibfield  {title} {\bibinfo {title} {Analysing correlated noise on the surface code using adaptive decoding algorithms},\ }\href@noop {} {\bibfield  {journal} {\bibinfo  {journal} {Quantum}\ }\textbf {\bibinfo {volume} {3}},\ \bibinfo {pages} {131} (\bibinfo {year} {2019})}\BibitemShut {NoStop}%
\bibitem [{\citenamefont {Clifford}(1990)}]{clifford1990markov}%
  \BibitemOpen
  \bibfield  {author} {\bibinfo {author} {\bibfnamefont {P.}~\bibnamefont {Clifford}},\ }\bibfield  {title} {\bibinfo {title} {Markov random fields in statistics},\ }\href@noop {} {\bibfield  {journal} {\bibinfo  {journal} {Disorder in physical systems: A volume in honour of John M. Hammersley}\ ,\ \bibinfo {pages} {19}} (\bibinfo {year} {1990})}\BibitemShut {NoStop}%
\bibitem [{\citenamefont {Hu}\ \emph {et~al.}(2025)\citenamefont {Hu}, \citenamefont {Gu}, \citenamefont {Majumder}, \citenamefont {Ren}, \citenamefont {Zhang}, \citenamefont {Wang}, \citenamefont {You}, \citenamefont {Minev}, \citenamefont {Yelin},\ and\ \citenamefont {Seif}}]{hu2024demonstration}%
  \BibitemOpen
  \bibfield  {author} {\bibinfo {author} {\bibfnamefont {H.-Y.}\ \bibnamefont {Hu}}, \bibinfo {author} {\bibfnamefont {A.}~\bibnamefont {Gu}}, \bibinfo {author} {\bibfnamefont {S.}~\bibnamefont {Majumder}}, \bibinfo {author} {\bibfnamefont {H.}~\bibnamefont {Ren}}, \bibinfo {author} {\bibfnamefont {Y.}~\bibnamefont {Zhang}}, \bibinfo {author} {\bibfnamefont {D.~S.}\ \bibnamefont {Wang}}, \bibinfo {author} {\bibfnamefont {Y.-Z.}\ \bibnamefont {You}}, \bibinfo {author} {\bibfnamefont {Z.}~\bibnamefont {Minev}}, \bibinfo {author} {\bibfnamefont {S.~F.}\ \bibnamefont {Yelin}},\ and\ \bibinfo {author} {\bibfnamefont {A.}~\bibnamefont {Seif}},\ }\bibfield  {title} {\bibinfo {title} {Demonstration of robust and efficient quantum property learning with shallow shadows},\ }\href@noop {} {\bibfield  {journal} {\bibinfo  {journal} {Nature Communications}\ }\textbf {\bibinfo {volume} {16}},\ \bibinfo {pages} {2943} (\bibinfo {year} {2025})}\BibitemShut {NoStop}%
\bibitem [{\citenamefont {Calzona}\ \emph {et~al.}(2024)\citenamefont {Calzona}, \citenamefont {Papi{\v{c}}}, \citenamefont {Figueroa-Romero},\ and\ \citenamefont {Auer}}]{calzona2024multi}%
  \BibitemOpen
  \bibfield  {author} {\bibinfo {author} {\bibfnamefont {A.}~\bibnamefont {Calzona}}, \bibinfo {author} {\bibfnamefont {M.}~\bibnamefont {Papi{\v{c}}}}, \bibinfo {author} {\bibfnamefont {P.}~\bibnamefont {Figueroa-Romero}},\ and\ \bibinfo {author} {\bibfnamefont {A.}~\bibnamefont {Auer}},\ }\bibfield  {title} {\bibinfo {title} {Multi-layer cycle benchmarking for high-accuracy error characterization},\ }\href@noop {} {\bibfield  {journal} {\bibinfo  {journal} {arXiv preprint arXiv:2412.09332}\ } (\bibinfo {year} {2024})}\BibitemShut {NoStop}%
\bibitem [{\citenamefont {Berge}(2001)}]{berge2001theory}%
  \BibitemOpen
  \bibfield  {author} {\bibinfo {author} {\bibfnamefont {C.}~\bibnamefont {Berge}},\ }\href@noop {} {\emph {\bibinfo {title} {The theory of graphs}}}\ (\bibinfo  {publisher} {Courier Corporation},\ \bibinfo {year} {2001})\BibitemShut {NoStop}%
\bibitem [{\citenamefont {Grier}\ and\ \citenamefont {Schaeffer}(2022)}]{grier2022classification}%
  \BibitemOpen
  \bibfield  {author} {\bibinfo {author} {\bibfnamefont {D.}~\bibnamefont {Grier}}\ and\ \bibinfo {author} {\bibfnamefont {L.}~\bibnamefont {Schaeffer}},\ }\bibfield  {title} {\bibinfo {title} {The classification of clifford gates over qubits},\ }\href@noop {} {\bibfield  {journal} {\bibinfo  {journal} {Quantum}\ }\textbf {\bibinfo {volume} {6}},\ \bibinfo {pages} {734} (\bibinfo {year} {2022})}\BibitemShut {NoStop}%
\end{thebibliography}%
\clearpage

\end{document}